\providecommand{\pgfsyspdfmark}[3]{}
\providecommand{\pdfinfo}[1]{}
\newcommand\marktopleft[1]{%
    \tikz[overlay,remember picture] 
        \node (marker-#1-a) at (-.5ex,1.5ex) {};%
}
\newcommand\markbottomright[1]{%
    \tikz[overlay,remember picture] 
        \node (marker-#1-b) at (-.5ex,-0.1ex) {};%
    \tikz[overlay,remember picture,inner sep=3pt]
        \node[draw=none,fill=red!20,rectangle,fill opacity=.2,fit=(marker-#1-a.center) (marker-#1-b.center)] {};%
}
\newcommand\marktopleftb[1]{%
    \tikz[overlay,remember picture] 
        \node (marker-#1-a) at (-.5ex,1.5ex) {};%
}
\newcommand\markbottomrightb[1]{%
    \tikz[overlay,remember picture] 
        \node (marker-#1-b) at (-.5ex,-0.1ex) {};%
    \tikz[overlay,remember picture,inner sep=3pt]
        \node[draw=none,fill=blue!20,rectangle,fill opacity=.2,fit=(marker-#1-a.center) (marker-#1-b.center)] {};%
}
\def\namedlabel#1#2{\begingroup
    #2%
    \def\@currentlabel{#2}%
    \phantomsection\label{#1}\endgroup
}
\renewcommand\subsectionautorefname{Sec.}
\renewcommand\subsubsectionautorefname{Sec.}
\newcommand\Output{\hspace*{\algorithmicindent} \textbf{output: }}
\patchcmd{\ALG@step}{\addtocounter{ALG@line}{1}}{\refstepcounter{ALG@line}}{}{}
\newcommand{\ALG@lineautorefname}{Line}
\renewcommand\phi{\varphi}
\newcommand\defaccr[2]{\newcommand#1{#2\xspace}}
\newcommand\defmath[2]{\newcommand#1{\ensuremath{#2}\xspace}}
\newcommand\concept[1]{\textit{#1}}
\defmath{\img}{\mathtt{image}}
\defmath{\apre}{\mathtt{\forall preimage}}
\defmath{\Post}{\mathit{Postfix}}
\let\set\undefined
\providecommand{\tuple}[1]{\ensuremath{\left( #1 \right)}}
\providecommand{\set}[1]{\ensuremath{\left\lbrace #1 \right\rbrace}}
\providecommand{\sizeof}[1]{\ensuremath{\left\vert{#1}\right\vert}}
\providecommand{\vect}[1]{\ensuremath{( \begin{matrix} #1 \end{matrix} )}}
\providecommand{\gen}[1]{\ensuremath{\left\langle #1 \right\rangle}}
\newcommand\qcsim[1]{\ensuremath{\textsc{QSim}_C^{#1}}\xspace}
\newcolumntype{x}[1]{>{\centering\arraybackslash\hspace{0pt}}p{#1}}
\defmath\OO{\Omega^*}
\newcommand\cliffordt{\text{Clifford + $T$}\xspace}
\newcommand{\id}[1][]{\ensuremath{\mathbb{I}_{#1}}\xspace}
\defmath{\bool}{\ensuremath{\mathbb{B}}}
\defmath{\naturalNumber}{\ensuremath{\mathbb{N}}}
\defmath{\complex}{\ensuremath{\mathbb{C}}}
\defmath{\real}{\ensuremath{\mathbb{R}}}
\defmath{\integers}{\ensuremath{\mathbb{Z}}}
\defmath{\conditionalind}{\mathrel{\text{\scalebox{1.07}{$\perp\mkern-10mu\perp$}}}}
\defmath{\dx}{\partial x}
\defmath{\ddx}{\sfrac{\partial}{\partial x}}
\defmath{\half}{\textstyle{\frac{1}{2}}}
\newcommand{\floor}[1]{\left\lfloor #1 \right\rfloor}
\defmath\Exists{\mathit{Exists}}
\defmath\PlusExists{\mathit{PlusExists}}
\defmath\var{\mathit{var}}
\defmath\calciso{\mathsf{calciso}}
\newcommand{\defn}{\,\triangleq\,}
\tikzstyle{oval} = [state, ellipse, minimum size=4mm, inner sep=0.5mm, node distance=1cm]
\tikzset{every picture/.style={->,thick}}
\tikzstyle{leaf}=[draw, rectangle,minimum size=4.mm, inner sep=3pt]
\tikzstyle{var}=[circle,draw=black!70,solid,thick,minimum size=6mm]
\tikzstyle{bdd}=[regular polygon, regular polygon sides=3, draw=black!70,solid,thick,inner sep=0.5mm]
\tikzstyle{n}=[->,loosely dashed,thick]
\tikzstyle{p}=[->,solid,thick]
\tikzstyle{b}=[->,densely dashdotted,ultra thick]
\tikzset{every node/.style={initial text={}, inner sep=2pt, outer sep=0}}
\tikzstyle{e0}[0]=[dashed,thick,bend right=#1]
\tikzstyle{e1}[0]=[solid, bend left =#1]
\tikzstyle{lbl}=[draw,fill=white,inner sep=2pt, minimum size=0cm,line width=.5pt]
\defmath\before{\prec}
\defmath\beforeq{\preccurlyeq}
\newcommand{\dyad}[1]{| #1 \rangle \langle #1 |}
\newenvironment{smallmat}{\left[\begin{smallmatrix}}{\end{smallmatrix}\right]}
\newcommand{\Iso}{\ensuremath{\text{Iso}}}
\newcommand{\Stab}{\ensuremath{\text{Stab}}}
\newcommand{\Aut}{\ensuremath{\text{Stab}}}
\newcommand\antii[2]{\begin{smallmat}0 & #1 \\ #2 & 0\end{smallmat}\xspace}
\newcommand\diag[1]{\begin{smallmat}1 & 0 \\  0 & #1\end{smallmat}\xspace}
\newcommand\diagg[2]{\begin{smallmat}#1 & 0 \\  0 & #2\end{smallmat}\xspace}
\defaccr{\bdd}{\textsf{BDD}}
\defaccr{\bdds}{\textsf{BDD}s}
\defaccr{\qmdd}{\textsf{QMDD}}
\defaccr{\add}{\textsf{ADD}}
\defaccr{\isoqmdd}{\textsf{LIMDD}}
\defaccr{\limdd}{\textsf{LIMDD}}
\defaccr{\qmdds}{\textsf{QMDD}s}
\defaccr{\adds}{\textsf{ADD}s}
\defaccr{\isoqmdds}{\textsf{LIMDD}s}
\defaccr{\limdds}{\textsf{LIMDD}s}
\defaccr{\glimdd}{\ensuremath{G}-\limdd}
\defaccr{\glimdds}{\ensuremath{G}-\limdds}
\renewcommand\index{\textsf{idx}\xspace}
\newcommand\leaf{\textsf{Leaf}\xspace}
\newcommand\lbl{\textsf{label}\xspace}
\newcommand\highlabel{\textsf{HighLabel}\xspace}
\newcommand\rootlabel{\textsf{RootLabel}\xspace}
\newcommand\unique{\textsc{Unique}\xspace}
\newcommand\cache{\textsc{Cache}\xspace}
\newcommand\autocache{\textsc{StabCache}\xspace}
\newcommand\Edge{\textsc{Edge}\xspace}
\newcommand\Node{\textsc{Node}\xspace}
\defmath\LIM{\textsc{LIM}}
\defmath\glim{G\text -\LIM}
\def\paulilim{\textnormal{\sc PauliLIM}}
\newcommand\Pauli{\textsc{Pauli}\xspace}
\newcommand\pauli{\Pauli}
\newcommand\makeedge{\textsc{MakeEdge}\xspace}
\newcommand{\project}{\textsc{UpdatePostMeas}}
\defmath\oh{\mathcal O}
\defmath\rootlim{B_{\textnormal{root}}}
\defmath\lowlim{B_{\textnormal{low}}}
\defmath\highlim{B_{\textnormal{high}}}
\defmath\gmax{g}
\defmath\kmax{\kappa^{\textnormal{final}}}
\defmath\cast{\mathbb C^\ast}
\DeclareMathOperator*{\argmin}{arg\,min}
\newcommand\follow[2]{\ensuremath{\textsc{follow}_{#1}(#2)}}
\defmath\plus{+}
\newlength{\pgfcalcparm}
\newlength{\pgfcalcparmm}
\DeclareRobustCommand{\leafedge}[2][]{%
  \pgftext{\settowidth{\global\pgfcalcparm}{\scriptsize $\,\,\,#2\,\,\,$}}%
  \raisebox{-.8mm}{%
  \tikz{%
    \node[inner sep=0pt] (x){};%
    \node[right=\pgfcalcparm of x,leaf](v){\scriptsize 1};%
    \draw (x) to node[above,pos=.5]{\scriptsize $\,#2\,\,$} (v);%
  }%
  }%
}
\DeclareRobustCommand{\ledge}[3][]{%
  \pgftext{\settowidth{\global\pgfcalcparm}{\scriptsize $\,\,\,#2\,\,\,$}}%
  \raisebox{-.8mm}{%
  \tikz{%
    \node[inner sep=0pt] (x){$#1\,\,$};%
    \node[state,inner sep=0pt,minimum size=10pt,right=\pgfcalcparm of x](v){\scriptsize $#3$};%
    \draw (x) to node[above,pos=.5]{\scriptsize $\,#2\,\,$} (v);%
  }%
  }%
}
\DeclareRobustCommand{\lnode}[5][]{%
    \pgftext{\settowidth{\global\pgfcalcparm}{\scriptsize $\,\,\,#2\,\,\,$}}%
    \pgftext{\settowidth{\global\pgfcalcparmm}{\scriptsize $\,\,\,#4\,\,\,$}}%
  ~\raisebox{-.mm}{%
  \tikz{%
  \vspace{-1mm}%
    \node[state,inner sep=0pt,minimum size=10pt] (v){\scriptsize $#1$};%
    \node[state,inner sep=0pt,minimum size=10pt,left=\pgfcalcparm of v](v0){\scriptsize $#3$};%
    \draw[dotted] (v) to node[above,pos=.45]{\scriptsize $#2$} (v0);%
    \node[state,inner sep=0pt,minimum size=10pt,right=\pgfcalcparmm of v](v1){\scriptsize $#5$};%
    \draw (v) to node[above,pos=.45]{\scriptsize $#4$} (v1);%
  }%
  }%
}
\DeclareRobustCommand{\lonenode}[3][]{%
    \pgftext{\settowidth{\global\pgfcalcparm}{\scriptsize $\,\,\,#2\,\,\,$}}%
    \pgftext{\settowidth{\global\pgfcalcparmm}{\scriptsize $\,\,\,#3\,\,\,$}}%
  ~\raisebox{-.mm}{%
  \tikz{%
  \vspace{-1mm}%
    \node[state,inner sep=0pt,minimum size=10pt] (v){\scriptsize $#1$};%
    \node[leaf,inner sep=0pt,minimum size=10pt,left=\pgfcalcparm of v](v0){\scriptsize $1$};%
    \draw[dotted] (v) to node[above,pos=.45]{\scriptsize $#2$} (v0);%
    \node[leaf,inner sep=0pt,minimum size=10pt,right=\pgfcalcparmm of v](v1){\scriptsize $1$};%
    \draw (v) to node[above,pos=.45]{\scriptsize $#3$} (v1);%
  }%
  }%
}
\newcommand\low[1]{\ensuremath{\textsf{low}(#1)}}
\newcommand\high[1]{\ensuremath{\textsf{high}(#1)}}
\defmath\Low{\ensuremath{\textsf{low}}}
\defmath\High{\ensuremath{\textsf{high}}}
\def\getautomorphisms{{\textsc{GetStabilizerGenSet}}\xspace}
\def\getsingleisomorphism{\textnormal{{\textsc{GetIsomorphism}}}\xspace}
\def\findisomorphismsetintersection{{\textsc{IntersectIsomorphismSets}}\xspace}
\def\none{{\tt None}}
\def\Prest{P_{\textnormal{rest}}}
\defmath\yy{\begin{smallmat}
    0 & y^*\\
    y & 0\\
\end{smallmat}}
\defmath\ww{\begin{smallmat}
      0 & y   \\
      y^* & 0  \\
  \end{smallmat}
}
\defmath\hv{{\hat v}}
\newcommand\qmddc{\ensuremath{\qmdd{}~\cup~\text{Stab}}\xspace}
\begin{document}

\title{LIMDD: A Decision Diagram for Simulation of Quantum Computing Including Stabilizer States}

\author{Lieuwe Vinkhuijzen${}^{\ast,}$}
\affiliation{Leiden University, The Netherlands}
\email{l.t.vinkhuijzen@liacs.leidenuniv.nl}
\orcid{0000-0002-8199-0901}

\author{Tim Coopmans${}^{\ast,}$}
\affiliation{Leiden University, The Netherlands}
\affiliation{Delft University of Technology, The Netherlands}
\email{t.j.coopmans@liacs.leidenuniv.nl}
\homepage{https://timcp.github.io/}
\orcid{0000-0002-9780-0949}

\author{David Elkouss}
\affiliation{Delft University of Technology, The Netherlands}
\affiliation{Networked Quantum Devices Unit, Okinawa Institute of Science and Technology Graduate University, Okinawa, Japan}
\email{d.elkousscoronas@tudelft.nl}
\homepage{https://www.davidelkouss.com}
\orcid{0000-0003-2023-2768}

\author{Vedran Dunjko}
\affiliation{Leiden University, The Netherlands}
\email{v.dunjko@liacs.leidenuniv.nl}
\homepage{https://liacs.leidenuniv.nl/$\sim$dunjkov}
\orcid{0000-0002-2632-7955}

\author{Alfons Laarman}
\affiliation{Leiden University, The Netherlands}
\email{a.w.laarman@liacs.leidenuniv.nl}
\homepage{https://alfons.laarman.com}
\orcid{0000-0002-2433-4174}

\blfootnote{$^\ast$ These authors contributed equally.}

\maketitle

\begin{abstract}
Efficient methods for the representation and simulation of quantum states and quantum operations are crucial for the optimization of quantum circuits.
Decision diagrams (DDs), a well-studied data structure originally used to represent Boolean functions, have proven capable of capturing relevant aspects of quantum systems, but their limits are not well understood.
  In this work, we investigate and bridge the gap between existing DD-based structures and the stabilizer formalism, an important tool for simulating quantum circuits in the tractable regime.
     We first show that although DDs were suggested to succinctly represent important quantum states,
     they actually require exponential space for certain stabilizer states.
	To remedy this, we introduce a more powerful decision diagram variant, called Local Invertible Map-DD (\limdd).
	We prove that the set of quantum states represented by poly-sized \limdds strictly contains the union of stabilizer states and other decision diagram variants. 
Finally, there exist circuits which \limdds can efficiently simulate, while their output states cannot be succinctly represented by two state-of-the-art simulation paradigms:
		the stabilizer decomposition techniques for Clifford + $T$ circuits
		 and Matrix-Product States.
	By uniting two successful approaches, \limdds thus pave the way for fundamentally more powerful solutions for simulation and analysis of quantum computing.
\end{abstract}

\newpage
\tableofcontents
\renewcommand\subsectionautorefname{Sec.}
\renewcommand\subsubsectionautorefname{Sec.}

\section{Introduction~\label{sec:introduction}}
\vspace{-.8em}
Classical simulation of quantum computing is useful for circuit design~\cite{zulehner2017one,burgholzer2020improvedEquivalenceChecking}, verification~\cite{burgholzer2021random,burgholzer2020advanced} and studying noise resilience in the era of Noisy Intermediate-Scale Quantum (NISQ) computers~\cite{preskill2018quantum}.
Moreover, identifying classes of quantum circuits that are classically simulatable, helps in excluding regions where a quantum computational advantage cannot be obtained.
For example, circuits containing only Clifford gates
 (a non-universal quantum gate set), using an all-zero initial state, only compute the so-called `stabilizer states' and can be simulated in polynomial time
\cite{gottesman1998heisenberg,aaronson2008improved,gottesman1997stabilizer,nest2005local,englbrecht2020symmetries}.
Stabilizer states, and associated formalisms for expressing them, are fundamental to many quantum error correcting codes~\cite{gottesman1997stabilizer} and play a role in measurement-based quantum computation~\cite{raussendorf2001oneway}.
In fact, simulation of universal quantum circuits is fixed-parameter tractable in the number of non-Clifford gates~\cite{bravyi2016trading}, which is why
	 many modern simulators are based on \concept{stabilizer decomposition}~\cite{bravyi2016trading,bravyi2017improved,bravyi2019simulation,huang2019approximate,kocia2018stationary,kocia2020improved}.

\begin{wrapfigure}[16]{r}{4.37cm}
\vspace{-1em}
\begin{tikzpicture}\footnotesize
  \tikzset{venn circle/.style={circle,minimum width=2cm,fill=#1,opacity=0.4}}
  \node [venn circle=white,minimum width=4cm,draw] (A) at (0,0.3) {};
  \node  at (0,1.95) 			{State space};
  \node [venn circle = red!50!white, ellipse,minimum height=3cm, minimum width=3.6cm] (L) at (0,0.3) {};
  \node  at (0,1.2) 		{poly-size \limdd};
  \node [venn circle = blue!70!white,text width=1.3cm,align=center,rotate=90,ellipse,minimum height=2cm, minimum width=2.5cm] (B) at (-.6,-.2) {};
  \node[text width=1.3cm,align=center]  at (-.6,-1.)
  							{poly-size MPS};
  \node [venn circle = yellow!90!white,text width=1.3cm,align=center, minimum width=1.8cm] (B) at (-.6,.1) 
  							{poly-size \qmdd};
  \node [venn circle = green!40!white,text width=1cm,align=center, minimum width=1.5cm] (C) at (.8,.1)
  							{Stabilizer states};

  \node[,inner sep=0pt] (a) at (1,-2) 		{cluster states};
  \node[fill=black,circle,inner sep=1.5pt] (b) at (1,-.4) {};
	\draw[-,bend left=-20] (a) to (b);

  \node[fill=black,circle,inner sep=1.5pt] (b) at (.5,-.8) {};
  \node[,inner sep=0pt] (a) at (-.55,-2.02) 		{(pseudo)};
	\draw[-,bend left=-20] (a.north) to (b);

\end{tikzpicture}  
\caption{
        The set of stabilizer states and states representable as
        poly\hbox{-}sized: (Pauli-)\limdds (this work), QMDDs and MPS.
}
\label{fig:venn-diagram} 
\end{wrapfigure}

Another method for simulating universal quantum computation is based on decision diagrams (DDs)~\cite{akers1978binary,bryant86,bryant1995verification,viamontes2004high}, including Algebraic DDs \cite{580054,viamontes2003improving,fujita1997multi,10.1145/157485.164569}, Affine Algebraic DDs~\cite{sanner2005AffineADDs}, Quantum Multi-valued DDs \cite{miller2006qmdd,zulehner2018advanced}, and Tensor DDs \cite{hong2020tensor}.
A DD is a directed acyclic graph (DAG) in which each path represents a quantum amplitude, enabling the succinct (and exact) representation of many quantum states through the combinatorial nature of this structure.
A DD can also be thought of as a homomorphic (lossless) compression scheme, since
various \concept{manipulation operations for DDs} exist which implement any quantum gate operation, including measurement (without requiring decompression).
Strong simulation is therefore easily implemented using a DD data structure~\cite{miller2006qmdd,zulehner2018advanced,hong2020tensor}.
Indeed, DD-based simulation was empirically shown to be competitive with state-of-the-art simulators~\cite{viamontes2004high,zulehner2018advanced,DBLP:conf/date/HillmichKMW21} and is used in several simulator and circuit verification implementations~\cite{viamontes2009quantum,hong2021approximate}.
DDs and the stabilizer formalism are introduced in \autoref{sec:preliminaries}.

\qmdds are currently the most succinct DD supporting quantum simulation, but in this paper we show that they require exponential size to represent a type of stabilizer state called a cluster state~\cite{briegel2000persistent}.
In order to unite the strengths of DDs and the stabilizer formalism and inspired by SLOCC (Stochastic Local Operations and Classical Communication) equivalence of quantum states \cite{dur2000three,chitambar2014everything}, in \autoref{sec:main-results}, we propose \limdd: a new DD for quantum computing simulation using local invertible maps (LIMs).
Specifically, \limdds eliminate the need to store multiple states which are equivalent up to LIMs, allowing more succinct DD representations.
For the local operations in the LIMs, we choose Pauli operations, creating a Pauli-\limdd, which we will simply refer to as \limdd.
We prove that there is a family of quantum states ---called pseudo cluster states---
 that can be represented by poly-sized
(Pauli\nobreakdash-)\limdds 
but that require exponentially-sized \qmdds and cannot be expressed in the stabilizer formalism.
We also show the same separation for matrix product states (MPS)~\cite{white1992density,cirac2021MPSPEPS,vidal2003efficient}. 
\autoref{fig:venn-diagram} visualizes the resulting separations.

Further, we give algorithms for simulating quantum circuits using Pauli-\limdds. We continue by investigating the power of these algorithms compared to state-of-the-art simulation algorithms based on \qmdd, MPS and stabilizer decomposition. 
We find circuit families which Pauli-\limdd can efficiently simulate, which stands in stark contrast to the exponential space needed by \qmdd-based, MPS-based and a stabilizer-decomposition-based simulator
(the latter result is conditional on the exponential time hypothesis).
This is the first analytical comparison between decision diagrams and matrix product states.

Efficient decision diagram operations for both classical \cite{darwiche2002knowledge} and quantum \cite{burgholzer2020improvedEquivalenceChecking} applications crucially rely on \emph{dynamic programming} (storing the result of each intermediate computation) and \emph{canonicity} (each quantum state has a unique, smallest representative as a \limdd)~\cite{brace1991efficient,knuth4,somenzi2001efficient}.
We provide algorithms for both in \autoref{sec:canonicity}.
Indeed, the main technical contribution of this paper
 is the formulation of a canonical form for Pauli-\limdds together with
 an algorithm which brings a Pauli-\limdd into this canonical form.
By interleaving this algorithm with the circuit simulation algorithms,  we ensure that the algorithms act on LIMDDs that are canonical and as small as possible.

The canonicity algorithm effectively determines whether two $n$-qubit quantum states
$\ket\phi,\ket{\psi}$, each represented by a \limdd node $\phi,{\psi}$,
are equivalent up to a Pauli operator~$P$, i.e, $\ket\phi =  P \ket\psi$, which we call an isomorphism between $\ket\phi$ and $\ket\psi$.
Here $P = P_n \otimes \dots \otimes P_1$ consists of single qubit Pauli operators $P_i$
		(ignoring scalars for now).
In general, there are multiple choices for $P$,
so the goal is to make a deterministic selection among them,
to ensure canonicity of the resulting \limdd.
To do so, we first take out one qubit and write the states as, e.g., $\ket\phi = c_0 \ket{0}\ket{\phi_0} + c_1 \ket{1}\ket{\phi_1}$ for complex coefficients~$c_0, c_1$.
We then realize that $\Prest = P_{n-1} \dots \otimes P_1$ must map the pair $(\ket{\phi_0}, \ket{\phi_1})$ to either $(\ket{\psi_0},\ket{\psi_1})$ or $(\ket{\psi_1}, \ket{\psi_0})$ (in case $P_n$ is a diagonal or antidiagonal, respectively).
Hence $\Prest$ is a member of the intersection of the two sets of isomorphisms.
Next, we realize that the set of all isomorphisms, e.g. mapping $\ket{\phi_0}$ to $\ket{\psi_0}$, is a coset $\pi \cdot G$ of the stabilizer group $G$ of $\ket{\phi_0}$ (i.e. the set of isomorphisms mapping $\ket{\phi_0}$ to itself) where $\pi$ is a single isomorphism $\ket{\phi_0} \rightarrow \ket{\psi_0}$.
Thus, to find a (canonical) isomorphism between $n$-qubit states $\ket{\phi} \rightarrow \ket{\psi}$ (or determine no such isomorphism exists), we need three algorithms: to find (a) an isomorphism between $(n-1)$-qubit states, (b) the stabilizer group of an $(n-1)$-qubit state (in fact: the group generators, which form an efficient description), (c) the intersection of two cosets in the Pauli group (solving the \textit{Pauli coset intersection problem}).
Task (a) and (b) are naturally formulated as recursive algorithms on the number of qubits, which invoke each other in the recursion step.
For (c) we provide a separate algorithm which first rotates the two cosets such that one is a member of the Pauli $Z$ group, hence isomorphic to a binary vector space, followed by using existing algorithms for binary coset (hyperplane) intersection.
Having characterized all isomorphisms $\ket\phi \rightarrow \ket\psi$, we select the lexicographical minimum to ensure canonicity.
We emphasize that the algorithm works for arbitrary quantum states, not only stabilizer states.

\section{Preliminaries \label{sec:preliminaries}}

Here, we briefly introduce two methods to manipulate and succinctly represent quantum states: decision diagrams, which support universal quantum computing, and the stabilizer formalism, in which a subset of all quantum computations is supported which can however be efficiently classically simulated.
Both support \concept{strong simulation}, i.e. the probability distribution of measurement outcomes can be computed (through \concept{weak simulation} one only samples measurement outcomes).
\defmath\leafnode{\raisebox{-.7mm}{\tikz{\node[draw,minimum width=.3cm,minimum height=.3cm]{\scriptsize 1};}}}

\subsection{Decision diagrams}

An $n$-qubit quantum state $\ket{\phi}$ can be represented as a $2^n$-dimensional vector of complex numbers (modeling amplitudes) and
can thus be described by a pseudo-Boolean function $f: \{0, 1\}^n \rightarrow \mathbb{C}$ where
\begin{equation}
    \label{eq:quantum-state-expansion}
    \ket\phi = \sum_{x_1, \dots, x_n \in \set{0,1}} f(x_n,\dots, x_1) \ket{x_n}\otimes  \dots \otimes\ket{x_1}.\vspace{-.5em}
\end{equation}
The Quantum Multi-valued Decision Diagram (\qmdd) \cite{miller2006qmdd} is a data structure which can succinctly represent functions of the form $f\colon \{0,1\}^n\to\mathbb C$,
and thus can represent any quantum state per~\autoref{eq:quantum-state-expansion}.
A \qmdd is a rooted DAG with a unique leaf node \leafnode, representing the value $1$.
\autoref{fig:qmdd-small-example}~(d) shows an example (and its construction from a binary tree).
Each node has two outgoing edges, called its \emph{low edge} (dashed line) and its \emph{high edge} (solid line).
The diagram has \concept{levels} as each node is labeled with (the index of) a variable;
the root has index $n$, its children $n-1$, etc, until the leaf with index 0
(the set of nodes with index $k$ form level~$k$).
Hence each path from root to leaf visits nodes representing the variables $x_3,x_2,x_1$ in order.
The value $f(x_n,\ldots, x_1)= \braket{x_n \dots x_1 | \phi} $ is computed by traversing the diagram, starting at the root edge and then
for each node at level $i$  following the low edge (dashed line) when $x_i=0$, and the high edge (solid line) when $x_i=1$, while multiplying the edge weights (shown in boxes) along the path, e.g., $f(1,1,0)= \frac{1}{2} \cdot 1 \cdot -\sqrt{2} \cdot 1 = -\frac 1{\sqrt 2}$ in \autoref{fig:qmdd-small-example}.

\begin{figure}[h]
\centering
~~~
\begin{tikzpicture}[
    scale=0.3,
    every path/.style={>=latex},
    every node/.style={},
    inner sep=0pt,
    minimum size=0.5cm,
    line width=1pt,
    node distance=1cm,
    thick,
    font=\footnotesize
    ]
    
    \node[draw,circle] (a1) {};
    \node[draw,circle, below left=of a1] (a2) {};
    \node[draw,circle, below right=of a1] (a3) {};
    \node[draw,circle, below right=of a2] (a4) {$u$};
    \node[draw,circle,rectangle,minimum size=0.4cm, below=of a4] (w1) {$1$};
    
    \draw[<-] (a1) --++(90:3cm) node[lbl,right,pos=.7] {$\nicefrac 12$};
    \draw[e0] (a1) edge  node[,left,pos=.2] {} (a2);
    \draw[e1] (a1) edge  node[,right,pos=.2] {} (a3);

    \draw[e0= 25] (a2) edge  node[lbl] {$0$} (a4);
    \draw[e1= 25] (a2) edge  node[] {} (a4);
    \draw[e0= 25] (a3) edge  node[] {} (a4);
    \draw[e1= 25] (a3) edge  node[lbl,below,pos=.7,right] {$-\sqrt 2$} (a4);
    \draw[e0= 25] (a4) edge  node[] {} (w1);
    \draw[e1= 25] (a4) edge  node[lbl] {$0$} (w1);
    
\node[left =.5cm of a1,yshift=.7cm] {d)};

    \node[draw,circle, left= 3.cm of a1] (a1) {};
    \node[draw,circle, below = .55cm of a1, xshift=-.9cm] (a2) {};
    \node[draw,circle, below = .55cm of a1, xshift= .9cm] (a3) {};
    \node[draw,circle, below = .55cm of a2, xshift= .9cm] (a41) {$u$};
    \node[leaf, below=of a41] (w1) {$1$};
    
    \draw[<-] (a1) --++(90:3cm) node[right,pos=.7] {};
    \draw[e0 = 0] (a1) edge  node[] {} (a2);
    \draw[e1 = 0] (a1) edge  node[] {} (a3);

    \draw[e0= 30] (a2) edge  node[lbl,left,pos=.4] {$0$} (a41);
    \draw[e1= 30] (a2) edge  node[lbl,right,pos=.2] {$\frac 12$} (a41);
    \draw[e0= 30] (a3) edge  node[lbl,left,pos=.2] {$\frac 12$} (a41);
    \draw[e1= 30] (a3) edge  node[lbl,right,pos=.4] {-$\frac 1{\sqrt2}$} (a41);
    \draw[e0=25] (a41) edge  node[lbl] {$1$} (w1);
    \draw[e1=25] (a41) edge  node[lbl] {$0$} (w1);
    
\node[left =.5cm of a1,yshift=.7cm] {c)};

    \node[draw,circle, left= 2.7cm of a1] (a1) {};
    \node[draw,circle, below = .55cm of a1, xshift=-.9cm] (a2) {};
    \node[draw,circle, below = .55cm of a1, xshift= .9cm] (a3) {};
    \node[draw,circle, below = .55cm of a2, xshift=-.3cm] (a41) {$v$};
    \node[draw,circle, below = .55cm of a2, xshift= .3cm] (a42) {$w$};
    \node[draw,circle, below = .55cm of a3, xshift=-.3cm] (a43) {$s$};
    \node[draw,circle, below = .55cm of a3, xshift= .3cm] (a44) {$t$};
    \node[leaf, below=of a42, xshift=.6cm] (w1) {$1$};
    
    \draw[<-] (a1) --++(90:3cm) node[right,pos=.7] {};
    \draw[e0 = 0] (a1) edge  node[] {} (a2);
    \draw[e1 = 0] (a1) edge  node[] {} (a3);

    \draw[e0=  0] (a2) edge  node[] {} (a41);
    \draw[e1=  0] (a2) edge  node[] {} (a42);
    \draw[e0=  0] (a3) edge  node[] {} (a43);
    \draw[e1=  0] (a3) edge  node[] {} (a44);
    \draw[e0= 70] (a41) edge  node[lbl] {$0$} (w1);
    \draw[e1=-25] (a41) edge  node[lbl] {$0$} (w1);
    \draw[e0= 10] (a42) edge  node[lbl] {$\frac 12$} (w1);
    \draw[e1= 20] (a42) edge  node[lbl,pos=.2] {$0$} (w1);
    \draw[e0= 10] (a43) edge  node[lbl] {$\frac 12$} (w1);
    \draw[e1= 20] (a43) edge  node[lbl] {$0$} (w1);
    \draw[e0=-30] (a44) edge  node[lbl,pos=.3] {-$\frac 1{\sqrt2}$} (w1);
    \draw[e1= 70] (a44) edge  node[lbl] {$0$} (w1);
    
\node[left =.5cm of a1,yshift=.7cm] {b)};

%
%

    \node[draw,circle, left= 3.5cm of a1] (a1) {};
    \node[draw,circle, below = .55cm of a1, xshift=-.9cm] (a2) {};
    \node[draw,circle, below = .55cm of a1, xshift= .9cm] (a3) {};
    \node[draw,circle, below = .55cm of a2, xshift=-.3cm] (a41) {};
    \node[draw,circle, below = .55cm of a2, xshift= .3cm] (a42) {};
    \node[draw,circle, below = .55cm of a3, xshift=-.3cm] (a43) {};
    \node[draw,circle, below = .55cm of a3, xshift= .3cm] (a44) {};
    \node[leaf, below=of a41, xshift=-.5cm] (w1) {$0$};
    \node[leaf, right= .02cm of w1,inner sep=0pt] (w2) {$0$};
    \node[leaf, right= .02cm of w2,inner sep=0pt] (w3) {$\frac 12$};
    \node[leaf, right= .02cm of w3,inner sep=0pt] (w4) {$0$};
    \node[leaf, right= .02cm of w4,inner sep=0pt] (w5) {$\frac 12$};
    \node[leaf, right= .02cm of w5,inner sep=0pt] (w6) {$0$};
    \node[leaf, right= .02cm of w6,inner sep=0pt] (w7) {$-\frac 1{\sqrt 2}$};
    \node[leaf, right= .02cm of w7,inner sep=0pt] (w8) {$0$};
        
    \draw[<-] (a1) --++(90:3cm) node[right,pos=.7] {};
    \draw[e0 = 0] (a1) edge  node[] {} (a2);
    \draw[e1 = 0] (a1) edge  node[] {} (a3);

    \draw[e0=  0] (a2) edge  node[] {} (a41);
    \draw[e1=  0] (a2) edge  node[] {} (a42);
    \draw[e0=  0] (a3) edge  node[] {} (a43);
    \draw[e1=  0] (a3) edge  node[] {} (a44);
    \draw[e0=  0] (a41) edge  node[] {} (w1);
    \draw[e1=  0] (a41) edge  node[] {} (w2);
    \draw[e0=  0] (a42) edge  node[] {} (w3);
    \draw[e1=  0] (a42) edge  node[] {} (w4);
    \draw[e0=  0] (a43) edge  node[] {} (w5);
    \draw[e1=  0] (a43) edge  node[] {} (w6);
    \draw[e0=  0] (a44) edge  node[] {} (w7);
    \draw[e1=  0] (a44) edge  node[] {} (w8);

\node[left =.5cm of a1,yshift=.7cm] {a)};

    \node[left=1.6cm of a1] (x) {$3$};
    \node[above =-.1cm of x] (y) {\hspace{-.7cm}Level:};
    \node[below=.55cm of  x] (x) {$2$};
    \node[below=.52cm of  x] (x) {$1$};


\end{tikzpicture}
\vspace{-2ex}
\caption{
Different decision diagrams representing the $3$-qubit state $[0,0,\frac 12,0,\frac 
12,0, -\frac 1{\sqrt 2},0]^\top$, evolving into a \qmdd (right).
Left, (a) shows the exponential binary tree, where a node on level $i$ represents $x_i$ (see \autoref{eq:quantum-state-expansion}) and its outgoing arrows $x_i=0$ (dashed) and $x_i=1$ (solid). The leaf contains the complex amplitude $f(x_1, x_2, x_3)$ (see \autoref{eq:quantum-state-expansion}) for the assignment of $(x_1, x_2, x_3)$ corresponding to the path from the root, e.g. $f(1, 1, 0) = - \frac{1}{\sqrt{2}}$.
 Next (b), the leaves are merged by dividing out common factors, putting these as weights (shown in boxes) on the edges going out of level-1 nodes (note in particular that we can suppress a separate 0 leaf, as $0 = 0\cdot 1$).  
Then the same trick is applied to level-1 nodes in (c).
In this example, all level-1 nodes $v, w, s, t$ become \concept{isomorphic} and can be merged into a new node $u$, representing the vector $\ket{u} = [1, 0]^{\top}$.
This can be done because the level-1 nodes $v, w, s, t$ respectively represent the vectors $[0, 0]^{\top}, [\frac{1}{2}, 0]^{\top}, [\frac{1}{2}, 0]^{\top}, [\frac{1}{\sqrt{2}}, 0]^{\top}$, which can be written as $c \cdot \ket{u} = c \cdot [1, 0]^{\top}$ for respective weights $c = 0, \frac{1}{2}, \frac{1}{2}, \frac{1}{\sqrt{2}}$.
 		Finally, (d) shows the resulting \qmdd, applying the same tactic to nodes on levels 2 and 3. Note that a \qmdd requires a root edge.
 		Merging (\concept{isomorphic}) nodes makes \qmdds succinct.
By convention, unlabelled edges have label $1$.
}
\label{fig:qmdd-small-example}
\end{figure}

A path from the root to a node $v$  with index $k$  (on level $k$) thus corresponds to a \emph{partial assignment} $(x_n=a_n,\ldots, x_{k-1}=a_{k-1})$, which induces 
subfunction $f_{\vec a}(x_{k},\ldots, x_1)\defn f(a_n,\ldots, a_{k-1},x_k,\ldots, x_1)$.
The node $v$ represents this subfunction \emph{up to a complex factor $\gamma$}, which is stored on
the edge incoming to $v$ along that path.
This allows any two nodes which represent functions equal up to a complex factor to be \emph{merged}.
For instance, the node $u$ on level 1 in \autoref{fig:qmdd-small-example}
represents $f_{01} = f_{10} = \frac {-1}{\sqrt{2}} f_{11} = 0 \cdot f_{00}$.
When all eligible nodes have been merged, the \qmdd is \emph{reduced}.
A reduced \qmdd is a \emph{canonical} representation: a given function has a unique reduced \qmdd.

Canonicity ensures that the \qmdd is always as small as possible as redundant nodes are merged.
But more importantly, canonicity allows for quick equality checks: two diagrams represent the same state \emph{if and only if} their root edges are the same (i.e., have the same label and point to the same root node). This allows for efficient \qmdd manipulation algorithms (i.e. updating the QMDD upon performing a gate or measurement) through dynamic programming, which avoids traversing all paths (exponentially many in the size of the diagram in the worst case).
For all quantum gates, there are algorithms to update the \qmdd accordingly and
measurement is also supported (even efficiently).
Therefore, \qmdds can simulate any quantum circuit, although they may become exponentially large (in the number of qubits) already after applying part of the gates from the circuit.
The resulting simulator is \concept{strong}, as the complete final state is computed as \qmdd (and computing measurement outcome probabilities on \qmdd is tractable).

Finally, we can also define the semantics of a node $v$ recursively, overloading Dirac notation:
$\ket v$.
For convenience, we denote an edge to node $v$ labeled with $\ell$ pictographically as $\ledge{\ell}{v}$.
Now a node $v$ with low edge $\ledge \alpha {v_0}$ and high edge $\ledge \beta{v_1}$, 
represents the state:
$\ket v \defn \alpha\ket{0}\otimes\ket{v_0}+\beta\ket{1}\otimes\ket{v_1}$, where in the base case $\ket{\leafnode} \defn 1$ as defined above. We later define \limdd semantics similarly.

\subsection{Pauli operators and stabilizer states}
\label{sec:prelims-stabilizers}

In contrast to decision diagrams, the \concept{stabilizer formalism}~\cite{gottesman1998heisenberg} forms a classically simulatable subset of quantum computation.
Instead of explicitly representing the (exponential) amplitude vector, the stabilizer formalism 
describes the symmetries a quantum state using so-called \concept{stabilizers}.
A unitary operator $U$ \concept{stabilizes} a state $\ket\phi$ if $\ket\phi$ is a $+1$ eigenvector of $U$, i.e., $U\ket\phi = \ket\phi$.
The formalism considers stabilizers $U$ made up of the single-qubit
Pauli operators $\pauli \defn \set{\id, X, Y, Z}$ as defined below.
In fact, a stabilizer is taken from the $n$-qubit Pauli group, defined as $\pauli_n \defn \gen{ \pauli^{\otimes n} }$, i.e. it is the group generated by all $n$-qubit \concept{Pauli strings} $P_n\otimes \dots \otimes P_1$ with $P_i \in \pauli$.
Here we used the notation $\gen G = H$ to denote that $G\subseteq H$ is a generator set for a group $H$.
One can check that 
$\pauli_n = \set{ i^c P_n\otimes \dots \otimes P_1 \mid P_1,\dots,P_n\in \pauli, c \in \set{0,1,2,3} }$, so in particular we have $\pauli_1 = \set{\pm P, \pm i P \mid P \in \Pauli}$ (the Pauli set with a factor $\pm1$ or $\pm i$).

\begin{equation*}
\id \defn \begin{pmatrix} 1 & 0\\ 0 & 1 \end{pmatrix},
X \defn \begin{pmatrix} 0 & 1\\ 1 & 0 \end{pmatrix},
Y \defn \begin{pmatrix} 0 & \hspace{-1.5ex}-i\\ i & 0 \end{pmatrix},
Z \defn \begin{pmatrix} 1 & 0\\ 0 & \hspace{-1.5ex}-1 \end{pmatrix}
\label{eq:pauli-matrices}
\end{equation*}

The set of Pauli stabilizers $\Stab(\ket\phi) \subset \Pauli_n$ of an $n$-qubit quantum state $\ket\phi$ necessarily forms a subgroup of $\Pauli_n$, since the identity operator $\id^{\otimes n}$ is a stabilizer of any $n$-qubit state and moreover if $U$ and $V$ stabilize $\ket\phi$, then so do $UV, VU$ and $U^{-1}$.
Furthermore, any Pauli stabilizer group $G$ is abelian, i.e. $A, B \in G$ implies $AB = BA$.
The reason for this is that elements of $\Pauli_n$ either commute ($AB=BA$) or anticommute ($AB = -BA$) under multiplication and anticommuting elements can never be stabilizers of the same state $\ket{\phi}$, because if $A, B \in \Stab(\ket\phi)$ and $AB = -BA$ then $\ket\phi = AB\ket\phi = -(BA)\ket\phi = -\ket{\phi}$, a contradiction.
Finally, note that $-\id^{\otimes n}$ can never be a stabilizer.
In fact, these conditions are necessary and and sufficient: the class of abelian subgroups $G$ of $\Pauli_n$, not containing $-\id^{\otimes n}$, are precisely all $n$-qubit stabilizer groups.
For clarity, we adopt the convention that we denote Pauli strings \emph{without phase} using the symbols $P,Q,R,\ldots$ and we use the symbols $A,B,C,\ldots$ for Pauli operators including phase; e.g., we may write $A=\lambda P$.
The phase $\lambda$ of any stabilizer $\lambda P \in \pauli$ can only be $\lambda =\pm 1$, derived as
\begin{equation}
\label{eq:stabilizer-scalar-is-pmone}
\forall \lambda P \in \Stab(\ket{\phi}) : 
\quad 
\ket{\phi} = (\lambda P)\ket{\phi} = (\lambda P)^2 \ket{\phi} =  \lambda^2 \id \ket{\phi} = \lambda^2 \ket{\phi}
\quad
\implies
\quad
\lambda = \pm 1
.
\end{equation}

The number of generators $k$ for a $n$-qubit stabilizer group $S$ can range from $1$ to $n$, and $S$ has $2^k$ elements.
If $k=n$, then there is only a single quantum state $\ket\phi$ (a single vector up to complex scalar) which is stabilized by $S$; such a state is called a \concept{stabilizer state}.
Equivalently,
$\ket \phi = C\ket{0}^{\otimes n}$ where $C$ is a circuit composed of only Clifford unitaries, a group generated by the Clifford gates:
\begin{equation*}
\textnormal{(Hadamard gate) }\mbox{ } H \defn \frac 1{\sqrt 2}\begin{pmatrix} 1 & 1\\ 1 & \hspace{-1.5ex}-1 \end{pmatrix},~~~
\textnormal{(phase gate) }\mbox{ } S \defn \begin{pmatrix} 1 & 0\\ 0 & \hspace{-1.5ex}-i \end{pmatrix},~~~
\text{and }CNOT \defn \left( \begin{smallmatrix}
            1 & 0 & 0 & 0\\
            0 & 1 & 0 & 0\\
            0 & 0 & 0 & 1\\
            0 & 0 & 1 & 0
        \end{smallmatrix}\right).
\end{equation*}
In the stabilizer formalism, an $n$-qubit stabilizer state is succinctly represented through $n$ independent generators of its stabilizer group, each of which is represented by $\mathcal O(n)$ bits to encode the Pauli string (plus factor), yielding $\mathcal O(n^2)$ bits in total.
Examples of (generators of) stabilizer groups are $\gen Z$ for $\ket{0}$  and
$\gen{X \otimes X, Z \otimes Z}$ for $\frac 1{\sqrt 2}(\ket{00} + \ket{11})$.
Updating a stabilizer state's generators after application of a Clifford gate or a single-qubit computational-basis measurement can be done in polynomial time in $n$ \cite{gottesman1998heisenberg, aaronson2008improved}.
Various efficient algorithms exist for manipulating stabilizer (sub)groups $S$, including testing membership (is $A\in \Pauli_n$ a member of $S$?) and finding a generating set of the intersection of two stabilizer (sub)groups.
These algorithms predominantly use standard linear algebra, e.g., Gauss-Jordan elimination, as described in \autoref{app:prelims-linear-algebra} in detail.

In this work, we also consider states which are not stabilizer states and which therefore have a nonmaximal stabilizer group (i.e. $< n$ generators).
To emphasize that a stabilizer group need not be maximal, i.e. it is a subgroup of maximal stabilizer groups, we will use the term \emph{stabilizer subgroup}.
Such objects are also studied in the context of simulating mixed states \cite{audenaert2005entanglement} and quantum error correction~\cite{gottesman1997stabilizer}.
Examples of stabilizer subgroups are $\{\id\}$ for $\frac1{\sqrt{2}}(\ket{0} + e^{i\pi/4}\ket{1})$, $\langle -Z\rangle$ for $\ket{1}$ and $\langle X \otimes X\rangle$ for $\frac1{\sqrt{6}}(\ket{00} + \ket{11}) + \frac{1}{\sqrt 3}(\ket{01} + \ket{10})$.
In contrast to stabilizer states, in general a state is not uniquely identified by its stabilizer subgroup.

\concept{Graph states} on $n$ qubits are the output states of circuits with input state $\frac 1{2^{n/2}} (\ket{0} + \ket{1})^ {\otimes n}$ followed by only $\textnormal{CZ} \defn \dyad{00} + \dyad{01} + \dyad{10} - \dyad{11}$ gates, and form a strict subset of all stabilizer states that is also important in error correction and measurement-based quantum computing~\cite{hein2006entanglement}.
By the (two-dimensional) cluster state on $n^2$ qubits, we mean the graph state whose graph is the $n \times n$ grid.

Given a vector space $V \subseteq \{0, 1\}^n$ and a length-$n$ bitstring $s$, the corresponding \concept{coset state} is $\frac{1}{\sqrt{|V|}} \sum_{x \in V} \ket{x + s}$ where `$+$' denotes bitwise xor-ing~\cite{aaronson2004multilinear}.
Each coset state is a stabilizer state.

\concept{Stabilizer decomposition-based methods}~\cite{bravyi2016trading,bravyi2017improved,bravyi2019simulation,huang2019approximate,kocia2018stationary,kocia2020improved} extend the stabilizer formalism to families of Clifford circuits with arbitrary input states $\ket{\phi_n}$, enabling the simulation of universal quantum computation~\cite{bravyi2005universal}.
By decomposing the $n$-qubit state $\ket{\phi_n}$ as linear combination of $\chi$ stabilizer states followed by simulating the circuit on each of the $\chi$ stabilizer states, the measurement outcomes can be computed in time $\oh(\chi^2 \cdot \textnormal{poly}(n))$, where the least $\chi$ is referred to as the \concept{stabilizer rank} of $\ket{\phi_n}$.
Therefore, stabilizer-rank based methods are efficient for any family of input states $\ket{\phi_n}$ whose stabilizer rank grows polynomially in $n$.

A specific method for obtaining a stabilizer decomposition of the output state of an $n$-qubit circuit is by rewriting the circuit into Clifford gates and $T = \dyad{0} + e^{i\pi/4}\dyad{1}$ gates (a universal gate set).
Next, each of the $T$ gates can be converted into an ancilla qubit initialized to the state $T\ket{+}$ where $\ket{+} = \frac 1{\sqrt 2}(\ket{0} + \ket{1})$; thus, an $n$-qubit circuit containing $t$ $T$ gates will be converted into an $n+t$-qubit Clifford circuit with input state $\ket{\phi} = \ket{0}^{\otimes n} \otimes \left(T\ket{+}\right)^{\otimes t}$ \cite{bravyi2016trading}.
We will refer to the resulting \textit{specific} stabilizer-rank based simulation method as the `Clifford + $T$ simulator,' whose simulation runtime scales with $\chi_t=\chi(\left(T\ket+\right)^{\otimes t})$, the number of stabilizer states in the decomposition of $\ket{\phi}$.
Trivially, we have $\chi_t \leq 2^t$, and although recent work~\cite{bravyi2016trading, bravyi2017improved} has found decompositions smaller than $2^t$ terms based on weak simulation methods, the scaling of $\chi_t$ remains exponential in $t$.
We emphasize that the Clifford + T decomposition is not necessarily optimal, in the sense that the intermediate states of the circuit might have lower stabilizer rank than $\ket{T}^{\otimes t}$ does.
Consequently, if a given circuit contains $t=\Omega(n)$ $T$-gates, then the Clifford + T simulator requires exponential time (in $n$) for simulating this $n$-qubit circuit, even if there exist polynomially-large stabilizer decompositions of each of the circuit's intermediate and output states (i.e., in principle, there might exist another stabilizer rank-based simulator that can simulate this circuit efficiently).

\subsection{Matrix product states}

Representing quantum states as \concept{matrix product states} (MPS) has proven  successful for solving a large range of many-body physics problems \cite{white1992density,perez2006matrixProductStateRepresentations}.
For qubits, an $n$-qubit MPS $M$ is formally defined as a series of $2n$ matrices $A_k^x \in \mathbb{C}^{D_k \times D_{k+1}}$ where $k\in [n], x\in \{0, 1\}, D_k \in \mathbb{N}_{\geq 1}$ and $D_1 = D_{n+1} = 1$.
Here, $D_{k+1}$ is the matrix dimension over the $k$-th \concept{bond}.
The interpretation $\ket{M}$ is determined as $\langle x_1 x_2 \dots x_n| M \rangle = A_1^{x_1} A_2^{x_2}\cdots A_n^{x_n}$ for $x_1, \dots, x_n \in \{0, 1\}$.
If the bond dimension may scale exponentially in the number of qubits, any family of quantum states can be represented exactly by an MPS.

The \concept{Schmidt rank} of a state $\ket\phi$ on $n$ qubits, relative to a bipartition of the qubits into two sets $A$ and $B$, is the smallest integer $m\geq 1$ such that $\ket\phi$ can be expressed as the superposition $\ket\phi=\sum_{j=1}^m c_j\ket{a_j}_A\ket{b_j}_B$ for complex coefficients $c_j$, where the states $\ket{a_j}_A$ ($\ket{b_j}_B$) form an orthonormal basis for the Hilbert space of the $A$ register ($B$ register).
The relation with MPS is that the maximum Schmidt rank with respect to any bipartition $A=\{x_1,\ldots, x_k\},B=\{x_k+1,\ldots, x_n\}$  is precisely the smallest possible bond dimension $D_{k+1}$ required to exactly express a state in MPS.

Vidal~\cite{vidal2003efficient} showed that MPS-based circuit simulation is possible in time $\oh(n\cdot\poly(\chi))$ per elementary operation, where $n$ is number of qubits and $\chi$ the maximum Schmidt rank for all intermediate states computed.

\section{Local Invertible Map Decision Diagrams}
\label{sec:main-results}

\autoref{sec:isomorphism-qmdd} introduces a \limdd definition parameterized with different local operations. We mainly consider the \pauli-\limdd and refer to it simply as \limdd.
We show how \limdds generalize \qmdds
and can represent arbitrary quantum states, normalized or not.
We then use this definition in \autoref{sec:exponential-separations} to show how \limdds
succinctly ---i.e., in polynomial space--- represent  graph states (in particular cluster states), coset states and, more generally, stabilizer states.
On the other hand, \qmdds and MPS require exponential size to represent two-dimensional cluster states.

We translate this exponential advantage in quantum state representation to (universal and strong)  quantum circuit simulation in \autoref{sec:quantum-simulation} by giving 	
	algorithms to update and query the  \limdd data structure.
These \concept{\limdd manipulation algorithms} take a \limdd $\phi$, representing some state $\ket\phi$, and return another \limdd $\psi$ that represents the state $\ket\psi = U\ket \phi$ for standard gates $U$ and also for arbitrary unitaries $U$
(by preparing $U$ in \limdd form first; we show how).
The measurement algorithm we give returns the outcome in linear time in size of the \limdd representation of the quantum state.

For many quantum operations, we show that our manipulation algorithms are efficient on all quantum states, i.e., take polynomial time in the size of the \limdd representation of the state.
Algorithms for certain other operations are efficient for certain classes of states, e.g., all Clifford gates can be applied in polynomial time to a \limdd representing a stabilizer state.
We show that \limdds can be exponentially faster than \qmdds, while they are never slower by more than a multiplicative factor $\oh(n^3)$. 
These algorithms use a \emph{canonical} form of \limdds, such that for each state there is a \emph{unique} \limdd. 
We defer this subject to \autoref{sec:canonicity}, which introduces \concept{reduced} \limdds and efficient algorithms to compute them.

With these algorithms, a quantum circuit simulator can be engineered by applying the circuit's gates one by one on the representation of the state as \limdd.
\autoref{prop:simulation} provides the bottom line of this section by comparing simulator runtimes.
In \autoref{sec:simulation}, we prove \autoref{prop:simulation}.

\begin{proposition}\label{prop:simulation}
Let \qcsim{\cliffordt} denote the runtime of the \cliffordt simulator on \nobreak{circuit~$C$}
(allowing for weak simulation as in \cite{bravyi2017improved}).
Let $\qcsim D$ denote the runtime of strong simulation
of circuit $C$ using method $D=(\pauli-)\limdd$, \qmdd, \qmddc, \text{MPS}, \qmddc.\footnote{We are not aware of any (potentially better) weak $D$-based simulation approaches and do not consider them.} %
Here, the latter is an (imaginary) ideal combination of \qmdd (not tractable for all Clifford circuits) and the stabilizer formalism (tractable for Clifford circuits), i.e., one that always inherits the best worst-case runtime from either method.
\\
The following holds, where $\OO$ discards polynomial factors, i.e., $\OO(f(n)) \defn \Omega(n^{\mathcal O(1)} f(n))$.
\begin{enumerate}
	\item[] There is a family of circuits $C$ such that:
    \item \quad \limdd is exponentially faster than \cliffordt: $ \qcsim{\cliffordt} = \Omega^*(2^n \cdot \qcsim \limdd)$,\footnote{Assuming the exponential time hypothesis (ETH). See \autoref{sec:simulation:cliffordt} for details.}
    \vspace{-\baselineskip}
\label{sim:cliffordt}
	\item \quad \limdd is exponentially faster than MPS: $\qcsim{\text{MPS}} = \Omega^*(2^n \cdot \qcsim \limdd)$, and\label{sim:mps}
	\item \quad \limdd is exponentially faster than \qmdd: $ \qcsim \qmdd = \Omega^*(2^n \cdot \qcsim \limdd)$.\label{sim:qmdd}
	\item For all $C$, \limdd is at worst cubically slower than \qmdd: $\qcsim \limdd = \mathcal O(n^3 \cdot \qcsim \qmdd)$.\label{sim:qmdd-limdd}
	\item \autoref{sim:qmdd} and \ref{sim:qmdd-limdd} hold when replacing \qmdd with \qmddc.\label{sim:qmddc}
\end{enumerate}
\end{proposition}

\subsection{The \limdd data structure}
\label{sec:isomorphism-qmdd}

Where \qmdds only merge nodes representing the same complex vector up to a constant factor, the \limdd data structure goes further by also merging nodes that are equivalent up to local operations, called Local Invertible Maps (LIMs) (see \autoref{def:isomorphism}).
As a result, \limdds can be exponentially more succinct than \qmdds, for example in the case of stabilizer states (see \autoref{sec:exponential-separations}).
    We will call nodes which are equivalent under LIMs, \emph{(LIM-) isomorphic}.
This definition generalizes SLOCC equivalence (Stochastic Local Operations and Classical Communication); if we choose the parameter $G$ to be the linear group, then the two notions coincide (see \cite[App. A]{dur2000three} and ~\cite{bennett1996concentrating,chitambar2014everything}).

\begin{definition}[$G$-LIM, $G$-Isomorphism]
	\label{def:isomorphism}
    An $n$-qubit $G$-Local Invertible Map (LIM) is an operator~$\mathcal{O}$ of the form $\mathcal{O}=\lambda \mathcal{O}_n\otimes\cdots\otimes \mathcal{O}_1$, where $G$ is a group of invertible $2\times 2$ matrices, $\mathcal{O}_i\in G$ and $\lambda\in\mathbb{C}\setminus \{0\}$.
    A $G$-\emph{isomorphism} between two $n$-qubit quantum states $\ket{\phi},\ket{\psi}$ is a LIM $\mathcal{O}$ such that $\mathcal{O}\ket{\phi} = \ket{\psi}$, denoted $\ket{\phi}\simeq_G\ket{\psi}$.
    Note that $G$-isomorphism is an equivalence relation.
\end{definition}

\begin{figure}[t!]
\centering
	\begin{tikzpicture}[-{>[scale=0.5]},>=stealth',shorten >=1pt,auto,node distance=.6cm,
    thick, state/.style={circle,draw,minimum size=14pt},font=\footnotesize]

\node[state, 
] (n1) {};

\node[state](n2)[below = of n1, xshift=-1.1cm]{$q_3$};
\node[state](n3)[below = of n1, xshift= 1.1cm, inner sep = 0pt]{$q_3'$};

\node[state](n21)[below = of n2, xshift=-.6cm, inner sep = 0pt]{$q_2$};
\node[state](n22)[below = of n2, xshift= .6cm, inner sep = 0pt]{$p_2$};
\node[state](n31)[below = of n3, xshift=-.6cm, inner sep = 0pt]{$q_2'$};
\node[state](n32)[below = of n3, xshift= .6cm, inner sep = 0pt]{$p_2'$};

\node[state](n41)[below = of n21, xshift= .6cm, inner sep = 0pt]{$q_1$};
\node[state](n42)[below = of n31, xshift= .6cm, inner sep = 0pt]{$q_1'$};

\node[draw, leaf,below = of n41, xshift=1.1cm] (e) {$1$};

\draw[<-] (n1) --++(90:1cm) node[lbl,pos=.6] {$\frac 14$};

\path[]
(n1) edge[e0] node[left,pos=.5] {} (n2)
(n1) edge[e1] node[lbl]			{-$1$} (n3)
(n2) edge[e0] node[left,pos=.5] {} (n21)
(n2) edge[e1] node[,lbl] 		{-$\omega$} (n22)
(n3) edge[e0] node[left,pos=.5] {} (n31)
(n3) edge[e1] node[lbl] {$\omega$} (n32)

(n21) edge[e0=20] node[left,pos=.5] {}  (n41)
(n21) edge[e1=20] node[lbl		  ] {$i$}  (n41)
(n22) edge[e0=0] node[left] {}  (n42)
(n22) edge[e1] node[lbl,right,yshift=.cm] {-$\omega$}  (n41)
(n31) edge[e0=10] node[left,pos=.5] {} (n42)
(n31) edge[e1=20] node[lbl,pos=.7	 ] {$i$} (n42)
(n32) edge[e0=0] node[left,pos=.5] {} (n41)
(n32) edge[e1=20] node[lbl       ] {-$\omega$} (n42)

(n41) edge[e0=20] node[pos=.5] {} (e)
(n41) edge[e1=20] node[pos=.5] {} (e)
(n42) edge[e0=20] node[pos=.5] {} (e)
(n42) edge[e1=20] node[lbl] {$-1$} (e);

\node[left =.5cm of n1,yshift=.7cm] {a)};

\node[state, right = 3.7cm of n1, label=right:{}] (n1) {};

\node[state](n2)[below = of n1, xshift=-1.1cm]{$q_3$};
\node[state](n3)[below = of n1, xshift= 1.1cm, inner sep = 0pt]{$q_3'$};

\node[state](n21)[below = of n2, xshift=-.6cm, inner sep = 0pt]{$q_2$};
\node[state](n22)[below = of n2, xshift= .6cm, inner sep = 0pt]{$p_2$};
\node[state](n31)[below = of n3, xshift=-.6cm, inner sep = 0pt]{$q_2'$};
\node[state](n32)[below = of n3, xshift= .6cm, inner sep = 0pt]{$p_2'$};

\node[state](n4)[below = of n22, xshift= .6cm, inner sep = 0pt]{$\ell_1$};

\node[draw, leaf,below = of n4] (e) {$1$};

\draw[<-] (n1) --++(90:1cm) node[lbl,pos=.6] {$\frac 14$};

\path[]
(n1) edge[e0] node[left,pos=.5] {} (n2)
(n1) edge[e1] node[lbl]			{$-1$} (n3)
(n2) edge[e0] node[left,pos=.5] {} (n21)
(n2) edge[e1] node[,lbl] 		{-$\omega$} (n22)
(n3) edge[e0] node[left,pos=.5] {} (n31)
(n3) edge[e1] node[lbl] {$\omega$} (n32)

(n21) edge[e0= 60] node[left,pos=.5] {}  (n4)
(n21) edge[e1=-30] node[lbl,pos=.1,below right] {$i$}  (n4)
(n22) edge[e0= 40] node[lbl,left,pos=.2] {$Z$}  (n4)
(n22) edge[e1= 0] node[lbl,pos=.6,xshift=-.3cm] {-$\omega$}  (n4)
(n31) edge[e0=  0] node[left,pos=.2,lbl] {$Z$} (n4)
(n31) edge[e1= 40] node[lbl,pos=.22,	 ] {$iZ$} (n4)
(n32) edge[e0=-20] node[left,pos=.5] {} (n4)
(n32) edge[e1= 60] node[lbl       ] {-$\omega Z$} (n4)

(n4) edge[e0=20] node[pos=.5] {} (e)
(n4) edge[e1=20] node[pos=.5] {} (e);

\node[left =.5cm of n1,yshift=.7cm] {b)};

\node[state, right = 3.1cm of n1, label=right:{}] (n1) {};

\node[state](n2)[below = .3cm of n1, xshift=-1.1cm, inner sep = 0pt]{$q_3$};
\node[state](n3)[below = .3cm of n1, xshift= 1.1cm, inner sep = 0pt]{$q_3'$};

\node[state](n21)[below = .9cm of n2, inner sep = 0pt]{$\ell_2$};
\node[state](n31)[below = .9cm of n3, inner sep = 0pt]{$\ell_2'$};

\node[state](n4)[below = of n21, xshift= 1.1cm, inner sep = 0pt]{$\ell_1$};

\node[draw, leaf,below = of n4] (e) {$1$};

\draw[<-] (n1) --++(90:1cm) node[lbl,pos=.6] {$\frac 14$};

\path[]
(n1) edge[e0] node[left,pos=.5] {} (n2)
(n1) edge[e1] node[lbl]			{$-1$} (n3)
(n2) edge[e0] node[below,pos=.2,lbl] {$\id\otimes  \id$} (n21)
(n2) edge[e1=10] node[pos=.0,yshift=-.2cm,lbl]	{-$\omega TZX\otimes \id$} (n31)
(n3) edge[e0=-20] node[lbl,below,pos=.6] {$\id\otimes Z$} (n21)
(n3) edge[e1] node[lbl,pos=.4,xshift=-.2cm] {$\omega T\otimes X$} (n31)

(n21) edge[e0= 60] node[left,pos=.5] {}  (n4)
(n21) edge[e1=-30] node[lbl,pos=.1,below right] {$i\cdot \id$}  (n4)
(n31) edge[e0=  0] node[left,pos=.2] {} (n4)
(n31) edge[e1= 40] node[lbl,pos=.3,	 ] {$Z$} (n4)

(n4) edge[e0=20] node[pos=.5] {} (e)
(n4) edge[e1=20] node[pos=.5] {} (e);

\node[left =.5cm of n1,yshift=.7cm] {c)};

\node[state, right = 2.5cm of n1, label=right:{}] (n1) {};

\node[state](n2)[below = of n1]{$\ell_3$};

\node[state](n21)[below = of n2, xshift=-.9cm, inner sep = 0pt]{$\ell_2$};
\node[state](n22)[below = of n2, xshift= .9cm, inner sep = 0pt]{$\ell_2'$};

\node[state](n41)[below = of n21, xshift= .9cm]{$\ell_1$};

\node[draw, leaf,below = of n41] (e) {$1$};

\draw[<-] (n1) --++(90:1cm) node[lbl,pos=.6,right] {$\frac 14 \cdot \id \otimes\id \otimes\id \otimes X$};

\path[]
(n1) edge[e0=50] node[    ] {} (n2)
(n1) edge[e1=30] node[pos=.4767,lbl,xshift=-.4cm] {$Z\otimes \id \otimes Z$} (n2)
(n2) edge[e0   ] node[    ] {} (n21)
(n2) edge[e1   ] node[pos=.6,lbl,xshift=-.6cm] 		{$\omega \cdot T \otimes Z$} (n22)

(n21) edge[e0=20] node[   ] {}  (n41)
(n21) edge[e1=20] node[lbl,pos=.3] {$i\cdot \id$}  (n41)
(n22) edge[e0=20] node[   ] {}  (n41)
(n22) edge[e1=20] node[lbl] {$Z$}  (n41)

(n41) edge[e0=30] node[pos=.5] {} (e)
(n41) edge[e1=30] node[pos=.5] {} (e);

\node[left =.5cm of n1,yshift=.7cm] {d)};

    \node[right=1.9cm of n1,yshift=.5mm] (x) {$4$};
    \node[left =-.cm of x,yshift=.5mm] (y) {\hspace{-.8cm}Level:};
    \node[below=.8cm of  x] (x) {$3$};
    \node[below=.8cm of  x] (x) {$2$};
    \node[below=.8cm of  x] (x) {$1$};

\end{tikzpicture}
	\vspace{-1em}
	\caption{A \qmdd (a) representing the state 
	$\frac 14 [1,1,i,i, -\omega,\omega,i,i, -1,1,-i,i,-\omega,-\omega, i, -i]^\top$
	with $\omega=e^{i\pi/4}$, evolving into a \limdd (d).
	As in \autoref{fig:qmdd-small-example}, diagram nodes are horizontally ordered in `levels' with qubit indices ${4,3,2,1}$. Low edges are dashed, high edges solid. See the text for an explanation.
	\\By convention, unlabelled edges have label $1$ (for \qmdd) or $\id^{\otimes k}$ (for \limdd nodes at level~$k$).
    }
	\label{fig:qmdd-isoqmdd-exposition}
\end{figure}

We define $\paulilim_n \defn  \gen{\pauli}$-\LIM, %
i.e., the group of Pauli operators $P\in\pauli_n$ with arbitrary complex factor~$\lambda \in \mathbb{C}\setminus\{0\}$ ($\lambda$ can absorb the factor $\gamma=\pm1,\pm i$ in $P = \gamma P_n \otimes\dots \otimes P_1$. Note $\lambda=\pm 1$ still for $\paulilim_n$ operators which are stabilizers, by eq.~\eqref{eq:stabilizer-scalar-is-pmone}).

Before we give the formal definition of \limdds in \autoref{def:limdd}, we give a motivating example in \autoref{fig:qmdd-isoqmdd-exposition}, which uses $\gen{X,Y,Z,T}$-\LIM{s} to demonstrate how the use of isomorphisms can yield small diagrams for a four-qubit state.
This figure shows how to merge nodes in four steps, shown in subfigures (a)-(d), starting with a large \qmdd (a) and ending with a small \limdd (d).
In the \qmdd (a), the nodes labeled $q_1$ and $q_1'$ represent the single-qubit states $\ket{q_1}=\left[1, 1\right]^\top$ and $\ket{q_1'}=\left[1, -1\right]^\top$, respectively.
By noticing that these two vectors are related via $\ket{q_1'}=Z\ket{q_1}$, we merge nodes $q_1,q_1'$ into node $\ell_1$ in (b), storing the isomorphism $Z$ on all incoming edges that previously pointed to $q_1'$.
From step (b) to (c), we first merge $q_2,q_2'$ into $\ell_2$,
	observing that $\ket{q_2'}= \id \otimes Z \ket{q_2}$.
Second, we create a node $\ell_2'$ such that
	$\ket{p_2}=  TZX\otimes I\ket{\ell_2'}$ and
	$\ket{p_2'}= T  \otimes X\ket{\ell_2'}$.
	So we can merge nodes $p_2,p_2'$ into $\ell_2'$, placing these isomorphisms on the respective edges.
To go from (c) to (d), we merge nodes $q_3,q_3'$ into node $\ell_3$ by noticing that $\ket{q_3'}=(Z\otimes \mathbb I\otimes Z)\ket{q_3}$.
This isomorphism $Z\otimes\id\otimes Z$ is stored on the high edge out of the root node.
We have $\ket{q_3}=\id\otimes\id\otimes X\ket{\ell_3}$, so we propagate the isomorphism $\id\otimes\id\otimes X$ upward, and store it on the root edge.
Therefore, the  final \limdd has the LIM $\frac 14 \id \otimes\id \otimes\id \otimes X$ on its root edge.

The resulting data structure in \autoref{fig:qmdd-isoqmdd-exposition} is a \limdd of only six nodes instead of ten, but requires additional storage for the LIMs.
\autoref{sec:exponential-separations} shows that merging isomorphic nodes sometimes leads to exponentially smaller diagrams, while the additional cost of storing the isomorphisms results only costs a linear factor of space (linear in the number of qubits).

The transformation presented above (for \autoref{fig:qmdd-isoqmdd-exposition}) only considers particular choices for LIMs. For instance, it would be equally valid to select LIM $\id \otimes Z$ instead of $-\id \otimes XZ$ for mapping $q_2'$ onto $q_2$. In fact, efficient algorithms to select LIMs in such a way that a canonical \limdd is obtained are a cornerstone for the \limdd manipulation algorithms presented in \autoref{sec:quantum-simulation}.
\autoref{sec:canonicity} provides a solution for \gen\pauli-LIMs (the basis for all results presented in the current article), which is based on using the stabilizers of each node, e.g., the group generated by $\set{\id \otimes X, Y\otimes \id}$~for~$q_2$.

\tikzset{every node/.style={initial text={}, inner sep=2pt, outer sep=0}}

\begin{definition}
	\label{def:limdd}
    An $n$-\glimdd is a rooted, directed acyclic graph (DAG)
     representing an $n$-qubit quantum state. %
    Formally, it is a $6$-tuple $(\Node\cup \{\leaf\}, \index,\Low,\High,\lbl,e_{\text{root}})$,~where:
\begin{itemize}
    \item $\leaf$ (a sink) is a unique leaf node with qubit index $\index(\leaf) = 0$;
	\item $\Node$ is a set of nodes with qubit indices $\index(v) \in \{1,\ldots, n\}$ for $v\in \Node$;
	\item $e_{\text{root}}$ is a root edge without source pointing to the root node $r\in \Node$ with $\index(r) = n$;
	\item $\Low,\High \colon \Node \to \Node\cup \{\leaf\}$ indicate the low  and high edge functions, respectively.  We write $\Low_v$ (or $\High_v$) to obtain the edge $(v,w)$ with $w = \low v$ (or $w = \high v$).
		For all $v \in \Node$ it holds that $\index(\low v) = \index(\high v) = \index(v)  - 1$ (no qubits are skipped\footnote{\label{fn:skipping}Decision diagram definitions~\cite{akers1978binary,bryant86,feinstein2011skipped} often allow to skip (qubit) variables, interpreting them as `don't cares.' We disallow this here, since it complicates definitions and proofs, while at best it yields linear size reductions~\cite{knuth4}.
});
	\item $\lbl\colon \Low\cup \High\cup\set{e_{\text{root}}}\to k-G$-$\LIM\cup\set 0$ is a function labeling edges $(\,.\,, w)$ with $k$-$G$-LIMs or $0$, where $k = \index(w)$ 
\end{itemize}
\vspace{-1em}
We will find it convenient to write $\lnode[u]{A}{v}{B}{w}$ for a node $u$ with low and high edges to nodes $v$ and $w$ labeled with $A$ and $B$, respectively.
We will also denote \ledge Av for a (root) edge to $v$ labeled with~$A$. When omitting $A$ or $B$, e.g., $\ledge {}v$, the LIM should be interpreted as $\id^{\otimes \index(v)}$.

We define the semantics of a leaf, node $v$ and an edge $e$ to node $v$ by overloading the Dirac notation:
\begin{align*}
\ket \leaf &\defn 1\\
\ket{e} & \defn 
      \lbl(e) \cdot \ket v
\\
\ket{v}     & \defn \ket{0}\otimes\ket{\Low_v}+ \ket{1}\otimes \ket{\High_v}
\end{align*}
\end{definition}
It follows from this definition that a node $v$ with $\index(v)=k$ represents a quantum state on $k$ qubits.
\emph{This state is however not necessarily normalized:}
For instance, a normalized state $\alpha\ket0 + \beta \ket1$,
can be represented as a \limdd \lonenode[v] \alpha\beta
or a \limdd \lonenode[v] {}{\nicefrac\beta\alpha} with root edge
\ledge \alpha{v}. So the node $v$ represents a state up to global scalar.
But, in general, any scalar can be applied to the root edge, or any other edge for that matter. So \limdds can represent any complex vector.

The tensor product $\ket{e_0} \otimes \ket{e_1}$ of the $G$-\limdds with root edges
$\ledge[e_0] {A}{v}$ and $\ledge[e_1] {B}{w}$   can be computed  just like for \qmdds~\cite{miller2006qmdd}: Take all edges $\leafedge \alpha$ pointing to the leaf in the \limdd $e_0$ and replace them with edges $\ledge {\alpha \cdot B}{w}$ pointing to the $e_1$ root node $w$. The result is an $n +m$ level \limdd if $e_0$ has $n$ levels and  $e_1$ has $m$.
 In addition, the LIMs $C$ on the other edges in the \limdd~$e_0$ should be extended to $C \otimes \id^{\otimes m}$.

We can now consider various instantiations of the above general \limdd definition  for different LIM groups $G$.
A \glimdd with $G=\set{\mathbb I}$ yields precisely all \qmdds by definition, i.e., all edges labels effectively only contain scalars.
As all groups $G$ contain the identity operator $\mathbb I$, the universality of \glimdds (i.e., all quantum states can be represented) follows from the universality of \qmdds.
It also follows that any state that can efficiently be represented by \qmdd, can be efficiently represented by a \glimdd for any $G$.
Similarly, we can consider $\gen Z$ and $\gen X$, which are subgroups of the Pauli group, and define a \gen Z-\limdd and a \gen X-\limdd; instances that we will study for their relation to graph states and coset states
 in \autoref{sec:exponential-separations}.
 Finally, and most importantly, \gen\Pauli-\limdds can represent all stabilizer states in polynomial space, which is a feature that neither \qmdds nor matrix product states (MPS) posses, as shown in \autoref{sec:exponential-separations}.
 
In what follows, we only consider \gen X-, \gen Z-, and \gen\Pauli-\limdds, or \pauli-\limdd for short. For \pauli-\limdds, we now illustrate how to find the amplitude of a computational basis state $\langle x | \psi \rangle$ for a bitstring $x \in \{0, 1\}^n$ by traversing the \limdd
of the state $\ket\psi$ from root to leaf, as follows.
Suppose that this diagram's root edge $e_{\text{root}}$ points to node $r$ and is labeled with the LIM $\lbl(e_{\text{root}})=A=\lambda P_{n}\otimes \cdots\otimes P_1 \in \paulilim_n$.
First, we substitute $\ket r = \ket 0\ket{\Low_r}+\ket 1\ket{\High_r}$, where $\Low_r,\High_r$ are the low and high edges going out of $r$, thus obtaining $\braket{x|\psi}=\braket{x|e_{\text{root}}} = \bra{x} A\left( \ket 0\ket{\Low_r}+\ket1\ket{\High_r} \right)$.
Next, we notice that $\bra x A=\lambda (\bra{x_n}P_n)\otimes \cdots\otimes (\bra{x_1}P_1)=\gamma\bra y$ for some $\gamma\in \mathbb C$ and a computational basis state $\bra y$.
Therefore, letting $y'=y_{n-1}\ldots y_1$, it suffices to compute $\bra{y_n}\bra {y'}(\ket 0\ket{\Low_r}+\ket 1\ket{\High_r})$, which reduces to computing either $\braket{y'|\Low_r}$ if $y_n=0$, or $\braket{y'|\High_r}$ if $y_n=1$.
By applying this simple rule repeatedly, one walks from the root to the leaf, encountering one node on each level.
The amplitude $\braket{x|\psi}$ is then found by multiplying together the scalars $\gamma$ found along this path. \autoref{alg:amplitude} formalizes this. Its runtime is $\oh(n^2)$. %

\begin{algorithm}
	\begin{algorithmic}[1]
		\Procedure{ReadAmplitude}{\Edge $\ledge[e]{A}{v}$, $x_n, \dots, x_1 \in \set{0,1}$ \textbf{with} $n = \index(v)$}
        \If{$n = 0$} \Return $\lambda$
        \EndIf
		\Stateh $\gamma \bra{y_n \dots y_1} := \bra{x_n \dots x_1} \lambda P_n \otimes \dots \otimes P_1$ \Comment{$\oh(n)$-computable LIM operation}
		\label{algline:amplitude:lim}
		\If{$y_n = 0$}\Comment{$y_n = 0$}
		\State \Return $\gamma \cdot \textsc{ReadAmplitude}(\Low_v, y_{n-1}, \dots, y_1)$
		\Else\Comment{$y_n = 1$}
		\State \Return $\gamma \cdot \textsc{ReadAmplitude}(\High_v, y_{n-1}, \dots, y_1)$
		\EndIf
		\EndProcedure
	\end{algorithmic}
	\caption{Read the amplitude for basis state $\ket{x_n \dots x_1}$ from $n$-qubit state $\ket e = A\cdot\ket { v}$ with $A = \lambda P_n \otimes \dots \otimes P_1 \in \paulilim_n$.}
	\label{alg:amplitude}
\end{algorithm}

\subsection{Succinctness of \limdds}
\label{sec:exponential-separations}

Succinctness is crucial for efficient simulation, as we show later.
In this section, we show exponential advantages for representing states with \limdds over two other state-of-the-art data structures: \qmdds and Matrix Product States (MPS)~\cite{white1992density,perez2006matrixProductStateRepresentations}.
Specifically, \qmdds and MPS require exponential space in the number of qubits to represent specific stabilizer states called (two-dimensional) cluster states.
We also show that an ad-hoc combination of \qmdd with the stabilizer formalism
still requires exponential space for `pseudo-cluster states.'
These results are visualized in \autoref{fig:venn-diagram}.

\subsubsection{\limdds are exponentially more succinct than \qmdds (union stabilizer states)}

\begin{wrapfigure}[20]{r}{7cm}\footnotesize
\vspace{-1em}
	\begin{tikzpicture}[node distance=.7cm,minimum height=.5cm]
		\node[draw] (pl) 			   {Pauli-\limdd};
		\node[draw, below =of pl, yshift=.1cm] (pq) {\qmddc};
		\node[draw, below = .3cm of pq, xshift=-2.5cm] (xl) {\gen X-\limdd};
		\node[draw, below = .3cm of pq, xshift= 2.5cm] (zl) {\gen Z-\limdd};
		\node[draw, below =of xl, xshift= 2.5cm, yshift=-.6cm] (qm) {\qmdd};

		\draw[-stealth] (xl.north) to (pl.south west);
		\draw[-stealth] (zl.north) to (pl.south east);
		\draw[-stealth] (qm) to (zl.south west);
		\draw[-stealth] (qm) to (xl.south east);

		\node[draw, fill=lightgray, below =of pq, yshift=.4cm, xshift=0cm] (ss) {stabilizer states};
		\node[draw, fill=lightgray, below = .4cm of ss, xshift= 2.5cm] (gs) {graph states};
		\node[draw, fill=lightgray, below = .4cm of ss, xshift=-2.5cm] (co) {coset states};
		\node[draw, fill=lightgray, below = .7cm of gs, yshift=.3cm] (cs) {2D cluster states};

		\draw[-stealth] (gs.north west) to (ss);
		\draw[-stealth] (co.north east) to (ss);
		\draw[-stealth] (cs) to (gs);

		\draw[-stealth,bend left=20] (pq) to (pl);
		\draw[-stealth] (ss) to (pq);
		\draw[-stealth] (co) to (xl);
		\draw[-stealth] (gs) to (zl);

		\draw[-stealth] (gs) to (zl);
		\draw[-stealth] (gs) to (zl);
		
		\draw[dashed, bend left=20] (pl) to  %
					(pq);
		\draw[dashed] (cs) to (qm.east);
		\draw[dashed] (co.south) to (qm.west);
	\end{tikzpicture}
	\caption{
	 Relations between non-universal classes of quantum states (gray) and decision diagrams, where we consider a diagram as the set of states that it can represent in  polynomial size.
	 Solid arrows denote set inclusion.
	 Dashed arrows $D_1\dashrightarrow D_2$ signify an exponential separation between two classes,
	 i.e., some quantum states have polynomial-size representation in $D_1$, but only exponential-size in $D_2$.\\
	 By transitivity, \qmdd is exponentially separated from all representations (not drawn for clarity).
    }
    \label{fig:complexity-classes-inclusion-diagram}
    \vspace{-1em}
\end{wrapfigure}

\autoref{fig:complexity-classes-inclusion-diagram} visualizes succinctness relations between different quantum state representations, as proved in \autoref{prop:separation}.
In particular, $G$-\limdds with $G = \gen\pauli$ can be exponentially more succinct than \qmdds, and retain this exponential advantage even with $G=\braket{Z},\braket{X}$.
In \autoref{thm:limdd-superset-qmdd-plus-stabilizers}, we show the strongest result, namely that \limdds are also more succinct than the union of \qmdds and stabilizer states, written \qmddc, which can be thought of a structure that switches between \qmdd and the stabilizer formalism depending on its content (stabilizer or non-stabilizer state). This demonstrates that ad-hoc combinations of existing formalisms do not make \limdds obsolete.

\begin{proposition}\label{prop:separation}
The inclusions and separations in \autoref{fig:complexity-classes-inclusion-diagram} hold.
\end{proposition}
\begin{proof}
The inclusions between the sets of states shown in gray are well known~\cite{aaronson2004multilinear,hein2006entanglement}.
The inclusions between decision diagrams hold because, e.g., a \qmdd is a $G$-\limdd with $G = \set{\id}$, i.e., each label is of the form $\lambda\id[n]$ with $\lambda\in\mathbb C$, as discussed in \autoref{sec:isomorphism-qmdd}.
The relations between coset, graph, stabilizer states and $G$-\limdd with $G=\gen X,\gen Z, \gen\pauli$ are proven in \autoref{thm:pauli-limdd-is-stabilizer} and \autoref{sec:graph-states-limdds}
(which also shows that poly-sized \limdd includes \qmddc).
\autoref{thm:limdd-superset-qmdd-plus-stabilizers} shows that there is family of a non-stabilizer states (with small \limdd) for which \qmdd is exponential, hence the separation between \qmddc.
\autoref{thm:graph-state-qmdd-lower-bound} shows the separation with \qmdds by demonstrating that
the so-called (two-dimensional) cluster state, requires $2^{\Omega(\sqrt{n})}$ nodes as \qmdd.
Finally, \autoref{sec:graph-states-limdds} proves the same for coset states.
\end{proof}

\autoref{thm:pauli-limdd-is-stabilizer} shows that any stabilizer state can be represented as a
$\gen\Pauli$-Tower \limdd (\autoref{def:tower}).

\begin{definition}\label{def:tower}
	A $n$-qubit $G$-Tower-\limdd, is a $G$-\limdd with exactly one node on each level. 
	Edges to nodes on level $k$ are labeled as follows:
	low edges are labeled with $\id^{\otimes k}$, high edges with $P\in G^{\otimes k} \cup \{0\}$
	and the root edge is labeled with $\lambda \cdot P$ with $P\in G^{\otimes k}$ and $\lambda \in \mathbb C \setminus \set 0$ (i.e., in contrast to high edges, the root edge can have an arbitrary scalar).
	\autoref{fig:tower} depicts a $n$-qubit $G$-Tower \limdd.
\end{definition}

\begin{restatable}{theorem}{thmpaulitower}%
	\label{thm:pauli-limdd-is-stabilizer}
	\label{thm:pauli-tower-limdds-are-stabilizer-states}
    Let $n>0$.
    Each $n$-qubit stabilizer state is represented up to normalization by a \gen\Pauli-Tower \limdds of \autoref{def:tower}, e.g., where the scalars $\lambda$ of the {\paulilim}s $\lambda P$ on high edges are restricted as $\lambda \in \{0, \pm 1, \pm i\}$.
	Conversely, every such \limdd represents a stabilizer state.
\end{restatable}

\begin{proof}[Proof sketch of \autoref{thm:pauli-limdd-is-stabilizer}] (Full proof in \autoref{sec:graph-states-limdds})
    The $n=1$ case: the six single-qubit states $\ket{0}, \ket{1}, \ket{0} \pm \ket{1}$ and $\ket{0} \pm i \ket{1}$ are all represented by a \gen\Pauli-Tower \limdd with a single node on top of the leaf.  The induction step: Let $\ket{\psi}$ be an $n$-qubit stabilizer state.
    First, consider the case that $\ket{\psi} = \ket{a}\ket{\psi'}$ where $\ket{a} = \alpha \ket{0} + \beta \ket{1}$ (with $\alpha, \beta \in \{0, \pm 1, \pm i\}$) and $\ket{\psi'}$ are stabilizer states on respectively $1$ and $n-1$ qubits.
    Then $\ket{\psi}$ is represented by the \gen\Pauli-Tower-\limdd~~\lnode{\alpha \id}{\psi'}{\beta \id}{\psi'}.
	In the remaining case, $\ket{\psi} = \frac{1}{\sqrt{2}}\left( \ket{0}\ket{\psi_0}+\ket{1}\ket{\psi_1} \right)$, where both $\ket{\psi_0}$ and $\ket{\psi_1}$ are stabilizer states.
    Moreover, since $\ket{\psi}$ is a stabilizer state, there is always a set of single-qubit Pauli gates $P_1,\ldots, P_n$ and a $\lambda \in \{\pm 1, \pm i\}$ such that $\ket{\psi_1}=\lambda P_n\otimes\cdots\otimes P_1\ket{\psi_0}$.
	That is, in our terminology, the states $\ket{\psi_0}$ and $\ket{\psi_1}$ are \emph{isomorphic}.
	Hence $\ket{\psi}$ can be written as
	\begin{align}
		\ket{\psi}=\frac{1}{\sqrt{2}} \left[\ket{0}\ket{\psi_0} + \lambda \ket{1}\otimes \left(P_n\otimes\cdots\otimes P_1\ket{\psi_0}\right)\right]
	\end{align}
    Hence $\ket{\psi}$ is represented by the Tower \Pauli-\limdd~$\lnode{\id}{\psi_0}{\lambda P_n \otimes \cdots\otimes P_1}{\psi_0}$.
    In both cases, $\ket{\psi'}$ is represented by a Tower \Pauli-\limdds (up to normalization) by the induction hypothesis.
\end{proof}

\begin{tabular}{p{10.3cm}p{4cm}}
	\begin{tikzpicture}[->,>=stealth',shorten >=1pt,auto,node distance=1cm,
        thick, state/.style={circle,draw,minimum size=0.5cm},font=\footnotesize, scale=0.3,
    		inner sep=0pt,]
    \node[state] (a1) {};
    \node[state, below =.7cm of a1] (a3) {$v$};
    \node[state, below =.7cm of a3] (a4) {};
    \node[draw,rectangle,minimum size=0.4cm, below= .5cm of a4] (w4) {1};

    \draw[<-] (a1) --++(90:3cm) node[pos=.7,lbl] {$\nicefrac 1{\sqrt 2} \cdot \id^{\otimes3}$} node[right,pos=.8] {$e_{GHZ}$};
    \draw[e0=25] (a1) edge  node[] {} (a3);
    \draw[e1=25] (a1) edge  node[lbl,right] {$X \otimes X$} (a3);
    \draw[e0=25] (a1) edge  node[] {} (a3);
    \draw[e1=25] (a3) edge  node[lbl,right] {$0$} (a4);
    \draw[e0=25] (a3) edge  node[] {} (a4);
    \draw[e0=25] (a4) edge  node[] {} (w4);
    \draw[e1=25] (a4) edge  node[lbl,right] {0} (w4);

    \node[state,right= 3cm of a1] (a1) {};
    \node[state, below =.7cm of a1] (a3) {};
    \node[state, below =.7cm of a3] (a4) {};
    \node[draw,rectangle,minimum size=0.4cm, below= .5cm of a4] (w4) {1};

    \draw[<-] (a1) --++(90:3cm) node[pos=.7,lbl] {$\nicefrac 1{\sqrt 8} \cdot \id^{\otimes3}$} node[right,pos=.8] {$e_{+++}$};
    \draw[e0=25] (a1) edge  node[] {} (a3);
    \draw[e1=25] (a1) edge  node[right] {} (a3);
    \draw[e0=25] (a1) edge  node[] {} (a3);
    \draw[e1=25] (a3) edge  node[right] {} (a4);
    \draw[e0=25] (a3) edge  node[] {} (a4);
    \draw[e0=25] (a4) edge  node[] {} (w4);
    \draw[e1=25] (a4) edge  node[right] {} (w4);

    \node[state,right= 3cm of a1] (a1) {};
    \node[state, below =.7cm of a1] (a3) {};
    \node[state, below =.7cm of a3] (a4) {};
    \node[draw,rectangle,minimum size=0.4cm, below= .5cm of a4] (w4) {1};

    \draw[<-] (a1) --++(90:3cm) node[pos=.7, lbl] 
    		{$\nicefrac 1{\sqrt 8} \cdot \id^{\otimes3}$} node[right,pos=.8] {$e_{\phi}$};
    \draw[e0=25] (a1) edge  node[] {} (a3);
    \draw[e1=25] (a1) edge  node[lbl,right] {$X \otimes X$} (a3);
    \draw[e0=25] (a1) edge  node[] {} (a3);
    \draw[e1=25] (a3) edge  node[lbl,right] {$Y$} (a4);
    \draw[e0=25] (a3) edge  node[] {} (a4);
    \draw[e0=25] (a4) edge  node[] {} (w4);
    \draw[e1=25] (a4) edge  node[lbl,right] {$-1$} (w4);
 
	\end{tikzpicture}
&
~\begin{tikzpicture}[
    scale=0.3,
    every path/.style={>=latex},
    every node/.style={},
    inner sep=0pt,
    minimum size=14pt,
    line width=1pt,
    node distance=.7cm,
    thick,
    font=\footnotesize
    ]

    \node[draw,circle] (a1) {};
    \node[draw,circle, below =of a1] (a2) {};
    \node[ below =-.12cm of a2] (a) {$\pmb{\vdots}$};
    \node[draw,circle, below =.5cm of a2] (a3) {};
    \node[draw,circle,rectangle,minimum size=0.4cm, below=of a3] (w1) {$1$};

    \draw[<-] (a1) --++(90:2cm) node[lbl,pos=1.4] {$\lambda \cdot L_n$};
    \draw[e0=25] (a1) edge  node[] {} (a2);
    \draw[e1=25] (a1) edge  node[lbl,right] {$L_{n-1}$} (a2);
    \draw[e0=25] (a3) edge  node[] {} (w1);
    \draw[e1=25] (a3) edge  node[lbl,right] {$L_0$} (w1);

    \node[left=.5cm of a1] (x) {Level $n$};
    \node[below= of  x,xshift=0cm] (x) {Level ${n-1}$};
    \node[below= of  x,xshift=0cm] (x) {Level $1$};
\end{tikzpicture}
\\
	\refstepcounter{figure}%
      \small Figure~\thefigure: Example \gen\pauli-Tower \limdds for three stabilizer states:
      		the GHZ state $\ket{e_{GHZ}} = \frac{1}{\sqrt{2}} \left(\ket{000} + \ket{111}\right)$,
      		for $\ket{e_{+++}} = \ket{+++}$ where $\ket{+} = \frac{1}{\sqrt{2}} \left(\ket{0} + \ket{1}\right)$, and 
      		the state $\ket{e_\phi} = \frac{1}{\sqrt{8}} \left( \ket{000} - \ket{001} + i \ket{010} + i\ket{011} + i \ket{100} + i \ket{101} - \ket{110} + \ket{111} \right)$ with stabilizer group generators \set{X\otimes X\otimes X, -Z\otimes Z\otimes X, \id \otimes Y\otimes Z}.
	    \label{fig:tower-examples}
& 
	\refstepcounter{figure}%
      \small Figure~\thefigure: An $n$-qubit $G$-Tower \limdd.
	We let $L_i\in G^{\otimes i}\cup\set 0$ and $\lambda \in \mathbb C\setminus \set 0$ (only root edges have an arbitrary scalar). 
	    \label{fig:tower}
\\
\end{tabular}

We stress that obtaining the LIMs for the Pauli Tower-\limdd of a stabilizer state is not immediate from the stabilizer generators; specifically, the edge labels in the Pauli-\limdd are not directly the stabilizers of the state.
For example, the $GHZ$ state $\frac 1{\sqrt 2} (\ket{000} + \ket{111})$ is represented by $\ket{e_{GHZ}} = \frac 1{\sqrt 2} \lnode{\id}{v}{X \otimes X}{v}$ with $\ket{v} = \ket{00}$ in \autoref{fig:tower-examples}, but $X\otimes X$ is not a stabilizer of $\ket{00}$.
Nonetheless, \autoref{thm:pauli-limdd-is-stabilizer} implicitly contains an algorithm that constructs a \gen{\pauli}-Tower \limdd stabilizer state. \autoref{sec:canonicity} also provides the inverse construction,
which we use to make \limdds (representing any quantum state) canonical in time $\oh(m n^3)$ (using \autoref{alg:canonical}).

We also note that \autoref{thm:pauli-limdd-is-stabilizer} demonstrates that for any $n$-qubit stabilizer state $\ket{\phi}$, the $(n-1)$-qubit states $(\bra{0} \otimes \id[2^{n-1}]) \ket{\phi}$ and $(\bra{1} \otimes \id[2^{n-1}]) \ket{\phi}$ are not only stabilizer states, but also \paulilim-isomorphic.
While we believe this fact is known in the community,%
\footnote{For instance, this fact can be observed (excluding global scalars) by executing the original algorithm for simulating single-qubit computational-basis measurement on the first qubit, as observed in~\cite{gottesman1998heisenberg}.
Similarly, the characterization  in \autoref{prop:separation} of \gen Z-Tower-\limdds as representing precisely the graph states, is immediate by defining graph states recursively (see \autoref{sec:graph-states-limdds}).
The fact that $\langle X\rangle$-Tower \limdds represent coset states is less evident and requires a separate proof, which we also give in \autoref{sec:graph-states-limdds}.
}
we have not found this statement written down explicitly in the literature.
More importantly for this work, to the best of our knowledge, the resulting recursive structure (which DDs are) has not yet been exploited in the context of classical simulation.

Next, \autoref{thm:graph-state-qmdd-lower-bound} shows the separation with \qmdds by demonstrating that
the so-called (two-dimensional) cluster state, requires $2^{\Omega(\sqrt{n})}$ nodes as \qmdd. \autoref{thm:limdd-superset-qmdd-plus-stabilizers} shows that a trivial combination with stabilizer formalism does not solve this issue.

\begin{restatable}{theorem}{thmgraphqmddlower}
	\label{thm:graph-state-qmdd-lower-bound}
	Denote by $\ket{G_n}$ the two-dimensional cluster state, defined as a graph state on the $n \times n$ lattice. Each \qmdd representing $\ket{G_n}$ has at least $2^{\floor{n/12}}$ nodes.
\end{restatable}
\begin{proof}[Proof sketch]
	Consider a partition of the vertices of the $n\times n$ lattice into two sets $S$ and $T$ of size $\frac{1}{2}n^2$, corresponding to the first $\frac{1}{2}n^2$ qubits under some variable order.
	Then there are at least $\lfloor n/3 \rfloor$ vertices in $S$ that are adjacent to a vertex in $T$ \cite[Th. 11]{lipton1979generalized}.
	Because the degree of the vertices is small, many vertices on this boundary are not connected and therefore influence the amplitude function independently of one another.
	From this independence, it follows that, for any variable order,
	the partial assignments $\vec a \in \set{0,1}^{\frac{1}{2}n^2}$ induce
	$2^{\lfloor n/12 \rfloor}$ different subfunctions $f_{\vec a}$, where
	$f\colon \set{0,1}^{n^2} \rightarrow \mathbb C$
	is the amplitude function of $\ket{G_n}$.
	The lemma follows by noting that a \qmdd has a single node per unique subfunction modulo phase.
	For details see \autoref{sec:graph-state-lower-bound}.
\end{proof}

\begin{corollary}[Exponential separation between Pauli-\limdd versus QMDD union stabilizer states]
	\label{thm:limdd-superset-qmdd-plus-stabilizers}
	There is a family of non-stabilizer states, which we call \concept{pseudo cluster states}, that  have
	polynomial-size Pauli-\limdd but exponential-size \qmdds  representation.
\end{corollary}
\begin{proof}
    Consider the pseudo cluster state $\ket{\phi} = \frac1{\sqrt{2}}(\ket{0} + e^{i\pi/4}\ket{1}) \otimes \ket{G_n}$ where $\ket{G_n}$ is the graph state on the $n\times n$ grid.
    This is not a stabilizer state, because each computational-basis coefficient of a stabilizer state is of the form $z\cdot \frac 1{\sqrt{2}^k}$ for $z\in \{\pm 1, \pm i\}$ and some integer $k\geq 1$ \cite{nest2005local}, while $\bra{1}\otimes \bra{0}^{\otimes n^2} \ket{\phi} = e^{i\pi/4} \cdot \left(\frac{1}{\sqrt{2}}\right)^{n^2+1}$ is not of this form.
    Its canonical \qmdd and Pauli-\limdd have root nodes $\lnode{1}{G_n}{e^{i\pi/4}}{G_n}$
    and $\lnode{\id}{G_n}{e^{i\pi/4} \id}{G_n}$, where the respective diagram for $G_n$ is exponentially large (\autoref{thm:graph-state-qmdd-lower-bound}) and polynomially small (\autoref{thm:pauli-limdd-is-stabilizer}).
\end{proof}

\subsubsection{\limdds are exponentially more succinct than matrix product states}
\label{sec:separation-with-mps}

\autoref{thm:mps} states that matrix product states (MPS) require large bond dimension for representing the two-dimensional cluster states, which follows directly from the well-known results that these states have large Schmidt rank.

\begin{restatable}{theorem}{thmmps}
\label{thm:mps}
    To represent the graph state on the $n \times n$ grid (the two-dimensional cluster state on $n^2$ qubits), an MPS requires bond dimension $2^{\Omega(n)}$.
\end{restatable}
\begin{proof}
	Van den Nest et al.~\cite{nest2007classical} consider spanning trees over the complete graph where each node corresponds to a qubit and define the Schmidt-rank width: the largest encountered base-$2$ logarithm of the Schmidt rank between the two connected components resulting from removing an edge from the spanning tree, minimized over all possible spanning trees.
    It then follows from the relation between bond dimension and Schmidt rank (see \autoref{sec:preliminaries}) that any quantum state with Schmidt-rank width $w$ requires bond dimension $2^{w}$ for representation by an MPS.
    Van den Nest et al. also showed that for graph states, the Schmidt-rank width equals the so-called rank width of the graph, which for $n \times n$ grid graphs was shown to equal $n-1$ by Jelinek~\cite{jelinek2010rankwidth}.
    This proves the theorem.
\end{proof}

In contrast, the Pauli-\limdd efficiently represents cluster states, and more generally all stabilizer states (\autoref{thm:pauli-limdd-is-stabilizer}).

\subsection{Pauli-\limdd manipulation algorithms for simulation of quantum computing}
\label{sec:quantum-simulation}

In this section, we give all algorithms that are necessary to simulate a quantum circuit with Pauli-\limdds (referred to simply as \limdd from now on).
We provide algorithms which update the \limdd after an arbitrary gate and after a single-qubit measurement in the computational basis.
In addition, we give efficient specialized algorithms for applying a Clifford gate to a stabilizer state (represented by a \gen\pauli-Tower \limdd) and computing a measurement outcome.
We also show that many (Clifford) gates can in fact be applied to an arbitrary state in polynomial time.
\autoref{tab:complexity} provides an overview of the \limdd algorithms and their 
complexities compared to \qmdds.

\begin{table}[b!]
    \caption{Worst-case complexity of currently \emph{best-known algorithms} for applying specific operations, in terms of the size of the input diagram size $m$
	(i.e., the number of nodes in the DD) and the number of qubits $n$.
   	Although addition (\textsc{Add}) of quantum states is not, strictly speaking, a quantum operation, we include it because it is a subroutine of gate application.
	Note that several of the \limdd algorithms invoke \textsc{MakeEdge}
    and therefore inherit its cubic complexity (as a factor).}
\label{tab:complexity}
\centering
\begin{tabular}{l|ll|l}
    \bf Operation $\backslash$ input:
                    & \bf \qmdd & \bf \limdd & \bf Section \\
\hline
    Single $\ket{0}/\ket{1}$-basis measurement &  $\oh(m)$ & $\oh(m)$ & \autoref{sec:measurement} \\
Single Pauli gate  & $\oh(m)$ & $\oh(1)$ & \autoref{sec:simple-gates} \\
Single Hadamard gate / \textsc{Add()}  &  $\oh(2^n)$ \footnote{The worst-case of \qmdds and \limdds is caused by the vector addition introduced by the Hadamard gate~\cite[Table~2, +BC, +SLDD]{fargier2014knowledge}. See \autoref{fig:explosion} for an example. \label{fn1}}  & $\oh(n^3 2^n)$ $^{\ref{fn1}}$ & \autoref{sec:simple-gates} \\
Clifford gate on stabilizer state & $\oh(2^n)$ & $\oh(n^4)$ & \autoref{sec:clifford-polytime} \\
Multi-qubit gate  &  $\oh(4^n)$  & $\oh(n^34^n)$    & \autoref{sec:applygate} \\
\makeedge & $\oh(1)$ & $\oh(n^3)$ & \autoref{sec:makeedge} \\
Checking state equality &  $\oh(1)$ & $\oh(n^3)$
& \autoref{sec:pauli-equivalence-checking} 
\end{tabular}
\end{table}

Central to the speed of many DD algorithms is keeping the diagram canonical throughout the computation.
Recall from \autoref{sec:isomorphism-qmdd}, that a $G$-\limdd can merge isomorphic nodes
$v \simeq_G w$, i.e., if there exists a $G$-LIM $C$ such that $\ket w = C \ket v$.
To achieve this, we require a `\makeedge' subroutine which, given the node \lnode[w]{A}{v_0}{B}{v_1}, returns \ledge{C}{v} with $C \ket v = \ket w$, where $v$ is the unique, canonical node in the diagram that is $G$-isomorphic to node $w$.
\autoref{sec:makeedge} provides a
$\mathcal O(n^3)$ \makeedge algorithm for \gen\pauli-\limdds satisfying this specification.
For now, the reader may assume the provisional implementation in \autoref{alg:make-edge-prov}, which does not yet merge LIM-isomorphic nodes and hence does not yield canonical diagrams.

In line with other existing efficient decision-diagram algorithms, we use dynamic programming in our algorithms to avoid traversing all paths  (possibly exponentially many) in the \limdd.
In this approach, the decision diagram is manipulated and queried using recursive algorithms,
which store intermediate results for each recursive call to avoid unnecessary recomputations.
For instance, \autoref{alg:canonical} makes any \limdd canonical using dynamic programming and the (real) $\oh(n^3)$ \makeedge algorithm from \autoref{sec:makeedge}.
It recursively traverses child nodes at \autoref{can:recurse},
reconstructing the diagram bottom up in the backtrack at \autoref{can:makeedge}.
By virtue of dynamic programming it visits each node only once:
The table  $\textsc{Canonical}\cache \colon \Node \to \Edge$ stores for each node its canonical counterpart as soon as it is computed at \autoref{can:makeedge}. 
The algorithm therefore runs in time $\mathcal O(n^3 m)$ where $m$ is the number of nodes in the original diagram.

\begin{algorithm}
	\begin{algorithmic}[1]
		\Procedure{\makeedge}{\Edge $\ledge {A}{v}$, \Edge $\ledge Bw$}
			\State  $u:=\lnode[u] AvBw$
			\State \Return \Edge $\ledge {\id^{\otimes k}}u$  \Comment{Where $k = \index(u)$}
		\EndProcedure
	\end{algorithmic}
	\caption{Provisionary algorithm \makeedge for creating a new node/edge. Given two edges representing states $A\ket v,B\ket w$, it returns an edge representing the state $\ket 0A\ket v + \ket 1B\ket w$.
	The real \makeedge algorithm (\autoref{sec:makeedge}) returns a canonical node, assuming $v,w$ are already canonical.
}
	\label{alg:make-edge-prov}
\end{algorithm}
\begin{algorithm}
	\begin{algorithmic}[1]
		\Procedure{MakeCanonical}{\Edge $\ledge {A}{v}$}
		\If{$v \notin \textsc{Canonical}\cache$}\label{can:cache1} \Comment{Compute result once for $v$ and store in cache:}
		\State $e_0,~ e_1 := \textsc{MakeCanonical}(\low v), \textsc{MakeCanonical}(\high v)$
		\label{can:recurse}
		\State $\textsc{Canonical}\cache[v] := \makeedge(e_0, e_1)$
		\label{can:makeedge}
		\EndIf
		\State \Return $A  \cdot \textsc{Canonical}\cache[v]$
							\Comment{Retrieve result from cache}
		\label{can:rebuild2}
		\EndProcedure
	\end{algorithmic}
	\caption{Make any \limdd canonical using \makeedge.}
	\label{alg:canonical}
\end{algorithm}

This recursive algorithmic structure that uses dynamic programming and reconstructs the diagram in the backtrack, is typical for all DD manipulation algorithms.
Note that constant-time cache lookups (using a hash table) therefore require the canonical nodes produced by \makeedge.
\limdds additionally require the addition of LIMs to the caches; 
\autoref{sec:applygate} shows how we do this.

Finally, in this section, we often decompose LIMS using  $A = \lambda  P_n \otimes P' $.
Here $\lambda \in \mathbb C$ is a non-zero scalar, $P'$ a Pauli string and $P_n\in \set{\mathbb I,X,Z, Y}=\pauli$.
Our algorithms will use the \textsf{Follow} procedure from \autoref{alg:follow} to easily navigate diagrams according to edge semantics. Provided with a bit string $x_n\dots x_1$, the procedure is the same as \textsc{ReadAmplitude}. If however fewer bits are supplied, it returns a \limdd root edge representing a subvector.
For instance, the subvector for $\ket{10}$ of the \limdd root edge $e_r$ in \autoref{fig:qmdd-isoqmdd-exposition} (d) is computed by taking $\ket{\follow {10}{e_r}} = \ket{\ledge{\frac 14 \id\otimes XZ}{\ell_2}} =\frac 14\cdot [-1, 1, -i, i]$.
So, we can specify it as $\ket{\follow{b}{e}} = (\bra{b} \otimes \id^{n - \ell}) \ket{e}$, i.e., select the $b$th block of size $2^{n - \ell}$ from the vector $\ket e$ (or rather, return a \limdd edge representing that block).

\begin{algorithm}
	\begin{algorithmic}[1]
		\Procedure{Follow}{\Edge $\ledge[e]{\lambda P_n \otimes \dots \otimes P_1}{v}$, $x_n, \dots, x_k \in \set{0,1}$ \textbf{with} $n = \index(v)$ and $k \geq 1$}
        \If{$k > n$} \Return $\ledge{\lambda P_n \otimes \dots \otimes P_1}{v} $
        \Comment{End of bit string}
        \EndIf
		\Stateh $\gamma \bra{y_n \dots y_k} := \bra{x_n \dots x_k} \lambda P_n \otimes \dots \otimes P_k$ \Comment{$\oh(n)$-computable LIM operation}
		\label{algline:amplitude:lim}
		\If{$y_n = 0$}\Comment{$y_n = 0$}
		\State \Return $\gamma \cdot \textsc{Follow}(\Low_v, y_{n-1}, \dots, y_k)$
		\Else\Comment{$y_n = 1$}
		\State \Return $\gamma \cdot \textsc{Follow}(\High_v, y_{n-1}, \dots, y_k)$
		\EndIf
		\EndProcedure
	\end{algorithmic}
	\caption{\textsc{Follow}: a generalization of \textsc{ReadAmplitude}, returning edges.}
	\label{alg:follow}
\end{algorithm}

\subsubsection{Performing a measurement in the computational basis}
\label{sec:measurement}

\begin{algorithm}[t!]
	\begin{algorithmic}[1]
		\Procedure{MeasurementProbability}{\Edge $e$}
		\State $s_0 := \textsc{SquaredNorm}(\follow 0e)$
		\State $s_1 := \textsc{SquaredNorm}(\follow 1e)$
		\State \Return $s_0/(s_0+s_1)$
		\EndProcedure
        \Procedure{SquaredNorm}{$\Edge \ledge{\lambda P}{v}$ \textbf{with} $\lambda \in \mathbb C, P\in \Pauli^{\index(v)}$}
		\If{$\index(v)=0$}
		\Return $|\lambda|^2$
		\EndIf
		\If{$v \notin \textsc{SNorm}\cache$}\label{sn:cache1} \Comment{Compute result once for $v$ and store in cache:}
		\State $\footnotesize\textsc{SNorm}\cache[v] :=  \textsc{SquaredNorm}(\follow 0{\ledge {\mathbb I}v}) + \textsc{SquaredNorm}(\follow 1{\ledge {\mathbb I}v})$\label{sn:cache2}
		\EndIf
		\State \Return $|\lambda|^2 \cdot \textsc{SNorm}\cache[v]$
							\Comment{Retrieve result for $v$ from cache and multiply with $|\lambda|^2$}
		\EndProcedure
        \Procedure{\project}{\Edge $\ledge[e]{\lambda P}{v}$, measurement outcome $m\in\set{0,1}$}
        \If{$m=0$}
	        \State $e_r := \makeedge(\follow 0e, ~~0 \cdot \follow 0e)$
        \Else
	        \State $e_r := \makeedge(0 \cdot \follow 0e,~~ \follow 1e)$
	     \label{l:project-diag}
         \EndIf
		\State \Return $1/\sqrt{\textsc{SquaredNorm}(e_r)} \cdot e_r$
				  \label{l:project-project}
		\EndProcedure
	\end{algorithmic}
    \caption{Algorithms \textsc{MeasurementProbability} and \textsc{\project} for respectively computing the probability of observing outcome $\ket 0$ when measuring the top qubit of a Pauli \limdd in the computational basis and converting the \limdd to the post-measurement state after outcome $m\in \{0, 1\}$.
		The subroutine \textsc{SquaredNorm} takes as input a Pauli \limdd edge $e$, and returns $\braket{e|e}$.
		It uses a cache to store the value $s$ of a node $v$.}
	\label{alg:measurement-top-qubit}
\end{algorithm}

We discuss algorithms for measuring, sampling and updating after measurement of the top qubit.
\autoref{sec:advanced-algorithms} gives general algorithms with the same worst-case runtimes.

The procedure \textsc{MeasurementProbability} in \autoref{alg:measurement-top-qubit} computes the probability $p$ of observing the outcome $\ket 0$ for state $\ket e$.
If the quantum state can be written as $\ket e=\ket 0\ket{e_0} + \ket 1\ket{e_1}$, then the probability is $p=\braket{e_0|e_0}/\braket{e|e}$, where we have $\braket{e|e}=\braket{e_0|e_0}+\braket{e_1|e_1}$.
Hence we compute the squared norms of $e_x = \follow xe$ using the \textsc{SquaredNorm} subroutine. 
The total runtime is dominated by the subroutine \textsc{SquaredNorm}, which computes the quantity $\braket{e|e}$ given a \limdd edge $e=\ledge {\lambda P}v$ by traversing the entire \limdd.
We have $\braket{e|e}=|\lambda|^2\bra{v}P^\dagger P\ket{v}=|\lambda|^2\braket{v|v}$, because $P^{\dagger}P=\mathbb I$ for Pauli matrices.
Therefore, to this end, it computes the squared norm of $\ket{v}$.
Since $\braket{v|v}=\braket{\Low_v|\Low_v}+\braket{\High_v|\High_v}$, this is accomplished by recursively computing the squared norm of the node's low and high edges.
This subroutine visits each node at most once by virtue of dynamic programming, which stores
intermediate results in a cache 
$\textsc{SNorm}\cache \colon \Node \to \mathbb R$
 for all recursive calls (\autoref{sn:cache1}, \ref{sn:cache2}).
Therefore, it runs in time $\oh(m)$ for a diagram with $m$ nodes.

The outcome $m\in \{0,1\}$ can then be chosen by flipping a $p$-biased coin.
The corresponding state update is implemented by the procedure \project.
In order to update the state $\ket{e}=\ket 0\ket{e_0} + \ket 1\ket{e_1}$ after the top qubit is measured to be $m$, we simply construct an edge $\ket{m}\ket{e_m}$ using the \makeedge subroutine. This state is finally normalized by multiplying (the scalar on) the resulting root edge with a normalization constant computed using squared norm.

To sample from a quantum state in the computational basis, simply repeat the measurement procedure for edge \ledge{}{v} with $k=\index(v)$, throw a $p$-biased coin to determine $x_k$, use $\follow{x_k}{\ledge{}{v}}$ to go to level $k-1$ and repeat the process.

\subsubsection{Gates with simple \limdd algorithms}
\label{sec:simple-gates}
As a warm up, before we give the algorithm for arbitrary gates and Clifford gates,
we first give algorithms for several gates that have
a relatively simple and efficient \limdd manipulation operation.
In the case of a controlled gate, we distinguish two cases, depending whether the control or the target qubit comes first; we call these a \emph{downward} and an \emph{upward} controlled gate, respectively.

Here, we let \nobreak{$L_k$} denote the unitary applying local gate $L$ on qubit $k$, i.e.,
$L_k \defn\id^{\otimes n-k} \otimes  L \otimes \id^{\otimes k-1}$.

\paragraph{Applying a \textbf{single-qubit Pauli gate} $Q$} to qubit $k$ of a \limdd, by updating the diagram's root edge from $A$ to $Q_k A$, i.e., change 
$A =\lambda P_n\otimes\cdots\otimes P_1$ to $\lambda P_n\otimes\cdots\otimes P_{k+1}\otimes QP_k\otimes P_{k-1}\otimes\cdots\otimes P_1$.
		Since only nodes ---and not root edges--- need be canonical, this can be done in constant time, provided that the \limdd is stored in the natural way (uncompressed with objects and pointers).

\paragraph{Applying any diagonal or antidiagonal single-qubit gate} to the top qubit can be done efficiently, e.g., applying the \textbf{$T$-gate to the top qubit}.
	For root edge $e=\ledge{\rootlim}{v}$, we can construct $e_x = \follow xe$,
		which propagates the root edge's LIM to the root's two children.
	Then, for a diagonal node $\diagg \alpha\beta$, we construct a new root node
	$\makeedge(\alpha\cdot e_0, \beta \cdot e_1)$. For the anti-diagonal gate $\antii \beta\alpha$, it is sufficient to note that $\antii \beta\alpha=X\cdot \diagg \alpha\beta$; thus, we can first apply a diagonal gate, and then an $X$ gate, as described above.

\begin{wrapfigure}[6]{r}{2.7cm}
\vspace{-.5em}
\begin{tikzpicture}[->,>=stealth',shorten >=1pt,auto,node distance=1cm,
        thick, state/.style={circle,draw,minimum size=14pt},font=\footnotesize]
        
    \node[state](r) {$v$};
    \node[state](1a)[below = .7cm of r, xshift=-.29cm]{$v_0$};
    \node[state](1b)[below = .7cm of r, xshift= .29cm]{$v_1$};
    
    \node[above = .6cm of r] (x) {};

	\node[right = .2cm of r,yshift=-.0cm,rotate=-0] {$\rightsquigarrow$}; 
    
    \path[]
    (x)  edge         node[lbl,xshift=-.4cm,pos=.1] {$\rootlim$} (r)
    (r)  edge[e1]     node[right, pos=.3,lbl] {$B$} (1b)
    (r)  edge[e0]     node[left,pos=.3,lbl] {$A$} (1a)
    ;

    \node[state,right = .8cm  of r, inner sep = 0pt](r) {$v'$};
    \node[state](1a)[below = .7cm of r, xshift=-.29cm]{$v_0$};
    \node[state](1b)[below = .7cm of r, xshift= .29cm]{$v_1$};
    
    \node[above = .6cm of r] (x) {};
    
    \path[]
    (x)  edge         node[lbl,pos=.18,xshift=-.7cm] {$S_k \rootlim S_k^\dagger$} (r)
    (r)  edge[e0]     node[left, pos=.3,lbl] {$A$} (1a)
    (r)  edge[e1]     node[right,pos=.3,lbl] {$iB$} (1b)
    ;
    \end{tikzpicture}
\end{wrapfigure}
\paragraph{Applying a \textbf{phase gate}} ($S=\begin{smallmat}1& 0\\ 0& i\end{smallmat}$) to qubit with index $k$ on $\ledge{\rootlim}{v}$ is also efficient. \autoref{alg:S-gate} gives a recursive procedure.
            If $k < n = \index(v)$ (top qubit), then note $S_k \rootlim \ket v = (S_k \rootlim S_k^{\dagger}) S_k \ket{v}$ where $S_k \rootlim S_k^{\dagger}$ is the new ($\oh(n)$-computable) root \gen\pauli-LIM because $S_k$ is a Clifford gate.
            Hence, we can `push' $S_k$ through the LIMs down the recursion,
            rebuilding the \limdd in the backtrack with \makeedge on \autoref{sg:rebuild} and \ref{sg:rebuild2}.
            To apply $S_k$ to $v$ when $k = n = \index(v)$, we finally multiply the high edge label with~$i$ on \autoref{sg:sk}.
            Dynamic programming, using table \textsc{SGate}\cache, ensures a linear amount of recursive calls in the number of nodes $m$.
            The total runtime is therefore $\oh(m n^3)$, as \makeedge's is cubic (see \autoref{sec:canonicity}).

\begin{algorithm}[h!]
	\begin{algorithmic}[1]
        \Procedure{SGate}{$\Edge~ \ledge{A}{v}$ \textbf{with} $A\in \pauli$-LIM,
        	$k \in \set{1,\dots, \index(v)}$}
		\If{$v \notin \textsc{SGate}\cache$}\label{sg:cache1} \Comment{Compute result once for $v$ and store in cache:}
		\If{$\index(v)=k$}
		\State $\textsc{SGate}\cache[v] := \makeedge(\Low_v, i \cdot \High_v)$
		\label{sg:sk}
		\Else
		\State $\textsc{SGate}\cache[v] := \makeedge(\textsf{SGate}(\Low_v,k), \textsf{SGate}(\High_v,k))$
		\label{sg:rebuild}
		\EndIf
		\EndIf
		\State \Return $S_k A S_k^\dagger  \cdot \textsc{SGate}\cache[v]$
							\Comment{Retrieve result from cache}
		\label{sg:rebuild2}
		\EndProcedure
	\end{algorithmic}
    \caption{Apply gate $S$ to qubit $k$ for \pauli-\limdd \ledge Av. We let $n = \index(v)$.}
	\label{alg:S-gate}
\end{algorithm}

\begin{wrapfigure}[6]{r}{2.7cm}
\vspace{-2em}
\begin{tikzpicture}[->,>=stealth',shorten >=1pt,auto,node distance=1cm,
        thick, state/.style={circle,draw,minimum size=14pt},font=\footnotesize]
        
    \node[state](r) {$v$};
    \node[state](1a)[below = .7cm of r, xshift=-.29cm]{$v_0$};
    \node[state](1b)[below = .7cm of r, xshift= .29cm]{$v_1$};    

    \node[above = .6cm of r] (x) {};
    
    \node[above = .1cm of r,xshift=.2cm] (l) {$\index(v) = c:$};
        
	\node[right = .2cm of r,yshift=-.0cm,rotate=-0] {$\rightsquigarrow$}; 
    
    \path[]
    (r)  edge[e1]     node[right, pos=.3,lbl] {$B$} (1b)
    (r)  edge[e0]     node[left,pos=.3,lbl] {$A$} (1a)
    ;

    \node[state,right = .8cm  of r, inner sep = 0pt](r) {$v'$};
    \node[state](1a)[below = .7cm of r, xshift=-.29cm]{$v_0$};
    \node[state](1b)[below = .7cm of r, xshift= .29cm]{$v_1$};
    
    \node[above = .6cm of r] (x) {};
    
    \path[]
    (r)  edge[e0]     node[left, pos=.3,lbl] {$A$} (1a)
    (r)  edge[e1]     node[right,pos=.3,lbl] {$Q_tB$} (1b)
    ;
    \end{tikzpicture}
\end{wrapfigure}
\paragraph{Applying a \textbf{Downward Controlled-Pauli gate}} $CQ_{t}^c$, where $Q$ is a single-qubit Pauli gate, $c$ the control qubit and $t$ the target qubit with $t<c$, to a node $v$ can also be done recursively.
        If $\index(v)>c$, then since $CQ_t^c$ is a Clifford gate, we may push it through the node's root label, and apply it to the children $\low v$ and $\high v$, similar to the $S$ gate.
        Otherwise, if $\index(v)=c$, then update $v$'s high edge label as $B \mapsto Q_t B$, and do not recurse.
\autoref{alg:cpgate} shows the recursive procedure, which is similar to
\autoref{alg:S-gate} and also has $\oh(m n^3)$ runtime.

\begin{algorithm}[h!]
	\begin{algorithmic}[1]
        \Procedure{CPauliGate}{$\Edge~ \ledge{A}{v}$ \textbf{with} $A\in \pauli$-LIM,
        	$c$, $t$ \textbf{with} $1 \leq c < t \leq n$}
		\If{$v \notin \textsc{CPauli}\cache$}\label{sg:cache1} \Comment{Compute result once for $v$ and store in cache:}
		\If{$\index(v)=k$}
		\State $\textsc{CPauli}\cache[v] := \makeedge(\Low_v, X_t \cdot \High_v)$
		\label{cp:cp}
		\Else
		\State $\textsc{CPauli}\cache[v] := \makeedge(\textsf{CPauliGate}(\Low_v,c ,t), \textsf{CPauliGate}(\High_v,c ,t))$
		\label{cp:rebuild}
		\EndIf
		\EndIf
		\State \Return $CX^c_t \cdot A \cdot {CX^c_t}^\dagger  \cdot \textsc{CPauli}\cache[v]$
							\Comment{Retrieve result from cache}
		\label{cp:rebuild2}
		\EndProcedure
	\end{algorithmic}
    \caption{Apply gate $CX$  with control qubit $c$ and target qubit $t$ for \pauli-\limdd \ledge Av. We let $n = \index(v)$. We can replace $CX$, with $CY, CZ$. modifying \autoref{cp:cp} accordingly (i.e. to $Y_t, Z_t$).}
	\label{alg:cpgate}
\end{algorithm}

\autoref{sec:clifford-polytime} shows that all Clifford gates (including Hadamard and upward CNOT) have runtime  $\oh(n^4)$ when applied to a stabilizer state represented as a \limdd. We first show how to apply general gates, in \autoref{sec:applygate}, as this yields some machinery required for Hadamards
(specifically, a pointwise addition operation).

\subsubsection{Applying a generic multi-qubit gate to a state}
\label{sec:applygate}

We use a standard approach \cite{fujita1997multi} to represent quantum gates ($2^n\times 2^n$ unitary matrices) as \limdds.
Here a matrix $U$ is interpreted as a function $u(r_1,c_1,\ldots, r_n,c_n)\defn \bra rU\ket c$ on $2n$ variables, which returns the entry of $U$ on row $r$ and column $c$.
The function $u$ is then represented using a \limdd of $2n$ levels.
The bits of $r$ and $c$ are interleaved to facilitate recursive descent on the structure.
In particular, for $x,y\in \{0, 1\}$, the subfunction $u_{xy}$ represents a quadrant of the matrix, namely the submatrix
$u_{xy}(r_2, c_2,  \dots, r_n, c_n) \defn u(x, y, r_2, c_2,  \dots, r_n, c_n) $, as follows:
\begin{align}
u=\overbrace{
\left.
\begin{bmatrix}
\marktopleft{aa} u_{00} & \marktopleftb{bb} u_{01} \markbottomright{aa}  \\
~u_{10} & u_{11}\markbottomrightb{bb} \\
\end{bmatrix}
\right\rbrace
}^{u_{0*}}
u_{* 1}
\end{align}
\autoref{def:limdd-as-matrix} formalizes this idea.
\autoref{fig:gates-examples} shows a few examples of gates represented as \limdds.

\begin{definition}[\limdd{}s for gates]
	\label{def:limdd-as-matrix}
A \limdd edge $e=\ledge Au$ can represent a (unitary) $\nobreak{2^n\times 2^n}$ matrix $U$
iff $\index(u)=2n$.
The value of the matrix cell $U_{r,c}$ is defined as $\follow{r_1 c_1 r_2 c_2 \dots r_n c_n}{\ledge Au}$
where $r,c$ are the row and column index, respectively, with binary representation $r_1,\dots,r_n$ and $c_1,\dots,c_n$.
The semantics of a \limdd edge $e$ as a matrix is denoted $[e]\defn U$ (as opposed to its semantics $\ket{e}$ as a vector).
\end{definition}

\vspace{-1em}
\paragraph{The procedure \textsc{ApplyGate}}
 (\autoref{alg:apply-gate-limdd-limdd}) applies a gate $U$ to a state $\ket{\phi}$, represented by \limdds $e_U$ and~$e_\phi$.
It outputs a \limdd edge representing $U\ket{\phi}$.
It works similar to well-known matrix-vector product algorithms for decision diagrams \cite{fujita1997multi,miller2006qmdd}, except that we also handle edge weights with LIMs (see \autoref{fig:apply-gate} for an illustration).
Using the $\follow{x}{e}$ procedure, we write $\ket{\phi}$ and $U$ as
\begin{align}
	\ket{\phi} = & \ket{0}\ket{\phi_0}+\ket{1}\ket{\phi_1} \\
	U = & \ket{0}\bra{0}\otimes U_{00} + \ket{0}\bra{1}\otimes U_{01} + \ket{1}\bra{0}\otimes U_{10} + \ket{1}\bra{1}\otimes U_{11}
\end{align}

\begin{wrapfigure}[14]{r}{7cm}
\vspace{-2.4em}
\begin{tikzpicture}[
    scale=0.3,
    every path/.style={>=latex},
    every node/.style={},
    inner sep=0pt,
    minimum size=14pt,
    line width=1pt,
    node distance=1cm,
    thick,
    font=\footnotesize
    ]

    \node[draw,circle] (a1) {};
    \node[draw,circle, below =of a1] (a3) {};
    \node[draw,circle,rectangle,minimum size=0.4cm, below=of a3] (w1) {1};

    \draw[<-] (a1) --++(90:2cm) node[pos=1.4] {};
    \draw[e0=25] (a1) edge  node[] {} (a3);
    \draw[e1=25] (a1) edge  node[lbl,right] {$X$} (a3);
    \draw[e0=25] (a3) edge  node[] {} (w1);
    \draw[e1=25] (a3) edge  node[lbl,right] {$0$} (w1);

    \node[draw,circle,right= 1cm of a1] (a1) {};
    \node[draw,circle, below =of a1] (a3) {};
    \node[draw,circle,rectangle,minimum size=0.4cm, below=of a3] (w2) {1};

    \draw[<-] (a1) --++(90:2cm) node[pos=1.4] {};
    \draw[e0=25] (a1) edge  node[] {} (a3);
    \draw[e1=25] (a1) edge  node[lbl,right] {$-X$} (a3);
    \draw[e0=25] (a3) edge  node[] {} (w2);
    \draw[e1=25] (a3) edge  node[lbl,right] {$0$} (w2);

    \node[draw,circle,right= 1cm of a1] (a1) {};
    \node[draw,circle, below =of a1] (a3) {};
    \node[draw,circle,rectangle,minimum size=0.4cm, below=of a3] (w3) {1};

    \draw[<-] (a1) --++(90:2cm) node[pos=1.4] {};
    \draw[e0=25] (a1) edge  node[] {} (a3);
    \draw[e1=25] (a1) edge  node[lbl,right] {$Z$} (a3);
    \draw[e0=25] (a3) edge  node[] {} (w3);
    \draw[e1=25] (a3) edge  node[right] {} (w3);

    \node[draw,circle,right= 1cm of a1] (a1) {};
    \node[draw,circle, below =.5cm of a1] (a3) {};
    \node[draw,circle, below =.5cm of a3] (a4) {};
    \node[draw,circle, below =.5cm of a4] (a5) {};
    \node[draw,circle,rectangle,minimum size=0.4cm, below= .5cm of a5] (w4) {1};

    \draw[<-] (a1) --++(90:2cm) node[pos=1.4] {};
    \draw[e0=25] (a1) edge  node[] {} (a3);
    \draw[e1=25] (a1) edge  node[lbl,right] {$X\otimes \id \otimes X$} (a3);
    \draw[e0=25] (a1) edge  node[] {} (a3);
    \draw[e1=25] (a3) edge  node[lbl,right] {$0$} (a4);
    \draw[e0=25] (a3) edge  node[] {} (a4);
    \draw[e1=25] (a4) edge  node[lbl,right] {$X$} (a5);
    \draw[e0=25] (a4) edge  node[] {} (a5);
    \draw[e0=25] (a5) edge  node[] {} (w4);
    \draw[e1=25] (a5) edge  node[lbl,right] {0} (w4);
    
    \node[below = .1cm of w1] {\id gate};
    \node[below = .1cm of w2] {$Z$ gate};
    \node[below = .1cm of w3] {$H$ gate};
    \node[below = -.4cm of w4, xshift=-1.2cm] {CNOT gate:};
\end{tikzpicture}
    \vspace{-1ex}
	\caption{\limdds representing various gates.}
	\label{fig:gates-examples} 
    \vspace{-1em}
\end{wrapfigure}
Then, on \autoref{algline:ag:apply-gate-compute-term}, we compute each of the four terms $U_{rc}\ket{\phi_c}$ for row/column bits $r,c \in \{0,1\}$.
We do this by constructing four \limdds $f_{r,c}$ representing the states
$\ket{f_{r,c}}=U_{r,c}\ket{\phi_c}$, using four recursive calls to the \textsc{ApplyGate} algorithm.
Next, on \autoref{algline:ag:apply-gate-add-0} and \ref{algline:ag:apply-gate-add-1}, the appropriate states are added, using \textsc{Add} (\autoref{alg:add-limdds}), producing \limdds $e_0$ and $e_1$ for the states $\ket{e_0}=U_{00}\ket{\phi_0}+U_{01}\ket{\phi_1}$ and for $\ket{e_1}=U_{10}\ket{\phi_0}+U_{11}\ket{\phi_1}$.
The base case of \textsc{ApplyGate} is the case where $n=0$, which means $U$ and $\ket{v}$ are simply scalars, in which case both $e_U$ and $e_\phi$ are edges that point to the leaf.

\begin{algorithm}[h!]
	\caption{Applies the gate $[e_U]$ to the state $\ket{e_{\phi}}$. Here $e_U$ and $e_{\phi}$ are \limdd edges. 
		The output is a \limdd edge $\psi$ satisfying $\ket{\psi}=[e_U]\ket{e_\phi}$.}
	\label{alg:apply-gate-limdd-limdd}
	\label{alg:apply-gate-limdd-limdd-cache}
	\begin{algorithmic}[1]
	\Procedure{ApplyGate}{\Edge $\ledge[e_U=]{\lambda P}u$, \Edge $\ledge[e_\phi=]{\gamma Q}v$
		 \textbf{with} $\index(u) = 2\cdot\index(v)$}
		\If{$\index(v)=0$}  \Return $\leafedge{\lambda\cdot \gamma}$ \Comment{$P = Q = 1$}
        \EndIf
        \State $P', Q':=\rootlabel(\ledge Pu),\rootlabel(\ledge Qv)$ \Comment{Get canonical root labels}
				\label{algline:ag:apply-gate-root-label}
		\If{$({P'}, u, {Q'}, v)\notin \textsc{Apply-cache}$} 
			\Comment{Compute result for the first time:}
		\For{$r,c\in\{0,1\}$} 
		\State \Edge $f_{r,c}:=\textsc{ApplyGate}(\follow{rc}{\ledge {P'}u},\follow{c}{\ledge{Q'}v})$
			\label{algline:ag:apply-gate-compute-term}
		\EndFor
		\Stateh \Edge $e_0:=\textsc{Add}(f_{0,0}, f_{0,1})$
			\label{algline:ag:apply-gate-add-0}
		\State \Edge $e_1:=\textsc{Add}(f_{1,0}, f_{1,1})$
			\label{algline:ag:apply-gate-add-1}
		\State $\textsc{ApplyCache}[({P'},{u},{Q'},{v})]:=  \makeedge(e_0,e_1)$\Comment{Store in cache}
			\label{algline:ag:apply-gate-store-cache1}
	\EndIf
	\Stateh $e_\psi' := \textsc{Apply-cache}[(P', u, Q', v)]$\Comment{Retrieve from cache}
	\State\Return $\lambda \gamma \cdot e_\psi'$
		\label{algline:ag:apply-gate-retrieve-cache}
			\label{algline:ag:apply-gate-return}
		\EndProcedure
	\end{algorithmic}
\end{algorithm}

\paragraph{Caching in ApplyGate.}
A straightforward way to implement dynamic programming would be to simply store all results of \textsc{ApplyGate} in the cache, i.e., when $\textsc{ApplyGate}(\ledge {\lambda P}u, \ledge {\gamma Q}v)$ is called, store an entry with key $(P,u,Q,v)$ in the cache.
This would allow us to retrieve the result the next time \textsc{ApplyGate} is called with the same parameters.
However, we can do much better, in such a way that we can retrieve the result from the cache also when the procedure is called with parameters $\textsc{ApplyGate}(\ledge Ax, \ledge By)$ satisfying $[\ledge {\lambda P}u]=[\ledge Ax]$ and $\ket{\ledge {\gamma Q}v}=\ket{\ledge By}$.
This can happen even when $\lambda P\ne A$ or $\gamma Q\ne B$; therefore this may prevent many recursive calls.

To this end, we store not just an edge-edge tuple from the procedure's parameters, but a \emph{canonical} edge-edge tuple.
To obtain canonical edge labels, our algorithms use the function \rootlabel which returns a \emph{canonically chosen} LIM, i.e., it holds that $\rootlabel(\ledge Av)=\rootlabel(\ledge Bv)$ whenever $A\ket v=B\ket v$.
A specific choice for \rootlabel is the lexicographic minimum of all possible root labels.
In \autoref{alg:lexmin}, we give an $O(n^3)$-time algorithm for computing the lexicographically minimal root label, following the same strategy as the \makeedge procedure in \autoref{sec:makeedge}.
As a last optimization, we opt to not store the scalars $\lambda, \gamma$ in the cache (they are ``factored out''), so that we can retrieve this result also when \textsc{ApplyGate} is called with inputs that are equal up to a complex phase.
These scalars are then factored back in on \autoref{algline:ag:apply-gate-retrieve-cache} and \ref{algline:ag:apply-gate-store-cache1}.

\begin{figure}
	\centering
	\begin{tikzpicture}[
    scale=0.3,
    every path/.style={>=latex},
    every node/.style={},
    inner sep=0pt,
    node distance=.5cm,
    minimum size=14pt,
    line width=1pt,
    thick,
    font=\footnotesize
    ]

    \node[draw,circle] (r) {$u$};
    \node[draw,circle,below = of r,xshift=-.65cm] (a1) {};
    \node[draw,circle,below = of r,xshift= .65cm] (a2) {};
  
    \node[draw,circle,below = of a1,xshift=-.3cm] (a11) {$u_{00}$};
    \node[draw,circle,below = of a1,xshift= .3cm] (a12) {$u_{01}$};
    \node[draw,circle,below = of a2,xshift=-.3cm] (a21) {$u_{10}$};
    \node[draw,circle,below = of a2,xshift= .3cm] (a22) {$u_{11}$};

    \draw[<-] (r) --++(-4cm,0) node[lbl,pos=.5] {$\lambda A$} node[pos=1.2] {$e_U$};
    
    \draw[e0] (r) edge  node[left]  {} (a1);
    \draw[e1] (r) edge  node[right] {} (a2);
    \draw[e0] (a1) edge  node[left]  {} (a11);
    \draw[e1] (a1) edge  node[right] {} (a12);
    \draw[e0] (a2) edge  node[left]  {} (a21);
    \draw[e1] (a2) edge  node[right] {} (a22);

\draw [-, decorate,
    decoration = {calligraphic brace,mirror},ultra thick] ($(a11.south west) + (270:.4)$) --  ($(a22.south east) + (270:.4)$) node[pos=.5,below=.1cm] {$U$} 
    node[below=.1cm,xshift=.5cm] {$\times$};

    \node[draw,circle,right= 2.5cm of r] (r) {$v$};
    \node[below = of r,xshift=-.65cm] (a1h) {};
    \node[below = of r,xshift= .65cm] (a2h) {};
  
    \node[draw,circle,below = of a1h,xshift= .3cm] (a1) {$v_{0}$};
    \node[draw,circle,below = of a2h,xshift=-.3cm] (a2) {$v_{1}$};

    \draw[<-] (r) --++(-4cm,0) node[lbl,pos=.5] {$\gamma B$} node[pos=1.2] {$e_\phi$};
    
    \draw[e0] (r) edge  node[left]  {} (a1);
    \draw[e1] (r) edge  node[right] {} (a2);

\draw [-, decorate,
    decoration = {calligraphic brace,mirror},ultra thick] ($(a1.south west) + (270:.4)$) --  ($(a2.south east) + (270:.4)$) node[pos=.5,below=.1cm] {$\ket \phi$}
        node[below=.1cm,xshift=.4cm] {$=$};

    \node[draw,circle,right= 2.5cm of r] (r) {};
    \node[below = of r,xshift=-.65cm] (a1h) {};
    \node[below = of r,xshift= .65cm] (a2h) {};
    \node[draw, dotted, below = .6cm of r,xshift=-.8cm,rounded corners,minimum width=1.4cm,minimum height=1.4cm] (a1hh) {};
    \node[draw, dotted, below = .6cm of r,xshift= .8cm,rounded corners,minimum width=1.4cm,minimum height=1.4cm] (a2hh) {};

    \node[draw,circle,below = 1.4cm of r,xshift=-1.2cm] (a11) {};
    \draw[<-] (a11) --++(-.cm,2)  node[pos=1.6] {$f_{0,0}$} node[pos=1.6,xshift=.4cm] {$+$};
    \node[draw,circle,below = 1.4cm of r,xshift=-.4cm] (a12) {};
    \draw[<-] (a12) --++(-.cm,2)  node[pos=1.6] {$f_{0,1}$};
    \node[draw,circle,below = 1.4cm of r,xshift= .4cm] (a21) {};
    \draw[<-] (a21) --++(-.cm,2)  node[pos=1.6] {$f_{1,0}$} node[pos=1.6,xshift=.4cm] {$+$};
    \node[draw,circle,below = 1.4cm of r,xshift= 1.2cm] (a22) {};
    \draw[<-] (a22) --++(-.cm,2)  node[pos=1.6] {$f_{1,1}$};

    \draw[<-] (r) --++(-4cm,0) node[pos=.7] {} node[pos=1.2] {$e_\psi'$};
    
    \draw[e0] (r) edge  node[left,lbl]  {$e_0$} (a1hh.north);
    \draw[e1] (r) edge  node[right,lbl] {$e_1$} (a2hh.north);

\draw [-, decorate,
    decoration = {calligraphic brace,mirror},ultra thick] ($(a1hh.south west) + (270:.4)$) --  ($(a2hh.south east) + (270:.4)$) node[pos=.5,below=.1cm] {$U\ket \phi = \lambda \gamma \cdot \ket{e_\psi '}  = \ket {e_\psi} $}
    	;
 
\end{tikzpicture}
    \caption{An illustration of \textsc{ApplyGate} (\autoref{alg:apply-gate-limdd-limdd}), where matrix $U$ is applied to state $B\ket{v}$, both represented as Pauli-\limdds.
		The edges $f_{0,0}$, $f_{0,1}$, etc. are the edges made on \autoref{algline:ag:apply-gate-compute-term}.
		The dotted box indicates that these states are added, using \textsc{Add}, producing edges $e_0,e_1$, which are then passed to \makeedge, producing the result edge.
		For readability, not all edge labels are shown.
}
	\label{fig:apply-gate}
\end{figure}

\paragraph{The subroutine \textsc{Add}} (\autoref{alg:add-limdds}) adds two quantum states, i.e., given two \limdds representing $\ket{e}$ and $\ket{f}$, it returns a \limdd representing $\ket e + \ket f$.
It proceeds by simple recursive descent on the children of $e$ and $f$.
The base case is when both edges point to the diagram's leaf.
In this case, these edges are labeled with scalars $A,B\in\mathbb C$, so we return the edge $\ledge{A+B}{1}$.

\begin{algorithm}[h!]
	\begin{algorithmic}[1]
		\Procedure{Add}{\Edge $\ledge[e=] Av$, \Edge $\ledge[f=] Bw$
		 \textbf{with} $\index(v) = \index(w)$}
		\If{$\index(v)=0$}
			 \Return $\leafedge{A+B}$ \Comment{$A,B \in \mathbb C$}
		\EndIf
			\If{$v\not\beforeq w$} \Return $\textsc{Add}(\ledge Bw, \ledge Av)$\Comment{Normalize for cache lookup}
			\EndIf
			\label{algline:add:add-swap-cache-lookup}
			\State $C:=\rootlabel(\ledge {A^{-1}B}w)$
			\label{algline:add:add-cache-factor-C}
			\If{$(v, C, w)\notin \textsc{Add-Cache} $}\Comment{Compute result for the first time:}
			\label{algline:add:lookup-cache-factor-C}
		\Stateh \Edge $a_0:=\textsc{Add}(\follow 0{\ledge{}{v}}, \follow 0{\ledge{C}{w}})$
			\label{algline:add:add-0}
		\State \Edge $a_1:=\textsc{Add}(\follow 1{\ledge{}{v}}, \follow 1{\ledge{C}{w}})$
			\label{algline:add:add-1}
		\State $\textsc{Add-Cache}[(v, C, w)] :=  \makeedge(a_0, a_1)$\Comment{Store in cache}
			\label{algline:add:add-makeedge}
			\label{algline:add:add-store}
		\EndIf
		\State\Return $A\cdot \textsc{Add-Cache}[(v,C,w)]$\Comment{Retrieve from cache}
		\EndProcedure
	\end{algorithmic}
	\caption{Given two $n$-LIMDD edges $e,f$, constructs a new LIMDD edge $a$ with $\ket{a}=\ket{e}+\ket{f}$.
	}
	\label{alg:add-limdds}
	\label{alg:add-limdds-cache}
\end{algorithm}

\paragraph{Caching in Add.}
A straightforward way to implement the cache would be to store a tuple with key $(A,v,B,w)$ in the call $\textsc{Add}(\ledge Av,\ledge Bw)$.
However, we can do much better; namely,
we remark that we are looking to construct the state $A\ket{v} + B\ket{w}$, and that this is equal to $A\cdot (\ket{v}+A^{-1}B\ket{w})$.
This gives us the opportunity to ``factor out'' the LIM $A$, and only store the tuple $(v,A^{-1}B,w)$.
We can do even better by finding a canonically chosen LIM $C=\rootlabel(\ledge {A^{-1}B}w)$ (on \autoref{algline:add:add-cache-factor-C}) and storing $(v,C,w)$ (on line \autoref{algline:add:add-store}).
This way, we get a cache hit at \autoref{algline:add:lookup-cache-factor-C} upon the call $\textsc{Add}(\ledge Dv, \ledge Ew)$ whenever $A^{-1}B\ket{w}=D^{-1}E\ket{w}$.
This happens of course in particular when $(A,v,B,w)=(D,v,E,w)$, but can happen in exponentially more cases; therefore, this technique works at least as well as the ``straightforward'' way outlined above.
Finally, on \autoref{algline:add:add-swap-cache-lookup}, we take advantage of the fact that addition is commutative; therefore it allows us to pick a preferred order in which we store the nodes, thus improving possible cache hits by a factor two.
We also use $C$ in the recursive call at \autoref{algline:add:add-0} and \ref{algline:add:add-1}.

\begin{wrapfigure}[11]{r}{6cm}
\vspace{-1.3em}
\begin{tikzpicture}[
    scale=0.3,
    every path/.style={>=latex},
    every node/.style={},
    inner sep=0pt,
    minimum size=14pt,
    line width=1pt,
    thick,
    font=\footnotesize
    ]
    
    
    \node[draw,circle] (a1) at ( 0,-0)     {};
    \node[draw,circle] (a2) at (0,-3) {};

    \node[draw,circle,rectangle,minimum size=0.4cm] (w1) at  (-0,-6) {$1$};
    
    \draw[<-] (a1) --++(-2cm,0);
    \draw[e0=20] (a1) edge  node[left] {$1$} (a2);
    \draw[e1=20](a1) edge  node[right] {$4$} (a2);
    
    \draw[e0=20] (a2) edge node[left] {$0$} (w1);
    \draw[e1=20]       (a2) edge  node[right] {$1$} (w1);

    \node[draw,circle] (a1) at (5,-0)     {};
    \node[draw,circle] (a2) at (5,-3) {};
    
    \node[draw,circle,rectangle,minimum size=0.4cm] (w1) at ( 5,-6) {$1$};
    
    \draw[<-] (a1) --++(-2cm,0);
    \draw[e0= 20] (a1) edge  node[left] {$1$} (a2);
    \draw[e1= 20](a1) edge  node[right] {$2$} (a2);
    
    \draw[e0= 20] (a2) edge node[left] {$1$} (w1);
    \draw[e1= 20]       (a2) edge  node[right] {$2$} (w1);

    \node[draw,circle] (a1p) at ( 13,-0)     {};
    \node[draw,circle] (a20) at ( 10,-3) {};
    \node[draw,circle] (a21) at ( 16,-3) {};
    
    \node[draw,circle,rectangle,minimum size=0.4cm] (w1) at ( 13,-6) {$1$};
    
    \draw[<-] (a1p) --++(-2cm,0);
    \draw[e0] (a1p) edge  node[left] {1} (a20);
    \draw[e1]     (a1p) edge  node[right] {2} (a21);
    
    \draw[e0= 20] (a20) edge  node[left] {1} (w1);
    \draw[e1= 20]       (a20) edge  node[above right, xshift=-4pt] {3} (w1);
    
    \draw[e0= 20] (a21) edge node[above left, xshift=4pt] {1} (w1);
    \draw[e1= 20]       (a21) edge  node[right] {4} (w1);

    \node[] (a2) at (2.5,-3) {$+$};
    \node[] (a2) at (7.5,-3) {$=$};
    
\end{tikzpicture}
	    \caption{Adding two states \protect\vect{0,1,0,4} and \protect\vect{1,2,2,4} as \qmdds can cause an exponentially larger result \qmdd  \protect\vect{1,3,2,8} due to the loss of common factors.}
	    \label{fig:explosion}
\end{wrapfigure}
The worst-case runtime of \textsc{Add} is $\oh(n^32^n)$ (exponential as expected),
where $n$ is the number of qubits.
This can happen when the resulting \limdd is exponential in the input sizes (bounded by $2^n$), as identified for \qmdds in \cite[Table 2]{fargier2014knowledge}.
The reason for this is that addition may remove any common factors, as illustrated in \autoref{fig:explosion}.
However, the \textsc{Add} algorithm is polynomial-time when $v=w$ and $v$ is a stabilizer state, which is sufficient to show that the Hadamard gate can be efficiently applied to stabilizers represented as \limdd, as we demonstrate next in 
\autoref{sec:clifford-polytime}.

\subsubsection{\limdd operations for Clifford gates are polynomial time on stabilizer states}
\label{sec:clifford-polytime}

We give an algorithm for the Hadamard gate and then show that it can be applied to a stabilizer state in polynomial time.
Together with the results of \autoref{sec:simple-gates}, this shows that all Clifford gates can be applied to stabilizer states in polynomial time.
The key ingredient is \autoref{thm:recursive-add-calls}, which describes situations in which the \textsc{Add} procedure looks up the same tuples in the cache in both its recursive calls (modulo $\pm1$).
\autoref{thm:clifford-polytime} gives the final result.

\begin{theorem}
	\label{thm:clifford-polytime}
	Any Clifford gate ($H,S$, CNOT) can be applied in $\oh(n^4)$ time to any (combination of) qubits to a \limdd representing a stabilizer state.
\end{theorem}
\begin{proof}
	Let $\ket\psi$ be an $n$ qubit stabilizer state, represented by a \limdd with root edge $\ledge Av$.
	By \autoref{thm:pauli-tower-limdds-are-stabilizer-states}, this \limdd is a
	\gen\pauli-Tower-\limdd with $m=n$ nodes apart from the leaf.

	\autoref{sec:simple-gates} shows that any $S$-gate can be applied in time $\oh(n^3 m)$, so we get $\oh(n^4)$.

	\autoref{thm:hadamard-stabilizer-polytime} shows that any Hadamard gate can be applied on any qubit in time $\oh(n^4)$.
	
	\autoref{sec:simple-gates} shows that any downward CNOT-gate can be applied in time $\oh(n^3 m)$, so in this case $\oh(n^4)$. By applying Hadamard to the target and control qubits, 
	before and after the downward CNOT, we obtain an upward CNOT, i.e., $CX_c^t = (H\otimes H) CX_t^c (H \otimes H)$, still in time $\oh(n^4)$.
\end{proof}

\begin{wrapfigure}[6]{r}{3.8cm}
\vspace{-1.3em}
~\hspace{-1.8em}
\begin{tikzpicture}[->,>=stealth',shorten >=1pt,auto,node distance=1cm,
        thick, state/.style={circle,draw,minimum size=14pt},font=\footnotesize]

    \node[state](r) {$v$};
    \draw[<-] (r) --++(-.6cm,0);
    \node[](1a)[below = .7cm of r, xshift=-.29cm]{};
    \node[](1b)[below = .7cm of r, xshift= .29cm]{};

	\node[right = .5cm of r,yshift=-.0cm,rotate=-0] {$\rightsquigarrow$}; 
    
    \path[]
    (r)  edge[e1]     node[right, pos=.3] {$e_0$} (1b)
    (r)  edge[e0]     node[left,pos=.3] {$e_1$} (1a)
    ;

    \node[state,right = 1.5cm of r] (r) {$v'$};
    \node[below = of r,xshift=-.65cm] (a1h) {};
    \node[below = of r,xshift= .65cm] (a2h) {};
    \node[draw, dotted, below = .4cm of r,xshift=-.7cm,rounded corners,minimum width=1.3cm,minimum height=1.4cm] (a1hh) {};
    \node[draw, dotted, below = .4cm of r,xshift= .7cm,rounded corners,minimum width=1.3cm,minimum height=1.4cm] (a2hh) {};

    \node[state,below = 1.2cm of r,xshift=-1.cm] (a11) {};
    \draw[<-] (a11) --++(90:.6)  node[pos=1.6,xshift=.2cm] {$e_{0}$} node[pos=1.6,xshift=.5cm] {$+$};
    \node[state,below = 1.2cm of r,xshift=-.4cm] (a12) {};
    \draw[<-] (a12) --++(90:.6)  node[pos=1.6,xshift=.25cm] {$e_{1}$};
    \node[state,below = 1.2cm of r,xshift= .4cm] (a21) {};
    \draw[<-] (a21) --++(90:.6)  node[pos=1.6,xshift=.25cm] {$e_{0}$} node[pos=1.6,xshift=.55cm] {$-$};
    \node[state,below = 1.2cm of r,xshift= 1.cm] (a22) {};
    \draw[<-] (a22) --++(90:.6)  node[pos=1.6,xshift=.3cm] {$e_{1}$};

    \draw[<-] (r) --++(1cm,0) node[pos=1,lbl] {$\nicefrac 1{\sqrt 2}$};
    
    \draw[e0] (r) edge  node[above left]  {$a_0$} (a1hh.north);
    \draw[e1] (r) edge  node[above right] {$a_1$} (a2hh.north);
\end{tikzpicture}
\end{wrapfigure}
\paragraph{To apply a \textbf{Hadamard gate}} ($H=\frac 1{\sqrt 2}\begin{smallmat}1 & 1 \\ 1 & -1\end{smallmat}$) to the first qubit, we first construct edges representing the states $\ket{a_0}=\ket{e_0}+\ket{e_1}$ and $\ket{a_1}=\ket{e_0} - \ket{e_1}$, using the \textsc{Add} procedure (\autoref{alg:add-limdds} and multiplying the root edge with $-1$).
		Then we construct an edge representing the state $\ket 0\ket{a_0} + \ket 1\ket{a_1}$ using \makeedge.
		Lastly, the complex factor on the new edge's root label is multiplied by $\frac {1}{\sqrt 2}$. Since the Hadamard is also a Clifford gate, we can apply this operation to any qubit in the \limdd by pushing it through the LIMs, as we saw in \autoref{sec:simple-gates}.
\autoref{alg:H-gate} shows the complete algorithm.

\begin{algorithm}[h!]
	\begin{algorithmic}[1]
        \Procedure{HGate}{$\Edge~ \ledge{A}{v}$ \textbf{with} $A\in \pauli$-LIM,
        	$k \in \set{1,\dots, \index(v)}$}
		\If{$v \notin \textsc{HGate}\cache$}\label{sg:cache1} \Comment{Compute result once for $v$ and store in cache:}
		\If{$\index(v)=k$}
		\State $\textsc{HGate}\cache[v] := \nicefrac 1{\sqrt 2} \cdot\makeedge(\mathsf{Add}(\low v,  \high v), \mathsf{Add}(\low v,  -\high v))$\hspace{-1em}
		\label{hg:hk}
		\Else
		\State $\textsc{HGate}\cache[v] := \makeedge(\textsf{HGate}(\low v,k), \textsf{HGate}(\high v,k))$
		\label{hg:rebuild}
		\EndIf
		\EndIf
		\State \Return $H_k A H_k^\dagger  \cdot \textsc{HGate}\cache[v]$
							\Comment{Retrieve result from cache}
		\label{hg:rebuild2}
		\EndProcedure
	\end{algorithmic}
    \caption{Apply gate $H$ to qubit $k$ for \pauli-\limdd \ledge Av. We let $n = \index(v)$.}
	\label{alg:H-gate}
\end{algorithm}

\begin{theorem}
	\label{thm:hadamard-stabilizer-polytime}
	Let $e$ be an $n$-qubit \gen\pauli-Tower-\limdd.
	$\textsc{HGate}(e, k)$ of \autoref{alg:H-gate} takes $\oh(n^4)$ time.
\end{theorem}
\begin{proof}
	By virtue of the cache, \textsc{HGate} is called at most once per node. 
	Since the LIMDD is a Tower, there are only $n$ nodes; so \textsc{HGate} is called at most $n$ times.
	For the node at level  $k$, \textsc{HGate} 
		makes two calls to \textsc{Add} on \autoref{hg:hk}.
	By applying induction over the qubits $1,\dots, k$, using \autoref{thm:recursive-add-calls}, it is easy to see that at each level, the cache in \autoref{alg:add-limdds-cache} is consulted at \autoref{algline:add:lookup-cache-factor-C} with a tuple $(v_k, P^{(k)}, v_k)$ or $(v_k, -P^{(k)}, v_k)$.
	This tells us that \textsc{Add} performs at most $2k$ recursive calls.
	Each recursive call to \textsc{Add} may invoke the \textsc{MakeEdge} procedure, which runs in time $\mathcal O(n^3)$, yielding a total worst-case running time of $\mathcal O(n^4)$, when $k = \Omega(n)$.
\end{proof}

\begin{theorem}
	\label{thm:recursive-add-calls}
If \autoref{alg:add-limdds-cache} is called on two edges pointing to the
same \gen\pauli-Tower-\limdd node $v$ with $\low v = \high v$,
then the recursive \textsc{Add} calls at
\autoref{algline:add:add-0}, \ref{algline:add:add-1}
both lookup the same LIM in cache up to a factor $\pm 1$.
\end{theorem}
\begin{proof}
Assume the algorithm is at \autoref{algline:add:add-0}.
Let \lnode[v]{}w{Q}w be a node on which the algorithm was called.
Let $C = P_n \otimes P$ be the $n$ qubit \pauli-LIM computed at \autoref{algline:add:add-cache-factor-C} with $P_n \in \pauli$ and $P$ an $n-1$ qubit \pauli-LIM.
At \autoref{algline:add:add-0} and \ref{algline:add:add-1}, \textsc{Add} makes two recursive calls computing $a_x$ for $x\in \set{0,1}$,
as listed in the header of the below table.
The \follow{x}e semantics yield four cases cases for the parameters in a recursive \textsc{Add} calls, depending on $x$ and $P_n$.
The following table shows the tuples computed for cache normalization at \autoref{algline:add:add-cache-factor-C} \emph{in the recursive call}, 
ignoring the \rootlabel{}() function for now.
E.g., if $\rightsquigarrow$ denotes cache normalization,
then $\ledge{\gamma R} v, \ledge{\gamma  Q^{-1}R}w \rightsquigarrow (v, \pm Q, w)$  since $QR = \pm RQ$ for $P,Q \in\paulilim$:

\begin{tabular}{l|ll|ll|}
&
\multicolumn{2}{c|}{\footnotesize$a_0 := \textsc{Add}(\follow 0{\ledge {}v},\follow 0{\ledge Cv})$} & 
\multicolumn{2}{c|}{\footnotesize$a_1 := \textsc{Add}(\follow 1{\ledge {}v},\follow 1{\ledge Cv})$}
\\\hline
$P_n=\id$:&$\ledge{} w, \ledge{P}w$ 
			& $\rightsquigarrow (w, P, w)$
			& $\ledge{Q} w, \ledge{ PQ}w$ 
			& $\rightsquigarrow (w, \pm{P}, w)$
\\
$P_n = X$:	& $\ledge{} w, \ledge{ P Q}w $ 
			& $\rightsquigarrow (w,  PQ, w)$
		 	& $\ledge{ Q} w, \ledge{ P}w$ 
		 	& $\rightsquigarrow (w, \pm PQ, w) $
\\
$P_n = Y$: 	& $\ledge{} w, \ledge{-i PQ}w$ 
			& $\rightsquigarrow  (w, -i PQ, w)$
			& $\ledge{Q} w, \ledge{i P }w$
			& $\rightsquigarrow (w, \pm iPQ, w)$
\\
$P_n = Z$: 	& $\ledge{}w, \ledge{ P}w $
			& $\rightsquigarrow (w, { P}, w) $
			& $\ledge{Q} w, \ledge{-PQ}w$ 
			& $\rightsquigarrow (w,  {\pm P}, w)$
\\
\end{tabular}

In all four cases, the cache-normalized LIMs computed in both recursive calls
are equivalent up to a factor $\pm 1$.
Finally, our \rootlabel{}() function from \autoref{sec:choose-canonical-isomorphism-pauli}, which selects the lexicographic smallest label, satisfies $\rootlabel{}(\ledge Qw) = - \rootlabel{}(\ledge {-Q}w)$ for  any \paulilim~$Q$.
This completes the proof.
\end{proof}

Since stabilizer states are closed under Clifford gates, \gen\pauli-Tower-\limdds should also be closed under the respective \limdd manipulation operations.
We show this in \autoref{thm:clifford-gate-stabilizer-limdd-general} (\autoref{sec:graph-states-limdds}).

\subsection{Comparing \limdd-based simulation with other methods}
\label{sec:simulation}

\autoref{prop:simulation} shows exponential advantages of (\pauli-)\limdds over three state-of-the-art classical quantum circuit simulators: those based on \qmdds and MPS~\cite{white1992density,perez2006matrixProductStateRepresentations}, and the Clifford + $T$ simulator.
In this section we prove the proposition, mainly using results from the current section:
To show the separation between simulation with \limdds and Clifford + $T$, we present \autoref{thm:limdds-are-fast-for-wn}.

Our proofs often rely on the fact that \limdds are exponentially more succinct representations of a certain class of quantum states $S$ that are generated by circuits with a certain (non-universal) gate set $G$.
For instance, the stabilizer states that are generated by the Clifford gate set.
	\limdd-based simulation ---similar to MPS~\cite{vidal2003efficient} and \qmdd-based~\cite{zulehner2018advanced} simulation--- proceeds by representing a state
	$\ket{\phi_t}$ at time step $t$ as a \limdd $\phi_{t}$. It then applies the gate  $U_t \in G$ in the circuit corresponding to this time step to obtain a \limdd $\phi_{t+1}$ with
	$\ket{\phi_{t+1}} = U_t \ket{\phi_t}$, thus yielding strong simulation at the final time step as reading amplitudes from the final \limdd is easy (see \autoref{sec:isomorphism-qmdd}). 

It follows that \limdd-based simulation is efficient provided that it can execute all gates $U_t$ in polynomial time (in the size of the \limdd representation), at least for the states in $S$.
Note in particular that since the execution stays in $S$, i.e., $\ket{\phi_{t}} \in S \implies \ket{\phi_{t+1}} \in S$, the representation size can not grow to exponential size in multiple steps ($S$ can be considered an inductive invariant in the style of Floyd~\cite{floyd1993assigning} and de Bakker \& Meertens~\cite{de1975completeness}).
On the other hand, since MPS and \qmdd are exponentially sized for cluster states, they necessarily require exponential time on circuits computing this family of states.

\subsubsection{\limdd is exponentially faster than \qmdd-based simulation}
\label{sec:simulation:qmdd}

As state set $S$, we select the stabilizer states and for $G$ the Clifford gates.
\autoref{thm:pauli-limdd-is-stabilizer} shows that \limdds for stabilizers are always quadratic in size in the number of qubits $n$, as the diagram contains $n$ nodes and $n+1$ LIMs, each of size  at most $n$ (see \autoref{def:tower}).
\autoref{sec:simple-gates} shows that \limdd can execute all Clifford gates on stabilizer states in time $\mathcal O(n^4)$.

On the other hand, \autoref{thm:graph-state-qmdd-lower-bound} shows that \qmdds for cluster states are exponentially sized.
It follows that in simulation also, there is an exponential separation between \qmdd and \limdd, proving that  $ \qcsim \qmdd = \Omega^*(2^n \cdot \qcsim \limdd)$ (\autoref{prop:simulation} \autoref{sim:qmdd}).

For the other direction, we now show that \limdds are at most a factor $\mathcal O(n^3)$ slower than \qmdds on any given circuit.
First, a \limdd never contains more nodes than a \qmdd representing the same state (because \qmdd is by definition a specialization of \limdd, see \autoref{sec:isomorphism-qmdd}).
The \limdd additionally uses $\mathcal O(n)$ memory per node to store two Pauli LIMs; thus, the total memory usage is at most a factor $\mathcal O(n)$ worse than \qmdds for any given state.
The \textsf{ApplyGate} and \textsf{Add} algorithms introduced in \autoref{sec:applygate} are very similar to the ones used for \qmdds in~\cite{zulehner2017one,fujita1997multi}.
In particular, our \textsf{ApplyGate} and \textsf{Add} algorithms never make more recursive calls than those for \qmdds.
However, one difference is that our \makeedge algorithm runs in time $\mathcal O(n^3)$ instead of $\mathcal O(1)$.
Therefore, in the worst case these \limdd algorithms make the same number of recursive calls to \textsf{ApplyGate} and \textsf{Add}, in which case they are slower by a factor $\mathcal O(n^3)$.

Finally, \autoref{thm:limdd-superset-qmdd-plus-stabilizers} shows that the pseudo-cluster state $\ket \phi$ has a polynomial representation in \limdd. By definition of the pseudo-cluster state, post-selecting (constraining) the top qubit to 0 (or 1) yields the cluster state $\ket{G_n}$. Therefore, \qmdd for the pseudo-cluster state must have exponential size, as 
constraining can never increase the size of DD~\cite[Th~2.4.1]{wegener2000branching}.
Together with the universal simulation discussed above, this proves that the above also holds for 
for a simulator based on the combination \qmddc (\autoref{prop:simulation} \autoref{sim:qmddc}).

\subsubsection{\limdd is exponentially faster than MPS}

In \autoref{sec:simulation:qmdd}, we saw that \limdd can simulate the cluster state in polynomial time.
On the other hand, \autoref{thm:mps} shows that MPS for cluster states are exponentially sized.
It follows that in simulation also, there is an exponential separation between MPS and \limdd, proving \autoref{prop:simulation} \autoref{sim:mps}.

\subsubsection{\limdd is exponentially faster than \cliffordt}
\label{sec:simulation:cliffordt}

In this section, we consider a circuit family that \limdds can efficiently simulate, but which is difficult for the Clifford+$T$ simulator because the circuit contains many $T$ gates, assuming the Exponential Time Hypothesis (ETH, a standard complexity-theoretic assumption which is widely believed to be true).
This method decomposes a given quantum circuit into a circuit consisting only of Clifford gates and the $T=\diagonal{e^{i\pi/4}}$ gate, as explained in \autoref{sec:preliminaries}.

The circuit family, given my McClung \cite{mcclung2020constructionsOfWStates}, maps the input state $\ket{0}^{\otimes n}$ to the $n$-qubit $W$ state $\ket{W_n}$, which is the equal superposition over computational-basis states with Hamming weight $1$,
    \begin{equation*}
        \ket{W_n} = \frac{1}{\sqrt{n}} \left( \ket{100\dots00} + \ket{010\dots00} + \dots + \ket{000\dots01} \right)
    \end{equation*}
Arunachalam et al. showed that, assuming ETH, any circuit which deterministically produces the $\ket{W_n}$ state in this way requires $\Omega(n)$ $T$ gates \cite{arunachalam2022parameterized}.
Consequently, the Clifford + $T$ simulator cannot efficiently simulate the circuit family, even when one allows for preprocessing with a compilation algorithm aiming to reduce the $T$-count of the circuit (such as the ones developed in \cite{kissinger2019reducingTCount,thapliyal2019quantumOptimizingTCount}).

\autoref{thm:limdds-are-fast-for-wn} now shows that the exponential separation between simulation with \limdd and \cliffordt, i.e., that $\qcsim{\cliffordt} = \Omega(2^n \cdot \qcsim \limdd)$
(\autoref{prop:simulation} \autoref{sim:cliffordt}). 
\autoref{sec:efficient-w-state-preparation} gives its proof.

\begin{restatable}{theorem}{thmw}
\label{thm:limdds-are-fast-for-wn}
    There exists a circuit family $C_n$ such that $C_n \ket{0}^{\otimes n} = \ket{W_n}$, that Pauli-\limdds can efficiently simulate.
    Here simulation means that it constructs representations of all intermediate states, in a way which allows one to, e.g., efficiently simulate any single-qubit computational-basis measurement or compute any computational basis amplitude on any intermediate state and the output state.
\end{restatable}

We note that we could have obtained a similar result using the simpler scenario where one applies a $T$ gate to each qubit of the $(\ket{0} + \ket{1})^{\otimes n}$ input state.
However, our goal is to show that \limdds can natively simulate scenarios which are relevant to quantum applications, such as the stabilizer states from the previous section.
The $W$ state is a relevant example, as several quantum communication protocols use the $W$ state~\cite{wang2007quantum, liu2011efficientWState, lipinska2018anonymous}.
In contrast, the circuit with only $T$ gates yields a product state, hence it is not relevant unless we consider it as part of a larger circuit which includes multi-qubit operations.

Lastly, it would be interesting to analytically compare \limdd with general stabilizer rank based simulation (without assuming ETH).
However, this would require finding a family of states with provably superpolynomial stabilizer rank, which is a major open problem.
Instead, we implemented a heuristic algorithm by Bravyi et al. \cite{bravyi2019simulation} to empirically find upper bounds on the stabilizer rank and applied it to a superset of the $W$ states, so-called Dicke states, which can be represented as polynomial-size \limdd.
The $\mathcal O(n^2)$-size \limdd can be obtained via a construction by Bryant \cite{bryant86}, since the amplitude function of a Dicke state is a symmetric function.
The results hint at a possible separation but are inconclusive due to the small number of qubits which the algorithm can feasibly investigate in practice.
See \autoref{sec:stabrank_search} for details.

\section{Canonicity: Reduced \limdds with efficient \makeedge algorithm}
\label{sec:canonicity}
Unique representation, or canonicity, is a crucial property for the efficiency and effectiveness of decision diagrams.
In the first place, it allows for circuit analysis and simplification~\cite{bryant1995verification,miller2006qmdd}, by
facilitating efficient manipulation operations
through dynamic programming efficiently, as discussed in \autoref{sec:quantum-simulation}.
In the second place, a reduced diagram is smaller than an unreduced diagram because it merges nodes with the same semantics. For instance, Pauli-\limdds allow all states in the same $\simeq_{\text{Pauli}}$ equivalence class to be merged.
Here, we define a reduced \pauli-\limdd, which is canonical.

In general, many different \limdds can represent a given quantum state,
as illustrated in \autoref{fig:necessity-of-reduction-rules-example}.
However, by imposing a small number of constraints on the diagram, listed in \autoref{def:reduced-limdd} and visualized in \autoref{fig:necessity-of-reduction-rules}, we ensure that every quantum state is represented by a unique `\emph{reduced}'
\pauli-\limdd.
We present a \makeedge algorithm (\autoref{alg:make-edge} in \autoref{sec:makeedge}) that computes
a canonical node assuming its children are already canonical.
The algorithms for quantum circuit simulation in \autoref{sec:quantum-simulation} ensure that all intermediate \limdds are reduced by creating nodes exclusively through this subroutine.

\begin{figure}[t!]
\footnotesize
\centering
\begin{tabular}{|c|c|c|c|}
\hline
&&&\\[-1ex]
\hspace{-.1cm}
    \begin{tikzpicture}[->,>=stealth',shorten >=1pt,auto,node distance=1cm,
        thick, state/.style={circle,draw,minimum size=0.5cm},font=\scriptsize, scale=0.3,
    inner sep=0pt,]
        
    \node[state](1){};
    \node[state](1a)[below= .7cm of 1,xshift=-.5cm]{$v$};
    \node[state](1b)[below= .7cm of 1,xshift= .5cm]{$v'$};
    \node[leaf] (s)[below = .7cm of 1a, xshift= .5cm]{$1$};
        
    \draw[<-,e1] (1) --++(90:2.4cm) node[lbl] {$\frac 1{\sqrt 2}\id \otimes \id$};
    
    \path[]
    (1)  edge[e0= 20]  node[pos=.6,left] {} (1a)
    (1)  edge[e1= 20]  node[pos=.45,] {} (1b)
    (1a) edge[e0= 40]  node[pos=.6,left] {} (s)
    (1a) edge[e1=-10]  node[lbl,pos=.45,] {$0$} (s)
    (1b) edge[e0=-10]  node[lbl,pos=.25,left] {$0$} (s)
    (1b) edge[e1= 40]  node[pos=.6,] {} (s);
    
    \end{tikzpicture}\hspace{-.1cm}~
&\hspace{-.1cm}
    \begin{tikzpicture}[->,>=stealth',shorten >=1pt,auto,node distance=1cm,
        thick, state/.style={circle,draw,minimum size=0.5cm},font=\scriptsize, scale=0.3,
    inner sep=0pt,]
        
    \node[state](1){};
    \node[state](1a)[below= .7cm of 1,xshift=-.5cm]{$v$};
    \node[state](1b)[below= .7cm of 1,xshift= .5cm]{$v'$};
    \node[leaf] (s)[below = .7cm of 1a, xshift= .5cm]{$1$};
        
    \draw[<-,e1] (1) --++(90:2.4cm) node[lbl] {$\frac 1{\sqrt 2} X \otimes \id$};
    
    \path[]
    (1)  edge[e0=-20]  node[pos=.6,left] {} (1b)
    (1)  edge[e1=-20]  node[pos=.45,] {} (1a)
    (1a) edge[e0= 40]  node[pos=.6,left] {} (s)
    (1a) edge[e1=-10]  node[lbl,pos=.45,] {$0$} (s)
    (1b) edge[e0=-10]  node[lbl,pos=.25,left] {$0$} (s)
    (1b) edge[e1= 40]  node[pos=.6,] {} (s);
    
    \end{tikzpicture}\hspace{-.1cm}~
&\hspace{-.1cm} 
    \begin{tikzpicture}[->,>=stealth',shorten >=1pt,auto,node distance=1cm,
        thick, state/.style={circle,draw,minimum size=0.5cm},font=\scriptsize, scale=0.3,
    inner sep=0pt,]
        
    \node[state](1){};
    \node[state](1a)[below= .7cm of 1]{$v$};
    \node[leaf] (s)[below = .7cm of 1a]{$1$};
        
    \draw[<-,e1] (1) --++(90:2.4cm) node[lbl] {$\frac 1{\sqrt 2} \id \otimes \id$};
    
    \path[]
    (1)  edge[e0= 20]  node[pos=.6,left] {} (1a)
    (1)  edge[e1= 20]  node[lbl,pos=.45,] {$X$} (1a)
    (1a) edge[e0= 20]  node[pos=.6,left] {} (s)
    (1a) edge[e1= 20]  node[lbl,pos=.45,] {$0$} (s)
	;    
    \end{tikzpicture}\hspace{-.1cm}~
&\hspace{-.1cm} 
    \begin{tikzpicture}[->,>=stealth',shorten >=1pt,auto,node distance=1cm,
        thick, state/.style={circle,draw,minimum size=0.5cm},font=\scriptsize, scale=0.3,
    inner sep=0pt,]
        
    \node[state](1){};
    \node[state](1a)[below= .7cm of 1]{$v$};
    \node[leaf] (s)[below = .7cm of 1a]{$1$};
        
    \draw[<-,e1] (1) --++(90:2.4cm) node[lbl] {$\frac 1{\sqrt 2} Z \otimes \id$};
    
    \path[]
    (1)  edge[e0= 20]  node[pos=.6,left] {} (1a)
    (1)  edge[e1= 20]  node[lbl,pos=.45,] {$-X$} (1a)
    (1a) edge[e0= 20]  node[pos=.6,left] {} (s)
    (1a) edge[e1= 20]  node[lbl,pos=.45,] {$0$} (s)
	;    
    \end{tikzpicture}\hspace{-.1cm}~
\\[1ex]
\hline
\end{tabular}
\caption{
Four different \pauli-\limdds representing the Bell state $\frac1{\sqrt 2} (\ket{00} + \ket{11})$.
From left to right: as \id-\limdd, swapping high and low nodes $v,v'$ by placing an $X$ on the root LIM, merging $v'$ into $v$ by observing that $\ket v = X \ket{v'}$ and selecting a different high LIM $-X$ together with changing the root LIM. This section shows that selecting a unique high LIM is the most challenging, as in general many LIMs can be chosen.
}
\label{fig:necessity-of-reduction-rules-example}
\end{figure}

\subsection{\limdd canonical form}
\label{sec:def-reduced-limdd}

The main insight used to obtain canonical decision diagrams is that a canonical form
can be computed locally for each node, assuming its children are already canonical.
In other words, if the diagram is constructed bottom up, starting from the leaf,
it can immediately be made canonical.
(This is why decision diagram manipulation algorithms always construct the diagram in the backtrack of the recursion using a typical `MakeNode' procedure for constructing canonical nodes~\cite{fujita1997multi}, like in \autoref{sec:quantum-simulation}.)
For instance, a \qmdd node \lnode{\alpha}{v}{\beta}{w} with $\alpha,\beta \in \mathbb C\setminus \set 0$
can be \concept{reduced} into a canonical node by dividing out a common factor $\alpha$ and placing it on the root edge. Assuming that $v,w$ are canonical, the resulting node \lnode1v{\nicefrac\beta\alpha}w can be stored as a tuple \tuple{1,v,\nicefrac\beta\alpha, w} in a hash table.
Moreover, \emph{any other node that is equal to this node up to a scalar is reduced to the same tuple with this strategy}~\cite{miller2006qmdd} and thus merged in the hash table.

For \limdd, we use a similar approach of dividing out `common LIM factors.' However, we need to do additional work to obtain a unique high edge label ($\nicefrac \beta\alpha$ in the example above), as the \paulilim{} group is more complicated than the group of complex numbers (scalars).

\autoref{def:reduced-limdd} gives reduction rules for \limdds and  \autoref{fig:necessity-of-reduction-rules} illustrates them.
The merge (1) and low factoring (4) rules fulfill the same purpose as in the \qmdd case discussed above.
In a \pauli-\limdd, we may always swap high and low edges of a node $v$
by multiplying the root edge LIM with
$X\otimes \id$, as illustrated in \autoref{fig:necessity-of-reduction-rules-example}.
The low precedence rule (3) makes this choice deterministic, but only in case 
$\low v \neq \high v$.
Next, the zero edges (2) rule handles the case when
$\alpha$ \'or $\beta$ are zero in the above,
as in principle a edge $e$ with label 0 could point to any node on the next level $k$,
as this always yields a 0 vector of length $2^k$ (see semantics below \autoref{def:limdd}).
The rule forces $\low v = \high v$ in case either edge has a zero label.
We explain the interaction among the zero edges (2), low precedence (3) and low factoring (4) rules below.
Finally, the high determinism rule (5) defines a deterministic function to choose
LIMs on high edges,
solving the most challenging problem of uniquely selecting a LIM on the high edge.
We give an $\oh (n^3)$ algorithm for this function in \autoref{sec:makeedge}.

\begin{definition}[Reduced \limdd]
	\label{def:reduced-limdd}
	A \pauli-\limdd is \emph{reduced} when it satisfies the following constraints.
	It is \emph{semi-reduced} if it satisfies all constraints except possibly high determinism.
	\begin{enumerate}
		\item \textbf{Merge: } No two nodes are identical: We say
            two nodes $v,w$ are identical~if $\low v= \low w,$\\$\high v = \high w,
                \lbl(\low v)=\lbl(\low w)$, $\lbl(\high v)=\lbl(\high w)$.
		\item \textbf{(Zero) edge: } For any edge $(v,w) \in \High \cup \Low $,
if $\lbl(v,w)=0$, then both edges outgoing from $v$ point to the same node, i.e., $\high v = \low v = w$. 
		\label{rule:zero-edges}
		\item \textbf{Low precedence: } Each node $v$ has $\low v \beforeq \high v$, where  $\beforeq$ is a total order on nodes.
		\item \textbf{Low factoring: } The label on every low edge to a node $v$ is the identity $\id^{\otimes\index(v)}$.
		\item \textbf{High determinism: } The label on the high edge of any node $v$ is $\highlim =\highlabel(v)$, where $\highlabel$ is a function that takes as input a semi-reduced $n$-{\Pauli}-\limdd node~$v$, and outputs an $(n-1)$-$\Pauli$-LIM $\highlim$
        satisfying
       $\ket v \simeq_{\Pauli} \ket{0}\ket{\low v}+\ket{1}\otimes \highlim\ket{\high v}$.
        Moreover, for any other semi-reduced node $w$ with $\ket{v}\simeq_{\Pauli}\ket{w}$,
         it satisfies $\highlabel(w) = \highlim$.
		In other words, the function $\highlabel$ is constant within an isomorphism class.		
		\label{rule:high-determinism}
	\end{enumerate}
\end{definition}

\begin{figure}[t!]
\footnotesize
\begin{tabular}{|c|c|c|c|c|}
\hline
&&&&\\[-1ex]
\hspace{-.2cm}
    \begin{tikzpicture}[->,>=stealth',shorten >=1pt,auto,node distance=1cm,
        thick, state/.style={circle,draw,minimum size=0.5cm},font=\scriptsize, scale=0.3,
    inner sep=0pt,]
        
    \node[state](1a)[xshift=-.5cm]{$u$};
    \node[state](1b)[xshift= .5cm]{$v$};
    \node[state] (s)[below = .7cm of 1a, xshift= .5cm]{$w$};
        
    \draw[<-,e1] (1a) --++(90:2.4cm) node[lbl] {$C$};
    \draw[<-,e1] (1b) --++(90:2.4cm) node[lbl] {$D$};

    \path[]
    (1a) edge[e0= 40]  node[lbl,pos=.6,left] {$A$} (s)
    (1a) edge[e1=-20]  node[lbl,pos=.45,] {$B$} (s)
    (1b) edge[e0=-20]  node[lbl,pos=.25,left] {$A$} (s)
    (1b) edge[e1= 40]  node[lbl,pos=.6,] {$B$} (s);
    
    \node[right = .1cm of 1b,yshift=-.0cm,rotate=-0] {$\rightsquigarrow$}; 
    
    \node[state] (n)[right = .6cm of 1b]{$v$};
    \node[state] (u)[below = .7cm of n]{$w$};
    
    \draw[<-,e1] (n) --++(110:2.6cm) node[lbl] {$C$};
    \draw[<-,e1] (n) --++(70:2.6cm) node[lbl] {$D$};

    \path[]
    (n) edge[e0=20]  node[left,pos=.4,lbl] {$A$} (u)
    (n) edge[e1=20]  node[right,pos=.4,lbl] {$B$} (u);
    \end{tikzpicture}\hspace{-.1cm}~
&\hspace{-.15cm}
    \begin{tikzpicture}[->,>=stealth',shorten >=1pt,auto,node distance=1cm,
        thick, state/.style={circle,draw,minimum size=14pt},font=\scriptsize]
        
    \node[state](r) {$v$};
    \node[state](1a)[below = .7cm of r, xshift=-.29cm]{$u$};
    \node[state](1b)[below = .7cm of r, xshift= .29cm]{$w$};
    
    \node[above = .6cm of r] (x) {};

	\node[right = .1cm of r,yshift=-.0cm,rotate=-0] {$\rightsquigarrow$}; 
    
    \path[]
    (x)  edge         node[lbl,left,pos=.1] {$C$} (r)
    (r)  edge[e0]     node[left, pos=.3,lbl] {$0$} (1a)
    (r)  edge[e1]     node[right,pos=.3,lbl] {$B$} (1b)
    ;

    \node[state,right = .57cm  of r, inner sep = 0pt](r) {$v'$};
    \node[state](1a)[below = .7cm of r]{$w$};
    
    \node[above = .6cm of r] (x) {};
    
    \path[]
    (x)  edge         node[lbl,pos=.1] {$C$} (r)
    (r)  edge[e0=15]     node[left, pos=.2,lbl] {$0$} (1a)
    (r)  edge[e1=15]     node[right,pos=.2,lbl] {$B$} (1a)
    ;
    \end{tikzpicture}\hspace{-.1cm}~
&\hspace{-.3cm} 
~ \begin{tikzpicture}[->,>=stealth',shorten >=1pt,auto,node distance=1cm,
        thick, state/.style={circle,draw,minimum size=14pt},font=\scriptsize]
        
    \node[state](r) {$v$};
    \node[state](1a)[below = .7cm of r, xshift=-.29cm]{$w$};
    \node[state](1b)[below = .7cm of r, xshift= .29cm]{$u$};
    
    \node[above = .6cm of r] (x) {};

	\node[right = .2cm of r,yshift=-.0cm,rotate=-0] {$\rightsquigarrow$}; 
    
    \path[]
    (x)  edge         node[lbl,left,pos=.1] {$C$} (r)
    (r)  edge[e1]     node[left, pos=.3,lbl] {$B$} (1a)
    (r)  edge[e0]     node[right,pos=.3,lbl] {$A$} (1b)
    ;

    \node[state,right = .65cm  of r, inner sep = 0pt](r) {$v'$};
    \node[state](1a)[below = .7cm of r, xshift=-.29cm]{$u$};
    \node[state](1b)[below = .7cm of r, xshift= .29cm]{$w$};
    
    \node[above = .6cm of r] (x) {};
    
    \path[]
    (x)  edge         node[lbl,pos=.18,xshift=-.7cm] {$(X \otimes \id^{k}) \cdot C$} (r)
    (r)  edge[e0]     node[left, pos=.3,lbl] {$A$} (1a)
    (r)  edge[e1]     node[right,pos=.3,lbl] {$B$} (1b)
    ;
    \end{tikzpicture}\hspace{-.1cm}~
&\hspace{-.1cm}
    \tikz[->,>=stealth',shorten >=1pt,auto,node distance=1.5cm,
        thick, state/.style={circle,draw,minimum size=14pt},font=\scriptsize]{
    \node[state] (1) {$v$};
    \node (x) [above = .6cm of 1] {};
    \node[state] (1a) [below = .7cm of 1, xshift=-.29cm] {$u$};
    \node[state] (1b) [below = .7cm of 1, xshift= .29cm] {$w$};
    
    \path[]
    (x) edge     node[lbl,left,pos=.2] {$C$} (1)
    (1) edge[e0] node[lbl,above left] {$A$} (1a)
    (1) edge[e1] node[lbl,above right] {$B$} (1b)
    ;
    
	\node[right = .1cm of 1,yshift=-.0cm,rotate=-0] {$\rightsquigarrow$}; 
    
    \node[state,right = .7cm of 1]  (1){$v'$};
    \node (x) [above  = .6cm of 1] {};
    \node[state] (1a) [below = .7cm of 1, xshift=-.29cm] {$u$};
    \node [state](1b) [below = .7cm of 1, xshift= .29cm] {$w$};
    
    \path[]
    (x) edge       node[draw,lbl,xshift=-.8cm,pos=.3] {$(\id \otimes A)\cdot C$} (1)
    (1) edge[e0]   node[above left,pos=.7] {} (1a)  %
    (1) edge[e1]   node[draw,above right, pos=.6, lbl,xshift=-.19cm] {$ A^{-1}  B$} (1b)
    ;
    }\hspace{-.1cm}~
&\hspace{-.3cm}
    \begin{tikzpicture}[->,>=stealth',shorten >=1pt,auto,node distance=1cm,
        thick, state/.style={circle,draw,minimum size=14pt},font=\scriptsize]
        
    \node[state](r) {$v$};
    \node[state](1a)[below = .7cm of r, xshift= .3cm]{$u$};
    \node[state](1b)[below = .7cm of r, xshift=-.3cm]{$w$};
    
	\node[right = .2cm of r,yshift=-.0cm,rotate=-0] {$\rightsquigarrow$}; 

    \node[above = .6cm of r] (x) {};
    
    \path[]
    (x)  edge         node[lbl,left,pos=.2] {$C$} (r)
    (r)  edge[e1]     node[lbl, pos=.5] {$A$} (1a)
    (r)  edge[e0]     node[right,pos=.5] {} (1b);

    \node[state,right = .7 of r](r) {$v'$};
    \node[state](1a)[below = .7cm of r, xshift= .3cm]{$u$};
    \node[state](1b)[below = .7cm of r, xshift=-.3cm]{$w$};
    
    \node[above = .6cm of r] (x) {};
    
    \path[]
    (x)  edge         node[lbl,xshift=-.8cm,pos=.2] {$C\cdot (A H^{-1})$} (r)
    (r)  edge[e1]     node[lbl, pos=.5] {$H$} (1a)
    (r)  edge[e0]     node[right,pos=.5] {} (1b);
    \end{tikzpicture}\hspace{-.2cm}~
\\[1ex]
\hline
(1) Merge $u,v$ into $v$ & (2) Zero edges & (3) Low precedence & (4) Low factoring  & \hspace{-.1cm}(5) High determinism\hspace{-.1cm} \\
	&  & with $u \beforeq w$   &  &  $H = \highlabel(v)$\\
\hline
\end{tabular}
\caption{
Illustrations of the reduction rules from \autoref{def:reduced-limdd} applied at level $k+1$ (i.e., $k+1=\index(v) = \index(v')$).
Note that, in general, the top edges are not necessarily root edges, but could be high and low edges for nodes on level $k+2$.
So, in general, there can be multiple such incoming edges (dashed and solid).
}
\label{fig:necessity-of-reduction-rules}
\end{figure}

We make several observations about reduced \limdds.
First, let us apply this definition to a state
$\ket{0}\otimes A \ket{\phi} + \ket{1}\otimes B\ket{\psi}$ with $\ket \phi \not\simeq_\pauli \ket \psi$, where $A,B\in\paulilim$.
Assume we already have canonical \limdds for $\phi$ and $\psi$
(note that necessarily $\phi\neq \psi$).
We will transform this node so that it satisfies all the reduction rules above.
There is a choice between representing this state as either $\lnode{A}{\phi}{B}{\psi}$ or $\lnode{B}{\psi}{A}{\phi}$,
as these are related by the isomorphism $X\otimes \id$.
The low precedence rule resolves this choice here.
Assuming $\phi \before \psi$, low factoring can now be realized by dividing out the LIM $A$,
yielding a node $\lnode{\id}{\phi}{A^{-1}B}{\psi}$ (with root edge $\id \otimes A$ as in \autoref{fig:necessity-of-reduction-rules} (4)).
Otherwise, if $\psi \before \phi$, we obtain node $\lnode{\id}{\psi}{B^{-1}A}{\phi}$ with incoming edge
$X \otimes B$. 
Finally, since there might be other LIMs $\highlim$ not equal to $B^{-1}A$ that yield the same state, the high determinism rule is finally needed to obtain a canonical node
$\lnode{\id}{\psi}{\highlim}{\phi}$ as shown in \autoref{fig:reduced1}. This last step turns a semi-reduced node into a (fully) reduced node. \autoref{sec:makeedge} discusses it in detail.

Now, let us apply the definition to a state $\ket{1}\otimes A\ket{\phi}$.
First, notice that the zero edges rule forces $\low v= \high v = \phi$ in this case.
There is a choice between representing this state as either $\lnode{A}{\phi}{0}{\phi}$ or $\lnode{0}{\phi}{A}{\phi}$, which denote the states $\ket{0}\otimes A\ket{\phi}$ and $\ket{1}\otimes A\ket{\phi}$,
as these are related by the isomorphism $X\otimes \mathbb I$.
The low factoring rule requires that the low edge label is $\mathbb I$, yielding a node of the form $\lnode{A}{\phi}{0}{\phi}$ with root label $X\otimes A$:
In other words, this rule enforces swapping high and low edges, placing a $X$ on the root label,
\emph{and} dividing out the LIM $A$.
Consequently, the high edge must be labeled with $0$, and therefore, semi-reduction, in this case,
coincides with (full) reduction (no high determinism is required).
Notice also that there is no reduced \limdd for the $0$-vector, 
because low factoring requires low edges with label $\id$.
This is not a problem, since the $0$-vector is not a quantum state.

The rules in \autoref{def:reduced-limdd} are defined only for \pauli-\limdds, to which our results pertain (except for the brief mention of \gen X and \gen Z-\limdds in \autoref{sec:exponential-separations}). We briefly discuss alternative groups here.
If $G$ is a group without the element $X\not\in G$, the reduced \glimdd based on the same rules is not universal (does not represent all quantum states), because the low precedence rule cannot always be satisfied, since it requires that $v_0\beforeq v_1$ for every node.
Hence, in this case, reduced \glimdd cannot represent a state $\ket 0\ket{v_0}+\ket{1}\ket{v_1}$ when $v_1\before v_0$. However, it is not difficult to formulate rules to support these groups $G$; for instance, when $G=\{\id\}$, we recover the \qmdd and may use its reduction rules~\cite{zulehner2018advanced}.

\emph{Nodes and edges in a reduced \limdd need not represent normalized quantum states,} just like in (unreduced) \limdds as explained in \autoref{sec:isomorphism-qmdd}.
Consider, e.g., node $\ell_2$ in \autoref{fig:qmdd-isoqmdd-exposition}, 
which represents state $[1,1,i,i]^\top$.
Because the normalization constant was divided out (see factor $\nicefrac 14$ on the root edge),
this state is not normalized. In fact, the root node does not need to be normalized,
as even reduced \limdds can represent any vector (except for the zero vector).

Lastly, the literature on other decision diagrams~\cite{akers1978binary,bryant86,feinstein2011skipped} often considers a ``redundant test'' or ``deletion'' rule to remove nodes with the same high and low child. This would introduce the skipping of qubit levels, which our syntactic definition disallows, as already discussed in~Footnote~\ref{fn:skipping}.
However, if needed \autoref{def:limdd} could be adapted and a deletion rule could be added to \autoref{def:reduced-limdd}.

\label{sec:proof-of-canonicity}

We now give a proof of \autoref{thm:node-canonicity-strong}, which states that reduced LIMDDs are canonical.
\begin{theorem}[Node canonicity]
	\label{thm:node-canonicity-strong}
    For each $n$-qubit quantum state $\ket{\phi}$, there exists a unique reduced Pauli-\limdd $L$ with root node $v_L$
    such that $\ket{v_L} \simeq \ket{\phi}$.
\end{theorem}

\begin{proof}
	
    We use induction on the number of qubits $n$ to show universality (the existence of an isomorphic \limdd node) and uniqueness (canonicity).

    \textbf{Base case.}
    If $n=0$, then $\ket{\phi}$ is a complex number $\lambda$.
    A reduced Pauli-\limdd for this state is the leaf node representing the scalar $1$.
    To show it is unique, consider that nodes $v$ other than the leaf have an $\index(v) > 0$,
    by the edges rule, and hence represent multi-qubit states.
    Since the leaf node itself is defined to be unique, the merge rule is not needed and canonicity follows.\\    
    Finally, $\ket{\phi}$ is represented by root edge
    $\leafedge \lambda$.
    
    \textbf{Inductive case.}
    Suppose $n>0$.
    We first show existence, and then show uniqueness.
    
    \paragraph{Part 1: existence.}
    We use the unique expansion of $\ket{\phi}$ as $\ket{\phi} = \ket{0} \otimes \ket{\phi_0} + \ket{1}\otimes \ket{\phi_1}$ where $\ket{\phi_0}$ and $\ket{\phi_1}$ are either $(n-1)$-qubit state vectors, or the all-zero vector.
    We distinguish three cases based on whether $\ket{\phi_0}, \ket{\phi_1} = 0$.

   \textbf{Case $\boldsymbol{\ket{\phi_0}, \ket{\phi_1} = 0}$:}
    This case is ruled out because $\ket{\phi} \neq 0$.

    \textbf{Case $\boldsymbol{\ket{\phi_0}=0}$ or $\boldsymbol{\ket{\phi_1} = 0}$:}
        In case $\ket{\phi_0}\neq 0$,
       by the induction hypothesis, there exists a Pauli-\limdd with root node $w$ satisfying
        $\ket{w} \simeq \ket{\phi_0}$. By definition of $\simeq$,
        there exists an $n$-qubit Pauli isomorphism $A$ such that 
        $\ket{\phi_0} = A \ket{w}$.
       We construct the following reduced Pauli-\limdd for $\ket{\phi}$: 
        $\lnode[v] I{w}0{w}$, adding a root edge~~
        \ledge[e_r=]{\id\otimes A}{v} as illustrated in \autoref{fig:reduced1} (left). 
        In case $\ket{\phi_1}\neq 0$, we do the same for root node                                                                                                                                                                                                                                                                                                                                                                                                                                                                                                                                                                                                                                                                                                                                                                                                                                                                                                                                                                                                                                                                                                                     
        In case $\ket{\phi_1}\neq 0$, we do the same for root
            $\ket{w} \simeq \ket{\phi_1} = A \ket w$, but switch the high and the low edge
            by instead a root edge~~\ledge[e_r=]{X \otimes A}{v} (similar to \autoref{fig:necessity-of-reduction-rules} (3)).
        In both cases, it is easy to check that the root node $v$ is reduced as it can be represented by a tuple $(\id, w, 0, w)$, where $w$ is canonical because of the induction hypothesis.
        Also in both cases, we also have $\ket \phi = \ket{e_r}$ because 
        either $\ket \phi = \id \otimes A \ket v$ or 
        $\ket \phi = X \otimes A \ket v$.

\begin{figure}
\centering
\tikz[->,>=stealth',shorten >=1pt,auto,node distance=1.5cm,font=\footnotesize,
        thick, state/.style={circle,draw,inner sep=0pt,minimum size=14pt}]{
    \node[state] (1) {$v'$};
    \node[state] (1a) [below = 1cm of 1, xshift=1.7cm] {$w$};
    
    \node[above = .5cm of 1] (x1) {};%

    \path[]
    (x1) edge      node[pos=.4,above right,pos=.7] {} (1)
    (1) edge[e0,bend left=-20] node[pos=.3,lbl,left] {$A$} (1a)
    (1) edge[e1,bend right=-20] node[pos=.3,lbl,above right] {$0$} (1a)
    ;

    \node[state, right = 2.5cm of 1] (2) {$v$};
    \node[above = .5cm of 2] (x2) {};%
    \path[]
    (x2) edge    node[left,pos=0,lbl] {$\id \otimes A$} (2)
    (2) edge[e0,bend left=-20] node[pos=.3,left] {} (1a)  %
    (2) edge[e1,bend right=-20] node[pos=.3,lbl,right] {$0$} (1a)
    (1) --  node[yshift=.1cm] {$\rightsquigarrow$} (2)
    ;
    }~~~~~~~~~~~~~~~
\tikz[->,>=stealth',shorten >=1pt,auto,node distance=1.5cm,
        thick, state/.style={circle,draw,inner sep=0pt,minimum size=14pt}]{
    \node[state] (1) {$v''$};
    \node[above = .5cm of 1] (x1) {};%
    \node[state] (1a) [below = 1cm of 1, xshift=2.3cm] {$v_L$};
    \node[state] (1b) [below = 1cm of 1, xshift=3.8cm] {$v_R$};
    \path[]
    (1) edge[e0] node[pos=.37,lbl,left] {$A$} (1a)
    (1) edge[e1] node[pos=.2,lbl,above right] {$B$} (1b)
    (1a) --  node[yshift=-.2cm] {$\beforeq$} (1b)
    ;

    \node[state, right = 2.5cm of 1] (2) {$v'$};

    \node[above = .5cm of 2] (x2) {};%
    \path[]
    (x1) edge     node[above left,pos=.4] {} (1) %
    (x2) edge     node[left,lbl,pos=.0] {$\id \otimes A$} (2)
    (2) edge[e0] node[pos=.2,left] {} (1a) %
    (2) edge[e1] node[pos=.16,lbl,right] {$ A^{-1}B$} (1b)
    (1) --  node[yshift=.1cm] {$\rightsquigarrow$} (2)
    ;
    
    \node[state, right = 2.5cm of 2] (3) {$v$};
    \node[above = .5cm of 3] (x3) {};%
    \path[]
    (x3) edge    node[lbl,above left,pos=.4] {$(\id \otimes A)\rootlim$} (3)
    (3) edge[e0] node[pos=.2,above left] {} (1a) %
    (3) edge[e1] node[pos=.3,lbl,below right] {$\highlim$} (1b)
    (2) --  node[yshift=.1cm] {$\rightsquigarrow$} (3)
    ;
    }
	\caption{Reduced node construction in case $\ket{\phi_1} = 0$ (left), and
	        $\ket{\phi_0}, \ket{\phi_1} \neq 0$ and $v_L \beforeq v_R$ (right).
	        Not shown: for cases $\ket{\phi_0} = 0$ and $v_R \beforeq v_L$, we take instead root edge $ X \otimes A$ and swap low/high edges.
            }
	\label{fig:reduced1}
\end{figure}

    \textbf{Case $\boldsymbol{\ket{\phi_0}, \ket{\phi_1} \neq 0}$:}
    By applying the induction hypothesis twice, there exist \pauli-\limdds $L$ and $R$ with root nodes
    $\ket{v_{L}} \simeq \ket{\phi_0}$ and $\ket{v_{R}} \simeq \ket{\phi_1}$.
    {The induction hypothesis implies only a `local' reduction of \limdds $L$ and $R$, but not automatically a reduction of their union. For instance, $L$ might contain a node $v$ and $R$ a node $w$ such that $v \simeq w$. While the other reduction rules ensure that $v$ and $w$ will be structurally the same, the induction hypothesis only applies the merge rule $L$ and $M$ in isolation, leaving two copies of identical nodes $v,w$.
We can solve this by applying merge on the union of nodes in $L$ and $M$, to merge any equivalent nodes, as they are already structurally equivalent by the induction hypothesis.
This guarantees that (also) $v_L,v_R$ are identical nodes.}

    By definition of $\simeq$, there exist $n$-qubit Pauli isomorphisms $A$ and $B$ such that $\ket{\phi_0} = A \ket{v_{L}}$ and $\ket{\phi_1} = B \ket{v_{R}}$.
    In case $v_{L} \beforeq v_{R}$,
       we construct the following reduced Pauli-\limdd for $\ket{\phi}$: the root node is $\lnode[v] {\mathbb I}{v_{L}}E{v_{R}}$, where
    $E$ is the LIM computed by $\highlabel(\lnode{\mathbb I}{v_L}{A^{-1}B}{v_R})$ . 
    Otherwise, if $v_{R}\beforeq v_{L}$, then we construct the following reduced Pauli-\limdd for $\ket{\phi}$: the root node is $\lnode[v] I{v_{R}}F{v_{L}}$, where $F=\highlabel(\lnode{\mathbb I}{v_L}{B^{-1}A}{v_R})$.
    It is straightforward to check that, in both cases, this Pauli-\limdd is reduced.
    Moreover, $\ket v$ isomorphic to $\ket \phi$ 
    as illustrated in \autoref{fig:reduced1} (right).

    \def\Ptop{P_{\textnormal{top}}}
    \def\Prest{P_{\textnormal{rest}}}
    \paragraph{Part 2: uniqueness.}
    To show uniqueness, let $L$ and $M$ be reduced \limdds with root nodes $v_L, v_M$ such that $\ket{v_L} \simeq \ket{\phi} \simeq \ket{v_M}$, as follows,
    	\begin{align}
    	    \lnode[v_L]{A_L}{v_L^0}{B_L}{v_L^1} & & \lnode[v_M]{A_M}{v_M^0}{B_M}{v_M^1}
    	\end{align}
   	The fact that these nodes are isomorphic means that there is a Pauli isomorphism $P$ such that $P\ket{v_L}=\ket{v_M}$.
   	We write $P=\lambda \Ptop\otimes \Prest\ne 0$ where $\Ptop$ is a single-qubit Pauli matrix and $\Prest$ an $(n-1)$-qubit Pauli LIM.
    Expanding the semantics of $v_L$ and $v_M$, we obtain,
        \begin{align}
        \label{eq:canonicity-equation-1}
       \lambda \Ptop \otimes \Prest (\ket{0} \otimes A_L \ket{v_L^0} + \ket{1} \otimes B_L \ket{v_L^1})
        =
        \ket{0} \otimes A_M \ket{v_M^0} + \ket{1} \otimes B_M \ket{v_M^1}
        .
        \end{align}    
    We distinguish two cases from here on: where $\Ptop \in \{\id, Z\}$ or $\Ptop \in \{X,  Y\}$.
    
    \textbf{Case $\boldsymbol{\Ptop = I,Z}$.}
    If $\Ptop = \diag z$ for $z \in \{1, -1\}$,  then
    \autoref{eq:canonicity-equation-1} gives:
        \begin{align}
        \label{eq:canonicity-equation-2}
            \lambda \Prest A_L \ket{v_L^0} = A_M \ket{v_M^0}
            \qquad\textnormal{and}\qquad
            z\lambda \Prest B_L \ket{v_L^1} = B_M \ket{v_M^1}
        \end{align}
            By low factoring, we have $A_L = A_M = \id$, so we obtain
$ \lambda  \Prest \ket{v_L^0} = \ket{v_M^0}$.      
        Hence $\ket{v_L^0}$ is isomorphic with $\ket{v_M^0}$, so by the induction hypothesis, we have $v_L^0 = v_M^0$. 
         We now show that also $v_L = v_M$ by considering two cases.
\begin{description}
        \item[$B_L \neq 0$ and $B_M \neq 0$:] then $z\lambda \Prest B_L\ket{v_L^1}=B_M\ket{v_M^1}$, so the nodes $v_L^1$ and $v_M^1$ represent isomorphic states, so by the induction hypothesis we have $v_L^1=v_M^1$.
        We already noticed by the low factoring rule that $v_L$ and $v_M$ have $\mathbb I$ as low edge label.
        By the high edge rule, their high edge labels are $\highlabel(v_L)$ and $\highlabel(v_M)$, and since the nodes $v_L$ and $v_M$ are semi-reduced and $\ket{v_L} \simeq \ket{v_M}$, we have $\highlabel(v_M) = \highlabel(v_L)$ by definition of $\highlabel$.

        \item[$B_L = 0$ or $B_M= 0$:] In case $B_L = 0$,
        we see from \autoref{eq:canonicity-equation-2} that $0=B_M\ket{v_M^1}$.
        Since the state vector $\ket{v_M^1}\ne 0$ by the observation that a reduced node does not represent the zero vector, it follows that $B_M=0$.
        Otherwise, if $B_M=0$, then \autoref{eq:canonicity-equation-2} yields $z\lambda\Prest B_L\ket{v_L^1}=0$.
        We have $z \lambda\ne 0$, $\Prest \ne 0$ by definition, and we observed $\ket{v_L^1}\ne 0$ above.
        Therefore $B_L=0$. In both cases, $B_L=B_M$.
\end{description}
        We conclude that in both cases $v_L$ and $v_M$ have the same children and the same edge labels, so they are identical by the merge rule.

        \textbf{Case $\boldsymbol{\Ptop = X, Y}$.}
    If $\Ptop = \begin{smallmat}0 & z^* \\ z & 0\end{smallmat}$ for $z \in \{1, i\}$, then
    \autoref{eq:canonicity-equation-1} gives:
        \[
            \lambda z\Prest A_L \ket{v_L^0} = B_M \ket{v_M^1}
            \qquad\textnormal{and}\qquad
            \lambda z^*\Prest B_L \ket{v_L^1} = A_M \ket{v_M^0}
            .
        \]
        By low factoring, $A_L = A_M = \id$, so we obtain 
        $z\lambda  \Prest \ket{v_L^0} = B_M \ket{v_M^1}$
        and
        $\lambda z^*\Prest B_L \ket{v_L^1} = \ket{v_M^0}$.
To show that $v_L = v_M$, we consider two cases.       
\begin{description}
        \item[$B_L \neq 0$ and $B_M \neq 0$:] we find $\ket{v_L^0} \simeq \ket{v_M^1}$ and $\ket{v_L^1} \simeq \ket{v_M^0}$, so by the induction hypothesis, $v_L^0= v_M^1$ and $v_L^1= v_M^0$.
        By low precedence, it must be that $v_L^1 = v_M^1 = v_L^0 = v_M^0$.
        Now use high determinism to infer that $B_L = B_M$ as in the ${\Ptop = I,Z}$ case.

        \item[$B_L = 0$ or $B_M = 0$:]
       This case leads to a contradiction and thus cannot occur.
        $B_L$ cannot be zero, because then $\ket{v_M^0}$ is the all-zero vector, which we excluded.
        The other case: if $B_M = 0$, then it must be that $\lambda z\Prest A_L\ket{v_L^0}$ is zero.
        Since $\lambda z \Prest\ne 0$ and $A_L=\id$, it follows that $\ket{v_L^0}$ is the all-zero vector, which is again excluded.
\end{description}
 We conclude that $v_L$ and $v_M$ have the same children and the same edge labels  for all choices of $\Ptop$, so they are identical by the merge rule.
\end{proof}

\subsection{The \makeedge subroutine: Maintaining canonicity during simulation}
\label{sec:makeedge}

To construct new nodes and edges, our algorithms use the \makeedge subroutine (\autoref{alg:make-edge}), as discussed in \autoref{sec:def-reduced-limdd}.
\makeedge produces a reduced parent node (with root edge) given two reduced children, so that the \limdd representation becomes canonical.
Here we give the algorithm for \makeedge and show that it runs in time $O(n^3)$ (assuming the input nodes are reduced).

\label{sec:makeedge-basic-reduction}

The \makeedge subroutine distinguishes two cases, depending on whether both children are non-zero vectors, which both largely follow the discussion below \autoref{def:reduced-limdd}.
It works as follows:
\begin{itemize}
    \item 
First it ensures low precedence, switching $e_0$ and $e_1$ if necessary at
\autoref{l:swap}. This is also done if $e_0$'s label $A$ is $0$ to allow for low factoring (avoiding divide by zero).
    \item 
Low factoring, i.e., dividing
out the LIM $A$, placing it on the root node,
        is visualized in \autoref{fig:reduced1} for the cases $e_1 = 0 / e_1\neq 0$,
and done in the algorithm at \autoref{l:low1},\ref{l:low2} / \ref{l:low3},\ref{l:low4}.
\item 
The zero edges rule is enforced in the $B=0$ branch by taking $v_1 := v_0$.
\item 
The canonical high label $\highlim$ is computed by \textsc{GetLabels},
    discussed below, for the semi-reduced node $\lnode[w] {\id}{v_0}{\hat A}{v_1}$ with $v_0 \neq v_1$.
    With the resulting high label, it now satisfies the high determinism rule of \autoref{def:reduced-limdd} with $\highlabel(w)= \highlim$.
\item
Finally, we merge nodes by creating an entry $(v_0,\highlim,v_1)$ in 
a table called the \concept{unique table}~\cite{brace1991efficient} at \autoref{algline:find-v-in-unique}. 
\end{itemize}

All steps except for \textsc{GetLabels} have complexity $O(1)$ or $O(n)$ (for checking low precedence, we use the nodes' order in the unique table).
The algorithm \textsc{GetLabels}, which we sketch below in \autoref{sec:choose-canonical-isomorphism-pauli} and fully detail in \autoref{app:canonical-high-label}, has runtime $O(n^3)$ if both input nodes are reduced, yielding an overall complexity $O(n^3)$.

\begin{algorithm}
	\begin{algorithmic}[1]
        \Procedure{MakeEdge}{\Edge $\ledge[e_0]{A}{v_0}$, $\ledge[e_1]{B}{v_1}$, \textbf{with}  $v_0, v_1$ reduced, $A \neq 0$ \textbf{or} $B \neq 0$}
            \If{$v_0\not\beforeq v_1$ \textbf{or} $A=0$} \Comment{Enforce \textbf{low precedence} and enable \textbf{factoring}}
                \State \Return\label{l:swap}
                        $(X \otimes \id^{\otimes n})\cdot \text{MakeEdge}(e_1, e_0)$
			\EndIf
            \If{$B = 0$}\algstrut[1]
            \State $v_1 := v_0$     \Comment{Enforce  \textbf{zero edges}}
            \State $v := \lnode{\id^{\otimes n}}{v_0}{0}{v_0}$\label{l:low1}
            \Comment{Enforce \textbf{low factoring}}
            \State $\rootlim := \id \otimes A$   \Comment{$\rootlim \ket v = \ket 0 \otimes 
                                        A \ket{v_0} + \ket 1 \otimes  B \ket{v_1}$}\label{l:low2}
            \Else
            \State $\hat A := A^{-1}B$ \Comment{Enforce \textbf{low factoring}}\label{l:low3}
            \State $\highlim, \rootlim:=\textsc{GetLabels}(\hat A,v_0,v_1)$
            \label{algline:makeedge-get-labels}
            \Comment{Enforce \textbf{high determinism}}
            \State $v := \lnode{\id^{\otimes n}}{v_0}{\highlim}{v_1}$ 
             \Comment{$\rootlim \ket v = \ket 0 \otimes 
                                         \ket{v_0} + \ket 1 \otimes A^{-1} B \ket{v_1}$}
                \label{l:low4}
            \State $\rootlim := (\id \otimes A) \rootlim $ 
             \Comment{$(\id \otimes A)\rootlim \ket v = \ket 0 \otimes  A \ket{v_0} + \ket 1 \otimes  B \ket{v_1}$}
            \EndIf
			\Stateh $v_{\text{r}}:=$ Find or create unique table entry $\unique[v] = (v_0, \highlim, v_1)$       
			\label{algline:find-v-in-unique}
            \Comment{Enforce \textbf{merge}}
            \State \Return $\ledge{\rootlim}{v_{\text{r}}}$
		\EndProcedure
	\end{algorithmic}
	\caption{
        Algorithm \makeedge takes two root edges to (already reduced) nodes $v_0,v_1$, the children of a new node, and returns a reduced node with root edge.
	It assumes that %
	$\index(v_0) = \index(v_1) = n$.
	We indicate which lines of code are responsible for which reduction rule in \autoref{def:reduced-limdd}.
}
	\label{alg:make-edge}
\end{algorithm}

\subsubsection{Choosing a canonical high-edge label}
\label{sec:choose-canonical-isomorphism-pauli}

In order to choose the canonical high edge label of node $v$, the \makeedge algorithm calls \textsc{GetLabels} (\autoref{algline:makeedge-get-labels} of \autoref{alg:make-edge}).
The function \textsc{GetLabels} returns a uniquely chosen LIM \highlim among all possible high-edge labels which yield LIMDDs representing states that are Pauli-isomorphic to $\ket{v}$.
We sketch the algorithm for \textsc{GetLabels} here and provide the algorithm in full detail in \autoref{app:canonical-high-label} (\autoref{alg:find-canonical-edges}).
First, we characterize the eligible high-edge labels.
That is, given a semi-reduced node $\lnode[v]{\id}{v_0}{\hat{A}}{v_1}$, we characterize all $C$ such that the node $\lnode{\id}{v_0}{C}{v_1}$ is isomorphic to $\lnode[v]{\id}{v_0}{\hat{A}}{v_1}$.
Our characterization shows that, modulo some complex factor, the eligible labels $C$ are of the form
\begin{equation}
    \label{eq:double-coset-eligible-high-labels}
    C \propto g_0 \cdot \hat{A} \cdot g_1, \quad \textnormal{for $g_0 \in \Stab(\ket{v_0}), g_1 \in \Stab(\ket{v_1})$}
\end{equation}
where $\Stab(\ket{v_0})$ and $\Stab(\ket{v_1})$ are the stabilizer subgroups of $\ket{v_0}$ and $\ket{v_1}$, i.e., the already reduced children of our input node $v$.
Note that the set of eligible high-edge labels might be exponentially large in the number of qubits.
Fortunately, eq.~\eqref{eq:double-coset-eligible-high-labels} shows that this set has a polynomial-size description by storing only the generators of the stabilizer subgroups.

Our algorithm chooses the lexicographically smallest eligible label, i.e., the smallest $C$ of the form $C\propto g_0\hat{A} g_1$ (the definition of `lexicographically smallest' is given in \autoref{app:prelims-linear-algebra}).
To this end, we use two subroutines: (1) an algorithm which finds (a generating set of) the stabilizer group $\Stab(\ket{v})$ of a \limdd node $v$; and (2) an algorithm that uses these stabilizer subgroups of the children nodes to choose a unique representative of the eligible-high-label set from eq.~\eqref{eq:double-coset-eligible-high-labels}.

For (1), we use an algorithm which recurses on the children nodes.
First, we note that, if the Pauli LIM $A$ stabilizes both children, then $\mathbb I\otimes A$ stabilizes the parent node.
Therefore, we compute (a generating set for) the intersection of the children's stabilizer groups.
Second, our method finds out whether the parent node has stabilizers of the form $P_n\otimes A$ for $P_n\in \{X,Y,Z\}$.
This requires us to decide whether certain cosets of the children's stabilizer groups are empty.
These groups are relatively simple, since, modulo phase, they are isomorphic to a binary vector space, and cosets are hyperplanes.
We can therefore rely in large part on existing algorithms for linear algebra in vector spaces.
The difficult part lies in dealing with the non-abelian aspects of the Pauli group.
We provide the full algorithm, which is efficient, also in~\autoref{app:canonical-high-label}.

Our algorithm for (2) applies a variant of Gauss-Jordan elimination to the generating sets of $\Stab(\ket{v_0})$ and $\Stab(\ket{v_1})$ to choose $g_0$ and $g_1$ in eq.~\eqref{eq:double-coset-eligible-high-labels} which, when multiplied with $\hat{A}$ as in eq.~\eqref{eq:double-coset-eligible-high-labels}, yield the smallest possible high label $C$.
(We recall that Gauss-Jordan elimination, a standard linear-algebra technique, is applicable here because the stabilizer groups are group isomorphic to binary vector spaces, see also \autoref{app:prelims-linear-algebra}).
We explain the full algorithm in \autoref{app:canonical-high-label}.

\subsubsection{Checking whether two \limdds are Pauli-equivalent}
\label{sec:pauli-equivalence-checking}

To check whether two states represented as \limdds are Pauli-equivalent, it suffices to check whether they have the same root node.
Namely, due to canonicity, and in particular the Merge rule (in \autoref{def:reduced-limdd}), there is a unique \limdd representing a quantum state up to phase and local Pauli operators.

\section{Related work}
\label{sec:related-work}
We mention related work on classical simulation formalisms and decision diagrams other than~\qmdd.

The Affine Algebraic Decision Diagram, introduced by Tafertshofer and Pedam \cite{tafertshofer1997factored}, and by Sanner and McAllister \cite{sanner2005AffineADDs}, is akin to a \qmdd except that its edges are labeled with a pair of real numbers $(a,b)$, so that an edge $\ledge{(a,b)}{v}$ represents the state vector $a\ket{v}+b\ket{+}^{\otimes n}$ (i.e., here $b$ is added to each element of the vector $a\ket{v}$).
To the best of our knowledge, this diagram has not been applied to quantum computing.

Context-Free-Language Ordered Binary Decision Diagrams (CFLOBDDs)~\cite{sistla2023cflobdds,quasimodo} extend BDDs with insights from visibly pushdown automata~\cite{vpda}.
An extension of CFLOBDD to the complex domain~\cite{sistla2023weighted} shows good performance for various simulation of quantum computing benchmarks.
Sentential Decision Diagrams \cite{darwiche2011sdd} generalize BDDs by replacing their total variable order with a variable tree (vtree).
Although Kisa et al.~\cite{prsdd} introduced an SDD which represents probability distributions, SDDs have not yet been used to simulate quantum computing, to the best of our knowledge.
The Variable-Shift SDD (VS-SDD)~\cite{nakamura_et_al:LIPIcs:2020:12096} improves on the SDD by merging isomorphic vtree nodes.
We remark that CFLOBDDs are similar to VS-SDD with a balanced vtree.

G\"unther and Drechsler introduced a BDD variant \cite{gunther1998bddLinearTransformation} which, in \limdd terminology, has a label on the root node only.
To be precise, this diagram's root edge is labeled with an invertible matrix $A\in \mathbb F_2^{n\times n}$.
If the root node represents the function $r\colon \bool^{n}\to \bool$, then the diagram represents the function $f(\vec x)=r(A\cdot \vec x)$.
(This concepts extends trivially to the domain of pseudo-Boolean functions, by replacing the BDD with an \add.)
In contrast, \limdds allow a label on every edge in the diagram, not only the root edge.
We show that this is essential to capture stabilizer states.

A multilinear arithmetic formula is a formula over $+,\times$ which computes a polynomial in which no variable appears raised to a higher power than $1$.
Aaronson showed that some stabilizer states require superpolynomial-size multilinear arithmetic formulas~\cite{briegel2000persistent,aaronson2004multilinear}.

\section{Discussion \label{sec:discussion}}

We have introduced \limdd, a novel decision diagram-based method to simulate quantum circuits, which enables polynomial-size representation of a strict superset of stabilizer states and the states represented by polynomially large \qmdds.
To prove this strict inclusion, we have shown the first lower bounds on the size of \qmdds: they require exponential size for certain families of stabilizer states.
Our results show that these states are thus hard for \qmdds.
We also give the first analytical comparison between simulation based on decision diagrams, and matrix product states, and the \cliffordt simulator.

\limdds achieve a more succinct representation than \qmdds by representing states up to 
local invertible maps which uses single-qubit (i.e., local) operations from a group $G$.
We have investigated the choices $G=\pauli$, $G=\braket{Z}$ and $G=\braket{X}$,
and found that any choice suffices for an exponential advantage over \qmdds;
notably, the choice $G=\pauli$ allows us to succinctly represent any stabilizer state.
Furthermore, we showed how to simulate arbitrary quantum circuits, encoded as Pauli-\limdds.
The resulting algorithms for simulating quantum circuits are exponentially faster than for \qmdds in the best case, and never more than a polynomial factor slower.
In the case of Clifford circuits, the simulation by \limdds is in polynomial time (in contrast to \qmdds).

We have shown that Pauli-\limdds can efficiently simulate a circuit family outputting the $W$ states, in contrast to the \cliffordt simulator which requires exponential time to do so (assuming the widely believed ETH), even when allowing for preprocessing of the circuit with a $T$-count optimizer.

Since we know from experience that implementing a decision diagram framework is a major endeavor, we leave an implementation of the Pauli-\limdd, in order to observe its runtimes in practice on relevant quantum circuits, to future work.
We emphasize that from the perspective of algorithm design, we have laid all the groundwork for such an implementation, including the key ingredient for the efficiency of many operations for existing decision diagrams: the existence of a unique canonical representative of the represented function, combined with a tractable MakeEdge algorithm to find it.

Regarding extensions of the \limdd data structure, an obvious next step is to investigate other choices of $G$.
Of interest are both the representational capabilities of such diagrams
(do they represent interesting states?), and the algorithmic capabilities
(can we still find efficient algorithms which make use of these diagrams?).
In this vein, an important question is what the relationship is between \glimdds
(for various choices of $G$) and existing formalisms for the classical simulation of quantum circuits, such as those based on match gates~\cite{terhal2001classical,jozsa2008matchgates,hebenstreit2020computational} and tensor networks~\cite{orus2014practical,hong2020tensor}.
It would also be interesting to compare \limdds to graphical calculi such as the ZX calculus \cite{coecke2011interactingZXAlgebra}, 
following similar work for \qmdds~\cite{vilmart2021quantum}.

Lastly, we note that the current definition of \limdd imposes a strict total order
over the qubits along every path from root to leaf.
It is known that the chosen order can greatly influence the size of the DD~\cite{rudell1993dynamic,wegener2000branching},
making it interesting to investigate variants of \limdds with a flexible ordering, for example taking inspiration from the Sentential Decision Diagram \cite{darwiche2011sdd,kisa2014probabilistic}.

\section{Acknowledgements}
\label{sec:acknowledgements}

We thank the anonymous reviewers at Quantum for their valuable suggestions, which helped to greatly improve the presentation of the manuscript.
We thank Dan Browne for help with establishing stabilizer ranks.
We thank Marie Anastacio, Jonas Helsen, Yash Patel and Matthijs Rijlaarsdam for their feedback on early versions of the manuscript.
We thank Patrick Emonts and Adri\`an P\'erez-Salinas for discussions on MPS, and Kenneth Goodenough for useful discussions in general.
The second author acknowledges the QIA project (funded by European Union's Horizon 2020, Grant Agreement No. 820445).
The third and fourth author are funded by the Netherlands Organization for Scientific Research (NWO/OCW), as part of the Quantum Software Consortium program (Project No. 024.003.037/3368).
The last author is funded by the research program VENI with project number 639.021.649 of the Netherlands Organization for Scientific Research (NWO).

\bibliographystyle{quantum}
\bibliography{lit}

\onecolumn
\appendix

\section{Linear-algebra algorithms for Pauli operators}
\label{app:prelims-linear-algebra}

In~\autoref{sec:preliminaries}, we defined the stabilizer group for an $n$-qubit state $\ket{\phi}$ as the group of Pauli operators $A\in \Pauli_n$ which stabilize $\ket{\phi}$, i.e. $A\ket{\phi} = \ket{\phi}$.
Here, we explain existing efficient algorithms for solving various tasks regarding stabilizer groups (whose elements commute with each other).
We also outline how the algorithms can be extended and altered to work for general {\paulilim}s, which do not necessarily commute.
For sake of clarity, in the explanation below we first ignore the scalar $\lambda$ of a {\paulilim} or \pauli element $\lambda P$. At the end, we explain how the scalars can be taken into account when we use these algorithms as subroutine in \limdd operations.

Any $n$-qubit Pauli string can (modulo factor $\in \{\pm 1, \pm i\}$) be written as $(X^{x_n} Z^{z_n}) \otimes \dots \otimes (X^{x_1} Z^{z_1})$ for bits $x_j, z_j, 1 \leq j \leq n$.
We can therefore write an $n$-qubit Pauli string $P$ as a length-$2n$ binary vector as follows~\cite{aaronson2008improved},
\[
    (\underbrace{x_n, x_{n-1}, \dots x_1}_{\textnormal{X block}} | \underbrace{z_n, z_{n-1}, \dots, z_1}_{\textnormal{Z block}})
    ,
\] 
where we added the horizontal bar ($|$) only to guide the eye.
We will refer to such vectors as \emph{check vectors}.
For example, we have $X \sim (1, 0)$ and $Z \otimes Y \sim (0, 1 | 1, 1)$ .
This equivalence induces an ordering on Pauli strings following the lexicographic ordering on bit strings.
For example, $X<Y$ because $(1|0) < (1|1)$ and $Z\otimes \id < Z \otimes X$ because $(0 0 | 1 0) < (0 1 | 1 0)$.

A set of $k$ Pauli strings thus can be written as $2n\times k$ binary matrix, often called \emph{check matrix}, as the following example shows.
\[
\begin{pmatrix}
	X &\otimes& X &\otimes& X\\
	\id &\otimes& Z &\otimes& Y
\end{pmatrix}
\sim
\begin{pmatrix}
	1& 1& 1& |& 0& 0& 0\\
	0& 0& 1& |& 0& 1& 1
\end{pmatrix}
.
\]
Furthermore, if $P, Q$ are Pauli strings corresponding to binary vectors $(\vec{x}^P, \vec{z}^P)$ and $(\vec{x}^Q, \vec{z}^Q)$, then 
\[
P \cdot Q \propto
\bigotimes_{j=1}^n
\left(X^{x^P_j} Z^{z^P_j}\right) \left(X^{x^Q_j} Z^{z^Q_j}\right) 
=
\bigotimes_{j=1}^n
\left(X^{x^P_j \oplus x^Q_j } Z^{z^P_j \oplus z^Q_j} \right)
\]
and therefore the group of $n$-qubit Pauli strings with multiplication (disregarding factors) is group isomorphic to the vector space $\{0, 1\}^{2n}$ (i.e., $\mathbb F_2^{2n}$) with bitwise addition $\oplus$ (i.e., exclusive or; `xor').
Consequently, many efficient algorithms for linear-algebra problems carry over to sets of Pauli strings.
In particular, if $G = \{g_1, \dots, g_k\}$ are length$-2n$ binary vectors (/ $n$-qubit Pauli strings) with $k\leq n$, then we can efficiently perform the following operations.
\begin{description}
    \item[\emph{RREF:}] bring $G$ into a reduced-row echelon form (RREF) using Gauss-Jordan elimination (both are standard linear algebra notions) where each row in the check matrix has strictly more leading zeroes than the row above.
        The RREF is achievable by $O(k^2)$ row additions (/~multiplications modulo factor) and thus $O(k^2 \cdot n)$ time (see \cite{berg2020circuit} for a similar algorithm).
        In the RREF, the first $1$ after the leading zeroes in a row is called a `pivot'.
    \item[\emph{Construct Minimal-size Generator Set}] convert $G$ to a (potentially smaller) set $G'$ by performing the RREF procedure and discarding resulting all-zero rows.
    It holds that $\braket{G}=\braket{G'}$, i.e., these sets generate the same group modulo phase.
    \item[\emph{Membership:}] determining whether a given a vector (/~Pauli string) $h$ has a decomposition in elements of $G$.
    	This can be done by obtaining minimal-size generating sets $H_1,H_2$ for the sets $G$ and $G\cup \{h\}$, respectively.
    	Then the generating sets have the same number of elements (i.e., rows) if and only if $h\in \braket{G}$; otherwise, if $h\not\in \braket{G}$, it holds that $|H_2|=|H_1|+1$.
    \item[\emph{Intersection:}] determine all Pauli strings which, modulo a factor, are contained in both $G_A$ and $G_B$, where $G_A, G_B$ are generator sets for $n$-qubit stabilizer subgroups.
    More specifically, we obtain the generator set of this group, i.e., we obtain a set $G_C$ such that $\braket{G_C}=\braket{G_A}\cap \braket{G_B}$.
        This can be achieved using the Zassenhaus algorithm \cite{LUKS1997335} for computing the intersection of two subspaces of a vector space, in time $O(n^3)$.
    \item[\emph{Division remainder:}] given a vector $h$  (/~Pauli string $h$), determine \mbox{$ h^{\textnormal{rem}} := \min_{g\in \langle G\rangle} \{ g\oplus h\}$} (minimum in the lexicographic ordering).
        We do so in the check matrix picture by bringing $G$ into RREF, and then making the check vector of $h$ contain as many zeroes as possible by adding rows from $G$:
	\begin{algorithmic}[1]
        \For{column index $j=1$ to $2n$}
        \If{$h_j = 1$ and $G$ has a row $g_i$ with its pivot at position $j$}
		$h := h \oplus g_i$
        \EndIf
        \EndFor
	\end{algorithmic}
        The resulting  $h$ is $h^{\textnormal{rem}}$.
This algorithm's runtime is dominated by the RREF step; $O(n^3)$.
\end{description}

To include the scalar into the representation, we remark that Pauli LIMs that appear as labels on diagrams may have $\lambda\in\mathbb C$, i.e., any complex number is allowed.  %
Therefore, to store LIMs, we use a minor extension to the check vector form introduced above, in order to also include the phase.
Specifically, the phase is stored using two real numbers, by writing $\lambda=r\cdot e^{i\theta}$ with $r \in \mathbb{R}_{> 0}$ and $\theta\in [0, 2\pi)$.
Consequently, the check vector has $2n+2$ entries, where the last entries store $r$ and $\theta$, e.g.:
\[
    \begin{pmatrix}
        3X &\otimes& X &\otimes& X\\
        -\frac{1}{2} i\id &\otimes& Z &\otimes& Y
    \end{pmatrix}
    \sim
    \begin{pmatrix}
        1& 1& 1& |& 0& 0& 0& |&3 & 0\\
        0& 0& 1& |& 0& 1& 1& |&\frac{1}{2} & \frac{3\pi}{2}
    \end{pmatrix}
\]
where we used $3 = 3\cdot e^{i\cdot 0}$ and $-\frac{1}{2}i = \frac{1}{2} \cdot e^{3\pi i/2}$.
This extended check vector also easily allows a total ordering, namely, we simply use the ordering on real numbers for $r$ and $\theta$.
For example, $(1, 1, | 0, 0 | 2, \frac{1}{2} ) < (1, 1 | 1, 0 | 3, 0)$.
Let us stress that the factor encoding $(r, \theta)$ is less significant than the Pauli string encoding $(x_n, \dots, x_1 | z_n, \dots, z_1)$.
As a consequence, we can greedily determine the minimum of two Pauli operators, by reading their check vectors from left to right.

Finally, we emphasize that the algorithms above rely on bitwise xor-ing, which is a commutative operation.
Since conventional (i.e., factor-respecting) multiplication of Pauli operators is not commutative, the algorithms above are not straightforwardly applicable to arbitrary $\paulilim_n$ input.
(When the input consist of pairwise commuting Pauli operators, such as stabilizer subgroups \cite{aaronson2008improved}, the algorithms can be made to work by adjusting row addition to keep track of the scalar.)
Fortunately, since Pauli strings either commute or anti-commute, row addition may only yield factors up to the $\pm$ sign, not the resulting Pauli strings.
This feature, combined with the stipulated order assigning least significance to the factor,
enables us to invoke the algorithms above as subroutine. %
We do so in~\autoref{sec:choose-canonical-isomorphism-pauli} and \autoref{sec:pauli-isomorphism-detection}.

\section{Proof that cluster states and coset states need exponentially-large QMDDs}
\label{sec:graph-state-lower-bound}

In this appendix, we show that \qmdds which represent both clusters states, and coset states, are exponentially large in the worst case
(respectively, \autoref{thm:graph-state-qmdd-lower-bound} and \autoref{thm:random-xor-state-hard-for-qmdd}).
On the other hand, in \autoref{sec:graph-states-limdds}, we will show that these states can be represented using only $\mathcal O(n)$ nodes by $\braket{X}$-\limdds, showing that they are exponentially more succinct than \qmdds.
We first fix notation and definitions, after which we prove 
the theorem using two lemmas.

Let $G$ be an undirected graph with vertices $V_G=\{v_1, \dots, v_n\}$ and edge set $E_G \subseteq V_G\times V_G$.
For a subset of vertices $S\subseteq V_G$, the $S$-induced subgraph of $G$ has vertices $S$ and edge set $(S \times S) \cap E$.
Given $G$, its graph state $\ket{G}$ is expressed as
\begin{align}
	\label{eq:graph-state-definition}
    \ket{G} = \sum_{\vec x\in \set{0,1}^n} (-1)^{f_G(\vec x) } \ket{\vec x}
\end{align}
where $f_G(\vec{x})$ is the number of edges in the $S$-induced subgraph of $G$.

For a function $f: \{0, 1\}^n \rightarrow \mathbb{C}$ and bit string $\vec a=a_1\cdots a_k\in\{0, 1\}^k$, we denote by $f_{\vec{a}}$ the subfunction of $f$ restricted to $\vec a$:
\begin{align}
f_{\vec a}(x_{k+1},\ldots, x_n) := f(a_1,\ldots, a_k,x_{k+1},\ldots, x_n)
\end{align}
We also say that $f_{\vec a}$ is a subfunction of $f$ of \emph{order} $|\vec a|=k$.

We will also need the notions of boundary and strong matching.

\begin{definition}[Boundary]
	For a set $S\subseteq V_G$ of vertices in $G$, the \emph{boundary} of $S$ is the set of vertices in $S$ adjacent to a vertex outside of $S$.
\end{definition}

\begin{definition}[Strong Matching]\label{def:smatch}
	Let $G=(V,E)$ be an undirected graph. A \emph{strong matching} is a
	subset of edges $M \subseteq E$ that do not share any vertices (i.e., it is a matching)
	and no two edges of $M$ are incident to the same edge of $G$, i.e.,
	an edge in $E \setminus M$. Alternatively, a strong matching is a matching $M$ s.t. $G[V(M)] = M$.
	We say that $M$ is an $(S,T)$-strong matching for two sets of vertices $S,T\subset V$ if $M\subseteq S\times T$.
	For a strong matching $M$ and a vertex $v\in V(M)$, we let $M(v)$ denote the unique vertex to which $v$ is matched by $M$.
\end{definition}

Using these definitions and notation, we prove \autoref{thm:graph-state-qmdd-lower-bound}.

\thmgraphqmddlower*
\begin{proof}
Let $G=\text{lattice}(n,n)$ be the undirected graph of the $n\times n$ lattice,
with vertex set $V=\{v_1,\ldots, v_{n^2}\}$.
Let $\sigma=v_1v_2\cdots v_{n^2}$ be a variable order, and let $S=\{v_1,v_2,\ldots, v_{\frac{1}{2}n^2}\}\subset V$ be the first $\half n^2$ vertices in this order.

The proof proceeds broadly as follows.
First, in \autoref{thm:strong-matchings-yield-subfunctions}, we show that any $(S,\overline S)$-strong matching $M$ effects $2^{|M|}$ different subfunctions of $f_G$.
Second, \autoref{thm:large-strong-matching} shows that the lattice contains a large $(S,\overline S)$-strong matching for any choice of $S$.
    Put together, this will prove the lower bound on the number of QMDD nodes as in \autoref{thm:graph-state-qmdd-lower-bound} by the fact that a QMDD for the cluster state $G$ has a node per unique subfunction of the function $f_G$.
	\autoref{fig:strong-matching-in-grid} illustrates this setup for the $5\times 5$ lattice.

\begin{lemma}
	\label{thm:strong-matchings-yield-subfunctions}
	Let $M$ be a non-empty $(S,\overline S)$-strong matching for the vertex set $S$ chosen above.
	If $\sigma=v_1v_2\cdots v_{n^2}$ is a variable order where all vertices in $S$ appear before all vertices in $\overline S$, then $f_G(x_1,\ldots, x_{n^2})$ has $2^{|M|}$ different subfunctions of order $|S|$.
\end{lemma}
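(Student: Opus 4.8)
The plan is to compare subfunctions of the amplitude function up to a global scalar, which is the notion relevant for counting \qmdd nodes (one node per subfunction modulo a factor); distinctness up to a scalar will in particular imply the subfunctions of $f_G$ are genuinely different. Writing the amplitude as $(-1)^{f_G(\vec x)}$ with $f_G(\vec x)=\sum_{(i,j)\in E_G} x_i x_j$, I would first split the edge set as a disjoint union $E_G = E_S \cup E_{\overline S}\cup E_{\mathrm{cross}}$ of edges inside $S$, inside $\overline S$, and crossing between them. By the hypothesis on $\sigma$, fixing the first $|S|$ variables is exactly fixing an assignment $\vec a\in\set{0,1}^{|S|}$ to the vertices of $S$, and the resulting order-$|S|$ amplitude subfunction $g_{\vec a}$ factors (writing $\vec y$ for the free assignment to $\overline S$, and letting $i$ index the $S$-endpoint of a crossing edge) as
\[
g_{\vec a}(\vec y) = (-1)^{\sum_{(i,j)\in E_S} a_i a_j}\cdot (-1)^{\sum_{(i,j)\in E_{\overline S}} y_i y_j}\cdot (-1)^{\sum_{(i,j)\in E_{\mathrm{cross}}} a_i y_j}.
\]
The first factor is a global sign (irrelevant up to a scalar) and the middle factor is identical for every $\vec a$; hence two such subfunctions coincide up to a scalar exactly when their cross terms $(-1)^{\sum_{(i,j)\in E_{\mathrm{cross}}} a_i y_j}$ do.

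Next I would exhibit $2^{|M|}$ assignments yielding pairwise-inequivalent subfunctions. Write $M=\set{(s_1,t_1),\dots,(s_m,t_m)}$ with $s_k\in S$, $t_k\in\overline S$. I would define a map from $\set{0,1}^m$ to assignments $\vec a$ by setting $a_{s_k}=b_k$ on the matched $S$-vertices and $a_v=0$ on all other vertices of $S$, and claim it induces an injection into subfunctions-up-to-a-scalar; this immediately gives $2^{m}=2^{|M|}$ distinct subfunctions.

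For injectivity, take $\vec b\neq\vec b'$ and an index $k$ with $b_k\neq b'_k$. Since the subfunctions are $\pm1$-valued, they are inequivalent up to a scalar precisely when the pointwise product of their cross terms is non-constant in $\vec y$; this product equals $(-1)^{\sum_{(i,j)\in E_{\mathrm{cross}}} \delta_i y_j}$, where $\delta_i = a_i \oplus a'_i$ is supported on the matched $S$-vertices. I would evaluate this at $\vec y = 0$, giving $+1$, and at the assignment setting $y_{t_k}=1$ and $y_j=0$ for all other $j\in\overline S$. In the latter, only $j=t_k$ survives, so the exponent becomes $\sum_{k':(s_{k'},t_k)\in E_G} \delta_{s_{k'}}$. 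Here the strong-matching property $G[V(M)]=M$ does all the work: among the vertices of $V(M)$, the only neighbour of $t_k$ is $s_k$, so a single term survives, and it survives precisely because $\delta_{s_k}=1$. Thus the product is $-1$ at this point and $+1$ at $\vec y=0$, hence non-constant, so $g_{\vec b}$ and $g_{\vec b'}$ are inequivalent, proving injectivity.

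I expect the main obstacle to be exactly this isolation step: one must guarantee that toggling the single test variable $y_{t_k}$ flips exactly one cross edge incident to the support of $\delta$. This would fail for an ordinary matching, where $t_k$ could have further neighbours among $\set{s_1,\dots,s_m}$ and several terms could cancel; it holds only because a strong matching forbids every edge of $G[V(M)]$ other than the matching edges themselves. A secondary subtlety requiring care is that the comparison must be made up to a global scalar rather than as raw functions, so the argument is phrased as non-constancy of the ratio; conveniently, the same computation also shows the functions differ outright, so the conclusion is insensitive to whether ``different subfunctions'' is read up to a scalar or on the nose.
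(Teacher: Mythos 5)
Your proposal is correct and takes essentially the same approach as the paper's proof: both exploit the strong-matching property $G[V(M)]=M$ to isolate the matched partner $t_k=M(s_k)$ of a vertex where the two assignments to $S$ differ, and compare the induced subfunctions at $\vec y=\vec 0$ versus the assignment with only $y_{t_k}=1$, where exactly one cross edge flips. Your factorization-and-ratio phrasing additionally makes explicit the distinctness \emph{up to a global scalar} that the \qmdd node count actually requires, a point the paper's proof (which argues about the integer-valued $f_G$ via a two-case split) leaves implicit.
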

\begin{proof}
	Let $S_M:=S\cap V(M)$ and $\overline S_M:=\overline S\cap M$ be the sets of vertices that are involved in the strong matching.
Write $\chi(x_1, \dots, x_n)$ for the indicator function for vertices: $\chi(x_1, \dots, x_n) := \set{v_i \mid x_i=1, i \in [n]} $.
	Choose two different subsets $A,B\subseteq S_M$ and let $\vec{a}=\chi^{-1}(A)$ and $\vec{b}=\chi^{-1}(B)$ be the corresponding length-$|S|$ bit strings.
	These two strings induce the two subfunctions $f_{G,\vec{a}}$ and $f_{G,\vec{b}}$.
	We will show that these subfunctions differ in at least one point.
	
	First, if $f_{G,\vec{a}}(0,\ldots, 0)\ne f_{G,\vec{b}}(0,\ldots, 0)$, then we are done.
	Otherwise, take a vertex $s\in A\oplus B$
    and say w.l.o.g. that $s\in A\setminus B$.
	Let $t=M(s)$ be its partner in the strong matching.
	Then we have, $|E[A\cup \{t\}]| = |E[A]|+1$ but $|E[B\cup \{t\}]|=|E[B]|$.
	Therefore we have 
	\begin{align}
	f_{G,\vec a}(0,\ldots, 0, x_t=0, 0,\ldots, 0) ~~\ne~~ &f_{G,\vec a}(0,\ldots,0,x_t=1,0,\ldots, 0) \\
	f_{G,\vec b}(0,\ldots, 0, x_t=0, 0,\ldots, 0) ~~=~~ & f_{G,\vec b}(0,\ldots, 0,x_t=1,0,\ldots, 0)
	\end{align}
	We see that each subset of $S_M$ corresponds to a different subfunction of $f_G$. Since there are $2^{|M|}$ subsets of $M$, $f_G$ has at least that many subfunctions.
\end{proof}

We now show that the $n\times n$ lattice contains a large enough strong matching.

\begin{lemma}
	\label{thm:large-strong-matching}
	Let $S=\{v_1,\ldots, v_{\frac{1}{2} n^2}\}$ be a set of $\half n^2$ vertices of the $n\times n$ lattice, as above.
	Then the graph contains a $(S,\overline S)$-strong matching of size at least $\floor{\frac{1}{12}n}$.
\end{lemma}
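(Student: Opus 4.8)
The plan is to combine an isoperimetric lower bound on the size of the boundary of $S$ with a greedy extraction of a strong matching from the crossing edges. First I would invoke the separator/boundary bound for the grid (\cite{lipton1979generalized}, Th.~11), exactly as in the proof sketch of \autoref{thm:graph-state-qmdd-lower-bound}: for any balanced bipartition $(S,\overline S)$ of the $n\times n$ lattice into sets of $\half n^2$ vertices, at least $\floor{n/3}$ vertices of $S$ lie on the boundary, i.e.\ are adjacent to a vertex of $\overline S$. Each such boundary vertex $s$ contributes at least one crossing edge $(s,t)$ with $t\in\overline S$, so I start from a pool of $\ge \floor{n/3}$ candidate crossing edges, and the task reduces to sieving this pool down to a subset forming an $(S,\overline S)$-strong matching in the sense of \autoref{def:smatch} (an induced matching whose four endpoints span no further edge of $G$).

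Second, I would perform the sieving greedily, governed by a distance criterion. If the chosen boundary vertices are pairwise at graph-distance at least $3$ and each is paired with an arbitrary neighbor in $\overline S$, then for two chosen edges $(s,t)$ and $(s',t')$ one has $d(s,s')\ge 3$, $d(s,t')\ge 2$ and $d(t,s')\ge 2$, so the only potential extra adjacency is between the partners $t,t'$, and this occurs only when $d(s,s')=3$. Hence, after discarding one edge from each such close pair, the selected crossing edges span no additional edge of $G$ and form a strong matching. Processing the candidate boundary vertices one at a time, adding a crossing edge and removing the nearby candidates that could later violate this criterion, produces a strong matching whose size is a constant fraction of $\floor{n/3}$; bounding the number of removals so that a quarter of the candidates survive yields the claimed size $\floor{n/12}$, which can then be fed into \autoref{thm:strong-matchings-yield-subfunctions}.

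The main obstacle is pinning down this constant. The crude accounting — each selected edge blocks only the candidates inside the radius-$2$ balls around its two endpoints — is too lossy: in the plane such balls contain on the order of twenty vertices, so a naive two-dimensional packing gives only $\Omega(n)$ with a large denominator rather than $\floor{n/12}$. To reach the factor $1/4$ I would exploit that the interface between $S$ and $\overline S$ is essentially one-dimensional: the crossing edges can be linearly ordered along the cut (equivalently along the separating curves in the planar/dual picture, or grouped by the row or column in which they occur), so that thinning this ordered family by a fixed stride blocks only a bounded number of neighbours per selection and loses only a small constant fraction. Establishing this linear ordering of the interface and verifying that a suitably strided subfamily induces no extra edge is the delicate step; once it is in place, the strong-matching property reduces to the routine distance check above, and the bound $\floor{n/12}$ follows.
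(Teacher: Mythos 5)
Your first step coincides with the paper's: both arguments invoke Theorem 11 of \cite{lipton1979generalized} to guarantee at least $n/3$ boundary vertices of $S$. The divergence --- and the genuine gap --- lies in how you sieve these down to a strong matching of size $\floor{n/12}$. You correctly diagnose that your greedy selection under the distance-$\geq 3$ criterion is too lossy (a radius-$2$ ball in the lattice contains $13$ vertices, so the naive packing gives only about $|B_S|/13$, i.e.\ roughly $n/39$), and you propose to recover the constant by linearly ordering the crossing edges ``along the cut'' and thinning by a fixed stride. This pivotal step is not just unestablished but unsound as stated: $S$ is an \emph{arbitrary} set of $\half n^2$ vertices, determined by an arbitrary variable order, so the interface need not resemble a curve at all. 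The cut edges correspond to an even subgraph of the planar dual, i.e.\ a union of possibly many closed curves, and even within a single dual curve two crossing edges that are far apart in the curve ordering can be grid-adjacent --- for instance, the edges on opposite sides of a one-vertex-thick tooth of a comb-shaped $S$. Hence a strided subfamily can violate the induced-matching condition of \autoref{def:smatch}, and repairing such collisions locally reintroduces exactly the two-dimensional ball accounting you were trying to avoid. The bound $\floor{n/12}$ therefore does not follow from your argument.

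For contrast, the paper's proof avoids any geometric ordering of the cut: it uses only that every vertex of the lattice has degree at most $4$ to extract from $B_S$ a set of $\floor{\frac{1}{4}|B_S|} \geq \floor{n/12}$ boundary vertices which share no neighbours in $\overline S$, and then matches each to a private neighbour across the cut; the constant $1/4$ comes directly from the degree bound. If you want to salvage your route, the fix is to replace the ordering idea by this degree-based disjointness argument (checking along the way the remaining adjacency conditions required by \autoref{def:smatch}, which for the bipartite, triangle-free grid amounts to verifying that matched endpoints and partners span no extra edges) rather than to seek one-dimensional structure in the interface, which an adversarial $S$ simply need not have.
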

\begin{proof}
	Consider the boundary $B_S$ of $S$.
	This set contains at least $n/3$ vertices, by Theorem 11 in \cite{lipton1979generalized}.
	Each vertex of the boundary of $S$ has degree at most $4$. 
    It follows that there is a set of $\left\lfloor \frac{1}{4}|B_S|\right\rfloor$ vertices which share no neighbors.
	In particular, there is a set of $\left\lfloor \frac{1}{4}|B_S| \right\rfloor\geq \floor{\frac{1}{12}n}$ vertices in $B_S$ which share no neighbors in $\overline S$.
\end{proof}
Put together, every choice of half the vertices in the lattice yields a set with a boundary of at least $n/3$ nodes, which yields a strong matching of at least $\floor{\frac{1}{12}n}$ edges, which shows that $f_G$ has at least $2^{\floor{\frac{1}{12}n}}$ subfunctions of order $\frac{1}{2}n^2$.
\end{proof}

\begin{figure}[h!]
\centering
\includegraphics[width=0.45\textwidth]{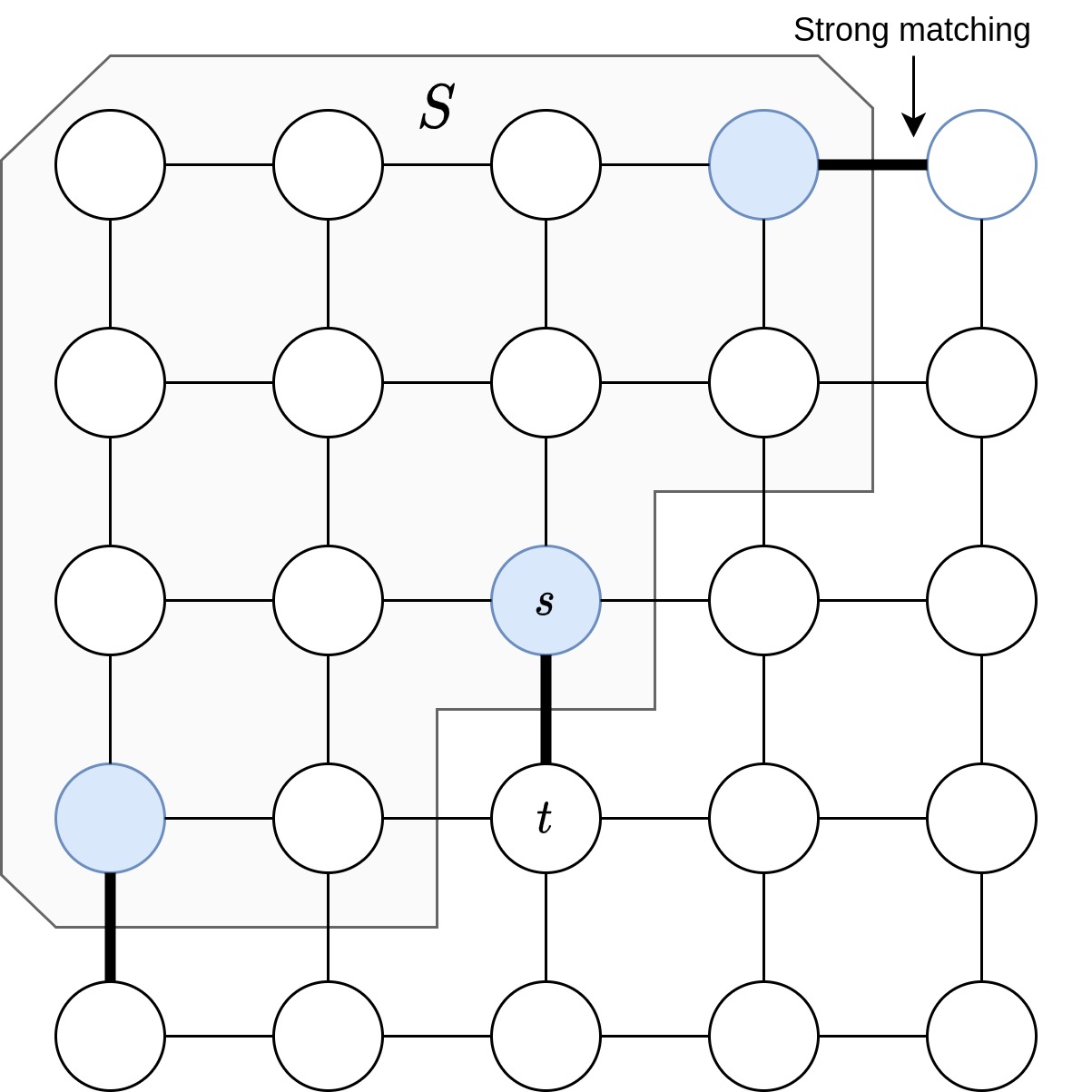}
\caption{
    The $5\times 5$ lattice, partitioned in a vertex set $S$ and its complement $\overline{S}$.
    A strong matching between $S$ and $\overline{S}$ is indicated by black edges.
}
\label{fig:strong-matching-in-grid}
\end{figure}

\paragraph{Proof that coset states need exponentially large QMDDs.}

We now show that \qmdds which represent coset states are exponentially large in the worst case.
We will use the following result by \v{D}uri\v{s} et al. on binary decision diagrams (\bdds), which are \qmdds with codomain $\{0, 1\}$.
This result concerns vector spaces, but of course, every vector space of $\{0,1\}^n$ is, in particular, a coset.

\begin{theorem}[\v{D}uri\v{s} et al.\cite{vdurivs2004multi}]
	\label{thm:random-vector-space-hard-for-bdd}
	The characteristic function $f_V: \{0, 1\}^n \rightarrow \{0, 1\}$ of a randomly chosen vector space $V$ in $\{0, 1\}^n$, defined as $f_V(x)=1$ if $x\in V$ and  $0$ otherwise, needs a \bdd of size $2^{\Omega(n)}/(2n)$ with high probability.
\end{theorem}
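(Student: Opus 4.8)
The plan is to follow the standard route for branching-program lower bounds: bound the \bdd size from below by the number of distinct subfunctions across a balanced cut, and then show that for a random vector space this count is $2^{\Omega(n)}$ by reducing it to ranks of random matrices over $\mathbb{F}_2$. As in the proof of \autoref{thm:graph-state-qmdd-lower-bound}, a reduced \bdd has exactly one node per distinct subfunction of $f_V$ (there is no phase freedom here, since the codomain is $\{0,1\}$), so it suffices to exhibit a cut that produces many subfunctions. Fix a balanced bipartition of the $n$ coordinates into $\vec x$ (size $k = n/2$) and $\vec y$ (size $n-k$). For a subspace $V$, each partial assignment $\vec a$ to $\vec x$ induces the subfunction $f_{V,\vec a}(\vec y) = [(\vec a, \vec y) \in V]$, which is the indicator of either the empty set or an affine coset of $V_0 := \{\vec y : (\vec 0, \vec y) \in V\}$. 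Hence two assignments give the same subfunction iff they induce the same coset, and the number of distinct nonempty subfunctions equals the size of the image of the homomorphism $\vec a \mapsto \vec y_{\vec a} + V_0$ from the projection $\pi_{\vec x}(V)$ into $\mathbb{F}_2^{\,n-k}/V_0$. A dimension count identifies this image size as $2^{\dim \pi_{\vec x}(V) - \dim U}$, where $U := \{\vec a : (\vec a, \vec 0) \in V\}$ is the kernel of this homomorphism.

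Next I would translate the two dimensions into ranks of a random generator matrix. Writing $V$ as the row space of a uniformly random full-rank $G \in \mathbb{F}_2^{d \times n}$ with $d = n/2$, the quantity $\dim \pi_{\vec x}(V)$ is the rank of the left $d \times k$ block $G_L$, while $\dim U$ is the dimension of the space of codewords of $V$ supported entirely on the first $k$ coordinates. Standard estimates on ranks of uniformly random $\mathbb{F}_2$-matrices show that $\mathrm{rank}(G_L) \geq k - t$ fails only with probability $2^{-\Omega(t^{2})}$, and that $\dim U$ exceeds a given constant only with exponentially small probability. Taking $t$ to be a sufficiently small constant times $n$ therefore yields $\dim \pi_{\vec x}(V) - \dim U = \Omega(n)$, hence $2^{\Omega(n)}$ distinct subfunctions for this fixed cut, with failure probability at most $2^{-\Omega(n^{2})}$.

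Finally, to make the bound hold for an adversarially chosen variable order (and thereby for the minimal \bdd), I would note that every order induces some balanced bipartition at its midpoint, and that the uniform distribution over subspaces is invariant under coordinate permutations. It thus suffices that all $\binom{n}{n/2}$ balanced bipartitions simultaneously produce $2^{\Omega(n)}$ subfunctions; since the per-partition failure probability above is far below $2^{-n}$ while the surviving exponent is still $\Omega(n)$, a union bound over the at most $2^n$ bipartitions closes the argument, and the polynomial slack incurred when relating \bdd size to the number of cut subfunctions accounts for the $1/(2n)$ factor. The main obstacle is precisely this exponent bookkeeping in the second and third steps: one must choose the rank deficit $t$ large enough that the union bound over all balanced partitions survives, yet small enough that $k - t$ — and hence the surviving number $2^{\,k-t-\dim U}$ of subfunctions — remains $2^{\Omega(n)}$.
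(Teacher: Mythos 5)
The paper contains no proof of this statement to compare yours against: it is imported verbatim from \v{D}uri\v{s} et al.~\cite{vdurivs2004multi} and used as a black box in the proof of Corollary~\ref{thm:random-xor-state-hard-for-qmdd}, so your argument must be judged on its own merits. On those merits it is essentially sound. The subfunction count across a fixed cut is handled correctly: for $\vec a$ in the projection $\pi_{\vec x}(V)$ the fiber is a coset of $V_0$, the map $\vec a \mapsto \vec y_{\vec a} + V_0$ is a well-defined homomorphism whose kernel is exactly $U=\{\vec a : (\vec a,\vec 0)\in V\}$, distinct cosets give distinct $0/1$-valued subfunctions, and so the width at the cut is at least $2^{\dim\pi_{\vec x}(V)-\dim U}$. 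The translation to random matrices is also right: with $V$ the row space of a uniform full-rank $G\in\mathbb{F}_2^{d\times n}$, $d=n/2$ (a legitimate reading of ``random vector space,'' and full rank holds with probability $1-2^{-\Omega(n)}$), you get $\dim\pi_{\vec x}(V)=\mathrm{rank}(G_L)$ and $\dim U = d-\mathrm{rank}(G_R)$.

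One claim in your middle step is false as stated: it is \emph{not} true that $\dim U$ ``exceeds a given constant only with exponentially small probability.'' The corank of a uniform square matrix over $\mathbb{F}_2$ converges to a nondegenerate limiting distribution, so $\Pr[\dim U \ge j]$ is a positive constant for each fixed $j$, independent of $n$; a constant-threshold bound would therefore not survive your union bound over the roughly $2^n$ balanced bipartitions. Fortunately your final paragraph already contains the repair, and you should just apply it symmetrically to both blocks: use the tail $\Pr[\mathrm{corank}\ge t]\le 2^{-\Omega(t^2)}$ with $t=cn$ for a small constant $c$ on $G_L$ \emph{and} $G_R$, which yields at least $2^{k-2t}=2^{\Omega(n)}$ distinct subfunctions per cut except with probability $2^{-\Omega(n^2)}\ll 2^{-n}$; the union bound and the midpoint-cut argument for an adversarial variable order then go through (the permutation-invariance remark is redundant once you union over all balanced cuts). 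Note finally that your conclusion is stronger than the cited statement: you obtain $2^{\Omega(n)}$ nodes at a single level of the ordered diagram, so the $1/(2n)$ slack in the theorem's formulation --- whatever its role in the original model of \cite{vdurivs2004multi} --- is not needed, and your bound implies the version the paper actually uses in Corollary~\ref{thm:random-xor-state-hard-for-qmdd}.
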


Our result follows by noting that if $f$ has codomain $\{0,1\}$ as above, then the \qmdd of the state $\ket{f}=\sum_xf(x)\ket{x}$ has the same structure as the \bdd of $f$.
Consequently, in particular the \bdd and \qmdd have the same number of nodes.

\begin{corollary}
	\label{thm:random-xor-state-hard-for-qmdd}
	For a random vector space $V\subseteq \{0,1\}^n$, the coset state $\ket{V}$ requires \qmdds of size $2^{\Omega(n)}/(2n)$ with high probability.
\end{corollary}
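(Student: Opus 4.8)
The plan is to reduce the statement directly to Theorem \ref{thm:random-vector-space-hard-for-bdd} via the observation, made just above the corollary, that for an amplitude function taking only the two values $\{0,\lambda\}$ the reduced \qmdd and the reduced \bdd have essentially the same node count. First I would fix a random vector space $V\subseteq\{0,1\}^n$ and write the amplitude function of the coset state $\ket V$ as $f(x)=\lambda\cdot f_V(x)$, where $\lambda=1/\sqrt{|V|}$ is a nonzero scalar and $f_V\colon\{0,1\}^n\to\{0,1\}$ is the characteristic function of $V$. Here I use that $V$, being a vector space, is in particular a coset, so Definition \ref{def:xor-states} applies and $\ket V$ is a well-defined coset state.

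The key step is to show that the reduced \qmdd of $\ket f$ and the reduced \bdd of $f_V$ have the same number of nodes. Both diagrams place one node per equivalence class of subfunctions under a fixed variable order: the \bdd identifies two subfunctions exactly when they are equal, whereas the \qmdd identifies them when they are equal up to a nonzero complex factor. I would argue that these two equivalence relations coincide on the subfunctions at hand. Every subfunction of $f$ equals $\lambda$ times the corresponding subfunction of $f_V$, and each subfunction of $f_V$ is again $\{0,1\}$-valued. If two $\{0,1\}$-valued functions $g,h$ satisfy $g=c\,h$ for some nonzero $c\in\mathbb C$, then evaluating at any point where $h=1$ forces $c=1$, hence $g=h$; and if $h\equiv 0$ then $g\equiv 0$ as well. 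So proportionality reduces to equality for these functions, and the \qmdd's merge-modulo-scalar rule merges exactly the same subfunctions as the \bdd's merge rule. The two diagrams therefore have the same internal structure and node count, up to the $O(1)$ difference between the \bdd's two terminals and the \qmdd's single leaf with zero-edges, which is irrelevant to the asymptotics.

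Finally I would invoke Theorem \ref{thm:random-vector-space-hard-for-bdd}: with high probability the \bdd of $f_V$ has size $2^{\Omega(n)}/(2n)$. By the node-count correspondence just established, the \qmdd of $\ket V$ then also has size $2^{\Omega(n)}/(2n)$ with high probability, which is exactly the claim.

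I expect the only genuine subtlety, the main obstacle, to be making the node-count correspondence airtight, i.e.\ checking that \qmdd reduction never collapses two subfunctions that the \bdd keeps separate. This is precisely the proportionality-equals-equality argument above; it hinges on the codomain being $\{0,\lambda\}$ rather than a richer set, and one should additionally confirm that the same variable order is used for both diagrams so that the families of subfunctions being compared are literally identical.
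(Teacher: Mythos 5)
Your proposal is correct and follows essentially the same route as the paper: both reduce to the \v{D}uri\v{s} et al.\ \bdd lower bound by showing that for a $\{0,\lambda\}$-valued amplitude function, \qmdd merging modulo nonzero scalars coincides with \bdd merging under equality, so the two diagrams have the same node count. Your proportionality-equals-equality argument for $\{0,1\}$-valued subfunctions is the same key step the paper carries out (there phrased as: distinct subfunctions of $f_V$ are never scalar multiples of one another), so there is no gap.
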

\begin{proof}
	We will show that the \qmdd has the same number of nodes as a \bdd.
	A \bdd encodes a function $f\colon \{0,1\}^n\to\{0,1\}$.
	In this case, the \bdd encodes $f_V$, the characteristic function of $V$.
	A \bdd is a graph which contains one node for each subfunction of $f$.
	(In the literature, such a \bdd is sometimes called a Full BDD, so that the term \bdd is reserved for a variant where the nodes are in one-to-one correspondence with the subfunctions $f$ which satisfy $f_0\ne f_1$).
	
	Similarly, a \qmdd representing a state $\ket{\phi}=\sum_xf(x)\ket{x}$ can be said to represent the function $f\colon\{0,1\}^n\to\mathbb C$, and contains one node for each subfunction of $f$ modulo scalars.
	We will show that, two distinct subfunctions of $f_V$ are never equal up to a scalar.
	To this end, let $f_{V,a},f_{V,b}$ be distinct subfunctions of $f_V$ induced by partial assignments $a,b\in\{0,1\}^k$.
	We will show that there is no $\lambda\in\mathbb C^{\ast}$ such that $f_{V,a}=\lambda f_{V,b}$.
	Since the two subfunctions are not pointwise equal, say that the two subfunctions differ in the point $x\in\{0,1\}^{n-k}$, i.e., $f_{V,a}(x)\ne f_{V,b}(x)$.
	Say without loss of generality that $f_{V,a}(x)=0$ and $f_{V,b}(x)=1$.
	Then, since $\lambda\ne 0$, we have $\lambda=\lambda f_{S,b}(x)\ne f_{V,a}(x)=0$, so $f_{V,a}\ne \lambda f_{B,b}$.
	
	Because distinct subfunctions of $f_V$ are not equal up to a scalar, the \qmdd of $\ket{V}$ contains a node for every unique subfunction of $f_V$.
	We conclude that, since by \autoref{thm:random-vector-space-hard-for-bdd} with high probability the \bdd representing $f_V$ has exponentially many nodes, so does the \qmdd representing $\ket{V}$.
\end{proof}

\section{How to write graph states, coset states and stabilizer states as Tower-\limdds}
\label{sec:graph-states-limdds}
\label{sec:proof-stabilizer-states-tower-limdds}

In this appendix, we prove that the families of $\langle Z\rangle$-, 
$\langle X\rangle$-, and \gen\Pauli-Tower-\limdds correspond to graph states, coset states, and stabilizer states, respectively, in \autoref{thm:graph-states-z-limdd}, \autoref{lemma:vector-space-x-limdd} and \autoref{thm:pauli-tower-limdds-are-stabilizer-states} below.
\autoref{def:reduced-limdd} for reduced \Pauli-\limdds requires modification for  $G=\braket{Z}$-\limdds because of the absence of $X$ as discussed below the definition. Note that the proofs do not rely on the specialized definition of reduced \limdds, but only on \autoref{def:limdd} which allows  parameterization of the LIM $G$. They only rely on the Tower \limdd in \autoref{def:tower}.

Before we give the proof, we remark that graph states present an interesting special case because the \limdd's edge labels contain meaningful information.
Namely, the labels on the high edges of a graph state's LIMDD are precisely the edges in the original graph.
Specifically, suppose a graph $G$ gives rise to a graph state $\ket{\phi_G}$ represented by a LIMDD.
Let $P=P_{k-1}\otimes\cdots\otimes P_1$ be the label on the high edge out of the LIMDD node at level $k$.
Then $G$ contains an edge $(v_k,v_j)$ if and only if $P_j=Z$ (with the roles of $k$ and $j$ reversed if $k<j$).
These edge labels come about in a straightforward manner during the construction of the graph state.
Namely, the graph state $\ket{\phi_G}$ is produced by starting from the state $\ket{+}^{\otimes n}$, and applying controlled-$Z$ gates to qubit pairs $(u,v)$ for every edge $(u,v)$ in the graph.
Applying such a controlled-$Z$ gate to qubit pair $(u,v)$ has the effect of setting $P_v$ to $Z$ in the high edge outgoing from the vertex at level $u$.
In general, however, the labels on the high edges cannot be easily inferred from the stabilizer state.

A $G$-Tower-\limdd representing an $n$-qubit state is a \limdd which has $n$ nodes, not counting the leaf. It has $G$-LIMs on its high edges. \autoref{def:tower} gives an exact definition.

\begin{theorem}[Graph states are $\langle Z\rangle$-Tower-\limdds]
    \label{thm:graph-states-z-limdd}
    Let $n\geq 1$.
    Denote by $\mathcal{G}_n$ the set of $n$-qubit graph states and write $\mathcal{Z}_n$ for the set of $n$-qubit quantum states which are represented by \gen Z-Tower-\limdds a defined in \autoref{def:tower}, i.e, a tower with low-edge-labels $\id$ and high-edge labels $\lambda \bigotimes_j P_j$ with $P_j \in \{\id, Z\}$ and $\lambda =1$, except for the root edge where $\lambda \in \mathbb C\setminus \set 0$.
    Then $\mathcal{G}_n = \mathcal{Z}_n$.
\end{theorem}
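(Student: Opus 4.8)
The plan is to prove the two inclusions $\mathcal{G}_n \subseteq \mathcal{Z}_n$ and $\mathcal{Z}_n \subseteq \mathcal{G}_n$ simultaneously by induction on $n$, peeling off the top qubit and exploiting the recursive structure of graph states. The single ingredient driving both directions is a decomposition identity: if $\ket{G}$ is the graph state of a graph $G$ on vertices $v_1,\dots,v_n$ with $v_n$ corresponding to the top qubit, and $G' = G[\{v_1,\dots,v_{n-1}\}]$ is the subgraph obtained by deleting $v_n$, then
\begin{equation*}
    \ket{G} = \ket{0}\otimes\ket{G'} + \ket{1}\otimes Q\ket{G'}, \qquad Q = \bigotimes_{j=1}^{n-1} Z^{a_j},
\end{equation*}
where $a_j = 1$ exactly when $v_j$ is adjacent to $v_n$ in $G$. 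Here $Q$ is a $\langle Z\rangle$-LIM with scalar $\lambda = 1$, and since $Q$ is invertible both $\ket{G'}$ and $Q\ket{G'}$ are nonzero.

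First I would establish this identity directly from the amplitude function in \autoref{eq:graph-state-definition}. Restricting $f_G$ to $x_n = 0$ deletes $v_n$ from every induced subgraph, yielding exactly $f_{G'}$; restricting to $x_n = 1$ adds one edge for each included neighbor of $v_n$, so $f_G(\vec{x}',1) = f_{G'}(\vec{x}') + \sum_{v_j \in N(v_n)} x_j$, where $N(v_n)$ is the neighborhood of $v_n$. Factoring the sign $(-1)^{\sum_{v_j \in N(v_n)} x_j}$ out as the diagonal operator $Q$ (using $Z\ket{x_j} = (-1)^{x_j}\ket{x_j}$) gives the displayed equation.

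For the inductive step, the direction $\mathcal{G}_n \subseteq \mathcal{Z}_n$ is then immediate: $\ket{G'}$ is an $(n-1)$-qubit graph state, hence by the induction hypothesis represented by a $\langle Z\rangle$-Tower-\limdd on $n-1$ nodes; placing on top a single node whose low edge (label $\id$) and high edge (label $Q$) both point to that child represents $\ket{G}$ and is again a $\langle Z\rangle$-Tower-\limdd of the required form. Conversely, for $\mathcal{Z}_n \subseteq \mathcal{G}_n$, a $\langle Z\rangle$-Tower-\limdd on $n$ nodes consists of a top node whose low and high edges both point to a child that is itself such a \limdd on $n-1$ nodes; by the induction hypothesis the child is a graph state $\ket{G'}$, and the top node represents $\ket{0}\otimes\ket{G'} + \ket{1}\otimes Q\ket{G'}$ with $Q = \bigotimes_j Z^{a_j}$. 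I would then build $G$ from $G'$ by adjoining a vertex $v_n$ adjacent to exactly those $v_j$ with $a_j = 1$, and invoke the decomposition identity to conclude that the \limdd represents $\ket{G} \in \mathcal{G}_n$. The base case $n=1$ holds because the unique one-vertex graph state is $\ket{0} + \ket{1}$, which is precisely the state of the single-node $\langle Z\rangle$-Tower-\limdd whose high-edge label is the empty tensor product, i.e.\ the scalar $\lambda = 1$.

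I expect no deep obstacle: the entire content sits in the decomposition identity, which is a routine manipulation of the amplitude function, plus the bookkeeping that matches the $\langle Z\rangle$-LIM on the high edge with the neighborhood of the peeled-off vertex. The points that do require care are fixing, once and for all, the correspondence between the \limdd's qubit order and the vertex labeling of $G$ so that ``top qubit'' denotes the same vertex at every level of the recursion, and verifying that the Tower structure --- both outgoing edges of each node pointing to the single node one level below --- is preserved throughout, which holds precisely because $Q$ is never $0$ and hence no zero-edge is ever introduced.
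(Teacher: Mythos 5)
Your proof is correct and takes essentially the same approach as the paper: induction on $n$ via the decomposition $\ket{G} = \ket{0}\otimes\ket{G'} + \ket{1}\otimes Q\ket{G'}$, where $Q$ is a tensor product of $Z$'s encoding the neighborhood of the peeled-off vertex, used to stack one tower node per vertex in the forward direction and to read the graph off the high-edge labels in the reverse direction. The only differences are cosmetic — you spell out the derivation of the decomposition identity from the amplitude function, which the paper leaves implicit, and you omit the paper's closing counting remark, which is redundant once both inclusions are shown.
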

\begin{proof}
    We establish $\mathcal{G}_n \subseteq \mathcal{Z}_n$ by providing a procedure to convert any graph state in $\mathcal{G}_n$ to a \gen Z-Tower-\limdd in $\mathcal{Z}_n$.
See \autoref{fig:graph-state-as-line-iso-qmdd} for an example of a $4$-qubit graph state.
	We describe the procedure by induction on the number $n$ of qubits in the graph state.
	
	\textbf{Base case: $n=1$.}
    We note that there is only one single-qubit graph state by definition (see \autoref{eq:graph-state-definition}), which is \mbox{$\ket{+} := (\ket{0} + \ket{1}) / \sqrt{2}$} and can be represented as \limdd by a single node (in addition to the leaf node): see \autoref{fig:graph-state-as-line-iso-qmdd}(a).
    
    \textbf{Induction case.}
	We consider an $(n+1)$-qubit graph state $\ket{G}$ corresponding to the graph~$G$.
	We isolate the $(n+1)$-th qubit by decomposing the full state definition from  \autoref{eq:graph-state-definition}:
	\begin{equation}
        \ket{G} = 
        \frac{1}{\sqrt{2}}\left(\ket{0} \otimes \ket{G_{1..n}} + \ket{1} \otimes 
            \underset{\textnormal{Isomorphism B}}{ \underbrace{
\left[
            \bigotimes_{(n+1,j)\in E} Z_j
            \right] 
            }}
        \ket{G_{1..n}}\right)
		\label{eq:graph-state-induction}
	\end{equation}
where $E$ is the edge set of $G$ and $G_{1..n}$ is the induced subgraph of $G$ on vertices $1$ to $n$.
Thus, $\ket{G_{1..n}}$ is an $n$-qubit graph state on qubits 1 to $n$.
    Since $\ket{G_{1..n}}$ is a graph state on $n$ qubits, by the induction hypothesis, we have a procedure to convert it to a \gen Z-Tower-\limdd $\in \mathcal{Z}_{n}$.
Now we construct a \gen Z-Tower-\limdd for $\ket{G}$ as follows.
The root node has two outgoing edges, both going to the node representing $\ket{G_{1.. n}}$.
    The node's low edge has label $\mathbb I$, and the node's high edge has label $B$, as follows,
\begin{align}
	B = \bigotimes_{(n+1,j)\in E}Z_j
\end{align}
Thus the root node represents the state $\ket{0}\ket{G_{1..n}}+\ket{1}B\ket{G_{1..n}}$, satisfying \autoref{eq:graph-state-induction}.
\begin{figure*}
\begin{tabular}{l|l|l|ll}
    \begin{tikzpicture}[->,>=stealth',shorten >=1pt,auto,node distance=1cm,
        thick, state/.style={circle,draw,minimum size=0.5cm},font=\scriptsize, scale=0.3,
    inner sep=0pt,]       
 
    \node[state](1){};
    \node[leaf] (s)[below = .5cm of 1]{$1$};
        
    \draw[<-,e1] (1) --++(0:2.4cm) node[] {};
    
    \path[]
    (1)  edge[e0= 20]  node[pos=.6,left] {} (s)
    (1) edge[e1= 20]  node[pos=.45,] {} (s)
	;

\node[state,draw=white,above = .5cm of 1] (a13) {};
\node[state,draw=white,above = .5cm of a13] (a12) {};
\node[state,draw=white,above = .5cm of a12] (a11) {};

\node[circle, fill=black, above = .8cm of a11,minimum size=.1cm,xshift=-.2cm] (c1) {};
\node[below right=-.1cm of c1] {$A$};

\node[left =.5cm of a11,yshift=1cm] {a)};
    \end{tikzpicture}
&
	\begin{tikzpicture}[->,>=stealth',shorten >=1pt,auto,node distance=1cm,
        thick, state/.style={circle,draw,minimum size=0.5cm},font=\scriptsize, scale=0.3,
    inner sep=0pt,]
        
    \node[state](1){};
    \node[state](1a)[below= .5cm of 1]{};
    \node[leaf] (s)[below = .5cm of 1a]{$1$};
        
    \draw[<-,e1] (1) --++(0:2.4cm) node[] {};
    
    \path[]
    (1)  edge[e0= 20]  node[pos=.6,left] {} (1a)
    (1)  edge[e1= 20]  node[lbl,pos=.45,] {$Z$} (1a)
    (1a) edge[e0= 20]  node[pos=.6,left] {} (s)
    (1a) edge[e1= 20]  node[pos=.45,] {} (s)
	;
\node[state,draw=white,above = .5cm of 1] (a12) {};
\node[state,draw=white,above = .5cm of a12] (a11) {};

\node[circle, fill=black, above = .8cm of a11,minimum size=.1cm,xshift=-.2cm] (c1) {};
\node[below right=-.1cm of c1] {$A$};
\node[circle, fill=black, right = .4cm of c1,minimum size=.1cm] (c2) {};
\node[below right=-.1cm of c2] {$B$};
\draw[-] (c1) -- (c2);

\node[left =.5cm of a11,yshift=1cm] {b)};
    \end{tikzpicture}
&
	\begin{tikzpicture}[->,>=stealth',shorten >=1pt,auto,node distance=1cm,
        thick, state/.style={circle,draw,minimum size=0.5cm},font=\scriptsize, scale=0.3,
    		inner sep=0pt,]
    \node[state,right= 1cm of a1] (a1) {};
    \node[state, below =.5cm of a1] (a3) {};
    \node[state, below =.5cm of a3] (a4) {};
    \node[draw,rectangle,minimum size=0.4cm, below= .5cm of a4] (w4) {1};

    \draw[<-] (a1) --++(0:2cm) node[pos=1.4] {};
    \draw[e0=25] (a1) edge  node[] {} (a3);
    \draw[e1=25] (a1) edge  node[lbl,right] {$Z\otimes \id$} (a3);
    \draw[e0=25] (a1) edge  node[] {} (a3);
    \draw[e1=25] (a3) edge  node[lbl,right] {$Z$} (a4);
    \draw[e0=25] (a3) edge  node[] {} (a4);
    \draw[e1=25] (a4) edge  node[right] {} (w4);
    \draw[e0=25] (a4) edge  node[] {} (w4);

    \node[state,draw=white,above = .5cm of a1] (a11) {};

\node[circle, fill=black, above = .8cm of a11,minimum size=.1cm,xshift=-.2cm] (c1) {};\node[below right=-.1cm of c1] {$A$};
\node[circle, fill=black, right = .4cm of c1,minimum size=.1cm] (c2) {};
\node[below right=-.1cm of c2] {$B$};
\node[circle, fill=black, below = .4cm of c2,minimum size=.1cm] (c4) {};
\node[below right=-.1cm of c4] {$C$};
\draw[-] (c1) -- (c2);
\draw[-] (c4) -- (c2);

\node[left =.5cm of a11,yshift=1cm] {c)};
\end{tikzpicture}
&
	\begin{tikzpicture}[->,>=stealth',shorten >=1pt,auto,node distance=1cm,
        thick, state/.style={circle,draw,minimum size=0.5cm},font=\scriptsize, scale=0.3,
    		inner sep=0pt,]
    \node[state,right= 1cm of a1] (a1) {};
    \node[state, below =.5cm of a1] (a3) {};
    \node[state, below =.5cm of a3] (a4) {};
    \node[state, below =.5cm of a4] (a5) {};
    \node[draw,rectangle,minimum size=0.4cm, below= .5cm of a5] (w4) {1};

    \draw[<-] (a1) --++(0:2cm) node[pos=1.4] {};
    \draw[e0=25] (a1) edge  node[] {} (a3);
    \draw[e1=25] (a1) edge  node[lbl,right] {$Z\otimes \id \otimes Z$} (a3);
    \draw[e0=25] (a1) edge  node[] {} (a3);
    \draw[e1=25] (a3) edge  node[lbl,right] {$Z\otimes \id$} (a4);
    \draw[e0=25] (a3) edge  node[] {} (a4);
    \draw[e1=25] (a4) edge  node[lbl,right] {$Z$} (a5);
    \draw[e0=25] (a4) edge  node[] {} (a5);
    \draw[e0=25] (a5) edge  node[] {} (w4);
    \draw[e1=25] (a5) edge  node[lbl,right] {0} (w4);
    
\node[circle, fill=black, above = .8cm of a1,minimum size=.1cm,xshift=1.5cm] (c1) {};
\node[below right=-.1cm of c1] {$A$};
\node[circle, fill=black, right = .4cm of c1,minimum size=.1cm] (c2) {};
\node[below right=-.1cm of c2] {$B$};
\node[circle, fill=black, below = .4cm of c1,minimum size=.1cm] (c3) {};
\node[below right=-.1cm of c3] {$C$};
\node[circle, fill=black, below = .4cm of c2,minimum size=.1cm] (c4) {};
\node[below right=-.1cm of c4] {$D$};
\draw[-] (c1) -- (c2);
\draw[-] (c1) -- (c3);
\draw[-] (c4) -- (c3);
\draw[-] (c4) -- (c2);

\node[left =.5cm of a1,yshift=1cm] {d)};
    
	\end{tikzpicture}
&\hspace{-1.2cm}
\begin{tikzpicture}[->,>=stealth',shorten >=1pt,auto,node distance=.7cm,
        thick, state/.style={circle,draw,minimum size=0.5cm},font=\scriptsize, scale=0.3,
    		inner sep=0pt,]
\node[state,] (n1) {};

\node[state](n2)[below = of n1, xshift=-1.3cm]{};
\node[state](n3)[below = of n1, xshift= 1.3cm, inner sep = 0pt]{};

\node[state](n21)[below = of n2, xshift=-.7cm, inner sep = 0pt]{};
\node[state](n22)[below = of n2, xshift= .7cm, inner sep = 0pt]{};
\node[state](n31)[below = of n3, xshift=-.7cm, inner sep = 0pt]{};
\node[state](n32)[below = of n3, xshift= .7cm, inner sep = 0pt]{};

\node[state](n41)[below = of n21, xshift= .7cm, inner sep = 0pt]{};
\node[state](n42)[below = of n31, xshift= .7cm, inner sep = 0pt]{};

\node[draw, leaf,below = of n41, xshift=1.3cm] (e) {$1$};

\draw[<-] (n1) --++(0:2cm) node[pos=.6] {};

\path[]
(n1) edge[e0] node[left,pos=.5] {} (n2)
(n1) edge[e1] node[]			{} (n3)
(n2) edge[e0] node[left,pos=.5] {} (n21)
(n2) edge[e1] node[,] 		{} (n22)
(n3) edge[e0] node[left,pos=.5] {} (n31)
(n3) edge[e1] node[lbl] {$-1$} (n32)

(n21) edge[e0=20] node[left,pos=.5] {}  (n41)
(n21) edge[e1=20] node[		  ] {}  (n41)
(n22) edge[e1] node[left,pos=0.7,below,lbl] {$-1$}  (n42)
(n22) edge[e0] node[right,yshift=.cm] {}  (n41)
(n31) edge[e0] node[left,pos=.5] {} (n41)
(n31) edge[e1] node[pos=.7	 ] {} (n42)
(n32) edge[e1=0] node[lbl,left,pos=.05] {$-1$} (n41)
(n32) edge[e0=20] node[lbl       ] {} (n42)

(n41) edge[e0=20] node[pos=.5] {} (e)
(n41) edge[e1=20] node[pos=.5] {} (e)
(n42) edge[e0=20] node[pos=.5] {} (e)
(n42) edge[e1=20] node[lbl] {$-1$} (e);
\end{tikzpicture}
\\
\end{tabular}
	\caption{
		Construction of the \gen Z-Tower \limdd for the 4-qubit cluster state, by iterating over the vertices in the graph, as described in the proof of \autoref{thm:graph-states-z-limdd}.
    (a) First, we consider the single-qubit graph state, which corresponds to a the subgraph containing only vertex $A$.
    (b) Then, we add vertex $B$, which is connected to $A$ by an edge.
	The resulting \limdd is constructed from the \limdd from (a) by adding a new root node.
	In the figure, the isomorphism is $Z_B\otimes \id[A]$, since vertex $C$ is connected to vertex $B$ (yielding the $Z$ operator) but not to $A$ (yielding the identity operator $\id$).
    (c) This process is repeated for a third vertex $C$ until we reach the \limdd of the full $4$-qubit cluster state (d).
	For comparison, (d) also depicts a regular \qmdd for the same graph state, which has width $4$ instead of $1$ for the \limdd.
		\label{fig:graph-state-as-line-iso-qmdd}
		}
\end{figure*}

    To prove $\mathcal{Z}_n \subseteq \mathcal{G}_n$, we show how to construct the graph corresponding to a given $\braket{Z}$-Tower \limdd. Briefly, we simply run the algorithm outlined above in reverse, constructing the graph one node at a time.
    Here we assume without loss of generality that the low edge of every node is labeled $\mathbb I$.

    \textbf{Base case.} The \limdd node above the Leaf node, representing the state $\ket +$, always represents the singleton graph, containing one node.
    
    \textbf{Induction case.} Suppose that the \limdd node $k+1$ levels above the Leaf has a low edge labeled $\mathbb I$, and a high edge labeled $P_{k}\otimes \cdots\otimes P_1$, with $P_j=Z^{a_j}$ for $j=1\ldots k$.
    Here by $Z^{a_j}$ we mean $Z^0=\mathbb I$ and $Z^1=Z$.
    Then we add a node labeled $k+1$ to the graph, and connect it to those nodes $j$ with $a_j=1$, for $j=1\ldots k$.
    The state represented by this node is of the form given in \autoref{eq:graph-state-induction}, so it represents a graph state.
    
	A simple counting argument based on the above construction shows that $\sizeof{\mathcal{Z}_n} = \sizeof{\mathcal{G}_n} = 2^{n \choose 2}$, so the conversion is indeed a bijection.
Namely, there are $2^{\binom n2}$ graphs, since there are $\binom n2$ edges to choose, and there are $2^{\binom n2}$ \gen Z-Tower-\limdds, because the total number of single-qubit operators of the LIMs on the high edges is $\binom n2$, each of which can be independently chosen to be either $\mathbb I$ or $Z$.
\end{proof}

We now prove that coset states are represented by $\braket{X}$-Tower-\limdds.

\begin{theorem}[coset states are $\langle X\rangle$-Tower-\limdds]
    \label{lemma:vector-space-x-limdd}
    Let $n\geq 1$.
    Denote by $\mathcal{V}_n$ the set of $n$-qubit coset states and write $\mathcal{X}_n$ for the set of $n$-qubit quantum states which are represented by \gen X-Tower-\limdds as per \autoref{def:tower}, i.e., a tower with low edge labels $\id$ and high edge labels $\lambda \bigotimes_j P_j$ with $P_j \in \{\mathbb I, X\}$ and $\lambda \in \{0, 1\}$, except for the root edge where $\lambda \in \mathbb C\setminus \set 0$.
    Then $\mathcal{V}_n = \mathcal{X}_n$.
\end{theorem}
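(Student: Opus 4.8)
The plan is to prove the two inclusions $\mathcal V_n \subseteq \mathcal X_n$ and $\mathcal X_n \subseteq \mathcal V_n$ separately, each by induction on $n$, following the same template as the proof of \autoref{thm:graph-states-z-limdd}. The observation that organizes everything is that any coset $V \subseteq \{0,1\}^n$ (recall \autoref{def:xor-states}) can be written as $V = \vec a \oplus W$ for a linear subspace $W$ with $\vec 0 \in W$ and a representative $\vec a$, and correspondingly $\ket V = X^{\vec a}\ket W$, where $X^{\vec a} := X^{a_n}\otimes\cdots\otimes X^{a_1}$ and $\ket W$ is the subspace state. Since $X^{\vec a}$ is an $\braket{X}$-LIM, I would place it on the root edge and reduce both directions to a statement about subspace states (cosets through $\vec 0$), which behave more rigidly under the top-qubit decomposition.

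For $\mathcal V_n\subseteq \mathcal X_n$, I first strip off $X^{\vec a}$ onto the root edge and then build the Tower-\limdd for the subspace state $\ket W$ by induction on $n$. Decomposing on the top qubit, let $W_0 = \{\vec y : (0,\vec y)\in W\}$; this is a subspace of $\{0,1\}^{n-1}$ and is nonempty since $\vec 0 \in W$. Either every element of $W$ has top bit $0$, in which case $\ket W = \ket 0 \otimes \ket{W_0}$ and the root node gets low label $\id$ and a zero high edge ($\lambda = 0$); or there is some $(1,\vec s)\in W$, in which case $W = (\{0\}\times W_0)\cup(\{1\}\times(\vec s\oplus W_0))$, so $\ket W = \ket 0\otimes \ket{W_0} + \ket 1 \otimes X^{\vec s}\ket{W_0}$ and the root node gets low label $\id$ and high label $1\cdot X^{\vec s}$. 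In both cases $\ket{W_0}$ is an $(n-1)$-qubit subspace state supplied by the induction hypothesis, and the high labels produced are exactly of the form $\lambda\bigotimes_j P_j$ with $P_j\in\{\id[2],X\}$ and $\lambda\in\{0,1\}$, as required.

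For $\mathcal X_n\subseteq \mathcal V_n$, I run the same recursion in reverse. The base case is $\ket 0$ (coset $\{0\}$) or $\ket 0 + \ket 1$ (coset $\{0,1\}$). Inductively, suppose the child node represents a coset state $\ket{V'}$; a top node with low label $\id$ and high label $\lambda X^{\vec s}$ then represents $\ket 0\otimes\ket{V'}$ if $\lambda = 0$, and $\ket 0\otimes\ket{V'} + \ket 1\otimes X^{\vec s}\ket{V'}$ if $\lambda = 1$. The one routine verification is that both are again coset states: the first is the coset $\{0\}\times V'$, and for the second, writing $V' = \vec a'\oplus W'$, a short computation shows $(\{0\}\times V')\cup(\{1\}\times(\vec s\oplus V')) = (0,\vec a')\oplus U$ with $U := (\{0\}\times W')\cup(\{1\}\times(\vec s\oplus W'))$ a subspace of dimension $\dim W' + 1$ (closure under XOR being immediate from that of $W'$). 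Finally, applying the root-edge label $X^{\vec a}$ sends a coset state $\ket V$ to $\ket{\vec a\oplus V}$, again a coset state, closing the induction.

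I expect the main obstacle to be making the role of the root edge precise, rather than any conceptual difficulty. A single-qubit coset state such as $\ket 1$ cannot be produced by the internal low/high labels alone — a node with low label $\id$ always carries a nonzero $\ket 0$-component — so the coset representative $\vec a$ must live on the root edge, which is exactly why reducing to subspace states first is the clean move; the only other care needed is tracking the (omitted) normalization scalar, which the root-edge scalar absorbs. Unlike the graph-state case in \autoref{thm:graph-states-z-limdd}, I would not attempt a counting bijection here, since the coset representative placed on the root edge is not uniquely determined; the two inclusions above already establish $\mathcal V_n = \mathcal X_n$.
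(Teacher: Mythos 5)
Your proposal is correct and follows essentially the same route as the paper's proof: both directions by induction on the top-qubit decomposition, with the case split on whether the high edge carries $\lambda=0$ or a label $X^{\vec s}$, and with the coset representative $X^{\vec a}$ factored onto the root edge so that the induction runs over subspace states. If anything, your treatment of the converse direction is slightly more careful than the paper's, since you explicitly verify (via $V' = \vec a' \oplus W'$ and closure under XOR) that the constructed set is again a coset and account for the root-edge label, points the paper's proof passes over quickly.
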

\begin{proof}
    We first prove $\mathcal{V}_n \subseteq \mathcal{X}_n$ by providing a procedure for constructing a Tower-\limdd for a coset state.
    We prove the statement for the case when $C$ is a group rather than a coset; the result will then follow by noting that, by placing the label $X^{a_n}\otimes\cdots\otimes X^{a_1}$ on the root edge, we obtain the coset state $\ket{C+a}$.
    The procedure is recursive on the number of qubits.

	\textbf{Base case: $n=1$.} In this case, there are two coset states: $\ket{0}$ and $(\ket{0} + \ket{1}) / \sqrt{2}$, which are represented by a single node which has a low and high edge pointing to the leaf node with low/high edge labels 1/0 and 1/1, respectively.
	
	\textbf{Induction case. }Now consider an $(n+1)$-qubit coset state $\ket{S}$ for a group $S\subseteq \{0,1\}^{n+1}$ for some $n\geq 1$ and assume we have a procedure to convert any $n$-qubit coset state into a Tower-\limdd~in~$ \mathcal{X}_n$. 
We consider two cases, depending on whether the first bit of each element of $S$ is zero:
    \begin{enumerate}[(a)]
        \item{
                The first bit of each element of $S$ is $0$.
                Thus, we can write $S = \set{0x \mid  x \in S_0 }$ for some set $S_0\subseteq\{0, 1\}^n$.
            Then $0a,0b\in S \implies 0a\oplus 0b\in S$ implies $a,b\in S_0 \implies a\oplus b\in S_0$ and thus $S_0$ is an length-$n$ bit string vector space.
            Thus by assumption, we have a procedure to convert it to a Tower-\limdd in $\mathcal{X}_n$.
            Convert it into a Tower-\limdd in $\mathcal{X}_{n+1}$ for $\ket{S}$ by adding a fresh node on top with low edge label $\id^{\otimes n}$ and high edge label $0$, both pointing to the the root $S$.
            }
         \item{
                 There is some length-$n$ bit string $u$ such that $1u\in S$.
                 Write $S$ as the union of the sets $\set{0x \mid x \in S_0}$ and $\set{1x \mid x\in S_1}$ for sets $S_0, S_1 \subseteq \{0,1\}^n$.
                 Since $S$ is closed under element-wise XOR, we have $1u \oplus 1x = 0(u\oplus x) \in S$ for each $x\in S_1$ and therefore $u\oplus x \in S_0$ for each $x \in S_1$.
                 This implies that $S_1 = \set{u\oplus x\mid x\in S_0}$ and thus $S$ is the union of $\set{0x \mid  x \in S_0}$ and $\set{1u \oplus 0x\mid x\in S_0}$.
                 By similar reasoning as in case (a), we can show that $S_0$ is a vector space on length-$n$ bit strings.
                 
                 We build a Tower-\limdd for $\ket S$ as follows.
                 By the induction hypothesis, there is a Tower-\limdd with root node $v$ which represents $\ket{v}=\ket{S_0}$.
                 We construct a new node whose two outgoing edges both go to this node $v$.
                 Its low edge has label $\id^{\otimes n}$ and its high edge has label $P=P_n\otimes\cdots\otimes P_1$ where $P_j=X$ if $u_j=1$ and $P_j=\mathbb I$ if $u_j=0$.
             }
    \end{enumerate}

	We now show $\mathcal{V}_n\subseteq \mathcal X_n$, also by induction.
	
	\textbf{Base case: $n=1$.} There are only two Tower-\limdds on $1$ qubit satisfying the description above, namely
	\begin{enumerate}[(1)]
		\item A node whose two edges point to the leaf. Its low edge has label $1$, and its high edge has label $0$.
		This node represents the coset state $\ket 0$, corresponding to the vector space $V=\{0\}\subseteq\{0,1\}^1$.
		\item A node whose two edges point to the leaf. Its low edge has label $1$ and its high edge also has label $1$.
		This node represents the coset state $\ket 0+\ket 1$, corresponding to the vector space $V=\{0,1\}$.
	\end{enumerate}
	
	\textbf{Induction case. } Let $v$ be the root node of an $n+1$-qubit Tower $\braket{X}$-\limdd as described above.
	We distinguish two cases, depending on whether $v$'s high edge has label $0$ or not.
	\begin{enumerate}[(a)]
		\item The high edge has label $0$.
		Then $\ket{v}=\ket{0}\ket{v_0}$ for a node $v_0$, which represents a coset state $\ket{v_0}$ corresponding to a coset $V_0\subseteq\{0,1\}^{n}$, by the induction hypothesis.
		Then $v$ corresponds to the coset $\set{0x\mid x\in V_0}$.
    \item the high edge has label $P=P_n\otimes\cdots\otimes P_1$ with $P_j \in \{\id, X\}$.
		Then $\ket{v}=\ket 0\ket{v_0}+\ket{1}\otimes P\ket{v_0}$.
            By the observations above, this is a coset state, corresponding to the vector space $V=\{0x|x\in V_0\}\cup \{1(ux)|x\in V_0\}$ where $u\in\{0,1\}^n$ is a string whose bits are $u_j=1$ if $P_j=X$ and $u_j=0$ if $P_j=\id$, and $V_0$ is the vector space corresponding to the coset state $\ket{v_0}$.
	\end{enumerate}
	\vspace{-2em}
\end{proof}

Lastly, we prove the stabilizer-state case, showing that they are exactly equivalent to the \gen\pauli-Tower-\limdd,
	as defined in \autoref{def:tower}.
For this, we first need \autoref{lemma:clifford-tower} and \autoref{thm:clifford-gate-stabilizer-limdd-general}, which state that, if one applies a Clifford gate to a \gen\pauli-Tower-\limdd, the resulting state is another \gen\pauli-Tower-\limdd.
First, \autoref{lemma:clifford-tower} treats the special case of applying a gate to the top qubit; then \autoref{thm:clifford-gate-stabilizer-limdd-general} treats the general case of applying a gate to an arbitrary qubit.

\begin{lemma}
    \label{lemma:clifford-tower}
    Let $\ket{\phi}$ be an $n$-qubit stabilizer state which is represented by a \gen\pauli-Tower-\limdd as defined in \autoref{def:tower}.
    Let $U$ be either a Hadamard gate or $S$ gate on the top qubit ($n$-th qubit), or a downward CNOT with the top qubit as control.
    Then $U\ket{\phi}$ is still represented by a \gen\Pauli-Tower-\limdd.
\end{lemma}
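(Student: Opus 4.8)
The plan is to treat the three gates separately, exploiting that they all act on the top qubit and that $U$ is Clifford, so $U\ket{\phi}$ is again a stabilizer state. Throughout, I would write $\ket{\phi}$ in its tower form $\ket{\phi} = \ket{0}\ket{\phi_0} + \ket{1}\,\highlim\ket{\phi_0}$, where the root's low and high edges both point to the single child node representing the $(n-1)$-qubit state $\ket{\phi_0}$, and $\highlim = \lambda P$ with $\lambda \in \{0,\pm 1,\pm i\}$ and $P$ a Pauli string (\autoref{def:stabilizer-limdd}). Since $\ket{\phi_0} = (\bra{0}\otimes\mathbb{I})\ket{\phi}$ is a restriction of the stabilizer state $\ket{\phi}$, it is itself a stabilizer state; let $\mathcal{S} = \Stab(\ket{\phi_0})$. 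I would use only the forward direction of \autoref{thm:pauli-limdd-is-stabilizer} (every stabilizer state has a Pauli tower with high-edge factors in $\{0,\pm1,\pm i\}$, established there by an independent induction) plus standard stabilizer facts, and never the equivalence this lemma is used to build toward, so there is no circularity.

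The $S$ gate and downward CNOT are immediate, since the algorithms of \autoref{sec:simple-gates} only relabel the root's high edge. Applying $S$ to the top qubit sends $\ket{\phi}$ to $\ket{0}\ket{\phi_0} + i\ket{1}\,\highlim\ket{\phi_0}$, replacing $\highlim=\lambda P$ by $i\lambda P$; since $i\lambda\in\{0,\pm1,\pm i\}$ this is still a valid high label and the entire sub-tower is untouched. A downward CNOT with control $n$ and target $t<n$ sends $\ket{\phi}$ to $\ket{0}\ket{\phi_0} + \ket{1}\,X_t\,\highlim\ket{\phi_0}$, replacing $\highlim$ by $X_t\lambda P$; the single-qubit product $X\cdot P_t$ is a Pauli times a factor in $\{1,\pm i\}$, so the new factor again lies in $\{0,\pm1,\pm i\}$ and the child is unchanged. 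In both cases a final \makeedge call restores low factoring and the remaining constraints of \autoref{def:reduced-limdd}.

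The Hadamard gate is the crux. The algorithm forms $\ket{a_0} = (\mathbb{I} + \lambda P)\ket{\phi_0}$ and $\ket{a_1} = (\mathbb{I} - \lambda P)\ket{\phi_0}$ and returns (up to $1/\sqrt{2}$) the edge for $\ket{0}\ket{a_0} + \ket{1}\ket{a_1}$. I would split on how $P$ interacts with $\mathcal{S}$ (the degenerate case $\lambda=0$, where $\ket{\phi}=\ket{0}\ket{\phi_0}$ is a product state, is immediate, giving high label $\mathbb{I}$). If $P$ commutes with all of $\mathcal{S}$, then by maximality of $\mathcal{S}$ we have $P\in\{\pm1,\pm i\}\cdot\mathcal{S}$, so $\lambda P\ket{\phi_0}=c\ket{\phi_0}$ for some $c\in\{\pm1,\pm i\}$; hence $\ket{a_0}=(1+c)\ket{\phi_0}$ and $\ket{a_1}=(1-c)\ket{\phi_0}$ are scalar multiples of $\ket{\phi_0}$, and the result is the tower of $\ket{\phi_0}$ with one fresh node on top (one child becoming a zero edge when $c=\pm1$). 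Otherwise $P$ anticommutes with some $R\in\mathcal{S}$, and using $R\ket{\phi_0}=\ket{\phi_0}$ and $R(\lambda P)=-(\lambda P)R$,
\[
R\ket{a_0} = R(\mathbb{I}+\lambda P)\ket{\phi_0} = (\mathbb{I}-\lambda P)\ket{\phi_0} = \ket{a_1},
\]
so $\ket{a_0}\psim\ket{a_1}$ via the Pauli $R$. Moreover $\ket{a_0}$ is a stabilizer state: if $\lambda\in\{\pm1\}$ then $\tfrac12(\mathbb{I}+\lambda P)$ is the projector onto a $\pm1$-eigenspace of $P$ (nonzero since $\langle\phi_0|P|\phi_0\rangle=0$), and projecting a stabilizer state onto a Pauli eigenspace yields a stabilizer state; if $\lambda\in\{\pm i\}$ then $\mathbb{I}+\lambda P=\sqrt{2}\,e^{\pm i\pi P/4}$ is $\sqrt2$ times the Clifford rotation $e^{\pm i\pi P/4}$, which maps $\ket{\phi_0}$ to a stabilizer state. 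By the forward direction of \autoref{thm:pauli-limdd-is-stabilizer}, $\ket{a_0}$ has a Pauli tower $T_0$; since $\ket{a_1}=R\ket{a_0}$, both branches are carried by the nodes of $T_0$, and $\ket{0}\ket{a_0}+\ket{1}\ket{a_1}$ is represented by a single fresh top node whose low and high edges both point into $T_0$ with labels $\mathbb{I}$ and $R$ — a Pauli tower, made semi-reduced by \makeedge.

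The main obstacle is exactly this Hadamard case: showing that the two Hadamard branches collapse onto one child sub-tower, i.e. $\ket{a_0}\psim\ket{a_1}$, so that no width-$2$ branching is introduced. The commutation dichotomy with respect to $\mathcal{S}$ supplies the required Pauli $R$ (or forces both branches proportional to $\ket{\phi_0}$), and the separate check that $\ket{a_0}$ is a stabilizer state — via eigenspace projection when $\lambda=\pm1$ and via a Pauli Clifford rotation when $\lambda=\pm i$ — is what lets me invoke the already-proved forward direction of \autoref{thm:pauli-limdd-is-stabilizer} for the child tower without depending on the downstream equivalence. The remaining bookkeeping (absorbing the $1/\sqrt{2}$ and $\sqrt{2}$ factors, enforcing low precedence and low factoring) is handled uniformly by \makeedge and leaves the tower shape intact.
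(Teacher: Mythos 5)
Your treatment of the $S$ gate and downward CNOT, and the core of your Hadamard analysis, coincide with the paper's proof of \autoref{lemma:clifford-tower}: the paper also splits on whether $P$ anticommutes with some stabilizer of the child state, also derives $\langle\psi|P|\psi\rangle=0$ from anticommutation, and also uses the anticommuting stabilizer $g$ (your $R$) as the new high-edge label via $g(\id+\alpha P)\ket{\psi}=(\id-\alpha P)\ket{\psi}$, with the commuting case collapsing to a product state by maximality exactly as you argue. One genuine improvement on your side: where the paper justifies ``$\ket{\psi_x}$ is a stabilizer state'' by citing Lemma 15 of an external reference, you give a self-contained argument --- $\tfrac12(\id\pm P)$ is a Pauli-eigenspace projector (nonzero since the expectation vanishes) for $\lambda=\pm1$, and $\id+\lambda P=\sqrt2\,e^{\pm i\pi P/4}$ is $\sqrt{2}$ times a Clifford rotation for $\lambda=\pm i$. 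That case split is correct and tidier than the citation.

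However, there is a real flaw in your dependency structure. You obtain the child tower $T_0$ for $\ket{a_0}$ by invoking ``the forward direction of \autoref{thm:pauli-limdd-is-stabilizer}, established there by an independent induction.'' That is not how the paper is organized: the main text explicitly labels its argument a \emph{proof sketch} (whose key step, the Pauli-isomorphism of the cofactors of a stabilizer state, is delegated to a footnote), and declares the authoritative proof to be \autoref{thm:pauli-tower-limdds-are-stabilizer-states} in \autoref{sec:graph-states-limdds} --- which is proved \emph{from} \autoref{lemma:clifford-tower} and \autoref{thm:clifford-gate-stabilizer-limdd-general}, via Clifford circuits acting on $\ket{0}^{\otimes n}$. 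So within the paper's formal development your appeal is circular. The paper avoids this by proving the lemma by induction on $n$, with the six single-qubit stabilizer states as the base case, and using the induction hypothesis (bootstrapped over Clifford circuits at $n-1$ qubits) to supply the tower for the child state. The fix for your proof is mechanical and uses only material you already have: wrap your argument in an induction on $n$, strengthen the hypothesis to ``every stabilizer state on fewer than $n$ qubits is represented by a Stabilizer \limdd{} and the three gates preserve this,'' and then your demonstration that $\ket{a_0}$ is an $(n-1)$-qubit stabilizer state is precisely what licenses applying the hypothesis to obtain $T_0$, with \makeedge handling the remaining label bookkeeping as you describe.
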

\begin{proof}
    The proof is on the number $n$ of qubits.
    
    \textbf{Base case: $n=1$.} For $n=1$, there are six single-qubit stabilizer states $\ket{0}, \ket{1}$ and $(\ket{0} + \alpha \ket{1}) / \sqrt{2}$ for $\alpha \in \{\pm 1, \pm i\}$.
    There are precisely represented by Pauli-Tower-\limdds with high edge label factor $\in \{0, \pm 1, \pm i\}$ as follows:
    \begin{itemize}
        \item for $\ket{0}$: $\lnode{1}{1}{0}{1}$
        \item for $\ket{1}$: $A \cdot \lnode{1}{1}{0}{1}$ where $A \propto X$ or $A\propto Y$
            \item for $(\ket{0} + \alpha \ket{1}) / \sqrt{2})$: $\lnode{1}{1}{\alpha}{1}$
    \end{itemize}
    Since the $H$ and $S$ gate permute these six stabilizer states, $U\ket{\phi}$ is represented by a \gen\pauli-Tower-\limdd if $\ket{\phi}$ is.
 
    \textbf{Induction case.} For $n>1$, we first consider $U =S$ and $U = \text{CNOT}$.
    Let $R$ be the label of the root edge.
    If $U=S$, then the high edge of the top node is multiplied with $i$, while a downward CNOT (target qubit with index $k$) updates the high edge label $A \mapsto X_k A$.
    Next, the root edge label is updated to $URU^\dagger$, which is still a Pauli string, since $U$ is a Clifford gate.
    Since the high labels of the top qubit in the resulting diagram is still a Pauli string, and the high edge's weights are still $\in\{0,\pm 1,\pm i\}$, we conclude that both these gates yield a \gen\pauli-Tower-\limdd.
    Finally, for the Hadamard, we decompose $\ket{\phi} = \ket{0} \otimes \ket{\psi} + \alpha \ket{1} \otimes P\ket{\psi}$ for some $(n-1)$-qubit stabilizer state $\ket{\psi}$, $\alpha \in \{0, \pm 1, \pm i\}$ and $P$ is an $(n-1)$-qubit Pauli string.
    Now we note that $H\ket{\phi} \propto \ket{0} \otimes \ket{\psi_0} + \ket{1} \otimes \ket{\psi_1}$ where $\ket{\psi_x} := (\id + (-1)^x \alpha P) \ket{\psi}$ with $x\in \{0, 1\}$.
    Now we consider two cases, depending on whether $P$ commutes with all stabilizers of $\ket\psi$:
	\begin{enumerate}[(a)]
        \item There exist a stabilizer $g$ of $\ket{\psi}$ which anticommutes with $P$.
            We note two things.
            First,  $\langle \psi | P | \psi \rangle = \langle \psi | P g|\psi \rangle = \langle \psi | g\cdot (-P) |\psi \rangle = - \langle \psi |P|\psi\rangle$, hence $\langle \psi | P|\psi \rangle = 0$.
            It follows from Lemma 15 of \cite{garcia2012efficient} that $\ket{\psi_x}$ is a stabilizer state, so by the induction hypothesis it can be written as a \gen\pauli-Tower-\limdd.
            Let $v$ be the root node of this \limdd.
            Next, we note that $g\ket{\psi_0} = g (\id + \alpha P) \ket{\psi} = (\id - \alpha P) g\ket{\psi} = \ket{\psi_1}$.
            Hence, $\lnode{\id}{v}{g}{v}$ is the root node of a \gen\pauli-Tower-\limdd for $H\ket{\phi}$.
        \item All stabilizers of $\ket{\psi}$ commute with $P$. Then $(-1)^y P$ is a stabilizer of $\ket{\psi}$ for either $y = 0$ or $y=1$. Hence, $\ket{\psi_x} = (\id + (-1)^x \alpha P) \ket{\psi} = (1 + (-1)^{x + y} \alpha) \ket{\psi}$.
            Therefore, $\ket{\phi} = \ket{a} \otimes\ket{\psi}$ where $\ket{a} := (1 + (-1)^y \alpha) \ket{0} + (1 + (-1)^{y+1} \alpha \ket{1})$.
            It is not hard to see that $\ket{a}$ is a stabilizer state for all choices of $\alpha \in \{0, \pm 1,\pm i\}$.
            By the induction hypothesis, both $\ket{a}$ and $\ket{\psi}$ can be represented as \gen\pauli-Tower-\limdds.
            We construct a \gen\pauli-Tower-\limdd for $H\ket{\phi}$ by replacing the leaf of the \limdd of $\ket{a}$ by the root node of the \limdd of $\ket{\psi}$, and propagating the root edge label of $\ket{\psi}$ upwards.
            Specifically, if the root edge of $\ket a$ is $\ledge Av$ with $v=\lnode{1}{1}{\beta}{1}$, and if the root edge of $\ket\psi$ is $\ledge Bw$, then a \gen\pauli-Tower-\limdd for $H\ket\phi$ has root node $\lnode{w}{\mathbb I}{\beta\mathbb I}{w}$ and has root edge label $A\otimes B$.
    \end{enumerate}
    \vspace{-2em}
\end{proof}

\begin{lemma}
	\label{thm:clifford-gate-stabilizer-limdd-general}
	Let $\ket\phi$ be an $n$-qubit state state represented by a \gen\pauli-Tower-\limdd,
	as defined in \autoref{def:tower}.
	Let $U$ be either a Hadamard gate, an $S$ gate or a CNOT gate.
	Then $U\ket\phi$ is a state which is also represented by a \gen\pauli-Tower-\limdd.
\end{lemma}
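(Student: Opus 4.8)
The plan is to prove this by induction on the number of qubits $n$, reducing every gate to the cases already settled by \autoref{lemma:clifford-tower} — a Hadamard or $S$ on the top qubit, or a downward CNOT controlled by the top qubit. The structural fact I would exploit is that a Stabilizer \limdd is a \emph{tower}: its root node $v$ has a single child $w$, so $\ket{v}=\ket{0}\otimes\ket{w}+\ket{1}\otimes A\ket{w}$ where the high label $A$ is a Pauli LIM with scalar in $\{0,\pm1,\pm i\}$, and $\ket\phi=R\ket{v}$ for a Pauli root label $R$. Because every Clifford $U$ normalizes the Pauli group, the conjugation $U(\cdot)U^{\dagger}$ sends Pauli LIMs to Pauli LIMs and changes the scalar only by a sign; this is the bookkeeping that keeps all high-edge scalars inside $\{0,\pm1,\pm i\}$ throughout, and hence keeps every intermediate diagram a Stabilizer \limdd (by the converse of \autoref{thm:pauli-limdd-is-stabilizer}, such a diagram represents a stabilizer state).

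\textbf{Case 1: $U$ acts only on qubits $1,\dots,n-1$.} Here I would write $U=\id[2]\otimes U'$ and push $U'$ below the top node. Since $U'A=(U'AU'^{\dagger})U'=:A'U'$ with $A'$ a Pauli LIM whose scalar is still in $\{0,\pm1,\pm i\}$, we get $U\ket{v}=\ket{0}\otimes U'\ket{w}+\ket{1}\otimes A'\,U'\ket{w}$. By the induction hypothesis applied to the $(n-1)$-qubit child, $U'\ket{w}=S'\ket{w'}$ is a Stabilizer \limdd with Pauli root label $S'$ and root node $w'$. Factoring $S'$ onto the root restores an identity low-edge label and produces the semi-reduced node with identity low label and high label $(S')^{-1}A'S'$, both pointing to $w'$; this high label is again Pauli with scalar in $\{0,\pm1,\pm i\}$. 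Absorbing the leftover Pauli LIMs $URU^{\dagger}$ and $\id[2]\otimes S'$ into the root edge finishes this case. The base case $n=1$ admits no two-qubit gate and is covered by the single-qubit analysis in \autoref{lemma:clifford-tower}.

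\textbf{Cases 2 and 3: $U$ involves the top qubit.} If $U$ is a Hadamard or $S$ on the top qubit, or a downward CNOT controlled by the top qubit, then \autoref{lemma:clifford-tower} applies directly (its proof keeps high-edge scalars in $\{0,\pm1,\pm i\}$, so the output is a Stabilizer \limdd). The only gate not covered this way is an \emph{upward} CNOT whose target is the top qubit, $CX_{n}^{c}$ with $c<n$. For this I would use the conjugation identity $CX_{n}^{c}=(H_c\otimes H_n)\,CX_{c}^{n}\,(H_c\otimes H_n)$, which turns the target-on-top CNOT into a downward CNOT controlled by the top qubit at the cost of Hadamards on qubits $c$ and $n$. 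Applying these gates one at a time (right to left), each step is either an instance of Case 1 ($H_c$ with $c<n$) or an instance of \autoref{lemma:clifford-tower} ($H_n$, then $CX_{c}^{n}$, then $H_n$), so each intermediate state, and hence $U\ket\phi$, is a Stabilizer \limdd.

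The main obstacle is precisely this upward-CNOT-into-the-top-qubit case: unlike the others it can neither be pushed below the root nor handled directly by \autoref{lemma:clifford-tower}, and the Hadamard-conjugation identity is what resolves it without circularity, since its decomposition only invokes Case 1 and the top-qubit lemma and never another upward CNOT. The remaining care is purely in the scalar bookkeeping — checking that each conjugation $U'AU'^{\dagger}$ and $(S')^{-1}A'S'$ maps a Pauli string to a $\pm1$-signed Pauli string, so that the high-edge scalars never leave $\{0,\pm1,\pm i\}$.
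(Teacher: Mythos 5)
Your proof is correct and takes essentially the same route as the paper's: induction on $n$, with \autoref{lemma:clifford-tower} covering gates that touch the top qubit, conjugation $A\mapsto U'AU'^{\dagger}$ (plus re-factoring the child's new root label) to push gates below the root, and the identity $CX_c^t=(H\otimes H)\,CX_t^c\,(H\otimes H)$ to eliminate upward CNOTs. The only difference is organizational — the paper applies the Hadamard-conjugation identity once at the outset to assume all CNOTs are downward, whereas you invoke it lazily when an upward CNOT's target reaches the top qubit — and both orderings are sound.
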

\begin{proof}
	The proof is by induction on $n$.
	The case $n=1$ is covered by \autoref{lemma:clifford-tower}.
	Suppose that the induction hypothesis holds, and let $\ket\phi$ be an $n+1$-qubit state represented by a \gen\pauli-Tower-\limdd.
	First, we note that a CNOT gate $CX_c^t$ can be written as $CX_c^t = (H\otimes H) CX_t^c (H \otimes H)$, so without loss of generality we may assume that $c>t$.
	We treat two cases, depending on whether $U$ affects the top qubit or not.
	\begin{enumerate}[(a)]
		\item $U$ affects the top qubit.
		Then $U\ket\phi$ is represented by a \gen\pauli-Tower-\limdd, according to \autoref{lemma:clifford-tower}.
		\item $U$ does not affect the top qubit.
		Suppose $\ket{\phi} = \ket{0} \otimes \ket{\phi_0} + \ket{1} \otimes \alpha P\ket{\phi_0}$ (with $P$ a Pauli string and $\alpha\in\{0,\pm 1,\pm i\}$).
		Then $U\ket{\phi} = \ket{0} \otimes U\ket{\phi_0} + \ket{1} \otimes (\alpha UPU^{\dagger})U\ket{\phi_0}$.
		Since $U$ is either a Hadamard, $S$ gate or CNOT, and $\ket{\phi_0}$ is an $n$-qubit state, the induction hypothesis states that the state $U\ket{\phi_0}$ is represented by a \gen\pauli-Tower-\limdd.
		Let $\ledge Av$ be the root edge of this \gen\pauli-Tower-\limdd, representing $U\ket{\phi_0}$.
		Then $U\ket{\phi}$ is represented by the root edge $\ledge {\mathbb I\otimes A}w$, where $w$ is the node $\lnode{\id}{v}{\alpha A^{-1}UPU^{\dagger}A}{v}$.
		The label $\alpha A^{-1}UPU^\dagger A$ is a Pauli LIM, and may therefore be used as the label on the high edge of $w$.
	\end{enumerate}
	\vspace{-2em}
\end{proof}

Finally, we show that stabilizer states are precisely the \gen\pauli-Tower-\limdds.

\thmpaulitower*
\begin{proof}
    We first prove that each stabilizer state is represented by a \gen\Pauli-Tower-\limdd.
    We recall that each stabilizer state can be obtained as the output state of a Clifford circuit on input state $\ket{0}^{\otimes n}$.
    Each Clifford circuit can be decomposed into solely the gates $H, S$ and CNOT.
    The state $\ket{0}^{\otimes n}$ is represented by a \gen\pauli-Tower-\limdd.
    According to \autoref{thm:clifford-gate-stabilizer-limdd-general}, applying an $H$, $S$ or CNOT gate to a \gen\pauli-Tower-\limdd results a state represented by another \gen\pauli-Tower-\limdd.
    One can therefore apply the gates of a Clifford circuit to the initial state $\ket 0$, and obtain a \gen\pauli-Tower-\limdd for every intermediate state, including the output state.
    Therefore, every stabilizer state is represented by a \gen\pauli-Tower-\limdd.

    For the converse direction, the proof is by induction on $n$.
    We only need to note that a state represented by a \gen\pauli-Tower-\limdd can be written as $\ket{\phi} = \ket{0} \otimes \ket{\phi_0} + \ket{1} \otimes \alpha P \ket{\phi_0} = C(P) (\ket{0} + \alpha \ket{1}) \otimes \ket{\phi_0}$ where $C( P) := \dyad{0} \otimes \id + \dyad{1} \otimes P$ is the controlled-$(P)$ gate.
    Using the relations $Z = HXH$, $Y = SXS^{\dagger}$ and $S=Z^2$, we can decompose $C(P)$ as CNOT, $H$ and $S$, hence $C(P)$ is a Clifford gate.
    Since both $\ket{0} + \alpha \ket{1}$ and $\ket{\phi_0}$ can be written as \gen\Pauli-Tower-\limdds, they are stabilizer states by the induction hypothesis.
    Therefore, the state $\ket\psi = (\ket 0+\alpha\ket 1)\otimes \ket{\phi_0}$ is also a stabilizer state.
    Thus, the state $\ket\phi=C(P)\ket\psi$ is obtained by applying the Clifford gate $C(P)$ to the stabilizer state $\ket\phi$.
    Therefore, $\ket\phi$ is a stabilizer state.
\end{proof}

\section{Efficient algorithms for choosing a canonical high label \label{app:canonical-high-label}}

\begin{figure}[b]
~~~~~~~~~~~~~~~~~~~~~~~~~~~~~~~~~\tikz[->,>=stealth',shorten >=1pt,auto,node distance=1.5cm,
        thick, state/.style={circle,draw,inner sep=0pt,minimum size=14pt}]{

    \node[state] (2) {$w$};
    \node[state] (1a) [below = 1cm of 2, xshift=1.3cm] {$v_0$};
    \node[right = 0.25cm of 1a] () {$\beforeq$};
    \node[state] (1b) [below = 1cm of 2, xshift=2.8cm] {$v_1$};
    \node[above = 0.5cm of 2,xshift=1.95cm,fill=black] (x)  {};
    \path[]
    (x) edge[bend left=-20]    node[above right,pos=.8] {} (2)
    (2) edge[e0] node[pos=.5,left] {$\id^{\otimes n}$} (1a)
    (2) edge[e1] node[pos=.15,right, xshift = .5cm] {$\lambda P$} (1b)
    ;
    
    \node[state, right = 3.3cm of 2] (4) {$v^{\textnormal{r}}$};
    \path[]
    (x) edge[bend left=20]     node[pos=.9] {
        $\rootlim = \left(\lambda X\otimes P\right)^{x} \cdot
        \left(Z^s \otimes \left(\gmax_0\right)^{-1}\right)$
        } (4)
    (4) edge[e0] node[pos=.3,above left] {$\id^{\otimes n}$} (1a)
    (4) edge[e1] node[pos=.3,below right] {
        $\highlim = 
			(-1)^s\lambda^{(-1)^x}g_0Pg_1
        $
    } (1b)
    (2) --  node[yshift=.1cm] {$\rightsquigarrow$} (4)
    (1a)  edge[loop left, dashed] node[left]{$g_0\in\Aut(v_0)$} (1a)
    (1b)  edge[loop right, dashed] node[right]{$\Aut(v_1) \ni g_1$} (1b)
    ;
    }
    \vspace{1mm}
    \newline
    \centering
    Choose $s,x\in \{0,1\},g_0\in \Stab(v_0),g_1\in \Stab(v_1)$ s.t. $\highlim$ is minimal and $x=0$ if $v_0\ne v_1$.
	\caption{
        Illustration of finding a canonical high label for a semi-reduced node $w$,
        yielding a reduced node $v^{\text{r}}$.
The chosen high label is the minimal element from the set of eligible high labels
based on stabilizers $g_0,g_1$ of $v_0,v_1$ (drawn as self loops).
The minimal element holds a factor $\lambda^{(-1)^x}$ for some $x \in \{0, 1\}$.
There are two cases: if $v_0 \neq v_1$ or $x=0$, then the factor is $\lambda$ and the root edge should be adjusted with an $\id$ or $Z$ on the root qubit.
The other case, $x=1$, leads to an additional multiplication with an $X$ on the root qubit.
}
	\label{fig:reduced}
\end{figure}

Here, we present an efficient algorithm which, on input Pauli-LIMDD node $\lnode[w]{\id}{v_0}{\lambda P}{v_1}$, returns a canonical choice for the high label \highlim (algorithm \textsc{GetLabels}, in \autoref{alg:find-canonical-edges}).
By \emph{canonical}, we mean that it returns the same high label for any two nodes in the same isomorphism equivalence class, i.e., for any two nodes $v, w$ for which $\ket{v} \simeq_{\text{Pauli}} \ket{w}$.

We first characterize all eligible labels \highlim in terms of the stabilizer subgroups of the children nodes $v_0,v_1$, denoted as $\Stab(v_0)$ and $\Stab(v_1)$ (see \autoref{sec:preliminaries} for the definition of stabilizer subgroup).
Then, we provide the algorithm \textsc{GetLabels} which correctly finds the lexicographically minimal eligible label (and corresponding root label), and runs in time $O(n^3)$ where $n$ is the number of qubits.

\autoref{fig:reduced} illustrates this process.
In the figure, the left node $w$ summarizes the status of the {\makeedge} algorithm on \autoref{algline:makeedge-get-labels}, when this algorithm has enough information to construct the semi-reduced node $\lnode[w]{\id^{\otimes n}}{v_0}{\lambda P}{v_1}$, shown on the left.
The node $v^r$, on the right, is the canonical node, and is obtained by replacing $w$'s high edge's label by the canonical label $B_{\text{high}}$.
This label is chosen by minimizing the expression $B_{\text{high}}=(-1)^s\lambda^{(-1)^x}g_0Pg_1$, where the minimization is over $s,x\in \{0,1\}, g_0\in Stab(\ket{v_0}),g_1\in Stab(\ket{v_1})$, subject to the constraint that $x=0$ if $v_0\ne v_1$.
We have $\ket{w}\simeq_{\text{Pauli}}\ket{v^r}$ by construction as intended, namely, they are related via $\ket{w} = B_{\text{root}}\ket{v^r}$.
\autoref{thm:eligible-isomorphisms-pauli} shows that this way to choose the high label indeed captures all eligible high labels, i.e., a node $\lnode[v^r]{\id}{v_0}{B_{\text{high}}}{v_1}$ is isomorphic to $\ket w$ if and only if $B_{\text{high}}$ is of this form.

\begin{theorem}
    [Eligible high-edge labels]
	\label{thm:eligible-isomorphisms-pauli}
    Let $\lnode[w]{\id^{\otimes n}}{v_0}{\lambda P}{v_1}$ be a semi-reduced $n$-qubit node
    in a Pauli-\limdd, where $v_0, v_1$ are reduced, $P$ is a Pauli string and $\lambda \neq 0$.
    For all nodes $v = \lnode[v]{\id^{\otimes n}}{v_0}{\highlim}{v_1}$, it holds that $\ket{w} \simeq \ket{v}$  if and only if 
    \begin{equation}
        \label{eq:eligible-high-label}
    \highlim = (-1)^s \cdot \lambda^{(-1)^x} g_0 P g_1
    \end{equation}
        for some $g_0 \in \Aut(v_0), g_1 \in \Aut(v_1), s,x\in \{0, 1\}$ and
        $x=0$ if $v_0 \neq v_1$.
    An isomorphism mapping $\ket{w}$ to $\ket{v}$ is 
    \begin{equation}
        \label{eq:root-label-eligible-high-label}
        \rootlim = (X \otimes \lambda P)^{x} \cdot (Z^s \otimes (g_0)^{-1}).
    \end{equation}
\end{theorem}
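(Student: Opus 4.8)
The plan is to prove both implications by decomposing an arbitrary Pauli isomorphism according to its action on the top qubit. Writing $\ket{w} = \ket{0}\otimes\ket{v_0} + \ket{1}\otimes\lambda P\ket{v_1}$ and $\ket{v} = \ket{0}\otimes\ket{v_0} + \ket{1}\otimes \highlim\ket{v_1}$, any Pauli-LIM relating the two states has the form $\mathcal{O} = \mu\,O_n\otimes Q'$, where $O_n\in\{\id[2],X,Y,Z\}$ acts on the top qubit, $Q'$ is an $(n-1)$-qubit Pauli string, and $\mu\neq 0$. The central case distinction is whether $O_n$ is diagonal ($\id[2]$ or $Z$), which I will encode with $x=0$, or antidiagonal ($X$ or $Y$), encoded with $x=1$.

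For the ``only if'' direction I would assume $\mathcal{O}\ket{w} = \ket{v}$ and equate the $\ket{0}$- and $\ket{1}$-components. In the diagonal case $O_n = \diag{z}$ with $z=(-1)^s$, the $\ket{0}$-component gives $\mu Q'\ket{v_0}=\ket{v_0}$, so $g_0 := \mu Q'\in\Stab(v_0)$; substituting into the $\ket{1}$-component gives $\highlim\ket{v_1} = (-1)^s\lambda\, g_0 P\ket{v_1}$, and since this constrains $\highlim$ only up to right-multiplication by $\Stab(v_1)$, one obtains $\highlim = (-1)^s\lambda\, g_0 P g_1$ for some $g_1\in\Stab(v_1)$, matching the $x=0$ form. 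In the antidiagonal case the two components swap, so the $\ket{0}$-component reads $\ket{v_0} = \mu z^*\lambda\, Q'P\ket{v_1}$, which exhibits a Pauli isomorphism $\ket{v_1}\to\ket{v_0}$; by canonicity of the already-reduced children (\autoref{thm:node-canonicity-strong}) this forces $v_0 = v_1$, which is precisely the constraint ``$x=0$ if $v_0\neq v_1$''. Solving for $\mu Q'$ (using $P^{-1}=P$) and substituting into the $\ket{1}$-component, together with $z/z^* = z^2 = (-1)^s$, then yields $\highlim=(-1)^s\lambda^{-1} g_0 P g_1$, matching the $x=1$ form.

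For the ``if'' direction I would take $\highlim$ of the stated form (with the constraint on $x$ satisfied) and verify directly that the claimed $\rootlim = (X\otimes\lambda P)^{x}(Z^s\otimes g_0^{-1})$ satisfies $\rootlim\ket{v}=\ket{w}$, again splitting on $x$; this simultaneously establishes $\ket{w}\simeq\ket{v}$ and confirms the explicit isomorphism formula. The computation is routine: apply $\rootlim$ to $\ket{v}$, use $g_0\ket{v_0}=\ket{v_0}$ and $g_1\ket{v_1}=\ket{v_1}$ to collapse the stabilizers, and in the $x=1$ case use $P^2=\id[2]^{\otimes(n-1)}$ to cancel the two copies of $P$. \autoref{fig:reduced} already lays out exactly this bookkeeping.

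The main obstacle I anticipate is not conceptual but the careful tracking of scalars: confirming that $\mu$, the top-qubit phase $z\in\{1,-1\}$ or $\{1,i\}$, and the factor $\lambda$ recombine to give precisely $(-1)^s\lambda^{(-1)^x}$, and that the four-way parametrization $(s,x,g_0,g_1)$ realizes each eligible label (with the freedom of right-multiplication by $\Stab(v_1)$ and left-multiplication by $\Stab(v_0)$) exactly. The one genuinely structural step is the appeal to canonicity of the reduced children to rule out the antidiagonal case whenever $v_0\neq v_1$; this is what makes the constraint on $x$ necessary and is where the hypothesis that $v_0,v_1$ are reduced is essential.
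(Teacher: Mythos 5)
Your proposal is correct and follows essentially the same route as the paper's own proof: decompose the candidate Pauli isomorphism by its action on the top qubit, split into the diagonal ($x=0$) and antidiagonal ($x=1$) cases, use canonicity of the reduced children (\autoref{thm:node-canonicity-strong}) to force $v_0=v_1$ in the antidiagonal case, read off $g_0,g_1$ from the two components (absorbing the residual freedom into right-multiplication by $\Stab(v_1)$), and check the converse by direct computation with $\rootlim$. One remark in your favor: verifying $\rootlim\ket{v}=\ket{w}$ (your direction) is an exact identity and is precisely what \makeedge needs, whereas the paper's stated direction $\rootlim\ket{w}=\ket{v}$ holds only up to a scalar factor of the form $\pm\lambda^{2}$ in the $x=1$ case.
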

\begin{proof}
    \def\brest{C_{\textnormal{rest}}}
    \def\qtop{C_{\textnormal{top}}}
    It is straightforward to verify that the isomorphism $\rootlim$ in eq.~\eqref{eq:root-label-eligible-high-label} indeed maps $\ket{w}$ to $\ket{v}$ (as $x = 1$ implies $v_0 = v_1$), which shows that $\ket{w} \simeq \ket{v}$.
    For the converse direction, suppose there exists an $n$-qubit Pauli LIM $C$ such that $C\ket{w} = \ket{v}$, i.e.,
    \begin{equation}
        \label{eq:eligible-proof}
        C
        \left(\ket{0}\otimes \ket{v_0} + \lambda \ket{1} \otimes P \ket{v_1}\right)
        =
        \ket{0}\otimes \ket{v_0} + \ket{1} \otimes \highlim \ket{v_1}
        .
    \end{equation}
    We show that if $\highlim$ satisfies eq.~\eqref{eq:eligible-proof}, then it has a decomposition as in eq.~\eqref{eq:eligible-high-label}.
    We write $C = \qtop \otimes \brest$ where $\qtop$ is a single-qubit Pauli operator and $\brest$ is an $(n-1)$-qubit Pauli LIM (or a complex number $\neq 0$ if $n=1$).
    We treat the two cases $\qtop \in \{\id, Z\}$ and $\qtop \in \{X, Y\}$ separately:

\begin{enumerate}[(a)]
\item \textbf{Case} $\qtop \in \{\id, Z\}$.
    Then $\qtop = \begin{smallmat} 1& 0\\0 & (-1)^y\end{smallmat}$ for $y \in \{0, 1\}$.
        In this case, \autoref{eq:eligible-proof} implies $\qtop\ket 0\brest\ket{v_0}=\ket 0\ket{v_0}$, so $\brest\ket{v_0}=\ket{v_0}$, in other words $\brest \in \Aut(\ket{v_0})$.
        Moreover, \autoref{eq:eligible-proof} implies $(-1)^y\lambda \brest P \ket{v_1} = \highlim \ket{v_1}$, or, equivalently, $(-1)^{-y} \lambda^{-1} P^{-1} \brest^{-1} \highlim \in \Aut(v_1)$.
        Hence, by choosing $s = y$ and $x = 0$, we compute
        \[
            (-1)^y \lambda^{(-1)^0} \underbrace{\brest}_{\in \Aut(v_0)} P \underbrace{(-1)^{-y} \lambda^{-1} P^{-1} \brest^{-1} \highlim}_{\in \Aut(v_1)}
            =
            \frac{(-1)^{y} \lambda^{(-1)^0}}{ (-1)^y \lambda} \highlim
            =
            \highlim
        \]

\item \textbf{Case} $\qtop \in \{X, Y\}$.
    Write $\qtop = \begin{smallmat}0& z^{-1}\\ z&0\end{smallmat}$ where $z \in \{1, i\}$. Now, eq.~\eqref{eq:eligible-proof} implies
		\begin{equation}
		    \label{eq:z}
            z \brest \ket{v_0} = \highlim \ket{v_1}
		\qquad\textnormal{and}\qquad
            z^{-1} \lambda \brest P \ket{v_1} = \ket{v_0}.
		\end{equation}
From ~\autoref{eq:z}, we first note that $\ket{v_0}$ and $\ket{v_1}$ are isomorphic, so by Corollary~\ref{thm:node-canonicity-strong}, and because the diagram has merged these two nodes, we have $v_0 = v_1$.
    Consequently, we find from \autoref{eq:z} that $z^{-1}\brest^{-1} \highlim \in \Aut(v_0)$ and $z^{-1}\lambda \brest P \in \Aut(v_1)$.
    Now choose $x=1$ and choose $s$ such that $(-1)^s \cdot z^{-2} \brest^{-1} \highlim \brest = \highlim$ (recall that Pauli LIMs either commute or anticommute, so $\highlim\brest = \pm \brest \highlim$).
    This yields:
    \[
        (-1)^s \lambda^{-1} \cdot \underbrace{z^{-1}\brest^{-1} \highlim}_{\in \Aut(v_0)} \cdot P \cdot \underbrace{z^{-1} \lambda P \brest}_{\in \Aut(v_1)}
        =
        \lambda^{-1} \cdot \lambda \cdot
        (-1)^s 
        z^{-2} \cdot \left(\brest^{-1} \highlim \brest \right)
        =
        \highlim
    \]
    where we used the fact that $P^2 = \id^{\otimes (n-1)}$ because $P$ is a Pauli string.
\end{enumerate}
\end{proof}

\begin{corollary}\label{cor:highlabel}
As a corollary of \autoref{thm:eligible-isomorphisms-pauli}, we find that taking, as in \autoref{fig:reduced},
\[
\highlabel(\lnode[v] {\id}{v_0}{\lambda P}{v_1}) = \displaystyle \min_{\hspace{-3mm} i,s,x \in \{0, 1\},g_i \in \Aut(v_i)}(\set{(-1)^s \cdot \lambda^{(-1)^x} \cdot g_0 \cdot P \cdot g_1 \mid 
         x \neq 1 \text{ if } v_0 \neq v_1 })
\]
yields a proper implementation of \highlabel as required by \autoref{def:reduced-limdd},
because it considers all possible \highlim such that
$\ket v \simeq_{\Pauli} \ket{0}\ket{v_0}+\ket{1}\otimes \highlim \ket{v_1}$.
\end{corollary}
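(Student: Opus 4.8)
The plan is to check the two requirements that the High Determinism rule of \autoref{def:reduced-limdd} places on \highlabel: (i) the returned LIM is a legal high label for the input node, i.e. $\ket v\simeq_{\Pauli}\ket 0\ket{v_0}+\ket 1\otimes\highlabel(v)\ket{v_1}$; and (ii) \highlabel is constant on each $\simeq_{\Pauli}$-class of semi-reduced nodes. Requirement (i) is immediate from \autoref{thm:eligible-isomorphisms-pauli}. Writing the input node's stored high label as $\lambda P$, that theorem states that the labels $B$ for which $\ket 0\ket{v_0}+\ket 1\otimes B\ket{v_1}\simeq_{\Pauli}\ket v$ are exactly those of the form $(-1)^s\lambda^{(-1)^x}g_0Pg_1$ with $g_0\in\Stab(\ket{v_0})$, $g_1\in\Stab(\ket{v_1})$, $s,x\in\{0,1\}$, and $x=0$ whenever $v_0\neq v_1$. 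Hence the set minimized over in the corollary is exactly this set $S_v$ of eligible labels. It is nonempty (choosing $s=x=0$, $g_0=g_1=\id$ returns $\lambda P$ itself) and finite (the stabilizer subgroups are finite), so, by the totality of the lexicographic order of \autoref{app:prelims-linear-algebra}, $\min S_v$ exists, is unique, and is itself an eligible label, which establishes (i).

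For requirement (ii) I would take two semi-reduced nodes $v,w$ with $\ket v\simeq_{\Pauli}\ket w$ and argue that their eligible-label sets coincide. The crucial intermediate claim is that the two nodes have the \emph{same} children, i.e. $v_0=w_0$ and $v_1=w_1$. To prove this I would reuse the case analysis from the uniqueness part of \autoref{thm:node-canonicity-strong}: write an isomorphism $\ket w = C\ket v$ as $C=Q\otimes R$ with $Q$ a single-qubit Pauli and $R$ an $(n-1)$-qubit Pauli LIM, and use that both low edges carry the label $\id$. If $Q\in\{\id,Z\}$ (diagonal) then $\ket{v_0}\simeq_{\Pauli}\ket{w_0}$ and $\ket{v_1}\simeq_{\Pauli}\ket{w_1}$, while if $Q\in\{X,Y\}$ (antidiagonal) then $\ket{v_0}\simeq_{\Pauli}\ket{w_1}$ and $\ket{v_1}\simeq_{\Pauli}\ket{w_0}$. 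Because the children are reduced, canonicity (\autoref{thm:node-canonicity-strong}) forces the isomorphic children to be identical nodes; in the antidiagonal case the low-precedence constraints $v_0\beforeq v_1$ and $w_0\beforeq w_1$ then collapse everything to $v_0=v_1=w_0=w_1$. In all cases $v_0=w_0$ and $v_1=w_1$.

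With the children identified, \autoref{thm:eligible-isomorphisms-pauli} describes both eligible sets purely in terms of those common children and the respective states:
\[
S_v=\{B : \ket 0\ket{v_0}+\ket 1\otimes B\ket{v_1}\simeq_{\Pauli}\ket v\},\qquad S_w=\{B : \ket 0\ket{v_0}+\ket 1\otimes B\ket{v_1}\simeq_{\Pauli}\ket w\}.
\]
Since $\simeq_{\Pauli}$ is an equivalence relation and $\ket v\simeq_{\Pauli}\ket w$, transitivity gives $S_v=S_w$, whence $\highlabel(v)=\min S_v=\min S_w=\highlabel(w)$, proving (ii). I expect the main obstacle to be the ``identical children'' claim: although it is essentially contained in the uniqueness proof of \autoref{thm:node-canonicity-strong}, the antidiagonal case $Q\in\{X,Y\}$ must be handled carefully, since there the isomorphism swaps the two branches and one must invoke low precedence to force all four children to coincide --- which is precisely the situation the side condition ``$x=0$ if $v_0\neq v_1$'' in the definition of \highlabel is designed to accommodate.
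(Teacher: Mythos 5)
Your proof is correct, and its core coincides with the paper's (very terse) justification: \autoref{thm:eligible-isomorphisms-pauli} characterizes \emph{all} labels $\highlim$ with $\ket{v} \simeq_{\Pauli} \ket{0}\ket{v_0}+\ket{1}\otimes\highlim\ket{v_1}$, the characterized set is finite and nonempty (it contains $\lambda P$ itself), and the lexicographic order of \autoref{app:prelims-linear-algebra} is total, so the minimum is a well-defined eligible label. What you add --- and what the paper leaves entirely implicit --- is the verification of the second demand of the High Determinism rule, namely that $\highlabel$ is constant on $\simeq_{\Pauli}$-classes of semi-reduced nodes. Your reduction of this to the claim that isomorphic semi-reduced nodes have \emph{identical} children, via the diagonal/antidiagonal split of the top Pauli, canonicity of the reduced children, and (in the antidiagonal case) low precedence, is precisely the argument embedded in the uniqueness part of \autoref{thm:node-canonicity-strong}; transitivity of $\simeq_{\Pauli}$ then gives $S_v=S_w$ as you say. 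Importantly, you invoke canonicity only for the $(n-1)$-qubit children, which keeps the mutual induction (canonicity at level $n-1$ implies constancy of $\highlabel$ at level $n$, which implies canonicity at level $n$) well-founded rather than circular --- a point a casual reading of the paper could miss, since the canonicity proof in turn cites the constancy of $\highlabel$. Two details to make explicit in a final write-up: in the diagonal case the isomorphic partner $w$ cannot carry a zero high label (otherwise its $\ket{1}$-component vanishes while $v$'s does not, since $\lambda\neq 0$ and reduced nodes never represent the zero vector), and the collapse $v_0=v_1=w_0=w_1$ in the antidiagonal case uses antisymmetry of $\beforeq$, which holds because the paper stipulates $\beforeq$ is a total order.
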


A naive implementation for \textsc{GetLabels} would follow the possible decompositions of eligible LIMs (see ~\autoref{eq:eligible-high-label}) and attempt to make this LIM smaller by greedy multiplication, first  with stabilizers of
$g_0 \in \Aut(v_0)$, and then with stabilizers $g_1\in \Aut(v_1)$.
To see why this does not work, consider the following example:
the high edge label is $Z$ and the stabilizer subgroups are $\Aut(v_0) = \langle X\rangle$ and $\Aut(v_1) = \langle Y \rangle$.
Then the naive algorithm would terminate and return $Z$ because $X, Y> Z$, which is incorrect since the high-edge label $X \cdot Z \cdot Y = -i \id$ is smaller than $Z$.

\begin{algorithm}
	\caption{
		Algorithm for finding LIMs $\highlim$ and $\rootlim$ required by \makeedge.
		Its parameters represent a semi-reduced node $\lnode[v]{\id}{v_0}{\lambda P}{v_1}$
		and it returns LIMs $\highlim, \rootlim$ such that $\ket v = \rootlim \ket{w}$
		with $\lnode[w]{\id}{v_0}{\highlim}{v_1}$.
		The LIM $\highlim$ is chosen canonically as the lexicographically smallest from 
		the set characterized in \autoref{thm:eligible-isomorphisms-pauli}.
		It runs in $O(n^3)$-time (with $n$ the number of qubits),
		provided $\getautomorphisms$ has been computed for children $v_0, v_1$
		(an amortized cost).
	}
    \label{alg:find-canonical-edges}
	\begin{algorithmic}[1]
		\Procedure{GetLabels}{\textsc{PauliLim} $\lambda P$, \Node $v_0, v_1$ \textbf{with} $\lambda \neq 0$ and $v_0, v_1$ reduced}
		\Statex \textbf{Output}: canonical high label $\highlim$ and root label $\rootlim$
		\State $G_0, G_1 := \getautomorphisms(v_0), \getautomorphisms(v_1)$
		\State $(g_0, g_1) := \textsc{ArgLexMin}(G_0, G_1, \lambda P)$
		\label{line:getlabels-argmin}
		\If{$v_0=v_1$}
		\label{algline:getlabels-start-minimizing}
		\State $(x,s):=\displaystyle\argmin_{(x,s)\in\{0,1\}^2} \set{(-1)^s\lambda^{(-1)^x}g_0Pg_1}$
		\Else
		\State $x:=0$
		\State $s:=\displaystyle \argmin_{s\in \{0,1\}} \set{ (-1)^s\lambda g_0Pg_1 }$
		\label{line:minimized-lim} 
		\EndIf
		\Stateh $\highlim := (-1)^s \cdot \lambda^{(-1)^x} \cdot g_0 \cdot P \cdot g_1$
		\State $\rootlim := (X \otimes \lambda P)^{x} \cdot (Z^s \otimes (g_0)^{-1})$
		\State \Return $(\highlim, \rootlim)$
		\EndProcedure
	\end{algorithmic}
\end{algorithm}

To overcome this, we consider the group closure of \emph{both} $\Aut(v_0)$ \emph{and} $\Aut(v_1)$.
See \autoref{alg:find-canonical-edges} for the $O(n^3)$-algorithm for \textsc{GetLabels}, which proceeds in two steps.
In the first step (\autoref{line:getlabels-argmin}), we use the subroutine \textsc{ArgLexMin} for finding the minimal Pauli LIM $A$ such that $A = \lambda P \cdot g_0 \cdot g_1$ for $g_0\in \Aut(v_0), g_1\in \Aut(v_1)$.
We will explain and prove correctness of this subroutine below in \autoref{sec:lexmin}.
In the second step (\autoref{algline:getlabels-start-minimizing}-\ref{line:minimized-lim}), we follow \autoref{cor:highlabel} by also minimizing over $x$ and $s$.
Finally, the algorithm returns $\highlim$, the minimum of all eligible edge labels according to \autoref{cor:highlabel}, together with a root edge label $\rootlim$ which ensures the represented quantum state remains the same.

Below, we will explain $O(n^3)$-time algorithms for finding generating sets for the stabilizer subgroup of a reduced node and for \textsc{ArgLexMin}.
Since all other lines in \autoref{alg:find-canonical-edges} can be performed in linear time, its overall runtime is $O(n^3)$.

\subsection{Constructing the stabilizer subgroup of a \limdd node}
\label{sec:pauli-isomorphism-detection}

In this section, we give a recursive subroutine \getautomorphisms to construct the stabilizer subgroup $\Aut(\ket{v}) = \set{A \in \paulilim_n \mid A\ket{v} = \ket{v} }$ of an $n$-qubit \limdd node~$v$ (see \autoref{sec:preliminaries}).
This subroutine is used by the algorithm \textsc{GetLabels} to select a canonical label for the high edge and root edge.
If the stabilizer subgroup of $v$'s children have been computed already, \getautomorphisms's runtime is $O(n^3)$.
\getautomorphisms returns a generating set for the group $\Stab(\ket{v})$.
Since these stabilizer subgroups are generally exponentially large in the number of qubits $n$, but they have at most $n$ generators, storing only the generators instead of all elements may save an exponential amount of space.
Because any generator set $G$ of size $|G|>n$ can be brought back to at most $n$ generators in time $\oh(|G| \cdot n^2)$ (see \autoref{app:prelims-linear-algebra}), we will in the derivation below show how to obtain generator sets of size linear in $n$ and leave the size reduction implicit.
We will also use the notation $A \cdot G$ and $G \cdot A$ to denote the sets $\set{A \cdot g \mid g\in G }$ and $\set{g \cdot A \mid g \in G}$, respectively, when $A$ is a Pauli LIM.

We now sketch the derivation of the algorithm.
The base case of the algorithm is the Leaf node of the \limdd, representing the number $1$, which has stabilizer group $\{1\}$.
For the recursive case, we wish to compute the stabilizer group of a reduced $n$-qubit node $v=\lnode[v]{\mathbb I}{v_0}{\highlim}{v_1}$.
If $\highlim=0$, then it is straightforward to see that $\lambda P_n \otimes P'\ket{v} = \ket{v}$ implies $P_n \in \{\id, Z\}$, and further that $\Aut(\ket{v}) = \langle \set{P_n \otimes g \mid g\in G_0, P_n \in \set{\id, Z} } \rangle$, where $G_0$ is a stabilizer generator set for $v_0$.

Otherwise, if $\highlim \neq 0$, then we expand the stabilizer equation $\lambda P \ket{v} = \ket{v}$:
\[
\lambda P_n \otimes P' \left(\ket 0 \otimes\ket{v_0} + \ket 1 \otimes \highlim \ket{v_1} \right)  = \ket 0 \otimes\ket{v_0} +  \ket 1 \otimes \highlim \ket{v_1}, \text{which implies:}
\]
\begin{align}
  \lambda P' \ket{v_0} =  \ket{v_0}  ~&\text{and } z \lambda P' \highlim \ket{v_1} = \highlim \ket{v_1}
            & \textbf{if } P_n= \diag z\text{ with }z\in\set{1,-1} \label{eq:diag} \\
  y^* \lambda P' \highlim \ket{v_1} =  \ket{v_0}  &\text{and } \lambda P' \ket{v_0} = y^*\highlim \ket{v_1}
             & \textbf{if } P_n= \yy,\text{ with } y\in\set{1,i} 
             \label{eq:anti}
\end{align}
The stabilizers can therefore be computed according to \autoref{eq:diag} and \ref{eq:anti} as follows.
\begin{align}     
    \nonumber
    \Aut(\ket{v}) =
    \bigcup_{\hspace{-8mm}z = \in\set{1,-1} , y \in\set{1, i}\hspace{-8mm}}&
          \diag z \otimes ( \Aut(\ket{v_0}) \cap z\cdot \Aut(\highlim\ket{v_1}) )
          \\
          &
 \cup 
          \
    \ww \otimes  
    \big(
    \Iso( y^* \highlim\ket{v_1}, \ket{v_0})  \cap \Iso( \ket{v_0}, y^* \highlim\ket{v_1})  
    \big) 
    \label{eq:aut-a}
\end{align}
where $\Iso(v, w)$ denotes the set of Pauli isomorphisms $A$ which map $\ket{v}$ to $\ket{w}$ and we have denoted $\pi \cdot G := \set{\pi \cdot g \mid g \in G }$ for a set $G$ and a single operator $\pi$.
\autoref{lemma:isomorphism-set-characterization} shows that such an isomorphism set can be expressed in terms of the stabilizer group of $\ket{v}$.

\def\Pauli{\textnormal{\textsc{Pauli}}}
\begin{lemma}
    \label{lemma:isomorphism-set-characterization}
    Let $\ket{\phi}$ and $\ket{\psi}$ be quantum states on the same number of qubits.
    Let $\pi$ be a Pauli isomorphism mapping $\ket{\phi}$ to $\ket{\psi}$.
    Then the set of Pauli isomorphisms mapping $\ket{\phi}$ to $\ket{\psi}$ is
    $\Iso(\ket{v},\ket{w})=\pi \cdot \Aut(\ket{\phi})$.
    That is, the set of isomorphisms $\ket{\phi} \rightarrow \ket{\psi}$ is a coset of the stabilizer subgroup of $\ket{\phi}$.
\end{lemma}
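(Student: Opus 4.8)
The plan is to prove the claimed equality $\Iso(\ket{\phi},\ket{\psi})=\pi\cdot\Aut(\ket{\phi})$ by a standard coset argument, establishing the two set inclusions separately and relying only on the fact that $\paulilim_n$ is a group under operator composition. The single structural ingredient I would invoke up front is that every Pauli LIM $\lambda P_n\otimes\cdots\otimes P_1$ with $\lambda\neq 0$ is invertible, and that its inverse is again a Pauli LIM; in particular $\pi$ is invertible with $\pi^{-1}\in\paulilim_n$, and the product of two Pauli LIMs is a Pauli LIM. These closure properties are exactly what make the argument go through and are the only non-bookkeeping points to check.

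For the inclusion $\pi\cdot\Aut(\ket{\phi})\subseteq\Iso(\ket{\phi},\ket{\psi})$, I would take an arbitrary $g\in\Aut(\ket{\phi})=\Stab(\ket{\phi})$, so that $g\ket{\phi}=\ket{\phi}$, and compute $(\pi g)\ket{\phi}=\pi\bigl(g\ket{\phi}\bigr)=\pi\ket{\phi}=\ket{\psi}$. Since $\pi g$ is a Pauli LIM by closure, it is a Pauli isomorphism mapping $\ket{\phi}$ to $\ket{\psi}$, hence $\pi g\in\Iso(\ket{\phi},\ket{\psi})$.

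For the reverse inclusion $\Iso(\ket{\phi},\ket{\psi})\subseteq\pi\cdot\Aut(\ket{\phi})$, I would take an arbitrary $A\in\Iso(\ket{\phi},\ket{\psi})$, so that $A\ket{\phi}=\ket{\psi}=\pi\ket{\phi}$. Applying the linear operator $\pi^{-1}$ to both sides yields $\pi^{-1}A\ket{\phi}=\pi^{-1}\pi\ket{\phi}=\ket{\phi}$, so $\pi^{-1}A$ fixes $\ket{\phi}$ and is a Pauli LIM (again by closure), whence $\pi^{-1}A\in\Aut(\ket{\phi})$. Therefore $A=\pi\bigl(\pi^{-1}A\bigr)\in\pi\cdot\Aut(\ket{\phi})$, completing the argument.

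There is essentially no hard part here: the result is the familiar statement that the solution set of a ``group action'' equation is a coset of the stabilizer, instantiated for the Pauli group acting on state vectors. The only genuine care I would take is (i) to justify that $\pi g$ and $\pi^{-1}A$ are \emph{Pauli} LIMs rather than merely invertible operators, which is where closure of $\paulilim_n$ is used, and (ii) to note that the argument compares the \emph{images} of $\ket{\phi}$ directly against $\ket{\phi}$ or $\ket{\psi}$ and never requires cancelling $\ket{\phi}$, so no nondegeneracy hypothesis beyond $\ket{\phi}$ being a (nonzero) quantum state is needed.
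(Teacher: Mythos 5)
Your proof is correct and follows essentially the same route as the paper's: both directions are established by the same computations ($\pi g\ket{\phi}=\pi\ket{\phi}=\ket{\psi}$ for one inclusion, and $\pi^{-1}A\ket{\phi}=\pi^{-1}\ket{\psi}=\ket{\phi}$ followed by $A=\pi(\pi^{-1}A)$ for the other). Your explicit attention to closure of $\paulilim_n$ under products and inverses is a minor point the paper leaves implicit, but it is the same argument.
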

\begin{proof}
    If $P\in \Aut(\ket{\phi})$, then $\pi \cdot P$ is an isomorphism since $\pi \cdot P \ket{\phi} = \pi \ket{\phi} = \ket{\psi}$.
    Conversely, if $\sigma$ is a Pauli isomorphism which maps $\ket{\phi}$ to $\ket{\psi}$, then $ \pi^{-1} \sigma \in \Aut(\ket{\phi})$ because $\pi^{-1} \sigma \ket{\phi} = \pi^{-1} \ket{\psi} = \ket{\phi}$.
    Therefore $\sigma=\pi(\pi^{-1}\sigma)\in \pi \cdot \Aut(\ket{\phi})$.
\end{proof}
With \autoref{lemma:isomorphism-set-characterization} we can rewrite eq.~\eqref{eq:aut-a} as
\begin{align}     
    \nonumber
    \Aut(\ket{v}) =& 
          \id \otimes \underbrace{( \Aut(\ket{v_0}) \cap \Aut(\highlim\ket{v_1}) )}_{\textnormal{stabilizer subgroup}} \\
          & \cup Z \otimes \underbrace{( \id \cdot \Aut(\ket{v_0}) \cap -\id \cdot \Aut(\highlim\ket{v_1}) )}_{\textnormal{isomorphism set}} 
          \nonumber
          \\
          &\cup 
          \bigcup_{y \in\set{1, i}}
          \
    \ww \otimes  \underbrace{
        \big(
        \pi
        \cdot
        \Aut(y^* \highlim \cdot \ket{v_1})
        \cap
        \pi^{-1}
        \cdot
        \Aut(\ket{v_0})}_{\textnormal{isomorphism set}}
        \big)
    \label{eq:aut-simplified}
\end{align}
where $\pi$ denotes a single isomorphism $y^* \highlim\ket{v_1}  \rightarrow \ket{v_0}$.

Given generating sets for $\Aut(v_0)$ and $\Aut(v_1)$, evaluating eq.~\eqref{eq:aut-simplified} requires us to:

\begin{itemize}
        \setlength\itemsep{1em}
    \item \textbf{Compute $\Aut(A\ket{w})$ from $\Aut(w)$ (as generating sets) for Pauli LIM $A$ and node $w$.} It is straightforward to check that $\set{A g A^{\dagger} \mid g \in G}$, with $\langle G \rangle = \Aut(w)$, is a generating set for $\Aut(A\ket{w})$.
    \item \textbf{Find a single isomorphism between two edges, pointing to reduced nodes.} In a reduced \limdd, edges represent isomorphic states if and only if they point to the same nodes. This results in a straightforward algorithm, see \autoref{alg:getsingleisomorphism}.
    \item \textbf{Find the intersection of two stabilizer subgroups, represented as generating sets $G_0$ and $G_1$ (\textsc{IntersectStabilizerGroups}, \autoref{alg:intersectstabilizergroups}).}
        First, it is straightforward to show that the intersection of two stabilizer subgroups is again a stabilizer subgroup (namely, it is abelian and does not contain $-\mathbb I$. It is never empty since $\id$ is a stabilizer of all states).%
        \footnote{To be clear, here we consider the stabilizers including their phase, i.e., we are not considering the groups modulo phase. Indeed, computing the intersection of two groups modulo phase is relatively easy, as shown in \autoref{app:prelims-linear-algebra}.}
        The algorithm proceeds in two steps: first, we compute the intersection of $G_0$ and $G_1$ modulo phase; second, we ``correct for'' the fact that the phases play a role.
        
        Very broadly speaking, we use the following algebraic properties of Pauli groups.
        First, when considering a Pauli string $\lambda P$ modulo phase, it is convenient to think of it as simply the Pauli string $P$ with phase equal to $+1$.
        This allows us to take an element $P\in \overline{G_0}$ from a Pauli stabilizer group $\overline{G_0}$ modulo phase, and, by abuse of language, multiply it by a phase $\lambda\in \mathbb C$ to obtain $\lambda\cdot P\in G_0$.
        Second, for any Pauli string $P\in \overline{G_0}\cap \overline{G_1}$, i.e., in the group modulo phase, there exists a unique $\lambda$ such that $\lambda\cdot P\in \gen{G_0}$; and a unique $\omega$ such that $\omega\cdot P\in \gen{G_1}$.
        Moreover, we have $\omega=\pm \lambda$ in this case due to anti-commutativity.
        Consequently, if $S=\{P_1,\ldots, P_\ell\}$ is a generating set for the group $\braket{S}=\gen{\overline{G_0}}\cap \gen{\overline{G_1}}$ modulo phase, then we can divide these generators $P_j$ into two sets, $S=S_0\cup S_1$, where each $P\in S_0$ satisfies $\lambda P\in G_0\cap G_1$ for some $\lambda\in \mathbb C$ and each $P\in S_1$ satisfies $\lambda P\in G_0$ and $-\lambda P\in G_1$.
        The algorithm finds these sets $S_0$ and $S_1$, including phase, in the loop in \autoref{algline:group-intersect-start-first-loop}-\ref{algline:group-intersect-end-first-loop}.
        Given such sets $S_0,S_1$, any element in $G_0\cap G_1$ (i.e., the set we are interested in) can be written as a product of elements of $S_0$ and an \emph{even number} of elements from $S_1$.
        The set $\{e_1\cdot e_2,\ldots, e_1\cdot e_m\}$, found by the algorithm in \autoref{algline:group-intersect-start-second-loop}-\ref{algline:group-intersect-end-second-loop}, generates precisely this set of elements generated from an even number of elements of $S_1$.

		All of the above steps can be performed in $\mathcal O(n^3)$ time, where $n$ is the number of qubits.
		In particular, a generating set for the intersection of $G_0$ and $G_1$ modulo phase is simply the intersection of two vector spaces over $\mathbb F_2$, which is constructed in time $\mathcal O(n^3)$ on \autoref{algline:group-intersect-construct-intersection-mod-phase} using the Zassenhaus algorithm.
		On line \autoref{algline:group-intersect-membership-check}, checking whether $\lambda P\in G_0$ for given $\lambda,P$ can be done in $\mathcal O(n^2)$ time; this happens at most $\mathcal O(|S|)=\mathcal O(n)$ times, so in total this operation takes up $\mathcal O(n^3)$ time.
		Lastly, the loop in \autoref{algline:group-intersect-start-second-loop}-\ref{algline:group-intersect-end-second-loop} runs in $\mathcal O(n^2)$ time, as there are at most $|S|-1=\mathcal O(n)$ multiplications of Pauli strings, each of which takes $\mathcal O(n)$ time.
		We remark that the Zassenhaus algorithm cannot be straightforwardly applied to find the intersection of the groups $G_0$ and $G_1$ directly, since the elements of $G_0$ may not commute with those~of~$G_1$.
		
    \item \findisomorphismsetintersection: \textbf{Find the intersection of two isomorphism sets, represented as single isomorphism ($\pi_0, \pi_1$) with a generator set of a stabilizer subgroup ($G_0, G_1$), see \autoref{lemma:isomorphism-set-characterization}.} 
        This is the \emph{coset intersection problem} for the $\paulilim_n$ group.
        Isomorphism sets are coset of stabilizer groups (see \autoref{lemma:isomorphism-set-characterization}) and it is not hard to see that that the intersection of two cosets, given as isomorphisms $\pi_{0/1}$ and generator sets $G_{0/1}$, is either empty, or a coset of $\langle G_0 \rangle \cap \langle G_1 \rangle$ (this intersection is computed using \autoref{alg:intersectstabilizergroups}).
        Therefore, we only need to determine an isomorphism $\pi \in \pi_0 \langle G_0\rangle \cap \pi_1 \langle G_1 \rangle$, or infer that no such isomorphism exists.

        We solve this problem in $O(n^3)$ time in two steps (see \autoref{alg:findisointersection} for the full algorithm).
First, we note that that $\pi_0 \langle G_0 \rangle \cap \pi_1 \langle G_1 \rangle = \pi_0 [\langle G_0 \rangle \cap (\pi_0^{-1} \pi_1) \langle G_1 \rangle]$, so we only need to find an element of the coset $S:= \langle G_0 \rangle \cap (\pi_0^{-1} \pi_1) \langle G_1 \rangle$.
        Now note that $S$ is nonempty if and only if there exists $g_0 \in \langle G_0 \rangle, g_1 \in \langle G_1 \rangle$ such that $g_0 = \pi_0^{-1} \pi_1 g_1$, or, equivalently, $\pi_0^{-1} \pi_1 \cdot g_1 \cdot g_0^{-1} = \id$.
        We show in \autoref{lemma:id-smallest-in-coset} that such $g_0, g_1$ exist if and only if $\id$ is the smallest element in the set $S\pi_0^{-1}\pi_1\braket{G_1}\cdot\braket{G_0}$.
        Hence, for finding out if $S$ is empty we may invoke the \textsc{LexMin} algorithm we have already used before in \textsc{GetLabels} and we will explain below in \autoref{sec:lexmin}.
        If it is not empty, then we obtain $g_0, g_1$ as above using \textsc{ArgLexMin}, and output $\pi_0 \cdot g_0$ as an element in the intersection.
        Since \textsc{Lexmin} and \textsc{ArgLexMin} take $O(n^3)$ time, so does \autoref{alg:findisointersection}.
\end{itemize}

\begin{lemma}
    \label{lemma:id-smallest-in-coset}
    The coset $S:= \langle G_0 \rangle \cap \pi_1^{-1}\pi_0 \cdot \langle G_1 \rangle$ is nonempty if and only if the lexicographically smallest element of the set $S=\pi_0^{-1}\pi_1\braket{G_1}\cdot\braket{G_0}=\set{\pi_0^{-1}\pi_1g_1g_0 \mid g_0\in G_0,g_1\in G_1 }$ is $1 \cdot \id$.
\end{lemma}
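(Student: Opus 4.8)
The plan is to reduce the emptiness of the coset $S$ to a pure membership question about the set whose minimum we compute, and then to pin down the scalar $1$ using the magnitude bookkeeping together with the lexicographic order defined in \autoref{app:prelims-linear-algebra}. Write $Q := \pi_0^{-1}\pi_1$, so that $S = \braket{G_0} \cap Q\braket{G_1}$, and let $T := Q\braket{G_1}\braket{G_0} = \{Q g_1 g_0 \mid g_1 \in \braket{G_1},\, g_0 \in \braket{G_0}\}$ be the set appearing in the statement. First I would establish the set-theoretic equivalence $S \neq \emptyset \iff 1\cdot\id \in T$. For the forward direction, an element $a \in S$ can be written as $a = Q g_1$ with $a \in \braket{G_0}$; since $\braket{G_0}$ is a group, $a^{-1}\in\braket{G_0}$, and then $Q g_1 a^{-1} = a a^{-1} = 1\cdot \id \in T$, where the scalar is \emph{exactly} $1$ because the two occurrences of $a$ cancel. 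Conversely, if $Q g_1 g_0 = 1\cdot\id$ for some $g_1,g_0$, then $Q g_1 = g_0^{-1} \in \braket{G_0}$ while also $Q g_1 \in Q\braket{G_1}$, so $Q g_1 \in S$.

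It then remains to upgrade ``$1\cdot\id \in T$'' to ``$1\cdot\id = \min(T)$''. The crux is a magnitude argument. Every generator $g_0,g_1$ of a stabilizer subgroup has scalar $\pm 1$ (\autoref{app:prelims-linear-algebra}), and a product of Pauli strings carries a scalar in $\{\pm1,\pm i\}$, all of magnitude $1$; hence for \emph{every} $t = Q g_1 g_0 \in T$ the scalar of $t$ has magnitude exactly $|Q|$, regardless of its Pauli string. Now suppose $1\cdot\id \in T$: this exhibits an element of $T$ of scalar magnitude $1$, forcing $|Q| = 1$, so in fact \emph{all} scalars in $T$ lie on the unit circle. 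To see $1\cdot\id$ is minimal, take an arbitrary $t \in T$: if its Pauli string is non-identity it is already larger than $1\cdot\id$, because the all-zero check vector of $\id[2]^{\otimes n}$ is the least Pauli string and the Pauli string dominates the scalar in the order; and if its Pauli string is $\id[2]^{\otimes n}$ then $t = \mu\cdot\id$ with $|\mu| = |Q| = 1$, so its encoded factor is $(r,\theta) = (1,\theta_\mu)$ with $\theta_\mu \in [0,2\pi)$, which is $\succeq (1,0)$, the factor of $1\cdot\id$. Thus no element undercuts $1\cdot\id$, giving $\min(T) = 1\cdot\id$; the reverse implication is immediate since the minimum lies in $T$. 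Chaining the two equivalences yields the lemma.

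The main obstacle I anticipate is the careful tracking of scalars in the non-abelian group $\paulilim_n$: unlike the abelian reasoning used for the linear-algebra subroutines, here $\braket{G_0}$ and $\braket{G_1}$ need not commute and $\braket{G_1}\braket{G_0}$ is a product of sets rather than a subgroup. This is exactly why the scalar, and not merely the Pauli string, must be controlled, and it is what distinguishes the statement ``$\id$ occurs'' from the naive guess. Fortunately, since any two Pauli strings either commute or anticommute, reordering factors only ever introduces a sign (a magnitude-$1$ phase), so the magnitude identity $|\mathrm{scalar}(t)| = |Q|$ survives intact; the order's convention of making the factor least significant and placing phase $0$ first then does the rest. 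I would only need to state the magnitude claim once and invoke the ordering definition, deferring the actual search for the witnessing $g_0,g_1$ to \textsc{ArgLexMin}.
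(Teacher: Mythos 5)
Your proof is correct and takes essentially the same route as the paper's: both directions hinge on the same algebraic reduction, namely that an element $a = \pi_0^{-1}\pi_1 g_1 \in \braket{G_0}$ yields $1\cdot\id = \pi_0^{-1}\pi_1 g_1 a^{-1}$ in the product set, and conversely $\pi_0^{-1}\pi_1 g_1 g_0 = 1\cdot\id$ yields $g_0^{-1} = \pi_0^{-1}\pi_1 g_1$ in the intersection. Your magnitude bookkeeping for upgrading ``$1\cdot\id$ occurs'' to ``$1\cdot\id$ is the lexicographic minimum'' is in fact more careful than the paper's proof, which asserts minimality directly from the all-zero check vector without explicitly ruling out elements with identity Pauli string and smaller scalar.
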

\begin{proof}
	(Direction $\Rightarrow$)
	Suppose that the set $\braket{G_0}\cap \pi_0^{-1}\pi_1\braket{G_1}$ has an element $a$.
	Then $a=g_0=\pi_0^{-1}\pi_1g_1$ for some $g_0\in \braket{G_0},g_1\in\braket{G_1}$.
	We see that $\mathbb I=\pi_0^{-1}\pi_1g_1g_0^{-1}\in \pi_0^{-1}\pi_1\braket{G_1}\cdot \braket{G_0}$, i.e., $\mathbb I\in S$.
	Note that $\mathbb I$ is, in particular, the lexicographically smallest element, since its check vector is the all-zero vector $(\vec 0|\vec 0|00)$.
	
	(Direction $\Leftarrow$)
	Suppose that $\mathbb I\in \pi_0^{-1}\pi_1\braket{G_1}\cdot\braket{G_0}$.
	Then $\mathbb I=\pi_0^{-1}\pi_1g_1g_0$, for some $g_0\in \braket{G_0},g_1\in\braket{G_1}$, so we get $g_0^{-1}=\pi_0^{-1}\pi_1g_1\in \braket{G_0}\cap \pi_0^{-1}\pi_1\braket{G_1}$, as promised.
\end{proof}

The four algorithms above allow us to evaluate each of the four individual terms in eq.~\eqref{eq:aut-simplified}.
To finish the evaluation of eq.~\eqref{eq:aut-simplified}, one would expect that it is also necessary that we find the union of isomorphism sets.
However, we note that if $\pi G$ is an isomorphism set, with $\pi$ an isomorphism and $G$ an stabilizer subgroup, then $P_n \otimes (\pi g) = (P_n \otimes \pi) (\id \otimes g)$ for all $g\in G$.
Therefore, we will evaluate eq.~\eqref{eq:aut-simplified}, i.e. find (a generating set) for all stabilizers of node $v$ in two steps.
First, we construct the generating set for the first term, i.e. $\id \otimes ( \Aut(\ket{v_0}) \cap \Aut(\highlim \ket{v_1}) )$, using the algorithms above.
Next, for each of the other three terms $P_n \otimes (\pi G)$, we add only \textit{a single} stabilizer of the form $P_n \otimes \pi$ for each $P_n \in \{X, Y, Z\}$.
We give the full algorithm in \autoref{alg:getautomorphisms} and prove its efficiency below.

\begin{lemma}[Efficiency of function \getautomorphisms]
    Let $v$ be an $n$-qubit node.
    Assume that generator sets for the stabilizer subgroups of the children $v_0, v_1$ are known, e.g., by an earlier call to \getautomorphisms, followed by caching the result (see \autoref{line:autocache-store} in \autoref{alg:getautomorphisms}).
   Then \autoref{alg:getautomorphisms} (function \getautomorphisms), applied to $v$, runs in time $O(n^3)$.
\end{lemma}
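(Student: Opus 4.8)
The plan is to bound the running time line by line in \autoref{alg:getautomorphisms}, exploiting the facts that (i) the union in \autoref{eq:aut-simplified} ranges over a constant number of cases (namely $P_n \in \{\id[2], Z, X, Y\}$, with the sub-cases $z\in\{1,-1\}$ and $y\in\{1,i\}$), and (ii) each case is assembled from a constant number of primitive operations, each of which is already known to run in $O(n^3)$ time or less. Throughout I will use the key structural fact that a stabilizer subgroup of an $n$-qubit state has at most $n$ generators, so that by the \emph{Independent Set} reduction of \autoref{app:prelims-linear-algebra} every generating set can be kept at size $O(n)$ for a cost of $O(\abs{G}\cdot n^2)=O(n^3)$. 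Hence all the underlying linear algebra operates on check matrices of dimension $O(n)\times O(n)$.

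First I would account for the individual primitives invoked. Conjugation $g\mapsto A g A^\dagger$ of a generator by a Pauli LIM $A$ costs $O(n)$ (it is a coordinate-wise operation on the check vector, with a $\pm$ sign correction that is computable in $O(n)$ from the commutation pattern), so conjugating an entire generating set of the children---needed to form $\Aut(\highlim\ket{v_1})$ from $\Aut(\ket{v_1})$---costs $O(n^2)$. Finding a single isomorphism via \getsingleisomorphism\ (\autoref{alg:getsingleisomorphism}) is $O(n)$, since in a reduced \limdd\ two edges are isomorphic iff they point to the same node, so $\pi$ can be read off directly. The two genuinely nontrivial subroutines are \findautomorphismsetintersection\ (\autoref{alg:intersectstabilizergroups}) and the coset intersection \findisomorphismsetintersection\ (\autoref{alg:findisointersection}); both were argued above to run in $O(n^3)$: the former because the conjugating unitary $U$ of \cite{garcia2012efficient} is a product of $O(n^2)$ Cliffords each applicable in $O(n)$, and the RREF and Zassenhaus steps are $O(n^3)$; the latter because it reduces, via \autoref{lemma:id-smallest-in-coset}, to single calls of \textsc{LexMin} and \textsc{ArgLexMin}, each $O(n^3)$.

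Next I would combine these bounds. For the $\highlim=0$ branch the stabilizer group is $\langle\{P_n\otimes g \mid g\in G_0,\ P_n\in\{\id[2],Z\}\}\rangle$, obtained by copying $G_0$ twice and prepending one single-qubit factor, which is $O(n^2)$. For the generic $\highlim\neq 0$ branch, I would build the generating set in two stages exactly as described preceding \autoref{alg:getautomorphisms}: compute the first term $\id[2]\otimes(\Aut(\ket{v_0})\cap\Aut(\highlim\ket{v_1}))$ with one call to the stabilizer-intersection routine ($O(n^3)$), and then, for each of the three remaining cases $P_n\in\{Z,X,Y\}$, adjoin only a \emph{single} generator of the form $P_n\otimes\pi$, each obtained from one isomorphism-set computation ($O(n^3)$). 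Because there are only finitely many such cases, the total work is $O(n^3)$; the resulting generating set has $O(n)$ elements, and a final \emph{Independent Set} reduction keeps it at $\leq n$ generators in $O(n^3)$, giving an overall bound of $O(n^3)$.

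The main obstacle I anticipate is not the arithmetic but the careful justification that the non-commutative Pauli bookkeeping never inflates the cost: one must verify that the observation "only a single generator $P_n\otimes\pi$ per case suffices" (rather than the full isomorphism coset $P_n\otimes(\pi G)$) is correct, using the identity $P_n\otimes(\pi g)=(P_n\otimes\pi)(\id[2]\otimes g)$ so that the $\id[2]\otimes g$ contributions are already captured by the first term. Establishing this, together with confirming that every conjugation and coset operation stays within the efficiently representable $\paulilim_n$ formalism (so that the $O(n^3)$ linear-algebra subroutines of \autoref{app:prelims-linear-algebra} genuinely apply), is the crux; once it is in place, the runtime bound follows by summing a constant number of $O(n^3)$ terms.
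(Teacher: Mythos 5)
Your proof is correct and takes essentially the same route as the paper's: both arguments reduce the bound to a constant number of calls to the $O(n^3)$ subroutines (\findisomorphismsetintersection, which wraps the RREF, Zassenhaus, and conjugation machinery) plus cheaper $O(n)$--$O(n^2)$ primitives such as \getsingleisomorphism and conjugation of generator sets. The additional bookkeeping you supply (the $\highlim=0$ branch, the Independent-Set reduction, and the justification that a single generator $P_n\otimes\pi$ per case suffices) is material the paper places in the derivation preceding the algorithm rather than in the proof of the efficiency lemma, but it does not alter the structure of the argument.
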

\begin{proof}
    If $n=1$ then \autoref{alg:getautomorphisms} only evaluates \autoref{line:stabalgo-first}--\ref{line:stabalgo-second}, which run in constant time.
    For $n>1$, the algorithm performs a constant number of calls to \getsingleisomorphism (which only multiplies two Pauli LIMs and therefore runs in time $O(n)$) and four calls to \findisomorphismsetintersection.
    Note that the function \findisomorphismsetintersection~from \autoref{alg:findisointersection} invokes $O(n^3)$-runtime external algorithms:
    \begin{itemize}
\item the Zassenhaus algorithm~\cite{LUKS1997335} to calculate a basis for the intersection  of two subspaces of a vector space,
\item the RREF algorithms mentioned in \autoref{app:prelims-linear-algebra}, and
\item  Algorithm 2 from \cite{garcia2012efficient} to synthesize a circuit that transforms any stabilizer state to a basis state.
	Specifically, this algorithm receives as input a stabilizer subgroup $G$ and outputs a Clifford circuit $U$ such that $UGU^\dagger=\{Z_1,\ldots, Z_{|G|}\}$.
	We remark that Garc\'ia et al. assume in their work that $G$ is the stabilizer group of a stabilizer state, i.e., $|G|=n$, but in fact the algorithm works also without that assumption, i.e., in the more general case when $G$ is any abelian group of Pauli operators not containing $-\mathbb I$.
	Our algorithms use this more general use case.
    \end{itemize}
    Therefore, \getautomorphisms has runtime is $O(n^3)$.
\end{proof}

\begin{algorithm}
    \caption{Algorithm for constructing the Pauli stabilizer subgroup of a Pauli-\limdd~node.
    The algorithm always returns a set in reduced row echelon form (see \autoref{app:prelims-linear-algebra}), which is accomplished in line \ref{algline:stabgenset-to-RREF}.
    In particular, the set always returns at most $n$ elements for $n$-qubit states.}
    \label{alg:getautomorphisms}
    \begin{algorithmic}[1]
        \Procedure{\getautomorphisms}{\Edge $\ledge[e_0]{\id^{\otimes n}}{v_0}, \ledge[e_1]{\highlim}{v_1}$ \textbf{with} $v_0, v_1$ reduced}
        \If{n=1}
        \label{line:stabalgo-first}
        \If{ there exists $P \in \pm 1 \cdot \{X, Y, Z\}$ \textbf{such that} $P \ket v = \ket v$} \Return $P$ \Else \mbox{ } \Return \none
        \label{line:stabalgo-second}
        \EndIf
        \Else \algstrut[1]      
        \If{$v \in \autocache[v]$}
        \Return $\autocache[v]$
        \EndIf
         \Stateh $G_0 := \getautomorphisms(v_0)$
        \If{$\highlim = 0$}
       	     \State \Return $\set{\mathbb I_2\otimes g \mid g\in G_0}\cup \{Z\otimes \mathbb I^{\otimes n-1}\}$
             \label{line:stab-fork}
        \Else
       	\State $G:= \emptyset$
        \State $G_1 := \set{\highlim \cdot g\cdot \highlim^\dagger \mid g \in \getautomorphisms(v_1) }$
        \State $(\pi, B):= \findisomorphismsetintersection(( \id^{\otimes n - 1}, G_0), ( \id^{\otimes n - 1}, G_1))$
        \State $G := G \cup \set{\mathbb I_2\otimes g  \mid g\in B}$
        \Comment {Add all stabilizers of the form $\id \otimes \dots$}
        \Stateh
        \State $\pi_0, \pi_1 := \id^{\otimes n - 1}, \getsingleisomorphism(e_1,-1 \cdot e_1)$ 
        \State $(\pi, B):= \findisomorphismsetintersection((\pi_0, G_0), (\pi_1, G_1))$
         \If{$\pi \neq \text{None}$ }   $G := G \cup \{Z\otimes \pi\}$ 
     	         \Comment Add stabilizer of form $Z \otimes \dots$
     	 \EndIf
        \Stateh
        \State $\pi_0, \pi_1 := \getsingleisomorphism(e_0,e_1), \getsingleisomorphism(e_1, e_0))$
        \State $(\pi, B):= \findisomorphismsetintersection((\pi_0, G_0), (\pi_1, G_1))$
         \If{$\pi \neq \text{None}$ }   $G := G \cup \{X\otimes \pi\}$ 
     	         \Comment Add stabilizer of form $X \otimes \dots$
     	 \EndIf 
        \Stateh
        \State $\pi_0, \pi_1 := \getsingleisomorphism(e_0, -i \cdot e_1), \getsingleisomorphism(-i\cdot e_1, e_0))$
        \State $(\pi, B):= \findisomorphismsetintersection((\pi_0, G_0), (\pi_1, G_1))$
         \If{$\pi \neq \text{None}$ }   $G := G \cup \{Y\otimes \pi\}$ 
     	         \Comment Add stabilizer of form $Y \otimes \dots$
     	 \EndIf
     	\State $G:=RREF(G)$ \Comment{Bring $G$ to reduced row echelon form, potentially pruning some rows}
        	\label{algline:stabgenset-to-RREF}
        \Stateh $\autocache[v] := G$
        \label{line:autocache-store}
		\State \Return $G$
        \EndIf
        \EndIf
        \EndProcedure
    \end{algorithmic}
\end{algorithm}

\begin{algorithm}
    \caption{An $O(n^3)$ algorithm for computing the intersection of two sets of isomorphisms, each given as single isomorphism with a stabilizer subgroup (see \autoref{lemma:isomorphism-set-characterization}).
    \label{alg:findisointersection}
    }
  	\Output Pauli LIM $\pi$, stabilizer subgroup generating set $G$ s.t. $\pi \langle G \rangle = \pi_0 \langle G_0 \rangle \cap \pi_1 \langle G_1 \rangle$
    \begin{algorithmic}[1]
        \Procedure{IntersectIsomorphismSets}{stabilizer subgroup generating sets $G_0, G_1$,
        
        ~~~~~~~~~~~~~~~~~~~~~~~~~~~~~~~~~~~~~~~~~~~~~~~~~~~ Pauli-LIMs $\pi_0, \pi_1$}
        \State $\pi := LexMin(G_0, G_1, \pi_1^{-1}\pi_0)$
        \If{$\pi = \id$}
        \State $(g_0, g_1) = ArgLexMin(G_0, G_1, \pi_1^{-1}\pi_0)$
        \State $\pi := \pi_0 \cdot g_0$
        \State $G := IntersectStabilizerGroups(G_0,G_1)$
        \State \Return $(\pi, G)$
        \Else
        \State \Return \none
        \EndIf
        \EndProcedure
    \end{algorithmic}
\end{algorithm}

\begin{algorithm}
    \caption{Algorithm for finding the intersection $\braket{G_0}\cap \braket{G_1}$ of the groups generated by two stabilizer subgroup generating sets $G_0$ and $G_1$.}
    \label{alg:intersectstabilizergroups}
    \Output a generating set for $\langle G_0 \rangle \cap \langle G_1 \rangle$
    \begin{algorithmic}[1]
        \Procedure{IntersectStabilizerGroups}{stabilizer subgroup generating sets $G_0, G_1$}
        \State\textit{Use the Zassenhaus algorithm to compute the intersection modulo phase}
        \State $S:=\textsc{IntersectGroupsModuloPhase}(G_0,G_1)$%
        \label{algline:group-intersect-construct-intersection-mod-phase}
        \State $J, S_0, S_1:=\emptyset$
		\For{$P\in S$}    \label{algline:group-intersect-start-first-loop}
        	\State \textit{By abuse of language, we treat $P$ as a Pauli string with phase $+1$}
        	\State Find $\lambda\in \{\pm 1,\pm i\}$ such that $\lambda\cdot P\in \braket{G_0}$
        	\label{algline:group-intersect-membership-check}
        	\If{$\lambda \cdot P\in \braket{G_1}$}
        		\State $S_0:=S_0\cup\{\lambda \cdot P\}$
        	\ElsIf{$-\lambda\cdot P\in \braket{G_1}$}
       			\State $S_1:=S_1\cup \{\lambda \cdot P\}$
        	\EndIf
        \EndFor
        \label{algline:group-intersect-end-first-loop}
        \State $J:=S_0$
		\If {$\exists e\in S_1$}
			\For{$e'\in S_1\setminus \set{e}$}
	        	\label{algline:group-intersect-start-second-loop}
	        	\State $q:=e'\cdot e$ %
	        	\State $J:=J\cup \{q\}$
	        \EndFor
        \EndIf
        \label{algline:group-intersect-end-second-loop}
        
        \State \Return $J$
        \EndProcedure
    \end{algorithmic}
\end{algorithm}

\begin{algorithm}
    \caption{Algorithm for constructing a single isomorphism between the quantum states represented by two Pauli-\limdd~edges, each pointing to a canonical node.
    }
    \label{alg:getsingleisomorphism}
    \begin{algorithmic}[1]
        \Procedure{\getsingleisomorphism}{\Edge $\ledge Av$,  $\ledge Bw$ \textbf{with} $v, w$ reduced, \mbox{$A\neq 0 \vee B \neq 0$}}
        \If{$v = w \textbf{ and } A,B \neq 0 $}
        \State \Return $B \cdot A^{-1}$
        \EndIf
        \State \Return \none
        \EndProcedure
    \end{algorithmic}
\end{algorithm}

\subsection{Efficiently finding a minimal LIM by multiplying with stabilizers}
\label{sec:lexmin}

Here, we give $O(n^3)$ subroutines solving the following problem: given generators sets $G_0, G_1$ of stabilizer subgroups on $n$ qubits, and an $n$-qubit Pauli LIM $A$, determine $\min_{(g_0, g_1) \in \langle G_0 , G_1 \rangle} A \cdot g_0 \cdot g_1$, and also find the $g_0, g_1$ which minimize the expression.
We give an algorithm for finding both the minimum (\textsc{LexMin}) and the arguments of the minimum (\textsc{ArgLexMin}) in \autoref{alg:lexmin}.
The intuition behind the algorithms are the following two steps: first, the lexicographically minimum Pauli LIM \emph{modulo scalar} can easily be determined using the scalar-ignoring DivisionRemainder algorithm from \autoref{app:prelims-linear-algebra}.
Since in the lexicographic ordering, the scalar is least significant (\autoref{app:prelims-linear-algebra}), the resulting Pauli LIM has the same Pauli string as the the minimal Pauli LIM \emph{including scalar}.
We show below in \autoref{thm:pauli-group-means-pm-1} that if the scalar-ignoring minimization results in a Pauli LIM $\lambda P$, then the only other eligible LIM, if it exists, is $-\lambda P$.
Hence, in the next step, we only need to determine whether such LIM $-\lambda P$ exists and whether $- \lambda < \lambda$; if so, then $-\lambda P$ is the real minimal Pauli LIM $\in \langle G_0 \cup G_1\rangle$.

\begin{lemma}
	\label{thm:pauli-group-means-pm-1}
    Let $v_0$ and $v_1$ be \limdd nodes, $R$ a Pauli string and $\nu, \nu' \in \mathbb{C}$.
    Define $G = \Aut(v_0) \cup \Aut(v_1)$.
If $\nu R, \nu' R \in \langle G\rangle$, then $\nu = \pm \nu'$.
\end{lemma}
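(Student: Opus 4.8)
The plan is to reduce the claim to the single structural fact that the only scalar operators in the group $\langle G\rangle=\langle\Aut(v_0)\cup\Aut(v_1)\rangle$ are $\pm\id$. This reduction is immediate: since $R$ is an invertible Pauli string, the group element $(\nu R)(\nu' R)^{-1}=\nu(\nu')^{-1}\,R R^{-1}=\nu(\nu')^{-1}\id$ is a scalar multiple of the identity lying in $\langle G\rangle$ (here $\nu,\nu'\neq 0$ because $\langle G\rangle$ consists of invertible operators). Granting the fact, $\nu(\nu')^{-1}=\pm 1$, i.e.\ $\nu=\pm\nu'$.

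To prove the fact, first I would establish a normal form for the elements of $\langle G\rangle$. Recall from \autoref{app:prelims-linear-algebra} that every element of a stabilizer subgroup is $\pm P$ for a Pauli string $P$, and that $\Aut(v_0)$ and $\Aut(v_1)$ are each \emph{abelian}. Writing an arbitrary $w\in\langle G\rangle$ as a product of generators taken from $\Aut(v_0)\cup\Aut(v_1)$, I would bubble-sort the factors so that all $\Aut(v_0)$-factors come first and all $\Aut(v_1)$-factors come last. Every transposition of two Pauli operators costs only an overall sign $\pm 1$, since any two Pauli operators commute or anticommute. Because each $\Aut(v_i)$ is closed under multiplication, the sorted left block collapses to a single $a_0\in\Aut(v_0)$ and the right block to a single $a_1\in\Aut(v_1)$. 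Absorbing the $\pm1$ coefficients of $a_0,a_1$ together with the sorting sign into one global sign yields the normal form $w=\pm P_0P_1$ with $P_0,P_1$ Pauli strings.

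Finally, if $w=\mu\id$ is a scalar, then $P_0P_1=\pm\mu\id$ is proportional to the identity; but a product of two Pauli strings is proportional to $\id$ only when they agree on every qubit, and then $P_0P_1=P_0^2=\id$ with coefficient exactly $1$ (as $X^2=Y^2=Z^2=\id[2]$). Hence $\mu=\pm1$, which establishes the fact and the lemma. I expect the main obstacle to be conceptual rather than computational: $\Aut(v_0)\cup\Aut(v_1)$ generates a group that is neither abelian nor a stabilizer subgroup --- it may well contain $-\id$, and a priori one must rule out that it contains the order-$4$ phase $\pm i\,\id$, which would falsify the claim. Such factors of $i$ genuinely appear when multiplying two \emph{distinct} single-qubit Paulis (e.g.\ $XZ=-iY$), so the crux is to show they cannot survive in a scalar element. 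The normal form localizes all phase-$i$ danger inside $P_0P_1$, and the scalar condition forces $P_0=P_1$, where the product is exactly $\id$ and all factors of $i$ cancel; the only surviving freedom is the harmless global $\pm1$ that the lemma permits. I would also take brief care with generators lying in $\Aut(v_0)\cap\Aut(v_1)$ (they commute with every factor, so they may be placed in either block) and note that each stabilizer is self-inverse, so no explicit generator inverses enter the word.
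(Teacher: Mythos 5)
Your proof is correct, and it shares its backbone with the paper's own argument: both rest on the normal form stating that every element of $\langle G\rangle$ equals $\pm\, g_0 g_1$ with $g_0 \in \Aut(v_0)$, $g_1 \in \Aut(v_1)$ (the paper asserts this in one line from ``Pauli LIMs commute or anticommute''; your bubble-sort argument is that assertion spelled out, including the self-inverseness of stabilizer elements). Where you genuinely diverge is the concluding step. The paper writes $\nu R = (-1)^x g_0 g_1$ and $\nu' R = (-1)^y h_0 h_1$ and then \emph{squares} the resulting equation, invoking $g^2=\id$ for stabilizer elements to conclude $(\nu/\nu')^2=1$. You instead pass to the quotient $(\nu R)(\nu' R)^{-1} = (\nu/\nu')\,\id$, reduce the lemma to classifying the scalar elements of $\langle G\rangle$, and finish with the observation that two phase-free Pauli strings multiply to a scalar multiple of $\id$ only if they agree on every qubit, in which case the product is \emph{exactly} $\id$. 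Your route buys something concrete: the paper's squaring step implicitly uses $(g_0g_1)^2=\id$, but in fact $(g_0 g_1)^2 = \pm\id$, with the sign depending on whether $g_0$ and $g_1$ commute; taken literally, squaring only pins down $(\nu/\nu')^2 = \pm 1$, i.e.\ $\nu/\nu' \in \{\pm 1, \pm i\}$, and the $\pm i$ case --- precisely the ``order-$4$ phase'' you identify as the crux --- still needs to be excluded. Your argument excludes it directly, since forcing $P_0 = P_1$ makes all factors of $i$ cancel exactly. So your proof is, if anything, more watertight than the printed one, at the modest cost of being longer.
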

\begin{proof}
	We prove $g\in \braket{G}$ and $\lambda g\in\braket{G}$ implies $\lambda=\pm 1$.
    Since Pauli LIMs commute or anticommute, we can decompose both $g$ and $\lambda g$ as $g = (-1)^x g_0 g_1$ and $\lambda g = (-1)^y h_0 h_1$ for some $x, y \in \{0, 1\}$ and $g_0, h_0\in \Aut(v_0)$ and $g_1, h_1 \in \Aut(v_1)$.
    Combining these yields $\lambda (-1)^xg_0g_1=(-1)^yh_0h_1$.
    We recall that, if $g\in \Aut(\ket\phi)$ is a stabilizer of any state, then $g^2=\mathbb I$.
    Therefore, squaring both sides of the equation, we get $\lambda^2(g_0g_1)^2=(h_0h_1)^2$, so $\lambda^2\mathbb I=\mathbb I$, so $\lambda=\pm 1$.
\end{proof}

The central procedure in \autoref{alg:lexmin} is \textsc{ArgLexMin}, which, given a LIM $A$ and sets $G_0,G_1$ which generate stabilizer groups, finds $g_0\in \braket{G_0},g_1\in\braket{G_1}$ such that $A\cdot g_0\cdot g_1$ reaches its lexicographic minimum over all choices of $g_0,g_1$.
It first performs the scalar-ignoring minimization (\autoref{line:division-remainder}) to find $g_0,g_1$ modulo scalar.
The algorithm \textsc{LexMin} simply invokes \textsc{ArgLexMin} to get the arguments $g_0, g_1$ which yield the minimum and uses these to compute the actual minimum.

The subroutine \textsc{FindOpposite} finds an element $g \in G_0$ such that $-g \in G_1$, or infers that no such $g$ exists.
It does so in a similar fashion as \textsc{IntersectStabilizerGroups} from \autoref{sec:pauli-isomorphism-detection}: by conjugation with a suitably chosen unitary $U$, it maps $G_1$ to $\{Z_1, Z_2, \dots, Z_{|G_1|}\}$.
Analogously to our explanation of \textsc{IntersectStabilizerGroups}, the group generated by $UG_1 U^{\dagger}$ contains precisely all Pauli LIMs which satisfy the following three properties:
(i) the scalar is $1$;
(ii) its Pauli string has an $\id$ or $Z$ at positions $1, 2, \dots, |G_1|$;
(iii) its Pauli string has an $\id$ at positions $|G_1|+1, \dots, n$.
Therefore, the target $g$ only exists if there is a LIM in $\langle U G_0 U^{\dagger}\rangle$ which (i') has scalar $-1$ and satisfies properties (ii) and (iii).
To find such a $g$, we put $UG_0 U^{\dagger}$ in RREF form and check all resulting generators for properties (i'), (ii) and (iii).
(By definition of RREF, it suffices to check only the generators for this property)
If a generator $h$ satisfies these properties, we return $U^{\dagger} h U$ and $\none$ otherwise.
The algorithm requires $O(n^3)$ time to find $U$, the conversion $G \mapsto UGU^{\dagger}$ can be done in time $O(n^3)$, and $O(n)$ time is required for checking each of the $O(n^2)$ generators.
Hence the runtime of the overall algorithm is $O(n^3)$.

\begin{algorithm}
    \caption{
        Algorithms \textsc{LexMin} and \textsc{ArgLexMin} for computing the minimal element from the set $A \cdot \langle G_0\rangle \cdot \langle G_1\rangle=\set{Ag_0g_1 \mid g_0\in G_0,g_1\in G_1}$, where $A$ is a Pauli LIM and $G_0, G_1$ are generating sets for stabilizer subgroups.
        The algorithms make use of a subroutine \textsc{FindOpposite} for finding an element $g \in \langle G_0\rangle$ such that $-g \in \langle G_1\rangle$.
        A canonical choice for the \textsc{Rootlabel} (see \autoref{sec:applygate}) of an edge $e$ pointing to a node $v$ is $\textsc{LexMin}(G, \{\id\}, \lbl(e))$ where $G$ is a stabilizer generator group of $\Aut(v)$.
        \label{alg:lexmin}
    }
    \Output $\min_{(g_0, g_1) \in  \langle G_0 \cup G_1 \rangle} A \cdot g_0 \cdot g_1$
    \begin{algorithmic}[1]
        \Procedure{LexMin}{stabilizer subgroup generating sets $G_0, G_1$ and Pauli LIM $A$}
        \State $(g_0, g_1) := \textsc{ArgLexMin}(G_0, G_1, A)$
        \State \Return $A \cdot g_0 \cdot g_1$
        \EndProcedure
        \Statex 
             \Statex \textbf{output:} $\argmin_{g_0 \in G_0, g_1\in G_1} A \cdot g_0 \cdot g_1$
        \Procedure{ArgLexMin}{stabilizer subgroup generating sets $G_0, G_1$ and Pauli LIM $A$}
        \State $(g_0, g_1) := \displaystyle \argmin_{(g_0, g_1) \in \langle G_0 \cup G_1 \rangle} \set{h \mid h \propto A \cdot g_0 \cdot g_1 }$
        \Statex
        \Comment Using the scalar-ignoring DivisionRemainder algorithm from \autoref{app:prelims-linear-algebra},
        \label{line:division-remainder}
        \State $g' := \textsc{FindOpposite}(G_0, G_1, g_0, g_1)$
        \If{$g'$ is $\none$}
        \State \Return $(g_0, g_1)$
        \Else
        \State $h_0, h_1 := g_0 \cdot g', (-g') \cdot g_1$   
                \Comment{$g_0 g_1 = - h_0 h_1$}
        \If{$A\cdot h_0 \cdot h_1 <_{\text{lex}} A \cdot g_0 \cdot g_1$} \Return $(h_0,h_1)$
        \label{line:choose-smaller}
        \Else \ \Return $(g_0,g_1)$
        \EndIf
        \EndIf
        \EndProcedure
        \Statex
        \Statex \textbf{output:} $g\in G_0$ such that $-g \in G_1$, or \none~if no such $g$ exists
        \Procedure{FindOpposite}{stabilizer subgroup generating sets $G_0, G_1$}
        \State Compute $U$ s.t. $U G_1 U^{\dagger} = \{Z_1, Z_2, \dots, Z_{|G_1|}\}$, using Algorithm 2 from \cite{garcia2012efficient}
        \Statex
                \Comment{$Z_j$ is the $Z$ gate applied to qubit with index $j$}
        \State $H_0 := UG_0 U^{\dagger}$
        \State $H_0^{RREF} := H_0$ in RREF form
        \For{ $h \in H_0^{RREF}$}
        \If{$h$ satisfies all three of the following: (i) $h$ has scalar $-1$; the Pauli string of $h$ (ii) contains only $\id$ or $Z$ at positions $1, 2, \dots, |G_1|$, and (iii) only $\id$ at positions $|G_1|+1, \dots, n$}
        \State \Return $U^{\dagger} h U$
        \EndIf
        \EndFor
        \State \Return \none
        \EndProcedure
    \end{algorithmic}
\end{algorithm}

\section{Measuring an arbitrary qubit}
\label{sec:advanced-algorithms}
\label{subsec:measure-arbitrary-qubit}

\autoref{alg:measure-arbitrary-qubit} allows one to measure a given qubit.
Specifically, given a quantum state $\ket e$ represented by a \limdd edge $e$, a qubit index $k$ and an outcome $b\in \{0,1\}$, it computes the probability of observing $\ket{b}$ when measuring the $k$-th significant qubit of $\ket e$.
The algorithm proceeds by traversing the \limdd with root edge $e$ at \autoref{l:traverse1}.
Like \autoref{alg:measurement-top-qubit}, which measured the top qubit, this algorithm finds the probability of a given outcome by computing the squared norm of the state when the $k$-th qubit is projected onto $\ket 0$, or $\ket 1$.
The case that is added, relative to \autoref{alg:measurement-top-qubit}, is the case when $n>k$, in which case it calls the procedure \textsc{SquaredNormProjected}.
On input $e,y,k$, the procedure \textsc{SquaredNormProjected} outputs the squared norm of $\Pi_k^y\ket{e}$, where $\Pi_k^y=\id[n-k] \otimes \ket{y}\bra{y}\otimes \id[k-1]$ is the projector which projects the $k$-th qubit onto $\ket{y}$.

\begin{algorithm}[t!]
	\caption{Compute the probability of observing $\ket{y}$ when measuring the $k$-th qubit of the state $\ket{e}$. Here $e$ is given as \limdd on $n$ qubits, $y$ is given as a bit, and $k$ is an integer index.
	For example, to measure the top-most qubit, one calls $\textsc{Measure}(e,0,n)$.
	The procedure $\textsc{SquaredNorm}(e,y,k)$ computes the scalar $\bra{e}(\mathbb I\otimes \ket{y}\bra{y}\otimes \mathbb I)\ket{e}$, i.e., computes the squared norm of the state $\ket{e}$ after the $k$-th qubit is projected to $\ket{y}$. For readability, we omit calls to the cache, which implement dynamic programming.}
	\label{alg:measure-arbitrary-qubit}
	\begin{algorithmic}[1]
		\Procedure{MeasurementProbability}{\Edge
		          $\ledge[e] {\lambda P_n\otimes P^\prime}v$, $y\in \{0,1\}$, 
		          $k\in [1 \dots \index(v)]$}
		\If{$n=k$}
			\State $p_0:=\textsc{SquaredNorm}(\follow 0e)$
			\State $p_1:=\textsc{SquaredNorm}(\follow 1e)$
			\State \Return $p_j/(p_0+p_1)$ \textbf{where} $j=0$ if $P_n\in\{\mathbb I, Z\}$ and $j=1$ if $P_n \in\set{X,Y}$ 
		\Else
			\State $p_0:=\textsc{SquaredNormProjected}(\follow 0e, y, k)$ \label{l:traverse1}
			\State $p_1:=\textsc{SquaredNormProjected}(\follow 1e, y, k)$
			\State \Return $(p_0+p_1)/\textsc{SquaredNorm}(e)$
		\EndIf
		\EndProcedure
		\Procedure{SquaredNorm}{$\Edge \ledge{\lambda P}{v}$}
			\If{$n=0$}
			\Return $|\lambda|^2$
			\EndIf
			\State $s:=\textsc{Add}(\textsc{SquaredNorm}(\follow 0{\ledge \id v}),\textsc{SquaredNorm}(\follow 1{\ledge \id v}))$
			\State \Return $|\lambda|^2s$
		\EndProcedure
		\Procedure{SquaredNormProjected}{\Edge $\ledge[e] {\lambda P_n\otimes P'}v$, $y\in\{0,1\}$, $k\in [1\dots \index(v)]$}
		\State $b:=(P_n \in\set{X,Y})$
						         \Comment{i.e., $b = 1$ iff $P_n$ is Anti-diagonal}
			\If{$n=0$}
				\State \Return $|\lambda|^2$
			\ElsIf{$n=k$}
				\State \Return $\textsc{SquaredNorm}(\follow{b\oplus y}{e})$
			\Else
				\State $\alpha_0:=\textsc{SquaredNormProjected}(\follow 0{\ledge {\mathbb I}v}, b\oplus y, k)$
				\State $\alpha_1:=\textsc{SquaredNormProjected}(\follow 1{\ledge {\mathbb I}v}, b\oplus y, k)$
				\State \Return $|\lambda|^2\cdot (\alpha_0+\alpha_1)$
			\EndIf
		\EndProcedure
	\end{algorithmic}
\end{algorithm}

After measurement of a qubit $k$, a quantum state is typically projected to $\ket 0$ or $\ket 1$
 ($b=0$ or $b=1$) on that qubit, depending on the outcome.
\autoref{alg:measure-arbitrary-qubit-update2} realizes this.
It does so by traversing the \limdd until a node $v$ with $\index(v) = k$ is reached.
It then returns an edge to a new node by calling $\makeedge(\follow 0e, 0)$ to project onto $\ket 0$ or $\makeedge(0, \follow 1e)$ to project onto $\ket 1$, on \autoref{algline:project-project},
recreating a node on level $k$ in the backtrack on \autoref{l:project-bt}.
The projection operator $\Pi_k^b$ commutes with
any LIM $P$ when $P_k$ is a diagonal operator (i.e., $P_k\in \{\id[2],Z\}$).
Otherwise, if $P_k$ is an antidiagonal operator (i.e, $P_k\in \{X,Y\}$), have $\Pi_k^b\cdot P=P\Pi_k^{(1-b)}$.
The algorithm applies this correction on \autoref{l:project-diag2}.
The resulting state should still be normalized as shown in \autoref{sec:measurement}.

\begin{algorithm}
    \begin{algorithmic}[1]
        \Procedure{\project}{\Edge $\ledge {\lambda P_n\otimes \cdots \otimes P_1}v$, $k\in [1 \dots \index(v)]$, $b\in\set{0,1}$}
        \State $b' :=  x \oplus b $ \textbf{where} $x=0$ if $P_k\in\set{\mathbb I,Z}$ and $x=1$ if $P_k\in\set{X,Y}$
        \label{l:project-diag2}  \Comment{flip $b$ if $P_k$ is anti-diagonal}         
        \State \textbf{if} $(v,k, b') \in \cache$ \textbf{then} \Return $\cache[v,k, b']$
		\State $n := \index(v)$
         \If{$n = k$}
                \State $e := \makeedge((1- b') \cdot \Low_v,~~ b' \cdot \High_v)$ 
                                                    \Comment{Project $\ket v$ to $\ket{b'}\bra{b'}\otimes \id[2]^{\otimes n-1}$}   
                     \label{algline:project-project}
            \Else \Comment{$n \neq k$:}
                \State $e :=\textsc{MakeEdge}(\project(\Low_v,k, b'),
                                            \project(\High_v,k,b'))$ \label{l:project-bt}
             \EndIf
                 \Stateh $\cache[v,k, b'] := e$
                 \State \Return $e$
                \EndProcedure
        \end{algorithmic}
        \caption{Project the state given by \limdd \ledge Av to state $\ket b$ for qubit $k$, i.e., produce a \limdd representing the state
           $(\id[n-k] \otimes \ket{b}\bra{b}\otimes \id[k-1])\cdot A\ket{v}$, with $A=\lambda P_n\otimes \cdots \otimes P_1$.}
        \label{alg:measure-arbitrary-qubit-update2}
\end{algorithm}

\section{\limdds prepare the W state efficiently}
\label{sec:efficient-w-state-preparation}

In this section, we prove \autoref{thm:limdds-are-fast-for-wn}. To this end, we show that \limdds can efficiently simulate a circuit family given by McClung \cite{mcclung2020constructionsOfWStates}, which prepares the $\ket W$ state when initialized to the $\ket 0$ state.
We thereby show a separation between \limdd and the Clifford+$T$ simulator, as explained in \autoref{sec:simulation:cliffordt}.
Figure \autoref{fig:circuit-prepare-W} shows the circuit for the case of $8$ qubits.

\begin{figure}[b!]
	\begin{center}
		\includegraphics[width=0.8\linewidth]{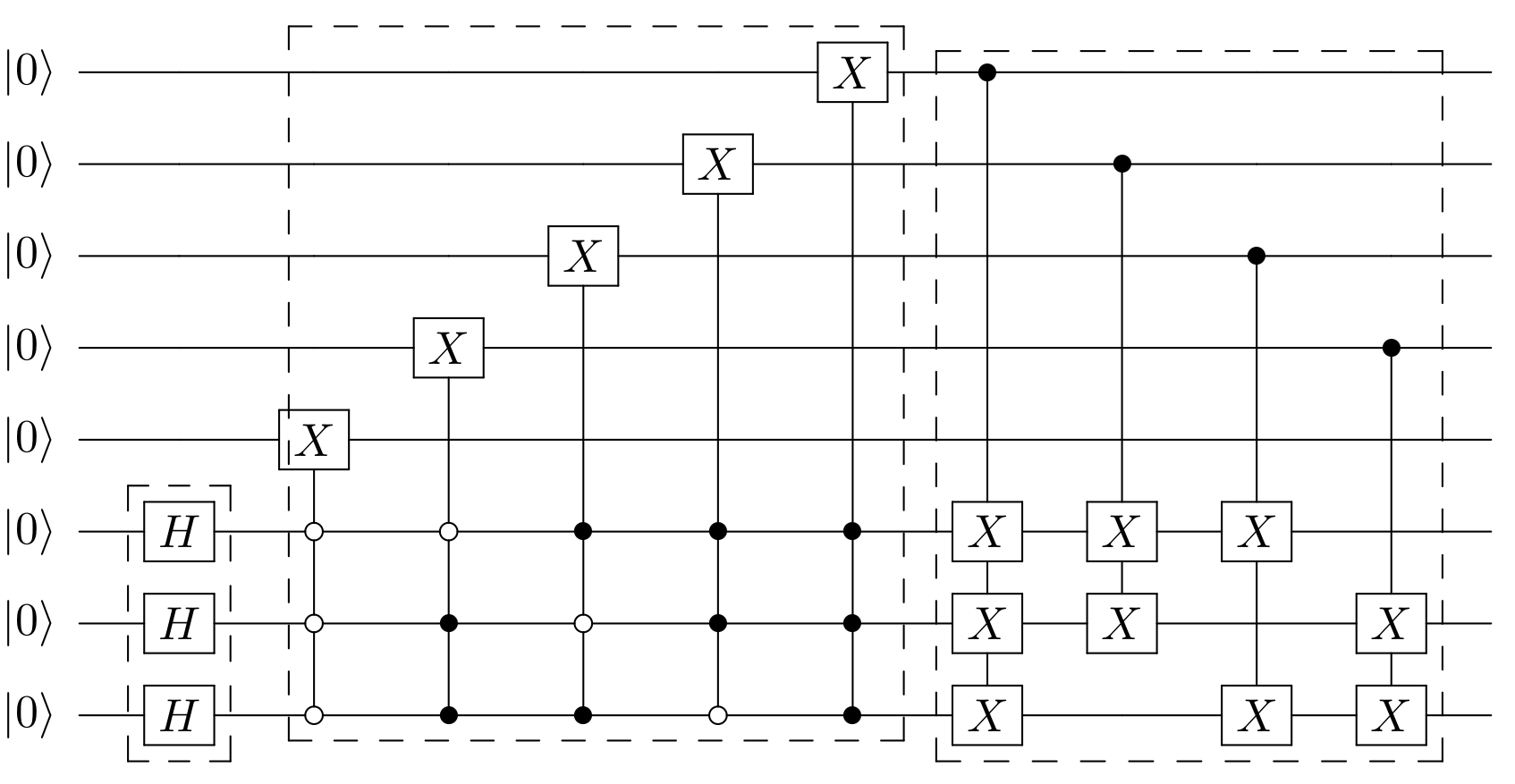}
	\end{center}
	\caption{Reproduced from McClung \cite{mcclung2020constructionsOfWStates}.
	A circuit on eight qubits $(n=8)$ which takes as input the ${\ket 0}^{\otimes 8}$ state and outputs the $\ket{W_8}$ state.
	In the general case, it contains $\log n$ Hadamard gates, and its Controlled-$X$ gates act on one target qubit and at most $\log n$ control qubits.}
	\label{fig:circuit-prepare-W}
\end{figure}

\thmw*
\begin{proof}
The proof outline is as follows.
First, we establish that the \limdd of each intermediate state (\autoref{thm:intermediate-states-have-small-limdd}), as well as of each gate (\autoref{thm:gates-have-small-limdd}), has polynomial size.
Second, we establish that the algorithms presented in \autoref{sec:quantum-simulation} can apply each gate to the intermediate state in polynomial time (\autoref{thm:applygate-polynomial-time}).
To this end, we observe that the circuit only produces relatively simple intermediate states.
Specifically, each intermediate and output state is of the form $\ket{\psi_t}=\frac{1}{\sqrt n}\sum_{k=1}^{n}\ket{x_k}$ where the $x_k\in \{0,1\}^n$ are computational basis vectors (\autoref{thm:intermediate-state-has-small-comprank}).
For example, the output state has $\ket{x_k}=\ket{0}^{k-1}\ket 1\ket{0}^{\otimes n-k}$.
The main technical tool we will use to reason about the size of the \limdds of these intermediate states, are the \emph{subfunction rank} and \emph{computational basis rank} of a state.
Both these measures are upper bounds of the size of a \limdd (in \autoref{thm:limdd-width-bounded-by-comprank}), and also allow us to upper bound the time taken by the \textsc{ApplyGate} and \textsc{Add} algorithms (in \autoref{thm:applygate-polynomial-calls} for \textsc{ApplyGate} and \autoref{thm:add-calls-bounded-by-subfunction-rank} \textsc{Add}).

The theorem follows from \autoref{thm:applygate-polynomial-time} and \autoref{cor:w}.
\end{proof}

\autoref{fig:circuit-prepare-W} shows the circuit for the case of $n=8$ qubits.
For convenience and without loss of generality, we only treat the case when the number of qubits is a power of $2$, since the circuit is simplest in that case.
In general, the circuit works as follows.
The qubits are divided into two registers; register $A$, with $\log n$ qubits, and register $B$, with the remaining $n-\log n$ qubits.
First, the circuit applies a Hadamard gate to each qubit in register $A$, to bring the state to the superposition $\ket +^{\otimes \log n}\ket 0^{n-\log n}$.
Then it applies $n-\log n$ Controlled-$X$ gates, where, in each gate, each qubit of register $A$ acts as the control qubits and one qubit in register $B$ is the target qubit.
Lastly, it applies $n-\log n$ Controlled-$X$ gates, where, in each gate, one qubit in register $B$ is the control qubit and one or more qubits in register $A$ are the target qubits.
Each of the three groups of gates is highlighted in a dashed rectangle in \autoref{fig:circuit-prepare-W}.
On input $\ket 0^{\otimes n}$, the circuit's final state is $\ket{W_n}$.
We emphasize that the Controlled-$X$ gates are permutation gates (i.e., their matrices are permutation matrices).
Therefore, these gates do not influence the number of non-zero computational basis state amplitudes of the intermediate states.
We refer to the $t$-th gate of this circuit as $U_t$, and the $t$-th intermediate state as $\ket{\psi_t}$, so that $\ket{\psi_{t+1}}=U_t\ket{\psi_{t}}$ and $\ket{\psi_0}=\ket 0$ is the initial state.

\newcommand{\comprank}[1]{\ensuremath{\chi_{\textnormal{comp}}(#1)}}
\newcommand{\subrankspecific}[2]{\ensuremath{\chi_{\textnormal{sub}}(#1,#2)}}
\newcommand{\subrank}[1]{\ensuremath{\chi_{\textnormal{sub}}(#1)}}

We refer to the number of computational basis states with nonzero amplitude as a state's \emph{computational basis rank}, denoted $\comprank{\ket\psi}$.
\begin{definition}{(Computational basis rank)}
	\label{def:computational-basis-rank}
	Let $\ket{\psi}=\sum_{x\in \{0,1\}^n}\alpha(x)\ket{x}$ be a quantum state defined by the amplitude function $\alpha\colon \{0,1\}^n\to\mathbb C$.
	Then the \emph{computational basis rank} of $\ket{\psi}$ is $\comprank{\ket\psi}=|\set{x\mid\alpha(x)\ne 0}|$, the number of nonzero computational basis amplitudes.
\end{definition}
Recall that, for a given function $\alpha\colon \{0,1\}^{n}\to\mathbb C$, a string $a\in \{0,1\}^\ell$ induces a \emph{subfunction} $\alpha_y\colon \{0,1\}^{n-\ell}\to\mathbb C$, defined as $\alpha_y(x)=\alpha(y,x)$.
We refer to the number of subfunctions of a state's amplitude function as its \emph{subfunction rank}.
The following definition makes this more precise.
\begin{definition}{(Subfunction rank)}
	Let $\ket{\psi}=\sum_{x\in \{0,1\}^n}\alpha^\psi(x)\ket{x}$ be a quantum state defined by the amplitude function $\alpha^\psi\colon \{0,1\}^n\to\mathbb C$, as above.
	Let $\subrankspecific{\ket\psi}{\ell}$ be the number of unique non-zero subfunctions induced by strings of length $\ell$, as follows,
	\begin{align}
	\subrankspecific{\ket\psi}{\ell}=|\set{\alpha^\psi_y\colon \{0,1\}^{n-\ell}\to\mathbb C\mid\alpha_y\ne 0,y\in\{0,1\}^\ell }|
	\end{align} 
	We define the \emph{subfunction rank} of $\ket\psi$ as $\subrank{\ket\psi}=\max_{\ell=0,\ldots n}\subrankspecific{\ket\psi}{\ell}$.
	We extend these definitions in the natural way for an $n$-qubit matrix $U=\sum_{r,c\in\{0,1\}^n}\alpha^U(r,c)\ket{r}\bra{c}$ defined by the function $\alpha^U\colon \{0,1\}^{2n}\to\mathbb C$.
\end{definition}

It is easy to check that $\subrank{\ket\psi} \leq \comprank{\ket\psi}$ holds for any state.

For the next lemma, we use the notion of a \emph{prefix} of a \limdd node.
This lemma will serve as a tool which allows us to show that a \limdd is small when its computational basis rank is low.
We apply this tool to the intermediate states of the circuit in \autoref{thm:intermediate-states-have-small-limdd}.
\begin{definition}[Prefix of a \limdd node]
	For a given string $x\in \{0,1\}^\ell$, consider the path traversed by the $\follow x{\ledge Rr}$ subroutine, which starts at the diagram's root edge and ends at a node $v$ on level $\ell$.
	We will say that $x$ is a \emph{prefix} of the node $v$.
	We let $\text{Labels}(x)$ be the product of the LIMs on the edges of this path (i.e., including the root edge).
	The set of prefixes of a node $v$ is denoted $\text{pre}(v)$.
\end{definition}

\begin{lemma}
	\label{thm:limdd-width-bounded-by-comprank}
	If a \limdd represents the state $\ket\phi$, then its width at any given level (i.e., the number of nodes at that level) is at most $\chi_{\textnormal{comp}}(\ket\phi)$.
\end{lemma}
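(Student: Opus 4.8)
The plan is to construct an injection from the set of nodes at any fixed level into the support of $\ket\phi$, i.e. the set of computational-basis strings on which $\ket\phi$ has nonzero amplitude; since the latter set has size exactly $\comprank{\ket\phi}$ by \autoref{def:computational-basis-rank}, the width bound follows immediately. First I would fix a level and let $v_1,\dots,v_w$ be its nodes. Every node in the diagram is reachable from the root edge (an unreachable node contributes nothing to $\ket\phi$ and may be discarded), so by the notion of prefix defined above each $v_i$ has at least one prefix; moreover, because the diagram is leveled, all these prefixes share a common length $p$, and $p + \index(v_i) = n$ for every node at this level. For each $v_i$ I would fix one prefix $x_i \in \{0,1\}^p$.

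The key step is to complete each prefix to a full basis string witnessing a nonzero amplitude. By the definition of $\follow{}{}$, following $x_i$ from the root edge $e_{\text{root}}$ yields an edge whose semantics is $\ket{\follow{x_i}{e_{\text{root}}}} = A_i \ket{v_i}$ with $A_i = \text{Labels}(x_i)$, and this edge represents precisely the subfunction $\ket{\phi_{x_i}}$, so that $\braket{x_i y | \phi} = \bra{y} A_i \ket{v_i}$ for every $y \in \{0,1\}^{\index(v_i)}$. Now $A_i$ is a product of LIMs and is therefore invertible, while $\ket{v_i} \neq 0$ because a reduced node never represents the zero vector. Hence $A_i \ket{v_i} \neq 0$, so there exists some $y_i$ with $\braket{x_i y_i | \phi} = \bra{y_i} A_i \ket{v_i} \neq 0$. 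I would then define the map $v_i \mapsto x_i y_i$, which by construction lands in the support of $\ket\phi$.

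Finally I would verify injectivity. Suppose $v_i \neq v_j$ but $x_i y_i = x_j y_j$; since both prefixes have length $p$, this forces $x_i = x_j$. But $\follow{}{}$ is deterministic, so following a given prefix from the root reaches a unique node, whence $x_i = x_j$ implies $v_i = v_j$, a contradiction. Thus the map is injective and $w \leq \comprank{\ket\phi}$, as claimed. The argument is essentially bookkeeping; the only two points I would be careful to state explicitly are that the accumulated label $A_i$ is invertible, so that a nonzero node always admits a nonzero completion, and that path-following is deterministic, so that distinct nodes are already separated by their prefixes alone. Both facts are immediate from \autoref{def:limdd} together with the invertibility of LIMs.
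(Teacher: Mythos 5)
Your overall strategy is sound and is essentially the same prefix-counting argument the paper uses (you phrase it as an injection from nodes into the support of $\ket\phi$, the paper phrases it as a decomposition of the state into contributions with pairwise disjoint support), but there is a genuine gap at the step you dismiss as immediate. You claim that the accumulated label $A_i = \text{Labels}(x_i)$ along an \emph{arbitrarily fixed} prefix is invertible because it is ``a product of LIMs.'' This is false: by \autoref{def:limdd}, edge labels live in $G$-LIM $\cup\,\{0\}$, so an edge (in particular a high edge) may carry the label $0$. If your chosen prefix $x_i$ traverses such an edge, then $A_i = 0$, so $A_i\ket{v_i} = 0$ and no completion $y_i$ with nonzero amplitude exists. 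A concrete counterexample: the state $\ket{0}\otimes\ket{\psi}$ is represented by a root node whose low edge is labeled $\mathbb I$ and whose high edge is labeled $0$, both pointing to the node $v$ representing $\ket{\psi}$; choosing the prefix $x=1$ for $v$ gives $A = 0$ and your map is undefined on $v$.

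The repair is exactly the point the paper spends most of its proof on: one must show that every node admits \emph{at least one} prefix whose path carries only nonzero labels, and fix \emph{that} prefix. This requires the reduction rules of \autoref{def:reduced-limdd}, not just \autoref{def:limdd}: by the zero-edges rule, an incoming $0$-labeled edge must be a high edge whose sibling low edge points to the same node with label $\mathbb I$ (low factoring), so every node has an incoming nonzero edge, and an induction gives a nonzero path up to the root. Note also that your second ``immediate'' fact --- that a reduced node never represents the zero vector --- likewise rests on the low-factoring rule plus induction, as the paper spells out. So the lemma is really a statement about reduced \limdds, and both facts you defer to \autoref{def:limdd} actually need \autoref{def:reduced-limdd}. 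With the prefix chosen to avoid zero labels, your injection (including the injectivity argument, which is correct as stated) goes through and matches the paper's proof.
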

\begin{proof}
	For notational convenience, let us number the levels so that the root node is on level $0$, its children are on level $1$, and so on, with the Leaf on level $n$ (contrary to \autoref{fig:qmdd-isoqmdd-exposition}).
	Let $r$ be the root node of the \limdd, and $R$ the root edge's label.
	By construction of a \limdd, the state represented by the \limdd can be expressed as follows, for any level $\ell\geq 0$,
	\begin{align}
	\label{eq:root-node-as-follow-superposition}
	R\ket r = \sum_{x\in \{0,1\}^\ell}\ket{x}\otimes \follow x{\ledge Rr}
	\end{align}
	Since $\ledge Rr$ is the root of our diagram, if $x$ is a prefix of $v$, then
	\begin{align}
	\label{eq:state-as-prefix-superposition}
	\follow x{\ledge Rr} = \text{Labels}(x)\cdot \ket{v}
	\end{align}
	A string $x\in\{0,1\}^\ell$ can be a prefix of only one node; consequently, the prefix sets of two nodes on the same level are disjoint, i.e., $\text{pre}(v_p)\cap \text{pre}(v_q)=\emptyset$ for $p\ne q$.
	Moreover, each string $x$ is a prefix of \emph{some} node on level $\ell$ (namely, simply the node at which the $\follow x{\ledge Rr}$ subroutine arrives).
	Say that the $\ell$-th level contains $m$ nodes, $v_1,\ldots, v_m$.
	Therefore, the sets $\text{pre}(v_1),\ldots, \text{pre}(v_m)$ partition the set $\{0,1\}^\ell$.
	Therefore, by putting \autoref{eq:state-as-prefix-superposition} and \autoref{eq:root-node-as-follow-superposition} together, we can express the root node's state in terms of the nodes $v_1,\ldots, v_m$ on level $\ell$:
	\begin{align}
	R\ket r = & \sum_{k=1}^m \sum_{x\in \text{pre}(v_k)}\ket x\otimes \follow x{\ledge Rr} \\
	= & \sum_{k=1}^m \sum_{x\in \text{pre}(v_k)}\ket x\otimes \text{Labels}(x)\cdot \ket{v_k}
	\end{align}
	We now show that each term $\sum_{x\in \text{pre}(v_k)}\ket{x}\otimes \text{Labels}(x)\cdot\ket {v_k}$ contributes a non-zero vector.
	It then follows that the state has computational basis rank at least $m$, since these terms are vectors with pairwise disjoint support, since the sets $\text{pre}(v_k)$ are pairwise disjoint.
	Specifically, we show that each node has at least one prefix $x$ such that $\text{Labels}(x)\cdot\ket{v}$ is not the all-zero vector.
	In principle, this can fail in one of three ways: either $v$ has no prefixes, or all prefixes $x\in \text{pre}(v_k)$ have $\text{Labels}(x)=0$ because the path contains an edge labeled with the $0$ LIM, or the node $v$ represents the all-zero vector (i.e., $\ket v=\vec{0}$).
	First, we note that each node has at least one prefix, since each node is reachable from the root, as a \limdd is a connected graph.
	Second, due to the zero edges rule (see \autoref{def:reduced-limdd}), for any node, at least one of its prefixes has only non-zero LIMs on the edges.
	Namely, each node $v$ has at least one incoming edge labeled with a non-zero LIM, since, if it has an incoming edge from node $w$ labeled with $0$, then this must be the high edge of $w$ and by the zero edges rule the low edge of $w$ must also point to $v$ and moreover must be labeled with $\mathbb I$ by the low factoring rule.
	Together, via a simple inductive argument, there must be at least one non-zero path from $v$ to the root.
	Lastly, no node represents the all-zero vector, due to the low factoring rule (in \autoref{def:reduced-limdd}).
	Namely, if $v$ is a node, then by the low factoring rule, the low edge has label $\mathbb I$.
	Therefore, if this edge points to node $v_0$, and the high edge is $\ledge A{v_1}$, then the node $v$ represents $\ket{v} = \ket{0}\ket{v_0}+\ket{1}A\ket{v_1}$ with possibly $A=0$, so, if $\ket{v_0}\ne \vec{0}$, then $\ket v\ne \vec{0}$.
	An argument by induction now shows that no node in the reduced \limdd represents the all-zero vector.

	Therefore, each node has at least one prefix $x$ such that $\follow x{\ledge Rr}\ne \vec 0$.
	We conclude that the equation above contains at least $m$ non-zero contributions.
	Hence $m\leq \comprank{R\ket r}$, at any level $0\leq \ell\leq n$.
\end{proof}

\begin{lemma}
	\label{thm:intermediate-state-has-small-comprank}
	Each intermediate state in the circuit in \autoref{fig:circuit-prepare-W}  (with $n=2^c$) has $\comprank{\ket\psi}\leq n$.
\end{lemma}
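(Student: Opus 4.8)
The plan is to track the computational basis rank $\comprank{\cdot}$ gate-by-gate through the circuit, exploiting the layered structure noted in the text: the circuit first applies $\log n$ Hadamard gates (all to register $A$), and every subsequent gate is a controlled-$X$, which is a permutation gate.

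First I would establish two elementary facts about how $\comprank{\cdot}$ behaves under the only two gate types appearing in the circuit. (i) Applying a single-qubit Hadamard gate to qubit $i$ can at most double the computational basis rank: grouping the computational basis states into pairs $(x_0, x_1)$ differing only in bit $i$, the two corresponding output amplitudes are $\frac{1}{\sqrt 2}(\alpha(x_0) \pm \alpha(x_1))$, of which at most two are nonzero, and they are nonzero only if at least one of $\alpha(x_0), \alpha(x_1)$ was nonzero. Hence the number of nonzero amplitudes at most doubles, interference only ever decreasing it. (ii) Applying a permutation gate $U$, in particular any controlled-$X$ gate, leaves $\comprank{\cdot}$ unchanged, since $U$ merely relabels the computational basis states $\ket{x} \mapsto \ket{\pi(x)}$ for a permutation $\pi$, so nonzero amplitudes map bijectively to nonzero amplitudes.

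Next I would apply these facts to the circuit's three phases. The initial state $\ket{0}^{\otimes n} = \ket{\psi_0}$ has $\comprank{\ket{\psi_0}} = 1$. The first $\log n$ gates are the Hadamards on the $\log n$ qubits of register $A$, so by fact (i) every intermediate state $\ket{\psi_t}$ obtained after $t \leq \log n$ of these Hadamards satisfies $\comprank{\ket{\psi_t}} \leq 2^t \leq 2^{\log n} = n$. All remaining gates are controlled-$X$ gates, which by fact (ii) preserve the computational basis rank; hence every intermediate state produced after the Hadamard layer also satisfies $\comprank{\ket{\psi_t}} \leq n$. Combining the two phases yields $\comprank{\ket{\psi_t}} \leq n$ for every intermediate state, as claimed.

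The argument is short, and the only points requiring care are in fact (i): one must check that possible cancellations $\alpha(x_0) = \pm\alpha(x_1)$ can only reduce the count, so that the doubling bound survives as an inequality; and one must note that $\log n = \log_2 n$ is an integer (since $n$ is a power of two), so that exactly $\log n$ Hadamards suffice to bring the rank up to at most $n$ before any permutation gate is applied. I do not expect any genuine obstacle beyond this bookkeeping.
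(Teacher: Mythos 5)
Your proposal is correct and follows essentially the same route as the paper: bound the rank after the Hadamard layer by $n$, then observe that every controlled-$X$ gate is a permutation of computational basis states and hence cannot increase the count. The only cosmetic difference is that the paper computes the post-Hadamard state exactly as $\ket{+}^{\otimes \log n}\otimes\ket{0}^{\otimes n-\log n}$ (exactly $n$ nonzero amplitudes), whereas you use a per-gate doubling bound for Hadamards, which is a slightly more general but equivalent way to handle that layer.
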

\begin{proof}
	The initial state is $\ket{\psi_0}=\ket0^{\otimes n}$, which is a computational basis state, so $\comprank{\psi_0}=1$.
	The first $\log n$ gates are Hadamard gates, which produce the state 
	\begin{align}
	\ket{\psi_{\log n}}=H^{\otimes \log n}\otimes \mathbb I^{n-\log n}\ket 0 = \ket+^{\otimes \log n}\otimes \ket 0^{\otimes n-\log n}=\frac 1{\sqrt n}\sum_{x=0}^{n-1}\ket x\ket 0^{\otimes n-\log n}
	\end{align}
	This is a superposition of $n$ computational basis states, so we have $\comprank{\ket{\psi_{\log n}}}=n$.
	All subsequent gates are controlled-$X$ gates; these gates permute the computational basis states, but they do not increase their number.
\end{proof}

\begin{lemma}
	\label{thm:intermediate-states-have-small-limdd}
	The reduced \limdd of each intermediate state in the circuit in \autoref{fig:circuit-prepare-W} has polynomial size.
\end{lemma}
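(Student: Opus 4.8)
The plan is to assemble the two immediately preceding lemmas into a per-level width bound and then multiply by the number of levels. First I would invoke \autoref{thm:intermediate-state-has-small-comprank}, which guarantees that every intermediate state $\ket{\psi_t}$ produced by the circuit in \autoref{fig:circuit-prepare-W} satisfies $\comprank{\ket{\psi_t}}\leq n$; the key structural fact here is that the only amplitude-creating gates are the $\log n$ Hadamards (which produce a superposition of $n$ basis states), while all remaining controlled-$X$ gates are permutations and therefore cannot increase the number of nonzero basis amplitudes. Next I would apply \autoref{thm:limdd-width-bounded-by-comprank}, which bounds the width of the reduced \limdd at any level by the computational basis rank of the represented state. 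Combining the two gives that the reduced \limdd of $\ket{\psi_t}$ has at most $n$ nodes on each level.

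It then remains to count the levels and conclude. Because the \limdd represents an $n$-qubit state, and because the edge rule of \autoref{def:reduced-limdd} forces every edge $(v,w)$ to satisfy $\index(v)=\index(w)+1$ (so no level is skipped), the diagram has exactly $n$ levels of internal nodes, indexed $1,\ldots,n$, together with the single shared leaf. Multiplying the per-level bound of $n$ by the $n$ levels yields a total of at most $n^2+1=O(n^2)$ nodes, which is polynomial in $n$. Since the circuit consists of only $O(n)$ gates, this bound holds simultaneously for each of the polynomially many intermediate states, establishing the claim.

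The main point I expect to require care is not in this lemma itself —- which is essentially a one-line combination of the two prior results —- but in making sure the hypotheses of \autoref{thm:limdd-width-bounded-by-comprank} are genuinely met, namely that we are speaking of the \emph{reduced} \limdd and that ``size'' is measured as the number of nodes. Both are immediate from the statement: \autoref{thm:limdd-width-bounded-by-comprank} is proved for reduced diagrams (it relies on the zero-edge and low-factoring rules to ensure each node contributes a nonzero term to the computational basis expansion), and the absence of a level-skipping (``redundant test'') reduction rule in \autoref{def:reduced-limdd} is precisely what pins the level count at $n$. Consequently the argument is robust and no further technical machinery is needed beyond the two cited lemmas.
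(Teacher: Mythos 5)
Your proposal is correct and follows essentially the same route as the paper's own proof: bound the width of each level by the computational basis rank via \autoref{thm:limdd-width-bounded-by-comprank}, invoke \autoref{thm:intermediate-state-has-small-comprank} to get $\comprank{\ket{\psi_t}}\leq n$, and multiply by the $n$ levels. Your additional remarks on why the level count is exactly $n$ and why the hypotheses of the width lemma hold are sound but not needed beyond what the paper already does.
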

\begin{proof}
	By \autoref{thm:limdd-width-bounded-by-comprank}, the width of a \limdd representing $\ket\phi$ is at most $\chi_{\textnormal{comp}}(\ket\phi)$ at any level.
	Since there are $n$ levels, the total size is at most $n\chi_{\textnormal{comp}}(\ket\phi)$.
	By \autoref{thm:intermediate-state-has-small-comprank}, the intermediate states in question have polynomial $\chi_{\textnormal{comp}}$, so the result follows.
\end{proof}

\begin{lemma}
	\label{thm:gates-have-small-limdd}
	The \limdd of each gate in the circuit in \autoref{fig:circuit-prepare-W}  (with $n=2^c$) has polynomial size.
\end{lemma}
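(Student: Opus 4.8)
The plan is to bound the size of the matrix-\limdd (in the sense of \autoref{def:limdd-as-matrix}) of each gate by bounding its \emph{subfunction rank}. Recall that every gate $U_t$ in McClung's circuit is of one of three kinds: a single-qubit Hadamard on a register-$A$ qubit (tensored with the identity on all other qubits); a multiply-controlled $X$ whose $\le \log n$ control qubits lie in register $A$ and whose single target lies in register $B$; or a multi-target $X$ whose single control lies in register $B$ and whose targets lie in register $A$. I will show that each has $\subrank{U_t}=\oh(1)$. Since a reduced \limdd has exactly one node per isomorphism class of subfunctions, the width of the matrix-\limdd of $U_t$ at any of its $2n$ levels is at most the number of distinct subfunctions at that level, hence at most $\subrank{U_t}$; summing over the $2n$ levels gives a size bound of $2n\cdot\subrank{U_t}=\oh(n)$, which is polynomial. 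Note that the computational-basis rank $\comprank{U_t}$ is useless here, since a permutation gate has $2^n$ nonzero entries; the subfunction rank is the right measure, exactly as announced before \autoref{thm:limdd-width-bounded-by-comprank}.

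For the Hadamard case, the interleaved amplitude function $u(r_1,c_1,\ldots,r_n,c_n)$ factorizes, being the identity on every qubit except the Hadamard qubit. Descending the diagram, each level therefore admits only the subfunctions ``stay on the identity diagonal'', ``off-diagonal zero'', and, at and below the Hadamard level, the two $H$-entry branches, which coincide up to scalar and isomorphism; this yields $\oh(1)$ distinct subfunctions per level. For the multiply-controlled $X$ (controls above the target in the order), I track, while descending through the control region, the single bit ``are all controls read so far equal to $1$ on both row and column?''. This gives two states, ``still active'' and ``already released'', plus the zero subfunction; at the target level the active state selects the $X$-block while the released state selects $\id[2]$, below which everything is again the identity. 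Hence $\oh(1)$ subfunctions per level in this case as well.

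The main obstacle is the multi-target $X$, because its control lies \emph{below} its targets in the variable order, so the diagram must descend through all targets before learning the control value. The key observation is that, after reading a partial row/column assignment to the top target qubits, the induced submatrix is completely determined by which of the two global branches it is still consistent with: the control-$0$ branch requires the processed target rows and columns to agree, while the control-$1$ branch requires them to differ by exactly the target mask $\mu$ restricted to the processed qubits. A partial assignment can be consistent with the control-$0$ branch only, the control-$1$ branch only, both (possible only where $\mu$ vanishes), or neither (the zero subfunction), giving $\oh(1)$ nonzero subfunctions at each level inside register $A$; below the control the submatrices are plain identities, also $\oh(1)$ per level. Thus $\subrank{U_t}=\oh(1)$ here too. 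Combining the three cases with the width bound proves every gate has an $\oh(n)$-size \limdd, establishing the lemma; the only delicate point is verifying the finite ``consistency-state'' bookkeeping for the upward multi-target gate, which I would make precise by writing $u(r,c)$ explicitly in terms of $\mu$ and the control projector.
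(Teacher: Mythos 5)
Your proof is correct, but it takes a much longer road than the paper's, which needs no case analysis at all. The paper's entire argument is a locality bound: every gate in the circuit is supported on at most $k=\log n+1$ qubits (one qubit for the Hadamards, at most $\log n$ qubits of register $A$ plus one qubit of register $B$ for the controlled-$X$ gates), and for a gate of the form $G\otimes\id$ the subfunction induced by a partial row/column assignment is determined by the processed bits of the $k$ active qubits together with whether some processed identity qubit has gone off-diagonal (which forces the zero subfunction); hence every level has width at most $4^k=4n^2$, and with $\oh(n)$ levels the diagram has $\oh(n^3)$ nodes. Your gate-specific analysis does buy two things the paper's argument does not: a constant rather than $\oh(n^2)$ width bound (so $\oh(n)$ total size), and independence from the size of register $A$ --- your ``active/released'' tracking for the downward controlled gate and the branch-consistency bookkeeping for the upward multi-target gate never use the fact that there are only $\log n$ controls or targets, so your argument would survive even if those gates acted on $\Theta(n)$ qubits, where the generic $4^k$ bound becomes vacuous. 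The price is exactly the delicacy you flag at the end: the upward case must be written out carefully (the point being that the two branch restrictions do not depend on the processed values, so each nonzero subfunction is determined by the pair of consistency bits), whereas the paper's two-line argument requires no such care. For this lemma, where $k=\oh(\log n)$ is handed to you by the circuit, the locality bound is the shorter and more robust choice.
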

\begin{proof}
Each gate acts on at most $k=\log n+1$ qubits.
Therefore, the width of any level of the \limdd is at most $4^k=4n^2$.
The height of the \limdd is $n$ by definition, so the \limdd has at most $4n^3$ nodes.
\end{proof}

The \textsc{ApplyGate} procedure handles the Hadamard gates efficiently, since they apply a single-qubit gate to a product state.
The difficult part is to show that the same holds for the controlled-$X$ gates.
To this end, we show a general result for the speed of \limdd operations (\autoref{thm:applygate-polynomial-calls}).
Although this worst-case upper bound is tight, it is exponentially far removed from the best case, e.g., in the case of Clifford circuits, in which case the intermediate states can have exponential $\chi_\text{sub}$, yet the \limdd simulation is polynomial-time, as shown in \autoref{sec:clifford-polytime}.

\begin{lemma}
	\label{thm:applygate-polynomial-calls}
	The number of recursive calls made by subroutine \textsc{ApplyGate}$(U,\ket \psi)$, is at most $n\subrank U\subrank{\ket\psi}$, for any gate $U$ and any state $\ket\psi$.
\end{lemma}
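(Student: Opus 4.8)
The plan is to bound the number of \emph{distinct} subproblems that \textsc{ApplyGate} ever solves, exploiting the dynamic programming (caching) described in \autoref{sec:how-to-cache}; since every distinct subproblem is computed only once and each such computation spawns only four recursive calls (one per choice of $r,c\in\{0,1\}$ on \autoref{algline:apply-gate-compute-term} of \autoref{alg:apply-gate-limdd-limdd}), the total number of calls is within a constant factor of the number of distinct cache entries. So it suffices to count distinct cache entries, organized by recursion depth $\ell$.

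First I would characterize the arguments that arise at depth $\ell$. Unwinding \autoref{alg:apply-gate-limdd-limdd}, a call at depth $\ell$ has state argument $\follow{c_1\cdots c_\ell}{e_\phi}$ for some column bits $c_i\in\{0,1\}$, and gate argument $\follow{r_1c_1\cdots r_\ell c_\ell}{e_U}$ for bits $r_i,c_i$. By \autoref{def:limdd-as-matrix} and the semantics of the \textsc{follow} operation (\autoref{tab:notation}), the former represents, up to a nonzero scalar, the substate $\alpha^\psi_{c_1\cdots c_\ell}$ of $\ket\psi$ obtained by fixing the top $\ell$ qubits, while the latter represents the submatrix of $U$ indexed by the $2\ell$ interleaved row/column bits, i.e.\ a subfunction of $\alpha^U$ of order $2\ell$.

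The key step is the caching argument. By the construction of \textsc{ApplyCacheStore}/\textsc{ApplyCacheLookup} (\autoref{alg:caches}), two calls are stored under the same key whenever their state edges represent the same state up to scalar \emph{and} their gate edges represent the same matrix up to scalar; this is exactly what the canonicalization via \rootlabel and the factoring-out of scalars achieve. Consequently the number of distinct cache entries at depth $\ell$ is at most the number of distinct substates of $\ket\psi$ of order $\ell$ modulo scalar, times the number of distinct submatrices of $U$ of order $2\ell$ modulo scalar. Since equivalence-up-to-scalar is \emph{coarser} than equality of subfunctions, each count is bounded by the corresponding subfunction rank, giving $\subrankspecific{\ket\psi}{\ell}\le\subrank{\ket\psi}$ and $\subrankspecific{U}{2\ell}\le\subrank U$. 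Hence depth $\ell$ contributes at most $\subrank{\ket\psi}\cdot\subrank U$ distinct entries.

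Finally I would sum over the recursion depths $\ell=0,1,\dots,n-1$ (at depth $n$ the recursion reaches the scalar base case), obtaining at most $n\cdot\subrank{\ket\psi}\cdot\subrank U$ distinct calls, as claimed. The main obstacle I anticipate is precisely the caching step: one must verify carefully that the canonical key produced by \rootlabel depends only on the semantic substate (respectively submatrix) modulo a global scalar, so that all syntactically different edges representing the same subfunction genuinely collapse to a single cache entry. This is where the defining property of \rootlabel from \autoref{sec:how-to-cache} together with the zero-edge and low-factoring structure of reduced \limdds (\autoref{def:reduced-limdd}) must be invoked, to rule out both spurious undercounting (two truly different subproblems sharing a key) and the failure of identical subproblems to merge.
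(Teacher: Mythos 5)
Your proposal is correct and follows essentially the same route as the paper's proof: identify the recursive-call arguments at depth $\ell$ with the \textsc{Follow}-subfunctions of $U$ (order $2\ell$) and $\ket\psi$ (order $\ell$), use the caching of \autoref{sec:how-to-cache} to reduce the total call count to the number of distinct calls, bound that number per level by $\subrank{U}\cdot\subrank{\ket\psi}$, and sum over the $n$ levels. Your added care about the cache keys being canonical only up to scalar (which is coarser than equality of subfunctions, so the subfunction-rank bound still applies) is a valid refinement of a point the paper glosses over, but it does not change the argument.
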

\begin{proof}
	Inspecting \autoref{alg:apply-gate-limdd-limdd}, we see that every call to $\textsc{ApplyGate}(U,\ket\psi)$ produces four new recursive calls, namely $\textsc{ApplyGate}(\follow{rc}U,\follow c{\ket\psi})$ for $r,c\in \{0,1\}$.
	Therefore, the set of parameters in all recursive calls of $\textsc{ApplyGate}(U,\ket\psi)$ is precisely the set of tuples $(\follow {rc}{U},\follow {c}{\ket\psi})$, with $r,c\in \{0,1\}^\ell$ with $\ell=0\ldots n$.
	The terms $\follow {rc}U$ and $\follow c{\ket\psi}$ are precisely the subfunctions of $U$ and $\ket\psi$, and since there are at most $\subrank U$ and $\subrank {\ket\psi}$ of these, the total number of distinct parameters passed to \textsc{ApplyGate} in recursive calls at level $\ell$, is at most $\subrank{U,\ell} \cdot \subrank{\ket\psi,\ell}\leq \subrank U\cdot \subrank{\ket{\psi}}$.
	Summing over the $n$ levels of the diagram, we see that there are at most $n\subrank U\subrank{\ket\psi}$ distinct recursive calls in total.
	As detailed in \autoref{sec:applygate}, the \textsc{ApplyGate} algorithm caches its inputs in such a way that it will achieve a cache hit on a call $\textsc{ApplyGate}(U',\ket{\psi'})$ when it has previously been called with parameters $U,\ket\psi$ such that $U=U'$ and $\ket\psi=\ket{\psi'}$.
	Therefore, the total number of recursive calls that is made, is equal to the number of \emph{distinct} calls, and the result follows.
\end{proof}

In our case, both $\subrank U$ and $\subrank{ \ket\psi}$ are polynomial, so a polynomial number of recursive calls to \textsc{ApplyGate} is made.
We now show that also the \textsc{Add} subroutine makes only a small number of recursive calls every time it is called from \textsc{ApplyGate}.
First, \autoref{thm:add-calls-bounded-by-subfunction-rank} shows expresses a worst-case upper bound on the number of recursive calls to \textsc{Add} in terms of $\chi_{sub}$.
Then \autoref{thm:add-polynomial-subfunction-rank} uses this result to show that, in our circuit, the number of recursive calls is polynomial in $n$.

\begin{lemma}
	\label{thm:add-calls-bounded-by-subfunction-rank}
	The number of recursive calls made by the subroutine $\textsc{Add}(\ket \alpha,\ket \beta)$ is at most $n\chi_{sub}(\ket \alpha)\cdot \chi_{sub}(\ket\beta)$, if $\ket\alpha,\ket\beta$ are $n$-qubit states.
\end{lemma}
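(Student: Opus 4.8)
The plan is to mirror the argument of \autoref{thm:applygate-polynomial-calls} almost verbatim, exploiting the structural parallel between \textsc{Add} and \textsc{ApplyGate}. First I would inspect \autoref{alg:add-limdds} and observe that a single call $\textsc{Add}(\ledge[e]{A}{v},\ \ledge[f]{B}{w})$ spawns exactly two recursive calls, namely $\textsc{Add}(\follow{0}{e},\ \follow{0}{f})$ and $\textsc{Add}(\follow{1}{e},\ \follow{1}{f})$ on \autoref{algline:add-0} and \autoref{algline:add-1}. Unrolling the recursion, the complete collection of parameter pairs that can ever appear in a call descending from $\textsc{Add}(\ket\alpha,\ket\beta)$ is therefore $\{(\follow{x}{e},\ \follow{x}{f}) \mid x\in\{0,1\}^\ell,\ 0\le\ell\le n\}$, where $e,f$ are the root edges representing $\ket\alpha$ and $\ket\beta$.

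Next I would identify $\follow{x}{e}$ and $\follow{x}{f}$ as the order-$\ell$ subfunctions of $\ket\alpha$ and $\ket\beta$: following the low/high edges dictated by the bits of the prefix $x$ produces an edge whose represented state is exactly an order-$\ell$ subfunction of the corresponding amplitude function, in the sense of the definition of subfunction rank. Consequently, at a fixed level $\ell$ the first component ranges over at most $\subrankspecific{\ket\alpha}{\ell}$ distinct subfunctions and the second over at most $\subrankspecific{\ket\beta}{\ell}$, so among the calls at level $\ell$ there are at most $\subrankspecific{\ket\alpha}{\ell}\cdot\subrankspecific{\ket\beta}{\ell}\le\subrank{\ket\alpha}\cdot\subrank{\ket\beta}$ distinct parameter pairs.

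The crux is then to argue that duplicate parameter pairs are collapsed by dynamic programming, so that the number of calls actually \emph{executed} at level $\ell$ (rather than answered from the cache) is bounded by the number of distinct pairs. Here I would invoke the Add-cache described in \autoref{sec:how-to-cache} (\textsc{AddCacheLookup}/\textsc{AddCacheStore} in \autoref{alg:caches}): because the cache key is formed from the children nodes together with the canonically chosen label $\rootlabel(\ledge{A^{-1}B}{w})$, two calls whose arguments represent the same pair of states hit the same cache entry. Indeed, two prefixes $x,x'$ inducing equal subfunctions of $\ket\alpha$ and equal subfunctions of $\ket\beta$ yield edges that, by canonicity of reduced \limdds (\autoref{thm:node-canonicity-strong}), point to the same nodes and represent the same states; hence they produce identical cache keys and all but the first such call returns immediately without recursing. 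This is the step demanding the most care, since it is where semantic (rather than syntactic) identity of the arguments must be turned into a genuine cache hit.

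Finally I would sum over the $n$ levels of the diagram. Since each level contributes at most $\subrank{\ket\alpha}\cdot\subrank{\ket\beta}$ distinct executed calls, the total number of recursive calls made by $\textsc{Add}(\ket\alpha,\ket\beta)$ is at most $n\,\subrank{\ket\alpha}\cdot\subrank{\ket\beta}$, as claimed. I do not expect any genuinely new difficulty beyond the cache-hit argument above, as the counting itself is a direct transcription of the \textsc{ApplyGate} case.
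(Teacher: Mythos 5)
Your proposal is correct and follows essentially the same route as the paper's own proof: counting the recursive calls as pairs $(\follow{x}{\ket\alpha},\follow{x}{\ket\beta})$, identifying these with order-$\ell$ subfunctions, using the cache to collapse duplicates, and summing over the $n$ levels. Your explicit treatment of the cache-hit step—arguing via canonicity that semantically equal arguments yield identical cache keys—is in fact somewhat more careful than the paper, which simply appeals to \autoref{sec:how-to-cache} for that point.
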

\begin{proof}
	Inspecting \autoref{alg:add-limdds}, every call to $\textsc{Add}(\ket\alpha,\ket\beta)$ produces two new recursive calls, namely $\textsc{Add}(\follow 0{\ket\alpha},\follow 0{\ket\beta})$ and $\textsc{Add}(\follow 1{\ket\alpha},\follow 1{\ket\beta})$.
	It follows that the set of parameters on $n-\ell$ qubits with which \textsc{Add} is called is the set of tuples $(\follow x{\ket\alpha},\follow x{\ket\beta})$, for $x\in \{0,1\}^\ell$.
	This corresponds precisely to the set of subfunctions of $\alpha$ and $\beta$ induced by length-$\ell$ strings, of which there are $\subrankspecific{\ket\alpha}{\ell}$ and $\subrankspecific{\ket\beta}{\ell}$, respectively.
	Because the results of previous computations are cached, as explained in \autoref{sec:applygate}, the total number of recursive calls is the number of \emph{distinct} recursive calls.
	Therefore, we get the upper bound of $\subrank{\ket\alpha}\cdot \subrank{\ket\beta}$ for each level of the \limdd.
	Since the \limdd has $n$ levels, the upper bound $n\subrank{\ket\alpha}\cdot \subrank{\ket\beta}$ follows.
\end{proof}

\begin{lemma}
	\label{thm:add-polynomial-subfunction-rank}
	The calls to \textsc{Add}$(\ket\alpha,\ket\beta)$ that are made by the recursive calls to $\textsc{ApplyGate}(U_t,\ket{\psi_t})$, satisfy $\subrank{\ket\alpha},\subrank{\ket\beta}=\text{poly}(n)$.
\end{lemma}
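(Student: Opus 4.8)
The plan is to bound the \emph{computational basis rank} of every state handed to \textsc{Add}, and then transfer this bound to the subfunction rank via the inequality $\subrank{\cdot}\leq\comprank{\cdot}$ noted earlier.

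First I would pin down the form of the \textsc{Add} arguments. Inspecting \autoref{alg:apply-gate-limdd-limdd}, every call to \textsc{Add} made during $\textsc{ApplyGate}(U_t,\ket{\psi_t})$ has the shape $\textsc{Add}(f_{r,0},f_{r,1})$, where $f_{r,c}$ is produced by a recursive call $\textsc{ApplyGate}(\follow{rc}{e},\follow{c}{e'})$ and hence satisfies $\ket{f_{r,c}}=[\follow{rc}{e}]\,\ket{\follow{c}{e'}}$ for the gate edge $e$ and state edge $e'$ of that recursion node. An induction on the recursion depth then shows that each \textsc{Add} argument $\ket\alpha$ (and likewise $\ket\beta$) equals $V\ket\chi$, where $V$ is a submatrix of $U_t$ obtained by fixing a prefix of the interleaved row and column bits, and $\ket\chi$ is a subfunction of $\ket{\psi_t}$ on the same number of qubits.

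Next I would bound $\comprank{V\ket\chi}$ using two elementary facts. First, a subfunction of $\ket{\psi_t}$ is obtained by restricting some variables, which only selects a subset of the nonzero computational-basis amplitudes; hence $\comprank{\ket\chi}\leq\comprank{\ket{\psi_t}}\leq n$ by \autoref{thm:intermediate-state-has-small-comprank}. Second, every gate $U_t$ in the circuit has at most two nonzero entries per column: the (multi-)controlled-$X$ gates are permutation matrices (one nonzero per column), and the single-qubit Hadamard gates have exactly two nonzeros per column. Taking a submatrix only discards rows from each retained column, so $V$ inherits the bound of at most two nonzeros per column. Writing $V\ket\chi=\sum_c\chi_c\,V\ket c$ as a sum of $\comprank{\ket\chi}$ scaled columns of $V$, each with at most two nonzero entries, yields $\comprank{V\ket\chi}\leq 2\comprank{\ket\chi}\leq 2n$.

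Combining these bounds with $\subrank{\cdot}\leq\comprank{\cdot}$ gives $\subrank{\ket\alpha}\leq\comprank{V\ket\chi}\leq 2n=\text{poly}(n)$ for $\ket\alpha=V\ket\chi$, and identically for $\ket\beta$, which is the claim. The step requiring the most care is the structural one: confirming that every \textsc{Add} argument really is a submatrix of $U_t$ applied to a subfunction of $\ket{\psi_t}$, since these states arise deep inside the recursion rather than only at the top level. Once this is in hand, the restriction bound on $\comprank{\ket\chi}$ and the column-sparsity bound on $V$ are both routine.
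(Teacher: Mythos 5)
Your proof is correct and follows the same skeleton as the paper's: both characterize every \textsc{Add} argument as a submatrix $\follow{rx,cy}{U_t}$ of the gate applied to a subfunction $\follow{cy}{\ket{\psi_t}}$ of the intermediate state, bound the computational basis rank of that product using \autoref{thm:intermediate-state-has-small-comprank}, and finish via $\subrank{\cdot}\leq\comprank{\cdot}$. Where you genuinely diverge is the gate-application step. The paper argues that $\follow{rx,cy}{U_t}$ is a subfunction of a \emph{permutation} matrix, so applying it cannot increase computational basis rank at all; you instead argue from column sparsity (at most two nonzero entries per column, inherited by any submatrix), which gives the slightly weaker but still linear bound $\comprank{V\ket\chi}\leq 2\,\comprank{\ket\chi}\leq 2n$. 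Your version buys some extra generality: it also covers the Hadamard gates of the circuit, so the lemma as literally stated (for \emph{every} gate $U_t$) is proved, whereas the paper's permutation argument applies only to the controlled-$X$ gates. The paper never feels this gap, because \autoref{thm:applygate-polynomial-time} invokes the lemma only for the controlled-$X$ gates and dispatches the Hadamards separately through \autoref{thm:hadamard-stabilizer-polytime}, but your sparsity argument closes it directly.
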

\begin{proof}
	We have established that the recursive calls to \textsc{ApplyGate} are all called with parameters of the form $\textsc{ApplyGate}(\follow {r,c}{U_t}, \follow c{\ket{\psi_t}})$ for some $r,c\in\{0,1\}^\ell$.
	Inspecting \autoref{alg:apply-gate-limdd-limdd}, we see that, within such a call, each call to $\textsc{Add}(\ket\alpha,\ket\beta)$ has parameters which are both of the form $\ket\alpha,\ket\beta=\textsc{ApplyGate}(\follow{rx,cy}{U_t}, \follow {cy}{\ket{\psi_t}})$ for some $x,y\in\{0,1\}$; therefore, the parameters $\ket\alpha,\ket\beta$ are of the form $\ket\alpha,\ket\beta = \follow{r,c}{U_t}\cdot \follow r{\ket{\psi_t}}$.
	Here $\follow {cy}{\ket{\psi_t}}$ is a quantum state on $n-(\ell+1)$ qubits.
	
	The computational basis rank of a state is clearly non-increasing under taking subfunctions; that is, for any string $x$, it holds that, $\comprank{\follow {x}{\ket{\psi}}}\leq \comprank{\ket{\psi}}$.
	In particular, we have $\comprank{\follow{cy}{\ket{\psi_t}}}\leq \comprank{\ket{\psi_t}}=\mathcal O(n)$.
	The matrix $\follow{rx,cy}{U_t}$ is a subfunction of a permutation gate, and applying such a matrix to a vector cannot increase its computational basis rank, so we have
	\begin{align}
	\subrank{\ket\alpha}=& \subrank{\follow{rx,cy}{U_t}\cdot \follow{cy}{\ket{\psi_t}}} \\
	\leq & \comprank{\follow{rx,cy}{U_t}\cdot \follow{cy}{\ket{\psi_t}}} \leq \comprank{\follow{cy}{\ket{\psi_t}}} \\
	\leq & \comprank{\ket{\psi_t}}=\mathcal O(n)
	\end{align}
	This proves the lemma.
\end{proof}

\begin{lemma}
	\label{thm:applygate-polynomial-time}
	Each call to $\textsc{ApplyGate}(U_t,\ket{\psi_t})$ runs in polynomial time, for any gate $U_t$ in the circuit in \autoref{fig:circuit-prepare-W} (with $n=2^c$).
\end{lemma}
\begin{proof}
	If $U_t$ is a Hadamard gate, then \limdds can apply this in polynomial time by \autoref{thm:hadamard-stabilizer-polytime}, since $\ket{\psi_t}$ is a stabilizer state.
	Otherwise, $U_t$ is one of the controlled-$X$ gates.
	In this case there are a polynomial number of recursive calls to $\textsc{ApplyGate}$, by \autoref{thm:applygate-polynomial-calls}.
	Each recursive call to $\textsc{ApplyGate}$ makes two calls to \textsc{Add}$(\ket\alpha,\ket\beta)$, where both $\alpha$ and $\beta$ are states with polynomial subfunction rank, by \autoref{thm:add-polynomial-subfunction-rank}.
	By \autoref{thm:add-calls-bounded-by-subfunction-rank}, these calls to \textsc{Add} all complete in time polynomial in the subfunction rank of its arguments.
\end{proof}

\begin{corollary}
\label{cor:w}
The circuit in \autoref{fig:circuit-prepare-W} (with $n=2^c$) can be simulated by \limdds in polynomial time.
\end{corollary}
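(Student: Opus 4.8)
The plan is to assemble the three facts already established in this appendix into a single accounting of the total runtime. First I would observe that the circuit in \autoref{fig:circuit-prepare-W} consists of only $O(n)$ gates: there are $\log n$ Hadamard gates, followed by two groups of $n-\log n$ controlled-$X$ gates each, for a total of $2n-\log n = O(n)$ gates. Thus simulating the circuit amounts to performing $O(n)$ successive \textsc{ApplyGate} calls, starting from the \limdd of the initial state $\ket{0}^{\otimes n}$ and passing the output of each call as the input state of the next.

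Next I would invoke \autoref{thm:applygate-polynomial-time}, which guarantees that each individual call $\textsc{ApplyGate}(U_t,\ket{\psi_t})$ runs in polynomial time, whether $U_t$ is a Hadamard gate or one of the controlled-$X$ gates. Crucially, \autoref{thm:intermediate-states-have-small-limdd} ensures that every intermediate state $\ket{\psi_t}$ produced along the way has a polynomial-size \limdd (and \autoref{thm:gates-have-small-limdd} does the same for each gate), so the representation fed into each subsequent gate application never blows up; the per-gate cost therefore remains polynomial throughout the run. Multiplying the number of gates by the per-gate cost yields a total runtime of $O(n)\cdot\text{poly}(n)=\text{poly}(n)$ for constructing the \limdds of all intermediate states and of the output state $\ket{W_n}$.

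To match the precise notion of simulation stated in \autoref{thm:limdds-are-fast-for-wn}, I would finally remark that, since each of these \limdds is polynomially sized, any single-qubit computational-basis measurement can be simulated efficiently using the algorithms of \autoref{sec:measurement}, and any computational-basis amplitude can be read off in linear time using the amplitude-evaluation rule for \limdds described just after \autoref{def:limdd}. This upgrades the statement from ``all intermediate \limdds are constructed in polynomial time'' to the full claim that the circuit is \emph{simulated} in polynomial time.

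The essential technical work --- bounding the computational-basis and subfunction ranks of the intermediate states and translating these bounds into polynomially many recursive \textsc{Add} and \textsc{ApplyGate} calls --- has already been carried out in the preceding lemmas, so the corollary itself is a short bookkeeping argument. The only point requiring a moment's care is ensuring that the \emph{cumulative} cost of the dynamic-programming caches across all $O(n)$ gate applications remains polynomial, i.e.\ that no superpolynomial overhead is hidden between gates; this is immediate because \autoref{thm:applygate-polynomial-time} already certifies each gate application as a self-contained polynomial-time computation, and a sum of $O(n)$ such costs is again polynomial.
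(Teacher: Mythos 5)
Your proposal is correct and matches the paper's (implicit) argument: the paper states this as a corollary precisely because it follows immediately from \autoref{thm:applygate-polynomial-time} applied to each of the $O(n)$ gates, with \autoref{thm:intermediate-states-have-small-limdd} guaranteeing the intermediate representations stay polynomially sized. Your additional remarks on measurement and amplitude evaluation are consistent with the notion of simulation in \autoref{thm:limdds-are-fast-for-wn} and do not change the substance of the argument.
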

\section{Numerical search for the stabilizer rank of Dicke states}
\label{sec:stabrank_search}

Given the separation between the Clifford + T simulator ---a specific stabilizer-rank based simulator--- and Pauli-\limdds, it would be highly interesting to theoretically compare Pauli-\limdds and general stabilizer-rank simulation.
However, proving an exponential separation would require us to find a family of states for which we can prove its stabilizer rank scales exponentially, which is a major open problem.
Instead, we here take the first steps towards a numerical comparison by choosing a family of circuits which Pauli-\limdds can efficiently simulate and using Bravyi et al.'s heuristic algorithm for searching the stabilizer rank of the circuits' output states~\cite{bravyi2016trading}.
If the stabilizer rank is very high (specifically, if it grows superpolynomially in the number of qubits), then we have achieved the goal of showing a separation.
We cannot use $W$ states for showing this separation because the $n$-qubit $W$ state $\ket{W_n}$ has linear stabilizer rank, since it is a superposition of only $n$ computational basis states.
Instead we focus on their generalization, Dicke states $\dicke{n}{w}$, which are equal superpositions of all $n$-qubit computational-basis status with Hamming weight $w$ (note $\ket{W_n} = \dicke{n}{1}$),
\begin{align}
\label{eq:dicke}
\dicke{n}{w} = \frac{1}{\sqrt{\binom nw}}\sum_{x:|x|=w} \ket x
\end{align}

We implemented the algorithm by Bravyi et al.: see \cite{githubrepo} for our open-source implementation.
Unfortunately, the algorithm's runtime grows significantly in practice, which we believe is due to the fact that it acts on sets of quantum state vectors, which are exponentially large in the number of qubits.
Our implementation allowed us to go to at most $9$ qubits using the SURF supercomputing cluster.
We believe this is a limitation of the algorithm and not of our implementation, since Bravyi et al. do not report beyond $6$ qubits while Calpin uses the same algorithm and reaches at most $10$ qubits~\cite{calpin2020exploring}.
Table~\ref{table:stabilizer-rank-search} shows the heuristically found stabilizer ranks of Dicke states with our implementation.
Although we observe the maximum found rank over $w$ to grow quickly in $n$, the feasible regime (i.e. up to $9$ qubits) is too small to draw a firm conclusion on the stabilizer ranks' scaling.

Since our heuristic algorithm finds only an upper bound on the stabilizer rank, and not a lower bound, by construction we cannot guarantee any statement on the scaling of the rank itself.
However, our approach could have found only stabilizer decompositions of very low rank, thereby providing evidence that Dicke states have very slowly growing rank, meaning that stabilizer-rank methods can efficiently simulate circuits which output Dicke states.
This is not what we observe; at the very least we can say that, if Dicke states have low stabilizer rank, then the current state-of-the-art method by Bravyi et al. does not succeed in finding the corresponding decomposition.
Further research is needed for a conclusive answer.

We now explain the heuristic algorithm by Bravyi et al. \cite{bravyi2016trading}, which has been explained in more detail in \cite{calpin2020exploring}.
The algorithm follows a simulated annealing approach: on input $n,w $ and $\chi$, it ~performs a random walk through sets of $\chi$ stabilizer states.
It starts with a random set $V$ of $\chi$ stabilizer states on $n$ qubits.
In a single `step', the algorithm picks one of these states $\ket{\psi} \in V$ at random, together with a random $n$-qubit Pauli operator $P$, and replaces the state $\ket{\psi}$ with $\ket{\psi'} := c(\id + P)\ket{\psi}$ with $c$ a normalization constant (or repeats if $\ket{\psi'} = 0$), yielding a new set $V'$.
The step is accepted with certainty if $F_V < F_{V'}$, where $F_V:= |\dickebra nw \Pi_V \dicke{n}{w}|$ with $\Pi_V$ the projector on the subspace of the $n$-qubit Hilbert space spanned by the stabilizer states in $V$.
Otherwise, it is accepted with probability $\exp(-\beta (F_{V'} - F_V))$, where $\beta$ should be interpreted as the inverse temperature.
The algorithm terminates if it finds $F_V = 1$, implying that $\dicke{n}{w}$ can be written as linear combination of $V$, outputting the number $\chi$ as (an upper bound on) the stabilizer rank of $\ket\psi$.
For a fixed $\chi$, we use identical values to Bravyi et al. \cite{bravyi2016trading} and vary $\beta$ from $1$ to $4000$ in $100$ steps, performing $1000$ steps at each value of $\beta$.

\begin{table}[t!]
    \caption{
        Heuristically-found upper bounds on the stabilizer rank $\chi$ of Dicke states $\dicke nw$ (eq.~\eqref{eq:dicke}) using the heuristic algorithm from Bravyi et al. \cite{bravyi2016trading}, see text in \autoref{sec:stabrank_search} for details.
        We investigated up to $9$ qubits using the SURF supercomputing cluster (approximately the same as the number of qubits reached in the literature as described in the text).
        Empty cells indicate non-existing or not-investigated states.
        In particular, we have not investigated $w> \lfloor \frac{n}{2}\rfloor$ since $\dicke{n}{w}$ and $\dicke{n}{n-w}$ have identical stabilizer rank because $X^{\otimes n} \dicke{n}{w} = \dicke{n}{n-w}$.
        For $\dicke{8}{3}$ and $\dicke{9}{4}$, we have run the heuristic algorithm to find sets of stabilizers up to size $11$ (theoretical upper bound) and $10$, respectively, but the algorithm has not found sets in which these two Dicke states could be decomposed.
        We emphasize that the algorithm is heuristic, so even if there exists a stabilizer decomposition of a given rank, the algorithm might not find it.
    }
        \label{table:stabilizer-rank-search}
    \centering
    \setlength{\tabcolsep}{12pt}
    \begin{tabular}{|r|ccccc|}
        \hline
        & \multicolumn{5}{c|}{\textbf{Hamming weight $w$}}\\\hline
        \textbf{\#qubits $n$} & \textbf{0} & \textbf{1} & \textbf{2} & \textbf{3} & \textbf{4} \\\hline
        \textbf{1} & 1 &   &   &   &   \\
        \textbf{2} & 1 & 1 &   &   &   \\
        \textbf{3} & 1 & 2 &   &   &   \\
        \textbf{4} & 1 & 2 & 2 &   &   \\
        \textbf{5} & 1 & 3 & 2 &   &   \\
        \textbf{6} & 1 & 3 & 4 & 2 &   \\
        \textbf{7} & 1 & 4 & 7 & 4 &   \\
        \textbf{8} & 1 & 4 & 8 & $\leq 11$  & 5 \\
        \textbf{9} &   &   &   &   & $>10?$  \\\hline
    \end{tabular}
\end{table}

\end{document}